\newif\iflong
\newif\ifshort
\title{Parameterized Algorithms for Upward Planarity}
\author{Steven Chaplick}{Maastricht University, Maastricht, The Netherlands}{s.chaplick@maastrichtuniversity.nl}{
https://orcid.org/
0000-0003-3501-4608}{}
\author{Emilio Di Giacomo}{Università degli Studi di Perugia, Perugia, Italy}{emilio.digiacomo@unipg.it}{https://orcid.org/0000-0002-9794-1928}{MIUR, grant 20174LF3T8, Dip. Ing. - UNIPG, grants RICBA19FM and RICBA20EDG}
\author{Fabrizio Frati}{Roma Tre University, Rome, Italy}{frati@dia.uniroma3.it}{https://orcid.org/0000-0001-5987-8713}{MIUR, grant 20174LF3T8}
\author{Robert Ganian}{Technische Universität Wien, Wien, Austria}{rganian@ac.tuwien.ac.at}{https://orcid.org/
0000-0002-7762-8045}{Austrian Science Fund (FWF) Project Y1329}
\author{Chrysanthi N. Raftopoulou}{National Technical University of Athens, Athens, Greece}{crisraft@mail.ntua.gr}{https://orcid.org/0000-0001-6457-516X}{NTUA research program $\mathrm{\Pi}$EBE 2020}
\author{Kirill Simonov}{Technische Universität Wien, Wien, Austria}{ksimonov@ac.tuwien.ac.at}{}{Austrian Science Fund (FWF) Project P31336}
\authorrunning{S. Chaplick, E. Di Giacomo, F. Frati, R. Ganian, C. N. Raftopoulou, K. Simonov}
\keywords{Upward planarity, parameterized algorithms, SPQR trees, treewidth, treedepth} 
\newcommand{\td}{\textit{td}}
\newcommand{\tw}{\textit{tw}}
\newcommand{\clos}{\ensuremath{\text{clos}}}
\newcommand{\cG}{\ensuremath{\mathcal{G}}}
\newcommand{\signG}{\ensuremath{\widetilde{G}}}
\newcommand{\signH}{\ensuremath{\widetilde{H}}}
\newcommand{\bigoh}{\ensuremath{\mathcal{O}}}
\newcommand{\fut}{\emph{fut}}
\newcommand{\past}{\emph{past}}
\newcommand{\W}{\ensuremath{\mathrm{W}}}
\newcommand{\FPT}{\ensuremath{\mathrm{FPT}}}
\newcommand{\XP}{\ensuremath{\mathrm{XP}}}
\newcommand{\NP}{\ensuremath{\mathrm{NP}}}
\newcommand{\UP}{\textup{\textsc{Upward Planarity}}}
\newcommand{\shapeDesc}[8]{
\ifthenelse{\boolean{longShapeDesc}}
{\ensuremath{\langle #1,\allowbreak #2,\allowbreak #3,\allowbreak #4,\allowbreak #5,\allowbreak #6,\allowbreak #7,\allowbreak #8 \rangle}}
{\ensuremath{\langle #1,\allowbreak #2,\allowbreak #3,\allowbreak #5,\allowbreak #6\rangle}}
}
\newcommand{\shapeSequence}[8]{
\ifthenelse{\boolean{longShapeSequence}}
{\shapeDesc{#1}{#2}{#3}{#4}{#5}{#6}{#7}{#8}}
{\ensuremath{\langle #1,\allowbreak #2,\allowbreak #3,\allowbreak #5\rangle}}
}
\begin{document}

%

	\maketitle

	\begin{abstract}
We obtain new parameterized algorithms for the classical problem of determining whether a directed acyclic graph admits an upward planar drawing. Our results include a new fixed-parameter algorithm parameterized by the number of sources, an XP-algorithm parameterized by treewidth, and a fixed-parameter algorithm parameterized by treedepth. All three algorithms are obtained using a novel framework for the problem that combines SPQR tree-decompositions with parameterized techniques. Our approach unifies and pushes beyond previous tractability results for the problem on series-parallel digraphs, single-source digraphs and outerplanar digraphs.
	\end{abstract}
	

\section{Introduction}\label{sec:intro}

A digraph is called \emph{upward planar} if it admits an upward planar drawing, that is, a planar drawing where all edges are oriented upward. The problem of upward planarity testing (\UP) and constructing an associated upward planar drawing arises, among others, in the context of visualization of hierarchical network structures; application domains include project management, visual languages and software engineering~\cite{DBLP:journals/siamcomp/BertolazziBMT98}. Upward planarity is the most prominent notion of planarity that is inherently directed, and also has classical connections to the theory of ordered sets: the orders arising from the transitive closure of upward planar single-source digraphs have bounded dimension~\cite{DBLP:journals/jct/TrotterM77}.

Since the introduction of the notion, \UP\ has become the focus of extensive theoretical research. 
The problem has been shown to be $\NP$-complete more than 25 years ago~\cite{GargT94,gt-ccu-01}, but the first polynomial-time algorithms for restricted variants of \UP\ have been published even earlier~\cite{HuttonL91,HuttonL96}. 
Among others, the problem is known to be polynomial-time tractable when $G$ is provided with a planar embedding~\cite{bdl-udtd-94} (which also implies polynomial-time tractability for triconnected DAGs, since these admit a single planar embedding), or when restricted to the class of outerplanar DAGs~\cite{Papakostas94}, DAGs whose underlying graph is series-parallel~\cite{DBLP:journals/siamdm/DidimoGL09}, and most prominently single-source DAGs~\cite{DBLP:journals/siamcomp/BertolazziBMT98,BrucknerHR19,HuttonL96}.

In spite of the number of results on \UP\ that analyze the classical complexity of the problem on specific subclasses of instances, the problem was up to now mostly unexplored from the more fine-grained perspective of parameterized complexity analysis\iflong\footnote{A basic introduction to parameterized complexity is provided in the preliminaries.}\fi~\cite{CyganFKLMPPS15,DowneyF13}. In particular, while it was known that  \UP\ is fixed-parameter tractable when parameterized by the cyclomatic number of the input DAG (or, equivalently, the feedback edge number of the underlying undirected graph)~\cite{Chan04}, by the number of triconnected components and cut vertices~\cite{HealyL06}, or the number of triconnected components plus the maximum diameter of a split component~\cite{DBLP:journals/siamdm/DidimoGL09}, the complexity of the problem under classical structural parameterizations has remained completely open.

\smallskip
\noindent
\textbf{Contribution.}\quad
We develop a novel algorithmic framework for solving \UP\ which combines parameterized dynamic programming with the SPQR-tree decompositions of planar graphs~\cite{dt-opl-96,gm-lti-00,HopcroftT73}. In essence, our framework uses a characterization of the ``shapes'' of faces in an upward planar drawing that is inspired by earlier work on the notion of spirality~\cite{bdl-udtd-94,DBLP:journals/siamdm/DidimoGL09} and
reduces \UP\ to the task of handling the ``rigid'' nodes in these decompositions. Informally, the task that needs to be handled there can be stated as follows: what are all the possible ways to combine the possible shapes of the children of a rigid node to obtain an upward planar drawing for the node itself?
The framework is formalized in the form of a general ``Interface Lemma'' (Lemma~\ref{lem:R_node_general}) which can be complemented with numerous parameterizations as well as other algorithmic approaches. 

In the remainder of this article, we use this framework to push the boundaries of tractability for \UP. Our first result in this direction is a fixed-parameter algorithm for \UP\ parameterized by the number of sources in the input graph\iflong\footnote{It is worth noting that this immediately also implies tractability parameterized by the number of sinks: inverting all arcs results in an equivalent instance.}\fi. This result generalizes the polynomial-time tractability of the single-source case~\cite{DBLP:journals/siamcomp/BertolazziBMT98,BrucknerHR19} and answers an open question from a recent Dagstuhl seminar~\cite{Dagstuhlseminar21}. On a high level, we use the Interface Lemma to reduce the problem to a case where almost all children of a rigid node have a simple shape, and we show how this can be handled via a flow network approach.

Having established the tractability of instances with few sources, we turn towards understanding which structural properties of the underlying undirected graph can be used to solve \UP\ efficiently. In this context, 
\iflong there is a well-known hierarchy of fundamental structural parameters of graphs (see, e.g., \cite[Figure 1]{Belmonte0LMO20}) and the study of their algorithmic applications is a prominent research direction that has been pursued not only for a number of problems in computational geometry~\cite{CabelloK09,HuszarS19}, but also in many other fields such as in artificial intelligence~\cite{BliemMMW20,GanianO18} and constraint satisfaction~\cite{SamerS10,GanianRS17}. Yet, \fi
apart from the fixed-parameter tractability of \UP\ parameterized by the feedback edge number~\cite{Chan04}\iflong (which is one of the by far most restrictive parameterizations in the aforementioned hierarchy)\fi, nothing was known about whether the more widespread ``decompositional'' parameters 
can be used to solve the problem. The parameters that will be of interest here are \emph{treewidth}~\cite{RobertsonS84}, the most prominent structural graph parameter, and \emph{treedepth}~\cite{sparsity}, the arguably best known parameter that lies below treewidth in the parameter hierarchy\ifshort\ (see, e.g., \cite[Figure 1]{Belmonte0LMO20})\fi.

To obtain new boundaries of tractability for \UP\ with respect to these two parameters, we first show that the problem posed by the Interface Lemma can be restated as a purely combinatorial problem on a suitable combinatorization of the embedding of the graph represented by the rigid node, and---crucially---that a bound on the input graph's treewidth also implies a bound for the treewidth of this combinatorization. Once that is done, we design a non-trivial dynamic program that exploits this treewidth bound to handle the rigid nodes, which together with the Interface Lemma allows us to solve \UP. This yields an $\XP$-algorithm for \UP\ parameterized by the treewidth of the underlying undirected graph---a result which unifies and generalizes the polynomial-time tractability of \UP\ on outerplanar as well as series-parallel graphs~\cite{DBLP:journals/siamdm/DidimoGL09,Papakostas94}. Furthermore, a more detailed analysis of the dynamic program reveals that the same algorithm runs in fixed-parameter time when parameterized by treedepth. 

\ifshort
Due to space limitations some proofs are omitted and can be found in~\cite{arxiv}.
\fi

\iflong
\subparagraph{Paper Organization.} After introducing the necessary preliminaries in Section~\ref{sec:prelim}, we introduce a characterization of components of the SPQR trees used by our framework (Section~\ref{sec:comp_shapes}), provide the bulk of our framework for digraphs whose underlying undirected graph is 2-connected (Section~\ref{sec:general_algo}), and then complete the description of the framework by handling general digraphs (Section~\ref{se:singly-connected}). Sections~\ref{sec:fpt_sources} and~\ref{sec:tw} are then dedicated to applying the framework to \UP\ parameterized by the number of sources and treewidth along with treedepth, respectively. 
\fi

%
%
%
%

	\section{Preliminaries}\label{sec:prelim}
	\iflong
		We refer to the classical books for basic graph and graph drawing terminology~\cite{Diestel12,BattistaETT99}.
	\fi	
	\ifshort
	We refer to the usual sources for graph drawing and parameterized complexity terminology~\cite{CyganFKLMPPS15,BattistaETT99,Diestel12,DowneyF13}.
	\fi
	We use $N_G(v)$ to denote the set of vertices adjacent to a vertex $v$ in a graph $G$.

\iflong
	\subsection{Upward planar drawings and embeddings} \label{subse:upward-definitions}
	\fi
	
	\ifshort
\subparagraph{Upward planar drawings and embeddings.}
\fi

\iflong
	A \emph{drawing} of a graph maps each vertex to a point in the plane and each edge to a Jordan arc between the end-points of the edge. A drawing is \emph{planar} if no two edges intersect, except at common end-points. A planar drawing partitions the plane into regions, called \emph{faces}. The bounded faces are called \emph{internal}, while the unbounded face is the \emph{outer face}. 
	Two planar drawings of a graph are \emph{equivalent} if: (1) they have the same \emph{rotation system}, that is, for each vertex $v$, the clockwise order of the edges incident to $v$ is the same in both drawings; and (2) their outer faces are delimited by the same walk, that is, the order of the edges encountered when clockwise traversing the boundary of the outer face is the same in both drawings. A \emph{planar embedding} of a graph is an equivalence class of planar drawings of that graph. 
	\fi
	\ifshort
	 A \emph{planar embedding} is an equivalence class of planar drawings of a graph, where two drawings are equivalent if the clockwise order of the edges incident to each vertex is the same and the outer faces are delimited by the same walk. 
	\fi	 
	 
	\iflong
	Thus, a planar embedding of a graph consists of a rotation system and a choice for the walk delimiting the outer face. We often talk about a \emph{face of a planar embedding}, meaning a face of any planar drawing that respects the planar embedding. The \emph{flip} of a planar embedding is the planar embedding obtained by reversing the clockwise order of the edges incident to each vertex and by reversing the order of the edges encountered when clockwise traversing the boundary of the outer face. \fi
	
	
\iflong	
	Throughout the paper, we use the term \emph{digraph} as short for ``directed graph''. A digraph is \emph{acyclic} if it contains no directed cycle. An acyclic digraph is usually called \emph{DAG}, for short. A vertex in a digraph is a \emph{source} if it is only incident to outgoing edges and it is a \emph{sink} if it is only incident to incoming edges. \fi
	A vertex in a digraph is a \emph{switch} if it is a source or a sink, and it is a \emph{non-switch} otherwise. The \emph{underlying graph} of a digraph is the undirected graph obtained from the digraph by ignoring the edge directions. 
	\iflong
	A \emph{plane digraph} is a digraph together with a prescribed planar embedding for its underlying graph.

	\fi	
	A drawing of a digraph is \emph{upward} if every edge is represented by a Jordan arc monotonically increasing from the source to the sink of the edge, and it is \emph{upward planar} if it is both upward and planar. A digraph is \emph{upward planar} if it admits an upward planar drawing; we use \UP\ to denote the problem of determining whether a digraph is upward planar; w.l.o.g., we assume that the input digraph is connected. 

\ifshort
In an upward planar drawing $\Gamma$ of a digraph $G$, an \emph{angle} represents an incidence between a vertex $v$ and a face $f$. The angle is either \emph{flat} (if precisely one of the two edges incident to $v$ and $f$ is incoming at $v$), \emph{large} (if $v$ is a switch vertex and the angle has more than $180^{\circ}$ in $\Gamma$), or \emph{small} (otherwise)~\cite{bdl-udtd-94}; the latter two cases are jointly called \emph{switch angles}. \fi
\iflong
	Consider an upward planar drawing $\Gamma$ of a digraph $G$. An \emph{angle} $\alpha$ of a face $f$ of $\Gamma$ is a triple $(e_1,v,e_2)$, where $e_1$ and $e_2$ are two edges of $G$ that are incident to the vertex $v$, that are incident to the face $f$, and that are consecutive in the order of the edges encountered when clockwise traversing the boundary of $f$. We say that $\alpha$ is \emph{flat} if one between $e_1$ and $e_2$ is incoming $v$ and the other one is outgoing $v$, otherwise $\alpha$ is a \emph{switch angle}.	\fi	
	Then $\Gamma$ defines an \emph{angle assignment}, which assigns the value $-1$, $0$, and $1$ to each small, flat, and large angle, respectively, in every face of $\Gamma$. The angle assignment, together with the planar embedding of the underlying graph of $G$ in $\Gamma$, constitutes an \emph{upward planar embedding} of $G$.

	\iflong
	A switch angle at a vertex $v$ is hence delimited by two outgoing or by two incoming edges for $v$. Each switch angle is further classified as \emph{large} or \emph{small} as follows. Consider a switch angle $\alpha=(e_1,v,e_2)$ at a vertex $v$ in a face $f$ delimited by two outgoing edges (resp.\ by two incoming edges) and consider a disk $D$ centered at $v$, sufficiently small so that its boundary has a single intersection with every edge incident to $v$. The edges $e_1$ and $e_2$ divide $D$ into two regions, one of which contains part of $f$ and contains no portion of any edge incident to $v$ in its interior; call $D'$ this region. Then we say that $\alpha$ is \emph{large} if $D'$ contains a suitably short vertical segment that has $v$ as its highest (resp.\ lowest) end-point, it is \emph{small} otherwise.

	An upward planar drawing hence defines an \emph{angle assignment}, which is an assignment of the value $-1$, $0$, and $1$ to each small, flat, and large angle, respectively, in every face of $\Gamma$. This angle assignment, together with the planar embedding of the underlying graph of $G$ in $\Gamma$, constitutes an \emph{upward planar embedding} of $G$. An \emph{upward plane digraph} is a digraph together with a prescribed upward planar embedding.
		\fi
	

	The angle assignments that enhance a planar embedding into an upward planar embedding have been characterized by Didimo et al.~\cite{DBLP:journals/siamdm/DidimoGL09}, building on the work by Bertolazzi et al.~\cite{bdl-udtd-94}. Note that, once the planar embedding $\mathcal E$ of a digraph $G$ is specified, then so are the angles of the faces of $\mathcal E$; in particular, whether an angle is flat or switch only depends on $\mathcal E$. Consider an angle assignment for $\mathcal E$. If $v$ is a vertex of $G$, we denote by $n_i(v)$ the number of angles at $v$ that are labeled $i$, with $i \in \{-1,0,1\}$. If $f$ is a face of $G$, we denote by $n_i(f)$ the number of angles of $f$ that are labeled $i$, with $i \in \{-1,0,1\}$. The cited characterization is as follows.
		
	
\begin{theorem}[\cite{bdl-udtd-94,DBLP:journals/siamdm/DidimoGL09}]\label{th:upward-conditions}
		Let $G$ be a digraph, $\mathcal E$ be a planar embedding of the underlying graph of $G$, and $\lambda$ be an assignment of each angle of each face in $\mathcal E$ to a value in $\{-1,0,1\}$. Then $\mathcal E$ and $\lambda$ define an upward planar embedding of $G$ if and only if the following properties hold:
		\begin{description}
			\item[UP0] If $\alpha$ is a switch angle, then $\lambda(\alpha)\in\{-1,1\}$, and if $\alpha$ is a flat angle, then $\lambda(\alpha)=0$.
			\item[UP1] If $v$ is a switch vertex of $G$, then $n_1(v)=1$, $n_{-1}(v)=\deg(v)-1$, $n_0(v)=0$.
			\item[UP2] If $v$ is a non-switch vertex of $G$, then $n_1(v)=0$, $n_{-1}(v)=\deg(v)-2$, $n_0(v)=2$.
			\item[UP3] If $f$ is a face of $G$, then $n_1(f)=n_{-1}(f)-2$ if $f$ is an internal face and $n_{1}(f)=n_{-1}(f)+2$ if $f$ is the outer face.
		\end{description}
	\end{theorem}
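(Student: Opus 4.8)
The plan is to establish the two implications separately; throughout I call a pair $(\mathcal E,\lambda)$ \emph{consistent} when it satisfies UP0--UP3.

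\emph{Necessity.} Let $\Gamma$ be an upward planar drawing realizing $\mathcal E$ and let $\lambda$ be the labelling it induces. UP0 is immediate from the definitions of flat/switch angles and of the labelling. For UP1 and UP2 I would argue locally at a vertex $v$: since every edge is drawn as a $y$-monotone arc, near $v$ each outgoing edge leaves strictly into the upper half-plane and each incoming edge enters strictly from the lower half-plane, so in the rotation at $v$ the outgoing edges form one contiguous arc and the incoming edges the complementary arc. If $v$ is a switch this arc is the whole rotation, so among the $\deg(v)$ angles at $v$ exactly one contains the downward vertical ray at $v$ if $v$ is a source, or the upward vertical ray at $v$ if $v$ is a sink — the unique large angle — and all the others are small; this is UP1. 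If $v$ is a non-switch the two arcs are separated by exactly two angles mixing an incoming with an outgoing edge (the two flat angles), and every other angle at $v$ is delimited by two edges both leaving $v$ upward or both entering $v$ from below, so it lies in one of the two open half-planes at $v$ determined by the horizontal and hence cannot be large; this is UP2. For UP3 I would pass to a planar $st$-digraph: it is classical that $\Gamma$ can be augmented, within its embedding, to an upward planar drawing of a planar $st$-digraph $G^+$ obtained from $G$ by adding edges (and possibly two vertices), and for $G^+$ condition UP3 is evident — each internal face is bounded by two directed $y$-monotone paths and so has exactly two switch angles, both small, while the outer face has exactly two switch angles, both large. It then remains to check that deleting a single added edge $e$ preserves UP3, which merges the two faces $f_1,f_2$ incident to $e$ into one face $f$. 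Applying UP1/UP2 (already proven, and stable under edge deletion from an upward drawing) at an endpoint of $e$ before and after the deletion, the two angles there that use $e$, say with labels $\ell_1,\ell_2$, merge into a single angle with label $\ell_1+\ell_2+1$; a short case analysis on the few possible values then gives $n_{-1}(f)-n_1(f) = \bigl(n_{-1}(f_1)-n_1(f_1)\bigr) + \bigl(n_{-1}(f_2)-n_1(f_2)\bigr) - 2$, so the required $\pm 2$ is preserved. Deleting all added edges yields UP3 for $G$.

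\emph{Sufficiency.} Given a consistent $(\mathcal E,\lambda)$ I would run this reduction backwards. While some internal face still carries a large angle — equivalently, by UP3, at least three small ones — I insert a new edge inside that face, splitting it into two faces that are each still consistent and together contain strictly fewer internal large angles; distributing the label of any angle that gets split between its two parts so that re-merging them returns the original label (in the $\ell=\ell_1+\ell_2+1$ sense above) makes the whole process label-faithful. After also augmenting the outer region so that the result has a single source and a single sink (both then necessarily incident to the outer face), the plane digraph $G^+$ obtained has every internal face bounded by two directed $y$-monotone paths: it is a planar $st$-digraph with a prescribed planar embedding. Invoking the classical fact that every planar $st$-digraph admits an upward planar drawing realizing its embedding (e.g.\ a dominance drawing) and then deleting all the added edges produces an upward planar drawing of $G$ whose planar embedding is $\mathcal E$ and whose induced angle assignment is exactly $\lambda$.

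\emph{Main obstacle.} The delicate part is the augmentation used in the sufficiency direction: one must show that inside \emph{any} face that still has a large angle there is a legal edge insertion — keeping the embedding planar and leaving both new faces consistent while measurably reducing the large angles — and that the procedure halts at a genuine planar $st$-digraph whose essentially unique upward embedding restricts to $(\mathcal E,\lambda)$ on $G$. Two further points also need care: a face whose boundary walk is not a simple cycle (a cut vertex of $G$ may occur several times on a single face), and handling the outer face on the same footing as the internal ones.
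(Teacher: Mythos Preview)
The paper does not give its own proof of this theorem; it is quoted as a known characterization from Bertolazzi et al.\ and Didimo et al.\ and used as a black box. There is therefore nothing in the paper to compare your argument against.

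That said, your outline is essentially the classical proof from those references. Necessity of UP0--UP2 via the local geometry at a vertex is standard, and your route to UP3 (augment the given upward drawing to a planar $st$-digraph, where UP3 is trivial on every face, then delete the added edges one by one using the merge identity $\ell=\ell_1+\ell_2+1$) is one of the standard derivations; the more common alternative is a direct sweep/rotation argument on a single face, but yours is perfectly sound. For sufficiency you propose the saturation technique---repeatedly insert an edge inside a face that still carries a large internal angle until a planar $st$-digraph is reached, draw that, and delete back---which is exactly the approach of Bertolazzi et al.

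You have also correctly located where the real work lies: exhibiting, for any internal face with a large angle, an edge that can be inserted so that both resulting faces remain consistent, the digraph stays acyclic, and a suitable progress measure strictly decreases; and handling non-simple face boundaries and the outer face. Your sketch states that such an insertion exists but does not construct it, so as written it is a correct plan rather than a complete proof. The original papers carry this out by pairing a large-angle local source on the face boundary with a suitable local sink reachable along the boundary and arguing acyclicity and consistency of the two new faces; filling in that construction (and the label redistribution you allude to) is what would turn your outline into a full argument.
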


\iflong
 
	Theorem~\ref{th:upward-conditions} has the following algorithmic consequence.
	
\begin{theorem}[\cite{bdl-udtd-94,bkm-msms-17}]\label{th:upward-fixed-embedding}
		Let $G$ be an $n$-vertex digraph and $\mathcal E$ be a planar embedding of the underlying graph of $G$. It is possible to test in $\bigoh(n \log^3 n)$ time whether there exists an angle assignment $\lambda$ such that $\mathcal E$ and $\lambda$ define an upward planar embedding of $G$. Consequently, it can be tested in $\bigoh(n \log^3 n)$ time whether $G$ admits an upward planar drawing $\Gamma$ which respects $\mathcal E$; in the positive case, $\Gamma$ can be constructed within the same time bound.
	\end{theorem}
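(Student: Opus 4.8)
The plan is to use Theorem~\ref{th:upward-conditions} to recast the search for a valid assignment $\lambda$ as a feasibility question for an integral flow in a planar network built from the vertex--face incidence structure of $\mathcal E$, and then to solve it with a near-linear-time planar max-flow routine. First I would observe that, once $\mathcal E$ is fixed, almost all angle labels are forced: by UP0 every flat angle must receive label $0$, and by UP2 every switch angle incident to a non-switch vertex must receive label $-1$ (since $n_1(v)=0$ there). Hence the only freedom lies in the switch angles incident to the \emph{switch} vertices, each of which is labeled either $1$ (``large'') or $-1$. By Theorem~\ref{th:upward-conditions}, an upward planar embedding extending $\mathcal E$ exists if and only if one can choose a set $L$ of such large angles so that (i) exactly one angle of $L$ is incident to each switch vertex (this is UP1), and (ii) exactly $c_f$ angles of $L$ lie in each face $f$ (this is UP3), where, writing $s_f$ for the number of switch angles incident to $f$, one sets $c_f = s_f/2 - 1$ for an internal face and $c_f = s_f/2 + 1$ for the outer face. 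Using acyclicity (which forbids a face bounded solely by non-switch angles) and the fact that source-switch and sink-switch angles alternate along the boundary of each face, every $c_f$ is a nonnegative integer; moreover $\sum_f c_f$ equals the number of switch vertices by Euler's formula, so conditions (i)--(ii) are a priori consistent.

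Next I would encode (i)--(ii) as an integral flow problem on a planar network $\mathcal N$. Let $R$ be the subgraph of the vertex--face incidence graph of $\mathcal E$ induced by the switch vertices together with all faces; it has one node per switch vertex, one node per face, and one arc per incidence (switch angle) of a switch vertex with a face, and it is planar because it embeds inside $\mathcal E$. Orient each arc from its switch vertex $v$ to its face $f$ and give it capacity $1$; then attach to each switch-vertex node $v$ a fresh source $v'$ joined to $v$ by a single capacity-$1$ arc (capping the total outflow of $v$), and to each face node $f$ a fresh sink $f'$ reached from $f$ by a single capacity-$c_f$ arc. The resulting directed network $\mathcal N$ is planar, has $\bigoh(n)$ nodes and arcs, and an integral multiple-source multiple-sink flow of value equal to the number of switch vertices corresponds exactly to an admissible set $L$: a unit on an arc $v\to f$ declares that switch angle large, the capacity-$1$ source arcs enforce (i), and because the total flow is at most $\sum_f c_f$, which equals the number of switch vertices, a flow of that value must also saturate every face arc, giving (ii). Consequently $G$ admits an upward planar embedding extending $\mathcal E$ if and only if the maximum flow in $\mathcal N$ from the sources $\{v'\}$ to the sinks $\{f'\}$ equals the number of switch vertices, and such a flow directly exhibits $\lambda$.

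It then remains to compute this maximum flow within the stated time. As $\mathcal N$ is a planar digraph with $\bigoh(n)$ edges, any standard max-flow algorithm already yields a polynomial-time bound (essentially reproving the classical result of~\cite{bdl-udtd-94}); to obtain $\bigoh(n\log^3 n)$ I would invoke the multiple-source multiple-sink maximum-flow algorithm for planar digraphs of~\cite{bkm-msms-17}, which runs in $\bigoh(n\log^3 n)$ time and returns an integral flow. Reading $\lambda$ off the flow answers the first part of the statement. For the ``consequently'' part, $G$ has an upward planar drawing respecting $\mathcal E$ precisely when such a $\lambda$ exists, and from $\mathcal E$ and $\lambda$ one produces an actual drawing $\Gamma$ in $\bigoh(n)$ time by the standard route: augment the upward plane digraph to a planar $st$-digraph without altering $\mathcal E$ or the angle assignment, and then read coordinates off a dominance (equivalently, visibility) representation of that $st$-digraph~\cite{bdl-udtd-94}. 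The main obstacle I anticipate is matching the running time, rather than the reduction itself: one must take care that the source- and sink-capping gadgets keep $\mathcal N$ planar (so that a near-linear planar max-flow routine is applicable) and that the \emph{exact} conditions UP1 and UP3 are captured purely by thresholding the flow value, thereby avoiding genuinely lower-bounded flows.
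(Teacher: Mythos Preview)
Your proposal is correct and follows essentially the same approach as the paper: reduce, via Theorem~\ref{th:upward-conditions}, to a feasibility question on a planar bipartite vertex--face incidence network (sources at switch vertices with unit supply, sinks at faces with demand $s_f/2\mp 1$), and then invoke the multiple-source multiple-sink planar max-flow algorithm of~\cite{bkm-msms-17} to obtain the $\bigoh(n\log^3 n)$ bound. Your write-up is in fact slightly more careful than the paper's sketch in restricting sources to switch vertices and in explicitly justifying that the $c_f$ are nonnegative integers whose sum matches the number of switch vertices, and in spelling out the final drawing construction via a planar $st$-digraph.
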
	

	Theorem~\ref{th:upward-fixed-embedding} was originally proved in~\cite{bdl-udtd-94} with an $\bigoh(n^2)$ running time. The main idea for its proof is the following. Construct a \emph{planar bipartite network} $\mathcal N(S,T,A)$, where:
	\begin{itemize}
		\item $S$ is a set of sources; there is a source $s_w$ for each vertex $w$ of $G$; each source can supply a single unit of flow;
		\item $T$ is a set of sinks; there is a sink $t_f$ for each face $f$ of $\mathcal E$; each sink $t_f$ demands a number of units of flow equal to $n_f/2 -1$ or $n_f/2 +1$, depending on whether $f$ is an internal face or the outer face of $\mathcal E$, where $n_f$ is the number of switch angles incident to $f$; and
		\item $A$ is a set of arcs from the sources to the sinks; there is an arc from a source $s_w$ to a sink $t_f$ if the vertex $w$ corresponding to $s_w$ is incident to the face $f$ corresponding to $t_f$; each arc $a$ has a capacity $c_a$ of a single unit of flow. 
	\end{itemize}
	It was proved in~\cite{bdl-udtd-94} that there exists an angle assignment $\lambda$ such that $\mathcal E$ and $\lambda$ define an upward planar embedding of $G$ if and only if the bipartite network $\mathcal N$ admits a flow whose value is the sum of the demands of the sinks in $T$ (or the sum of the supplies of the sources in $S$). A \emph{flow} is an assignment of a value $\phi_a\leq c_a$ to each arc $a$ such that the sum of the values assigned to the arcs outgoing each node $s_w$ does not exceed the supply of $s_w$ and  the sum of the values assigned to the arcs incoming each node $t_f$ does not exceed the demand of $t_f$. The \emph{value} of a flow is the sum of the values assigned to its arcs. 
	
	Borradaile et al.~\cite{bkm-msms-17} presented an $\bigoh(n\log^3 n)$ algorithm to find a maximum flow in a planar network with multiple sources and sinks. This result, plugged into the described framework by Bertolazzi et al.~\cite{bdl-udtd-94}, gives us a proof of Theorem~\ref{th:upward-fixed-embedding}.
	
	\fi
	\iflong
	\subsection{Parameterized Complexity}
	In parameterized complexity~\cite{CyganFKLMPPS15,DowneyF13},
the complexity of a problem is studied not only with respect to the input size, but also with respect to some problem parameter(s). The core idea behind parameterized complexity is that the combinatorial explosion resulting from the \NP-hardness of a problem can sometimes be confined to certain structural parameters that are small in practical settings. We now proceed to the formal definitions.

A {\it parameterized problem} $Q$ is a subset of $\Omega^* \times
\mathbb{N}$, where $\Omega$ is a fixed alphabet. Each instance of $Q$ is a pair $(I, \kappa)$, where $\kappa \in \mathbb{N}$ is called the {\it
parameter}. A parameterized problem $Q$ is
{\it fixed-parameter tractable} (\FPT)~\cite{DowneyF13,CyganFKLMPPS15}, if there is an
algorithm, called a {\em fixed-parameter algorithm}, that decides whether an input $(I, \kappa)$
is a member of $Q$ in time $f(\kappa) \cdot |I|^{\bigoh(1)}$, where $f$ is a computable function.  The class \FPT{} denotes the class of all fixed-parameter tractable parameterized problems.
A weaker notion of tractability is that of \XP: a parameterized problem $Q$ is in the class \XP\ if there is an algorithm that decides whether an input $(I, \kappa)$
is a member of $Q$ in time $(|I|+1)^{f(\kappa)}$, where $f$ is a computable function.
	\fi

\iflong
	\subsection{Treewidth and Treedepth} \fi
\ifshort	
\subparagraph{Treewidth and Treedepth.} \fi
	Here we consider the treewidth and treedepth of the underlying graphs\footnote{Directed alternatives to treewidth exist, but are typically not well-suited for algorithmic applications~\cite{GanianHK0ORS16}.}.
	A \emph{tree-decomposition}~$\mathcal{T}$ of a graph $G=(V,E)$ is a pair 
$(T,\chi)$, where $T$ is a tree (whose vertices we call \emph{nodes}) rooted at a node $r$ and $\chi$ is a function that assigns each node $t$ a set $\chi(t) \subseteq V$ such that the following holds: 
\ifshort
for every $uv \in E$ there is a node	$t$ such that $u,v\in \chi(t)$, and for every vertex $v \in V$, the set of nodes $t$ satisfying $v\in \chi(t)$ forms a nonempty subtree of~$T$. 
%
%
%
\fi
\iflong
\begin{itemize}[noitemsep]
	\item For every $uv \in E$ there is a node	$t$ such that $u,v\in \chi(t)$.
	\item For every vertex $v \in V$, the set of nodes $t$ satisfying $v\in \chi(t)$ forms a nonempty subtree of~$T$.
\end{itemize}

A tree-decomposition is \emph{nice} if the following two conditions are also satisfied:
\begin{itemize}[noitemsep]
	\item $|\chi(\ell)|=1$ for every leaf $\ell$ of $T$ and $|\chi(r)|=0$.
	\item There are only three kinds of non-leaf nodes in $T$:
	\begin{itemize}[noitemsep,label=]
        \item \textbf{Introduce node:} a node $t$ with exactly
          one child $t'$ such that $\chi(t)=\chi(t')\cup
          \{v\}$ for some vertex $v\not\in \chi(t')$.
        \item \textbf{Forget node:} a node $t$ with exactly
          one child $t'$ such that $\chi(t)=\chi(t')\setminus
          \{v\}$ for some vertex $v\in \chi(t')$.
        \item \textbf{Join node:} a node $t$ with two children $t_1$,
          $t_2$ such that $\chi(t)=\chi(t_1)=\chi(t_2)$.
	\end{itemize}
\end{itemize}
\fi
The \emph{width} of a tree-decomposition $(T,\chi)$ is the size of a largest set $\chi(t)$ minus~$1$, and the \emph{treewidth} of the graph $G$,
denoted $\tw(G)$, is the minimum width of a tree-decomposition of~$G$. 
\iflong It is known that a tree-decomposition can be transformed into a nice tree-decomposition of the same width in linear time. 
Efficient fixed-parameter algorithms are known for computing a nice tree-decomposition of near-optimal width~\cite{BodlaenderDDFLP16,Kloks94}. 

\begin{proposition}[\cite{BodlaenderDDFLP16}]\label{fact:findtw}%
	There exists an algorithm which, given an $n$-vertex graph $G$ and an integer~$k$, in time $2^{\bigoh(k)}\cdot n$ either outputs a tree-decomposition of $G$ of width at most $5k+4$ and $\bigoh(n)$ nodes, or determines that $\tw(G)>k$.
\end{proposition}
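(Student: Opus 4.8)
The statement is precisely the main theorem of Bodlaender, Drange, Dregi, Fomin, Lokshtanov and Pilipczuk~\cite{BodlaenderDDFLP16}---their $c^k\cdot n$-time $5$-approximation algorithm for treewidth---so the plan is to sketch that approach rather than to reprove it from scratch. The algorithm is a top-down divide-and-conquer that builds the tree-decomposition recursively. The recursive routine is invoked on a connected subgraph $G'$ of $G$ together with a \emph{boundary} set $W\subseteq V(G')$ with $|W|=\bigoh(k)$ that has already been committed to lie inside one bag; it returns a tree-decomposition of $G'$ of width $\bigoh(k)$ in which $W$ is contained in a single bag, or it reports $\tw(G)>k$. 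If $|V(G')|$ is at most a suitable multiple of $k$ we simply return the bag $V(G')$; otherwise we compute a small \emph{balanced separator} $S$, make $W\cup S$ the root bag, recurse on $G'[D]$ with boundary $(W\cap D)\cup S$ for every connected component $D$ of $G'-S$, and glue the returned decompositions onto the root bag along the shared vertices.

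The combinatorial core is the classical fact that a graph of treewidth at most $k$ admits, for every weighting of its vertices, a $\tfrac{2}{3}$-balanced separator of size at most $k+1$ (take a centroid bag of an optimal tree-decomposition). Read contrapositively: if for the current weighting no separator of size $\le k+1$ is $\tfrac{2}{3}$-balanced, we may immediately and correctly report $\tw(G)>k$. Such a separator---or a certificate that none exists---is found in time $2^{\bigoh(k)}\cdot n$: first note that $\tw(G)\le k$ forces $|E(G)|\le kn$, so if this fails we are done; otherwise, since $|W|=\bigoh(k)$, try all $2^{\bigoh(k)}$ ways of splitting $W$ between the two sides and the separator itself, and for each compute a minimum vertex cut by running at most $k+2$ augmenting-path searches (aborting once the cut exceeds $k+1$), at cost $\bigoh(k^2 n)$ each. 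The same routine with uniform weights produces a separator balanced with respect to $|V(G')|$.

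To control the recursion one interleaves two kinds of split. When $|W|$ is near its allowed maximum $\Theta(k)$, split along a separator balanced with respect to $W$: each component $D$ of $G'-S$ then satisfies $|W\cap D|\le\tfrac{2}{3}|W|$, so the new boundary $(W\cap D)\cup S$ is again $\bigoh(k)$ with a strictly smaller ``$W$-part'', while the root bag $W\cup S$ has size $\bigoh(k)$. When $|W|$ is small, split along a separator balanced with respect to the vertex count: this may enlarge the boundary (but only back up to $\bigoh(k)$) while halving $|V(G')|$. A careful bookkeeping of these two invariants yields bags of size at most $5k+5$---hence width at most $5k+4$, the source of the ``$5$''---and a recursion tree of total size $\bigoh(n)$, so the glued decomposition has $\bigoh(n)$ nodes (which can in any case be ensured by a routine trimming). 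Correctness of the result is the standard check for gluing tree-decompositions along a common bag.

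The step I expect to be the genuine obstacle is obtaining the \emph{linear} dependence on $n$ rather than $\bigoh(n\log n)$. A naive version of the above pays a logarithmic factor because subproblem sizes telescope only to $\bigoh(n\log n)$, and one cannot fall back on the dynamic programming over tree-decomposition characteristics used in Bodlaender's classical algorithm without reintroducing a bad dependence on $k$. Removing that logarithmic factor is exactly the technical heart of~\cite{BodlaenderDDFLP16} and rests on an amortized analysis bounding how often each vertex is touched by a recursive call; by comparison, the balanced-separator lemma, the max-flow subroutine, and the gluing are routine.
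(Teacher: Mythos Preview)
Your proposal is appropriate: the paper provides no proof of this proposition at all, merely citing it as a known result from~\cite{BodlaenderDDFLP16}, so your high-level sketch of the Bodlaender--Drange--Dregi--Fomin--Lokshtanov--Pilipczuk algorithm is already more than the paper offers. Your outline of the recursive balanced-separator scheme, the $2^{\bigoh(k)}$ branching over partitions of the boundary combined with bounded-size min-cut computations, and your identification of the linear-in-$n$ running time (as opposed to $\bigoh(n\log n)$) as the genuine technical contribution of the cited work are all accurate.
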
  

We let $T_t$ denote the subtree of $T$ rooted at a node $t$, and we use $\chi(T_t)$ to denote the set $\bigcup_{t'\in V(T_t)}\chi(t')$. In the context of dynamic programming, the set $\past(t)=\chi(T_t)\setminus \chi(t)$ is called the \emph{past} while the set $\past(t)=V\setminus\chi(T_t)$ is called the \emph{future}.

\fi	
The second structural parameter that we will be considering here is the \emph{treedepth} of a graph $G$, denoted $\td(G)$~\cite{sparsity}. A useful way of thinking about graphs of bounded treedepth is that they are (sparse) graphs with no long paths.

\iflong	
	A \emph{rooted forest} is a disjoint union of rooted trees. 
	For a vertex~$x$ in a tree~$T$ 
	of a rooted forest, the \emph{height} (or {\em depth})
	of~$x$ in the forest is the number of vertices in the path from 
	the root of~$T$ to~$x$. The \emph{height of a rooted forest} is the maximum height of a vertex of the forest. 
	\sloppy
	\begin{definition}[Treedepth]\label{def:td}
		Let the \emph{closure} of a rooted forest~$\cal F$ be the graph
		$\clos({\cal F})=(V_c,E_c)$ with the vertex set 
		$V_c=\bigcup_{T \in \cal F} V(T)$ and the edge set
		$E_c=\{xy \colon \text{$x$ is an ancestor of $y$ in some $T\in\cal F$}\}$.
		A \emph{treedepth decomposition}
		of a graph $G$ is a rooted forest $\cal F$ such that $G \subseteq \clos(\cal F)$.
		The \emph{treedepth} $\td(G)$ of a graph~$G$ is the minimum height of
		any treedepth decomposition of $G$. 
	\end{definition}
	
	\noindent We will later use $T_x$ to denote the vertex set of the subtree of $T$ rooted at a vertex $x$ of $T$. 
	The following properties of treedepth will be crucial for our considerations.

\begin{proposition}[\cite{sparsity}]
		\label{pro:tdfacts}
		For every path of length $d$ in a graph $G$, it holds that $\td(G)\leq d\leq 2^{\td(G)}$. Moreover, $\tw(G) \leq \td(G)$.
	\end{proposition}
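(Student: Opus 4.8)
The plan is to prove the three inequalities separately, in each case by exhibiting the relevant decomposition explicitly and then reasoning combinatorially about it. Let $d$ denote the number of vertices on a longest path of $G$, so that $G$ contains $P_d$ as a subgraph but no $P_{d+1}$; this is the reading of the statement under which both bounds on $d$ are meaningful.

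\emph{The inequality $\td(G)\le d$.} I would take a depth-first search forest $\mathcal{F}$ of $G$, obtained by running DFS from an arbitrary vertex in each connected component and rooting each tree at its start vertex. The classical property of undirected DFS is that every non-tree edge of $G$ connects a vertex to one of its ancestors in $\mathcal{F}$; together with the tree edges this gives $G\subseteq\clos(\mathcal{F})$, so $\mathcal{F}$ is a treedepth decomposition in the sense of Definition~\ref{def:td}. Every root-to-node path in $\mathcal{F}$ uses only tree edges and is hence a path of $G$, so it has at most $d$ vertices; thus $\mathcal{F}$ has height at most $d$, and $\td(G)\le d$ follows.

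\emph{The inequality $d\le 2^{\td(G)}$.} The key step is the following claim, which I would prove by induction on $h$: if $\mathcal{F}$ is a rooted forest of height $h$, then $\clos(\mathcal{F})$ contains no path on $2^h$ vertices. For the inductive step, let $P$ be a path of $\clos(\mathcal{F})$; since $\clos(\mathcal{F})$ has edges only within a single tree, $P$ lies inside $\clos(T)$ for one tree $T$ of $\mathcal{F}$, with root $r$. If $r\notin V(P)$, then $P$ is a path of $\clos(T-r)$, the closure of a forest of height at most $h-1$, and the bound follows from the inductive hypothesis with room to spare. If $r\in V(P)$, then deleting $r$ splits $P$ into at most two subpaths, each of which is a path of $\clos(T-r)$ and hence has at most $2^{h-1}-1$ vertices, so $P$ has at most $1+2(2^{h-1}-1)=2^h-1$ vertices. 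Applying the claim to an optimal treedepth decomposition of $G$ (of height $\td(G)$) and to a longest path of $G$ gives $d\le 2^{\td(G)}-1<2^{\td(G)}$. I expect this inductive claim to be the main obstacle, chiefly in keeping the case analysis around the root clean and in correctly observing that deleting a vertex commutes with taking the closure.

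\emph{The inequality $\tw(G)\le\td(G)$.} I would start from an optimal treedepth decomposition $\mathcal{F}$ of $G$, add a virtual common root $\rho$ to turn the forest into a tree $T$, and assign to each node $x$ of $\mathcal{F}$ the bag $\chi(x)$ consisting of $x$ together with all of its ancestors in $\mathcal{F}$, with $\chi(\rho)=\emptyset$. Then $|\chi(x)|$ is at most the height of $\mathcal{F}$, so the width is at most $\td(G)-1$. For correctness: every edge of $G$ is, by $G\subseteq\clos(\mathcal{F})$, an ancestor--descendant pair $\{u,v\}$ with, say, $u$ an ancestor of $v$, and is therefore covered by $\chi(v)$; and for a fixed vertex $v$, the nodes $x$ with $v\in\chi(x)$ are exactly the descendants of $v$ in $\mathcal{F}$, which form a connected subtree of $T$. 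Hence $(T,\chi)$ is a tree-decomposition, and $\tw(G)\le\td(G)-1\le\td(G)$.
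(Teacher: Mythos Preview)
The paper does not supply a proof of this proposition: it is stated as a known fact with a citation to \cite{sparsity}, so there is nothing to compare against on the paper's side. Your argument is the standard one and is correct. You also handle the slightly loose phrasing of the statement well: as written, ``for every path of length $d$'' would make the lower bound $\td(G)\le d$ false for short paths, and you rightly take $d$ to be the number of vertices on a longest path, which is the only reading under which both inequalities hold. Each of the three pieces---the DFS forest for $\td(G)\le d$, the induction on height for $d\le 2^{\td(G)}$, and the ancestor-bag construction for $\tw(G)\le\td(G)$---is the textbook proof, and the details you give (non-tree edges of DFS being back edges, deletion of the root splitting a path into two pieces each living in a height-$(h-1)$ closure, and the subtree property for the bags) are all sound.
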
 
\fi

\iflong
	\subsection{Expansion}
	\fi
	\ifshort
	\subparagraph{Expansion.}
\fi
\ifshort
	In our algorithms, we will employ a linear-time preprocessing step called expansion to simplify the input digraphs so that every vertex has at most one incoming edge (in which case it is a \emph{top} vertex) or at most one outgoing edge (in which case it is a \emph{bottom} vertex)~\cite{DBLP:journals/siamcomp/BertolazziBMT98}. The expansion is obtained by replacing each non-switch vertex $v$ with two new vertices $v_1$ and $v_2$, which inherit the incoming and outgoing edges of $v$, respectively, and the edge $(v_1,v_2)$ (called the \emph{special edge} of $v_1$ and $v_2$). It is known that expansion preserves upward planarity, and it is possible to observe that it preserves biconnectivity, does not create new sources, and only increases treewidth and treedepth by at most a factor of $2$.
	\fi
\iflong
	In our algorithms, we will employ a linear-time preprocessing step to simplify the input digraphs so that every vertex has at most one incoming edge or at most one outgoing edge; this operation, which was introduced by Bertolazzi et al.~\cite{DBLP:journals/siamcomp/BertolazziBMT98}, is called \emph{expansion} and is defined as follows. We begin by marking each non-switch vertex in the digraph $G=(V,E)$ as \emph{unprocessed}. We then loop over each non-switch unprocessed vertex $v$, for which we create two new \emph{processed} vertices $v_1$ and $v_2$ and add the following edges to the digraph: $\{v_1v_2\}\cup \{(av_1~|~av\in E\}\cup \{v_2b~||~ vb\in E\}$, and then we delete $v$. In an expanded graph, that is, a graph that has gone through the expansion operation, every vertex has at most one outgoing or at most one incoming edge; in the former case we say it is a \emph{bottom} vertex, while in the latter case it is a \emph{top} vertex. The only outgoing (incoming) edge of a bottom (top) vertex of an expanded digraph is the \emph{special edge} of that vertex. Since a non-switch vertex of degree $2$ has exactly one incoming and one outgoing edge, it could be considered both a top and a bottom vertex. To avoid this ambiguity, we treat degree-$2$ vertices as top vertices.   
	
	A digraph expansion preserves upward planarity~\cite{DBLP:journals/siamcomp/BertolazziBMT98}; furthermore, it preserves biconnectivity and the number of sources, and at most doubles the number of vertices. Moreover, below we show that applying expansion on graphs of bounded treedepth results in graphs which also have bounded treedepth. Hence, in the remainder, we will assume, without loss of generality, that each digraph for which we aim to test upward planarity has been expanded and thus each of its vertices has at most one incoming edge or at most one outgoing edge.
 \begin{observation} 
		Let $G$ be a digraph and $G'$ be its expansion. The treedepth and treewidth of $G'$ are at most twice the treedepth and treewidth of $G$, respectively. Moreover, $G$ and $G'$ have the same number of sources.
	\end{observation}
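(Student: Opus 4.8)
The plan is to prove the two claims separately, starting with the structural facts that follow directly from the definition of expansion, and then leveraging standard properties of treewidth and treedepth. The key observation is that the expansion operation acts \emph{locally}: each non-switch vertex $v$ is replaced by an edge $v_1v_2$, where $v_1$ inherits exactly the incoming edges of $v$ and $v_2$ inherits exactly the outgoing edges of $v$. Since a non-switch vertex has both incoming and outgoing edges, it is not a source; and neither $v_1$ (which has the outgoing special edge $(v_1,v_2)$) nor $v_2$ (which has the incoming special edge $(v_1,v_2)$) is a source. Switch vertices are untouched. Hence $G$ and $G'$ have exactly the same set of sources, which settles the last claim.

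For the treewidth bound, I would take an optimal tree-decomposition $(T,\chi)$ of (the underlying graph of) $G$ of width $\tw(G)$ and modify it to obtain a tree-decomposition of $G'$. The natural approach is: in every bag $\chi(t)$, replace each occurrence of a non-switch vertex $v$ by the pair $\{v_1,v_2\}$. Call the resulting function $\chi'$. Every edge of $G'$ is then covered: an original edge $uv$ that was covered by a bag containing both $u$ and $v$ is now covered by a bag containing the corresponding endpoints among $\{u\}$ or $\{u_1,u_2\}$ and $\{v\}$ or $\{v_1,v_2\}$ (for a non-switch endpoint, the correct vertex $v_1$ or $v_2$ is present since both are); the special edge $(v_1,v_2)$ is covered by any bag that contained $v$. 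The connectivity condition is preserved because $v_1$ and $v_2$ occupy exactly the bags that $v$ did, which form a subtree. The width at most doubles, since $|\chi'(t)| \le 2|\chi(t)|$, giving $\tw(G') \le 2\tw(G) + 1 \le 2\tw(G)$ when $\tw(G)\ge 1$ (and the $\tw(G)=0$ case is trivial as $G$ has no edges).

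For the treedepth bound, I would argue via the path characterization in Proposition~\ref{pro:tdfacts}, or directly via treedepth decompositions. The direct route: take an optimal treedepth decomposition $\cal F$ of $G$, so $G\subseteq\clos(\cal F)$ and $\cal F$ has height $\td(G)$. Build a forest $\cal F'$ for $G'$ by replacing, for each non-switch vertex $v$, the node $v$ in $\cal F$ with a path of two nodes $v_1,v_2$ (say $v_1$ takes the position of $v$ and $v_2$ becomes its child, with all former children of $v$ becoming children of $v_2$). This at most doubles the height. It remains to check that $G'\subseteq\clos(\cal F')$: any edge of $G'$ corresponds to an edge $uv$ of $G$ with endpoints mapped into the chain(s) expanding $u$ and $v$, and since $u$ was an ancestor (or descendant) of $v$ in $\cal F$, the corresponding expanded vertices remain in an ancestor–descendant relationship in $\cal F'$; the special edge $(v_1,v_2)$ is an ancestor–descendant pair by construction. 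Thus $\td(G') \le 2\td(G)$.

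The main obstacle — though it is really a bookkeeping issue rather than a conceptual one — is making precise the ancestor/descendant argument in the treedepth case: when an original edge $uv$ has, say, $u$ an ancestor of $v$ in $\cal F$, one must verify that after replacing $u$ by the chain $u_1,u_2$ and $v$ by the chain $v_1,v_2$, \emph{whichever} of $u_1,u_2$ is the relevant endpoint of the corresponding $G'$-edge is still an ancestor of the relevant endpoint among $v_1,v_2$. This works because the entire chain expanding $u$ sits at or above the position of $u$, and the entire chain expanding $v$ sits at or below the position of $v$, so the ancestor relation between the chains is inherited wholesale; the only subtlety is that the degree-$2$ tie-breaking convention (treating such vertices as top vertices) does not affect this, since it only fixes which of $v_1,v_2$ carries the special edge.
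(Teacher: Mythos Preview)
Your proposal is correct and follows essentially the same approach as the paper: for treewidth, replace each (non-switch) vertex by its pair in every bag; for treedepth, replace each such vertex by a two-vertex chain in the treedepth forest; for sources, observe that expansion only touches non-switch vertices and creates no new sources. One minor slip: the inequality $2\tw(G)+1 \le 2\tw(G)$ is of course false, so your argument (like the paper's) actually yields $\tw(G')\le 2\tw(G)+1$ rather than the stated $2\tw(G)$, but this additive $+1$ is immaterial for the paper's purposes.
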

\begin{proof}
		Consider a treedepth decomposition $\cal F$ of $G$, replace every vertex of this decomposition with an edge to obtain a rooted forest $\cal F'$.
		We claim that $\cal F'$ is a treedepth decomposition of $G'$: For a vertex $v$ of $G$ corresponding to a vertex $u$ in $\cal F$, associate the respective vertices $v_1$, $v_2$ of $G'$ with the endpoints of the edge in $\cal F'$ that replaced $u$. It can be easily seen that no ancestor-descendant relation is violated, and that the height of $\cal F'$ is at most twice the height of $\cal F$.
		
		For treewidth, it is possible to simply transform the original tree-decomposition of $G$ into one of $G'$ by replacing each vertex by the two vertices that replaced it, resulting in a new tree-decomposition of $G'$ with at most twice the width of the original. Regarding the number of sources, it can be observed that each vertex that is not a source is replaced by two vertices which are not sources, and each source is replaced by two vertices of which precisely one is a source.
	\end{proof}\fi
	

\iflong
	\subsection{SPQR-tree decomposition}\label{subsec:spqr}
	\fi
	\ifshort
	\subparagraph{SPQR-tree decomposition.}
	\fi
	Let $G$ be a biconnected undirected graph. 
	A pair of vertices is a \emph{separation pair} if its removal disconnects $G$. 
	A \emph{split pair} is either a separation pair or a pair of adjacent vertices. 
	A \emph{split component} of $G$ with respect to a split pair $\{u,v\}$ is either an edge $(u,v)$ or a maximal subgraph $G_{uv} \subset G$ such that $\{u,v\}$ is not a split pair of $G_{uv}$. 
	A split pair $\{s',t'\}$ of $G$ is \emph{maximal} with respect to a split pair $\{s,t\}$ of $G$, if for every other split pair $\{s^*,t^*\}$ of $G$, there is a split component that includes the vertices $s',t',s$ and $t$. 
	

	An \emph{SPQR-tree} $T$ of $G$ with respect to an edge $e^*$ is a rooted tree that describes a recursive decomposition of $G$ induced by its split pairs~\cite{dt-opl-96}. 
	\iflong In what follows, we call \emph{nodes} the vertices of $T$, to distinguish them from the vertices of $G$. \fi
\iflong The nodes of $T$ are of four types S, P, Q, and R. \fi	
	Each node $\mu$ of $T$ is associated with a split pair $\{u,v\}$ of $G$, where $u$ and $v$ are the \emph{poles} of $\mu$, with a subgraph $G_\mu$ of $G$, called the \emph{pertinent graph} of $\mu$, which consists of one or more split components of $G$ with respect to $\{u,v\}$, and with a multigraph $\textrm{sk}(\mu)$, called the \emph{skeleton of $\mu$}, which represents the arrangement of such split components in $G_\mu$. The edges of $\textrm{sk}(\mu)$ are called \emph{virtual edges}. Each node $\mu$ of $T$ whose pertinent graph is not a single edge has some children, each corresponding to a split components of $G$ in $G_\mu$. Each of these children is the root of a subtree of $T$.  	
	\ifshort
The nodes of $T$ are of four types S, P, Q, and R. Q-nodes correspond to edges of $G$, while S-, P- and R-nodes correspond to so-called series, parallel and rigid compositions of the pertinent graphs of the children of the given node~\cite{dt-opl-96}.
	\fi

	\iflong
	Formally, $T$ is defined as follows.	
The root $\rho$ of $T$ is a Q-node corresponding to the edge $e^*=(u,v)$. The child $\sigma$ of $\rho$ has $u$ and $v$ as poles, and its pertinent graph is defined as $G_\sigma = G \setminus e^*$. Now consider a node $\mu$ of $T$ with poles $s$ and $t$ and a pertinent graph $G_{\mu}$; assume that $G_{\mu}$ is \emph{$st$-biconnectible}, i.e., it is either biconnected or it becomes so if the edge $(s,t)$ is added to it. We distinguish some cases.

	\textbf{Base case}: $G_\mu$ consists of a single edge $e$ between $s$ and $t$. Then, $\mu$ is a Q-node whose skeleton is the edge $(s,t)$. The node $\mu$ is a leaf of $T$. 

	\textbf{Series case}: $G_\mu$ is not biconnected; since $G_\mu$ is $st$-biconnectible, each cut-vertex of $G_{\mu}$ (i.e., each vertex whose whose removal disconnects $G_\mu$) lies on any path from $s$ to $t$. Then, $\mu$ is an S-node. Let $v_1,\dots,v_{k-1}$, where $k \geq 2$, be the cut vertices of $G_\mu$, in the order in which they are encountered in any path from $s$ to $t$. The skeleton of $\mu$ is a path consisting of the virtual edges $e_1,\dots,e_k$; for $i=1,\dots,k$, we have $e_i= (v_{i-1},v_i)$, where $v_0=s$ and $v_k=t$. Further, $\mu$ has $k$ children $\nu_1,\dots,\nu_k$; for $i=1,\dots,k$, the poles of $\nu_i$ are $v_{i-1}$ and $v_{i}$, and the pertinent graph $G_{\nu_i}$ of $\nu_i$ is the union of all the split components with respect to $\{v_{i-1},v_{i}\}$ that do not contain both $s$ and $t$. The decomposition recurs on the nodes $\nu_1,\dots,\nu_k$.

	\textbf{Parallel case}: $G_\mu$ is not biconnected and $\{s,t\}$ is a split pair of $G_\mu$. Then $\mu$ is a P-node. If the split pair $\{s,t\}$ defines $k$ maximal split components of $G_\mu$, then the skeleton of $\mu$ is a set of $k$ parallel edges $e_1,\dots,e_k$ between $s$ and $t$.  Further, $\mu$ has $k$ children $\nu_1,\dots,\nu_k$; for $i=1,\dots,k$, the poles of $\nu_i$ are $s$ and $t$, and the pertinent graph $G_{\nu_i}$ of $\nu_i$ is a maximal split component with respect to $\{s,t\}$. The decomposition recurs on the nodes $\nu_1,\dots,\nu_k$.

	\textbf{Rigid case}: None of the other cases is applicable. Let $\{s_1,t_1\},\dots,\{s_k,t_k\}$ be the maximal split pairs of $G$ with respect to $\{s,t\}$ ($k \geq 1$) such that $\{s_i,t_i\}$ belongs to $G_\mu$, for $i=1,\dots,k$. 
	Then $\mu$ is an R-node whose skeleton is the graph whose vertex set is $\{s_1,t_1\}\cup \dots\cup \{s_k,t_k\}$ and whose edge set is $\{(s_1,t_1),\dots,(s_k,t_k)\}$. Then $\mu$ has $k$ children $\nu_1,\dots,\nu_k$; for $i=1,\dots,k$, the poles of $\nu_i$ are $s_i$ and $t_i$, and the pertinent graph $G_{\nu_i}$ of $\nu_i$ is the union of all the split components with respect to $\{s_{i},t_{i}\}$ that do not contain both $s$ and $t$. The decomposition recurs on the nodes $\nu_1,\dots,\nu_k$.
	
	\medskip
	\fi

	\begin{figure}[htb]
		\centering
		\begin{subfigure}{.2\textwidth}
			\centering
			\includegraphics[width=\columnwidth, page=1]{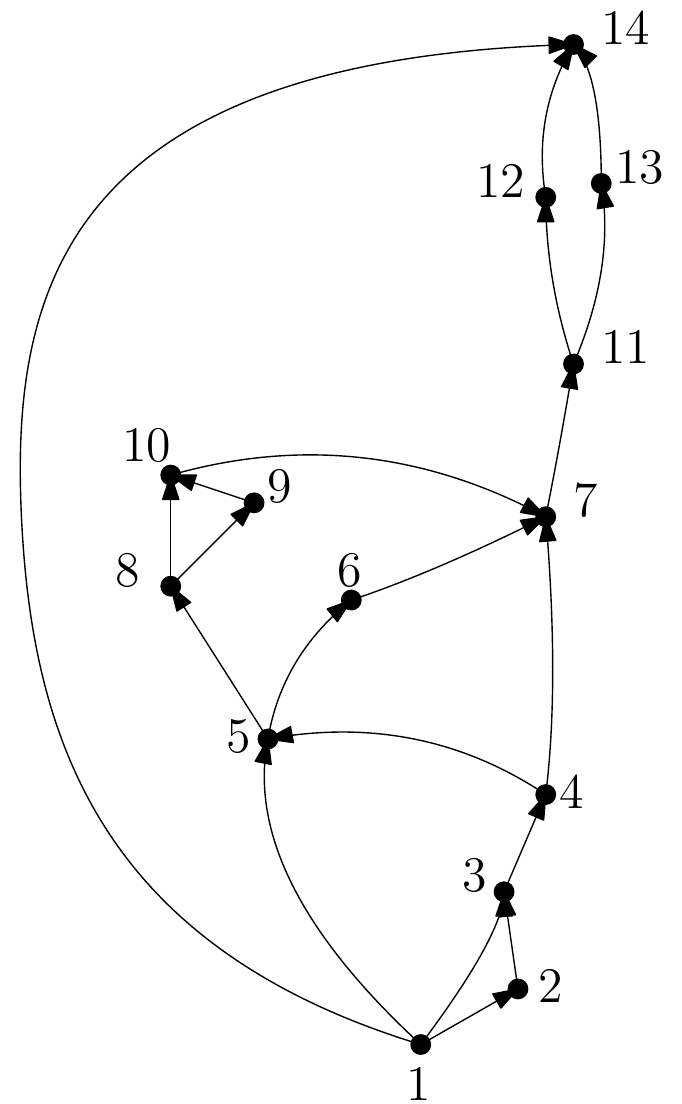}
			\subcaption{}
			\label{fi:SPQR-example-a}
		\end{subfigure}\\
		
		\begin{subfigure}{.49\textwidth}
			\centering
			\includegraphics[width=\columnwidth, page=2]{figures/SPQR-example}
			\subcaption{}
			\label{fi:SPQR-example-b}
		\end{subfigure}
		\hfil
		\begin{subfigure}{.49\textwidth}
			\centering
			\includegraphics[width=\columnwidth, page=3]{figures/SPQR-example}
			\subcaption{}
			\label{fi:SPQR-example-c}
		\end{subfigure}
		\caption{\label{fi:SPQR-example}(a) A planar DAG $G$. (b) An SPQR-tree of $G$. For each node that is not a Q-node, the skeleton is depicted together with a dashed edge to represent the rest of the graph; for each Q-node, the corresponding edge is shown.  (c) An SPQR-tree of $G$ whose S-nodes have exactly two children.}
	\end{figure}

	Note that each virtual edge $e_i$ in the skeleton of a node $\mu$ of $T$ \emph{corresponds} to the pertinent graph $G_{\nu_i}$ of a child $\nu_i$ of $\mu$. We say that $G_{\nu_i}$ is a \emph{component} of  $G_\mu$. Figs.~\ref{fi:SPQR-example-a} and~\ref{fi:SPQR-example-b} show a planar graph and its SPQR-tree. To simplify our algorithms, we assume that every S-node of $T$ has two children. If this is not the case, we can modify $T$ to achieve this property (see Fig.~\ref{fi:SPQR-example-c}). An SPQR-tree $T$ of an $n$-vertex planar graph has $\bigoh(n)$ Q-, S-, P-, and R-nodes. Also, the total number of vertices of the skeletons for the nodes in $T$ is $\bigoh(n)$~\cite{dt-opl-96}.
	
	
	When talking about an SPQR-tree $T$ of a biconnected directed graph $G$, we mean an SPQR-tree of its underlying graph. Let $\mu$ be a node of $T$ with poles $u$ and $v$. A \emph{$uv$-external upward planar embedding} of $G_{\mu}$ is an upward planar embedding of $G_{\mu}$ such that $u$ and $v$ are incident to the outer face. In our algorithms, when testing the upward planarity of a digraph $G$, the fact that its SPQR-tree $T$ is rooted at an edge $e^*$ of $G$ corresponds to the requirement that $e^*$ is incident to the outer face of the upward planar embedding $\mathcal E$ of $G$ we are looking for. For each node $\mu$ of $T$, the restriction of $\mathcal E$ to the vertices and edges of the pertinent graph $G_{\mu}$ of $\mu$ is a $uv$-external upward planar embedding of $G_{\mu}$.




\section{The Shapes of Components}\label{sec:comp_shapes}
	\setboolean{longShapeDesc}{true}
	
	Let $G$ be a biconnected DAG, let $T$ be an SPQR-tree of $G$ rooted at an edge $e^*$, let $\mu$ be a node of $T$ with poles $u$ and $v$, and let $\mathcal E_{\mu}$ be a $uv$-external upward planar embedding of $G_{\mu}$. Let $\lambda$ be the angle assignment defined by $\mathcal{E}_{\mu}$. The poles $u$ and $v$ identify two paths on the boundary of the outer face $f_0$ of $\mathcal{E}_{\mu}$: the \emph{left outer path} $P_l=\langle v_0=u, v_1, \dots, v_k=v \rangle$ is the path that leaves $f_0$ on the left when walking from $u$ to $v$; the \emph{right outer path} $P_r=\langle w_0=u,w_1,\dots,w_h=v\rangle$ of $\mathcal E_{\mu}$ is the path that leaves $f_0$ on the right when walking from $u$ to $v$; see Fig.~\ref{fig:shape_desc_example}. 
	\iflong Notice that $P_l$ and $P_r$ may share some vertices other than $u$ and $v$ if $G_{\mu}$ is not biconnected.\fi  
	For $i=0,1,\dots,k$, let $\alpha_i$ denote the angle at $v_i$ inside $f_0$ and, for $i=0,1,\dots,h$, let $\beta_i$ denote the angle at $w_i$ inside $f_0$. 
%
 The \emph{left-turn-number} $\tau_l(\mathcal E_{\mu},u,v)$ of $\mathcal E_{\mu}$ is defined as $\sum_{i=1}^{k-1}\lambda(\alpha_i)$, while the \emph{right-turn-number} $\tau_r(\mathcal E_{\mu},u,v)$ of $\mathcal E_{\mu}$ is  $\sum_{i=1}^{h-1}\lambda(\beta_i)$. 
Note that $\alpha_0=\beta_0$ and $\alpha_k=\beta_h$ are the angles at $u$ and $v$ inside $f_0$, respectively. 
The values $\lambda(\alpha_0)$ and $\lambda(\alpha_k)$ are also denoted by $\lambda(\mathcal E_{\mu},u)$ and $\lambda(\mathcal E_{\mu},v)$, respectively. 
Finally, given a vertex $w \in \{u,v\}$, let $\rho_l(\mathcal E_{\mu},w)$ denote the orientation of the edge $e_l$ of $P_l$ incident to $w$, that is, $\rho_l(\mathcal E_{\mu},w) = in$ if $e_l$ is an incoming edge for $w$, $\rho_l(\mathcal E_{\mu},w) = out$ otherwise. 
Analogously, let $\rho_r(\mathcal E_{\mu},w)$ denote the orientation of the edge $e_r$ of $P_r$ incident to $w$. 
The \emph{shape description of $\mathcal E_{\mu}$} is the tuple $\shapeDesc{\tau_l(\mathcal E_{\mu},u,v)}{\tau_r(\mathcal E_{\mu},u,v)}{\lambda(\mathcal E_{\mu},u)}{\lambda(\mathcal E_{\mu},v)}{\rho_l(\mathcal E_{\mu},u)}{\rho_r(\mathcal E_{\mu},u)}{\rho_l(\mathcal E_{\mu},v)}{\rho_r(\mathcal E_{\mu},v)}$; see Fig.~\ref{fig:shape_desc_example}.
	
	\iflong
	\begin{figure}[htb]
		\centering
		\begin{subfigure}{.4\textwidth}
			\centering
			\includegraphics[width=\columnwidth, page=13]{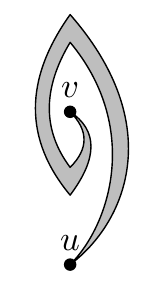}
		\end{subfigure}
		\caption{An upward planar embedding of a split component $G_\mu$ with poles $u$ and $v$ and shape description \shapeDesc{3}{0}{0}{-1}{out}{in}{out}{out}. The left (right) outer path is shown in green (orange).
		\label{fig:shape_desc_example}
		}
	\end{figure}
	\fi 
	\ifshort
	\begin{figure}[htb]
	\centering
	\begin{subfigure}{.4\textwidth}
			\centering
			\includegraphics[width=\columnwidth, page=13]{figures/general_comps}
		\end{subfigure}
		
		\caption{An upward planar embedding of a split component $G_\mu$ with poles $u$ and $v$ and shape description \shapeDesc{3}{0}{0}{-1}{out}{in}{out}{out}. The left (right) outer path is shown in green (orange).
			\label{fig:shape_desc_example}
		}
	\end{figure}	
	\fi

\iflong		
	We note that there are some dependencies between the values of a shape description. First, $\rho_l(\mathcal E_{\mu},u)$ and $\rho_r(\mathcal E_{\mu},u)$ have the same value if $\lambda(\mathcal E_{\mu},u)\in \{-1,1\}$, while they have different values if $\lambda(\mathcal E_{\mu},u)=0$. 
Similarly, $\rho_r(\mathcal E_{\mu},v)$ is implied by the values $\rho_l(\mathcal E_{\mu},v)$ and $\lambda(\mathcal E_{\mu},v)$. 
Furthermore, the values of $\tau_l(\mathcal E_{\mu},u,v)$ and $\rho_l(\mathcal E_{\mu},u)$ imply the value of $\rho_l(\mathcal E_{\mu},v)$; indeed, if $\tau_l(\mathcal E_{\mu},u,v)$ is even, then $\rho_l(\mathcal E_{\mu},u)$ and $\rho_l(\mathcal E_{\mu},v)$ are different, while if $\tau_l(\mathcal E_{\mu},u,v)$ is odd, then $\rho_l(\mathcal E_{\mu},u)$ and $\rho_l(\mathcal E_{\mu},v)$ are the same. Finally, $\lambda(\mathcal E_{\mu},v)$ is implied by $\tau_l(\mathcal E_{\mu},u,v)$,  $\tau_r(\mathcal E_{\mu},u,v)$, and $\lambda(\mathcal E_{\mu},u)$. This is a consequence of the following observation, which adopts the notation of Theorem~\ref{th:upward-conditions} and follows from the definition of the involved parameters. 

	\begin{observation}\label{obs:dependence-parameters}   
	We have $\tau_l(\mathcal E_{\mu},u,v)+\tau_r(\mathcal E_{\mu},u,v)+\lambda(\mathcal E_{\mu},u)+\lambda(\mathcal E_{\mu},v)=n_1(f_0)-n_{-1}(f_0)$.
\end{observation}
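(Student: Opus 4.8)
The plan is to unwind the definitions and reduce the statement to a bookkeeping of the angles of the outer face $f_0$ of $\mathcal E_{\mu}$. Adopting the notation of Theorem~\ref{th:upward-conditions}, since $\lambda$ assigns $0$ to every flat angle and $\pm 1$ to every switch angle, the right-hand side can be rewritten as
$n_1(f_0)-n_{-1}(f_0)=\sum_{\alpha}\lambda(\alpha)$, where the sum ranges over \emph{all} angles $\alpha$ of $f_0$. Thus the observation amounts to identifying precisely which angles are the angles of $f_0$ and matching them with the four terms $\tau_l(\mathcal E_{\mu},u,v)$, $\tau_r(\mathcal E_{\mu},u,v)$, $\lambda(\mathcal E_{\mu},u)$, and $\lambda(\mathcal E_{\mu},v)$ appearing on the left.

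First I would observe that the boundary walk of $f_0$ is exactly the left outer path $P_l=\langle v_0=u,\dots,v_k=v\rangle$ followed by the reverse of the right outer path $P_r=\langle w_0=u,\dots,w_h=v\rangle$; this is immediate from the definitions of $P_l$ and $P_r$ as the paths leaving $f_0$ on the left, respectively on the right, when walking from $u$ to $v$. Consequently, the set of angles (corners) of $f_0$ is exactly $\{\alpha_1,\dots,\alpha_{k-1}\}\cup\{\beta_1,\dots,\beta_{h-1}\}\cup\{\alpha_0,\alpha_k\}$, where $\alpha_0=\beta_0$ is the angle at $u$ inside $f_0$ and $\alpha_k=\beta_h$ is the angle at $v$ inside $f_0$. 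I would note that these angles are pairwise distinct and that every angle of $f_0$ occurs among them.

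With that identification in hand, the computation is just a sum: by definition $\sum_{i=1}^{k-1}\lambda(\alpha_i)=\tau_l(\mathcal E_{\mu},u,v)$ and $\sum_{i=1}^{h-1}\lambda(\beta_i)=\tau_r(\mathcal E_{\mu},u,v)$, while $\lambda(\alpha_0)=\lambda(\mathcal E_{\mu},u)$ and $\lambda(\alpha_k)=\lambda(\mathcal E_{\mu},v)$. Hence $\sum_{\alpha}\lambda(\alpha)$ over all angles of $f_0$ equals $\tau_l(\mathcal E_{\mu},u,v)+\tau_r(\mathcal E_{\mu},u,v)+\lambda(\mathcal E_{\mu},u)+\lambda(\mathcal E_{\mu},v)$, which together with the first paragraph yields the claimed identity.

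The only point requiring a little care — and the main, mild, obstacle — is the case where $G_{\mu}$ is not biconnected, so $P_l$ and $P_r$ may share a vertex $x=v_i=w_j$ other than $u$ and $v$. I would point out that such an $x$ is visited twice by the boundary walk of $f_0$ and therefore contributes two distinct angles $\alpha_i$ and $\beta_j$ to $f_0$; both are still counted exactly once on each side of the identity (one in the $\tau_l$ sum, one in the $\tau_r$ sum), so the listing of angles of $f_0$ above and the ensuing computation remain valid verbatim. I would also quickly dispatch the degenerate possibilities: if $k=1$ (resp. $h=1$) the corresponding turn-number is an empty sum equal to $0$, and since $\{u,v\}$ is a split pair we have $u\neq v$, so $\alpha_0$ and $\alpha_k$ are genuinely distinct angles.
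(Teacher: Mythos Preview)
Your proof is correct and is exactly what the paper intends: the observation is stated as following ``from the definition of the involved parameters,'' and your argument carries out precisely that unwinding of definitions, writing $n_1(f_0)-n_{-1}(f_0)=\sum_{\alpha}\lambda(\alpha)$ and partitioning the angles of $f_0$ into those along $P_l$, those along $P_r$, and the two corners at $u$ and $v$. Your extra care about shared internal vertices of $P_l$ and $P_r$ in the non-biconnected case is a nice touch that the paper leaves implicit.
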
 

Observation~\ref{obs:dependence-parameters}, together with Property~\textbf{UP3} of Theorem~\ref{th:upward-conditions}, implies the following.

\begin{corollary}\label{cor:dependence-parameters}
	We have $\tau_l(\mathcal E_{\mu},u,v)+\tau_r(\mathcal E_{\mu},u,v)+\lambda(\mathcal E_{\mu},u)+\lambda(\mathcal E_{\mu},v)=2$. 
\end{corollary}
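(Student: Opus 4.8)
The plan is to read off the identity directly from Observation~\ref{obs:dependence-parameters} by evaluating its right-hand side via Property~\textbf{UP3} of Theorem~\ref{th:upward-conditions}. The only thing to be careful about is applying Theorem~\ref{th:upward-conditions} to the right object: we apply it not to the ambient graph $G$ but to the pertinent graph $G_\mu$ together with the embedding $\mathcal E_\mu$. This is legitimate because a $uv$-external upward planar embedding of $G_\mu$ is, in particular, an upward planar embedding of $G_\mu$, so Theorem~\ref{th:upward-conditions} holds verbatim with $G$ replaced by $G_\mu$, $\mathcal E$ replaced by $\mathcal E_\mu$, and $\lambda$ the angle assignment defined by $\mathcal E_\mu$.

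Concretely, I would proceed as follows. First, identify the face $f_0$ appearing in Observation~\ref{obs:dependence-parameters}: by the definition of the shape description, $f_0$ is the outer (unbounded) face of $\mathcal E_\mu$, hence it is exactly the face to which the ``outer face'' clause of Property~\textbf{UP3} applies. Therefore $n_1(f_0) = n_{-1}(f_0) + 2$, i.e.\ $n_1(f_0) - n_{-1}(f_0) = 2$. Second, plug this into Observation~\ref{obs:dependence-parameters}, which asserts $\tau_l(\mathcal E_{\mu},u,v)+\tau_r(\mathcal E_{\mu},u,v)+\lambda(\mathcal E_{\mu},u)+\lambda(\mathcal E_{\mu},v)=n_1(f_0)-n_{-1}(f_0)$; substituting the value $2$ yields the claim.

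There is essentially no technical obstacle here — the corollary is a one-line consequence of a result and an observation already in hand. The one point that genuinely deserves a moment of care is the sign/case distinction in Property~\textbf{UP3}: one must use the \emph{outer}-face version $n_1(f)=n_{-1}(f)+2$ (and not the internal-face version $n_1(f)=n_{-1}(f)-2$), since $f_0$ is unbounded in $\mathcal E_\mu$; using the wrong branch would produce $-2$ instead of $2$. Once that is fixed, the computation is immediate and the proof is complete.
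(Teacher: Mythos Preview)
Your proposal is correct and follows exactly the same approach as the paper: combine Observation~\ref{obs:dependence-parameters} with the outer-face clause of Property~\textbf{UP3} applied to $\mathcal E_\mu$ to get $n_1(f_0)-n_{-1}(f_0)=2$. Your remark about applying Theorem~\ref{th:upward-conditions} to $G_\mu$ rather than $G$ is a valid clarification, but otherwise there is nothing to add.
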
	
\fi

\ifshort
	There are some dependencies between the values of a shape description. For example, $\rho_l(\mathcal E_{\mu},u) \neq \rho_r(\mathcal E_{\mu},u)$ if $\lambda(\mathcal E_{\mu},u)=0$. As a further example, we have the following observation, which comes from Property~\textbf{UP3} of Theorem~\ref{th:upward-conditions} and uses the notation of this theorem.
	
	 \begin{observation}\label{obs:dependence-parameters}
	 	We have $\tau_l(\mathcal E_{\mu},u,v)+\tau_r(\mathcal E_{\mu},u,v)+\lambda(\mathcal E_{\mu},u)+\lambda(\mathcal E_{\mu},v)=2$. 
	 \end{observation}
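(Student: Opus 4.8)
The plan is to evaluate the left-hand side by summing the angle labels of $\lambda$ around the outer face $f_0$ of $\mathcal E_{\mu}$, and then to invoke Property~\textbf{UP3} of Theorem~\ref{th:upward-conditions}.

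First I would unwind the definitions. The boundary walk of the outer face $f_0$ is obtained by following the left outer path $P_l=\langle v_0=u,v_1,\dots,v_k=v\rangle$ from $u$ to $v$ and then the right outer path $P_r=\langle w_0=u,w_1,\dots,w_h=v\rangle$ from $v$ back to $u$; hence the angles of $f_0$ are exactly $\alpha_0,\dots,\alpha_k$ together with $\beta_1,\dots,\beta_{h-1}$, where $\alpha_0=\beta_0$ is the angle at $u$ inside $f_0$ and $\alpha_k=\beta_h$ is the angle at $v$ inside $f_0$, each listed once. (If $G_{\mu}$ is not biconnected, an internal vertex shared by $P_l$ and $P_r$ simply contributes two distinct angles of $f_0$, one per path, which is consistent with all the sums below being taken over angles with multiplicity.) By the definitions of $\tau_l$, $\tau_r$, $\lambda(\mathcal E_{\mu},u)$ and $\lambda(\mathcal E_{\mu},v)$, this gives
\[
\sum_{\alpha\text{ angle of }f_0}\lambda(\alpha)=\tau_l(\mathcal E_{\mu},u,v)+\tau_r(\mathcal E_{\mu},u,v)+\lambda(\mathcal E_{\mu},u)+\lambda(\mathcal E_{\mu},v).
\]

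Next I would compute the same sum a second way. By Property~\textbf{UP0}, each angle of $f_0$ is labeled $-1$, $0$, or $1$; the flat angles (label $0$) do not contribute, so $\sum_{\alpha}\lambda(\alpha)=n_1(f_0)-n_{-1}(f_0)$ in the notation of Theorem~\ref{th:upward-conditions}. Since $\mathcal E_{\mu}$ and $\lambda$ constitute an upward planar embedding of $G_{\mu}$ and $f_0$ is its outer face, Property~\textbf{UP3} yields $n_1(f_0)=n_{-1}(f_0)+2$, i.e.\ $n_1(f_0)-n_{-1}(f_0)=2$. Comparing the two evaluations of the sum gives the claimed identity; this last step is exactly the passage from the intermediate identity ``$\cdots=n_1(f_0)-n_{-1}(f_0)$'' to ``$\cdots=2$'' recorded in the long version.

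The only point requiring care is the first step: checking that the closed boundary walk of the outer face of $\mathcal E_{\mu}$ is precisely $P_l$ followed by the reverse of $P_r$, and that its sequence of angle labels is captured, with no omissions or double counting, by $\tau_l$, $\tau_r$ and the two pole contributions $\lambda(\mathcal E_{\mu},u)$, $\lambda(\mathcal E_{\mu},v)$ --- in particular when $G_{\mu}$ is only $uv$-biconnectible and the two outer paths meet in internal vertices. Everything after that bookkeeping is an immediate consequence of Theorem~\ref{th:upward-conditions}.
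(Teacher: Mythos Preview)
Your argument is correct and follows the same route as the paper: first identify the sum $\tau_l+\tau_r+\lambda(\mathcal E_{\mu},u)+\lambda(\mathcal E_{\mu},v)$ with $n_1(f_0)-n_{-1}(f_0)$ by summing all angle labels along the outer face, and then apply Property~\textbf{UP3} to the outer face to conclude that this equals $2$. Your careful treatment of the bookkeeping (especially the case where $P_l$ and $P_r$ share internal vertices) is more explicit than the paper's, which simply states the first identity as following from the definitions.
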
	
\fi

	Recall that if $u$ is a top or bottom vertex of $G$, then it has at most one incoming edge or at most one outgoing edge, respectively, which is called the \emph{special edge} of $u$. If $G_{\mu}$ contains this edge, then $G_{\mu}$ is a \emph{special} component for $u$, otherwise we say that $G_\mu$ is a \emph{normal} component  for $u$. Note that, if $u$ is a source or a sink of $G$, then it has no special component. 
\iflong	The following lemma bounds the right-turn number of a $uv$-external upward planar embedding $\mathcal{E}_\mu$ of $G_{\mu}$ with respect to its left-turn number.\fi

	\iflong \begin{lemma} \label{lem:shape_desc_values}
		Let $G_{\mu}$ be a split component of an upward plane digraph $G$ with respect to a split pair $\{u,v\}$. Let $\mathcal{E}_\mu$ be the induced uv-external upward planar embedding of $G_\mu$ with shape description $\shapeDesc{\tau_l(\mathcal{E}_{\mu},u,v)}{\tau_r(\mathcal{E}_{\mu},u,v)}{\lambda(\mathcal{E}_{\mu},u)}{\lambda(\mathcal{E}_{\mu},v)}{\rho_l(\mathcal{E}_{\mu},u)}{\rho_r(\mathcal{E}_{\mu},u)}{\rho_l(\mathcal{E}_{\mu},v)}{\rho_r(\mathcal{E}_{\mu},v)}$. 
If $\tau_l(\mathcal{E}_{\mu},u,v)=c$ is the left turn-number of $\mathcal{E}_{\mu}$, then the right turn-number $\tau_r(\mathcal{E}_{\mu},u,v)$ is $-c+h$ where $h \in \{0,1,2,3,4\}$. 
In particular, if $G_{\mu}$ is a normal component for both $u$ and $v$, then   $\{\lambda(\mathcal{E}_{\mu},u),\lambda(\mathcal{E}_{\mu},v)\}\subseteq \{-1,1\}$ and
		\begin{enumerate}
			\item $\tau_r(\mathcal{E}_{\mu},u,v)=-c$, if $\lambda(\mathcal{E}_{\mu},u)=\lambda(\mathcal{E}_{\mu},v)=1$ (see Fig.~\ref{fig:normal_1});
			\item $\tau_r(\mathcal{E}_{\mu},u,v)=-c+2$, if $\{\lambda(\mathcal{E}_{\mu},u),\lambda(\mathcal{E}_{\mu},v)\}=\{-1,1\}$ (see Figs.~\ref{fig:normal_2} and~\ref{fig:normal_3}); and
			\item $\tau_r(\mathcal{E}_{\mu},u,v)=-c+4$, if $\lambda(\mathcal{E}_{\mu},u)=\lambda(\mathcal{E}_{\mu},v)=-1$ (see Fig.~\ref{fig:normal_4}).
		\end{enumerate}
	\end{lemma}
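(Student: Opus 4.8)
The plan is to work with the outer face $f_0$ of $\mathcal{E}_\mu$ and exploit Observation~\ref{obs:dependence-parameters} (equivalently Corollary~\ref{cor:dependence-parameters}), which tells us that $\tau_l(\mathcal{E}_\mu,u,v)+\tau_r(\mathcal{E}_\mu,u,v)+\lambda(\mathcal{E}_\mu,u)+\lambda(\mathcal{E}_\mu,v)=2$. Setting $\tau_l(\mathcal{E}_\mu,u,v)=c$, this immediately gives $\tau_r(\mathcal{E}_\mu,u,v)=-c+\bigl(2-\lambda(\mathcal{E}_\mu,u)-\lambda(\mathcal{E}_\mu,v)\bigr)$, so the whole statement reduces to bounding $h:=2-\lambda(\mathcal{E}_\mu,u)-\lambda(\mathcal{E}_\mu,v)$. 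Since each of $\lambda(\mathcal{E}_\mu,u)$ and $\lambda(\mathcal{E}_\mu,v)$ lies in $\{-1,0,1\}$ by Property~\textbf{UP0} of Theorem~\ref{th:upward-conditions}, we get $h\in\{0,1,2,3,4\}$ without any further argument; this already establishes the first claim.

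The refinement for normal components requires ruling out the values $\lambda(\mathcal{E}_\mu,u)=0$ and $\lambda(\mathcal{E}_\mu,v)=0$. First I would argue that the angle $\alpha_0$ at $u$ inside $f_0$ is a switch angle whenever $G_\mu$ is a normal component for $u$. The angle $\alpha_0$ is delimited by the edge $e_l$ of $P_l$ at $u$ and the edge $e_r$ of $P_r$ at $u$. If $u$ is a source or a sink of $G$, then both these edges have the same orientation at $u$ and $\alpha_0$ is a switch angle by definition; and in this case $u$ has no special component, so $G_\mu$ is automatically a normal component. If instead $u$ is a non-switch vertex of $G$ (say a top vertex, the bottom case being symmetric), then $u$ has a unique incoming edge, which is its special edge; since $G_\mu$ is normal for $u$, this special edge is not in $G_\mu$, hence both $e_l$ and $e_r$ are outgoing at $u$ within $G_\mu$, so again $\alpha_0$ is a switch angle. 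By Property~\textbf{UP0}, a switch angle receives a label in $\{-1,1\}$, so $\lambda(\mathcal{E}_\mu,u)\in\{-1,1\}$; the symmetric argument gives $\lambda(\mathcal{E}_\mu,v)\in\{-1,1\}$ when $G_\mu$ is normal for $v$. Substituting the possible combinations of $\lambda(\mathcal{E}_\mu,u),\lambda(\mathcal{E}_\mu,v)\in\{-1,1\}$ into $h=2-\lambda(\mathcal{E}_\mu,u)-\lambda(\mathcal{E}_\mu,v)$ yields $h=0$ when both are $1$, $h=2$ when they differ, and $h=4$ when both are $-1$, which are exactly cases 1--3; the figure references then just record the representative embeddings.

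The main obstacle I anticipate is the bookkeeping needed to justify that the angle $\alpha_0$ (resp.\ $\alpha_k$) is genuinely a switch angle in the non-switch case: one must carefully invoke the definition of the special edge and the convention (from the expansion preprocessing) that a top vertex has exactly one incoming edge, so that the absence of the special edge from $G_\mu$ forces both boundary edges of $f_0$ at $u$ to point the same way. Once that local structural fact is pinned down, the rest is a direct substitution into Corollary~\ref{cor:dependence-parameters}. One should also double-check the degenerate situation where $P_l$ and $P_r$ share the edge at $u$ (i.e.\ $u$ has degree $1$ in $G_\mu$), but in that case $e_l=e_r$ and $\alpha_0$ is the full $360^\circ$ angle around $u$, which is again a (large) switch angle, so the conclusion $\lambda(\mathcal{E}_\mu,u)=1$ still holds and is consistent with the claimed values.
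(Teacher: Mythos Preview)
Your proposal is correct and follows essentially the same approach as the paper: both proofs apply Corollary~\ref{cor:dependence-parameters} to write $\tau_r=-c+(2-\lambda(\mathcal{E}_\mu,u)-\lambda(\mathcal{E}_\mu,v))$, use $\lambda\in\{-1,0,1\}$ for the general bound, and then argue that in a normal component all edges of $G_\mu$ at the pole have the same orientation so the outer angle is a switch angle and $\lambda\in\{-1,1\}$. Your case analysis (source/sink versus top/bottom vertex) and the degree-$1$ degenerate remark are a bit more explicit than the paper's one-line justification, but the substance is identical.
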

\fi

\ifshort \begin{lemma}\label{lem:shape_desc_values}
We have $\tau_r(\mathcal E_{\mu},u,v)=-\tau_l(\mathcal E_{\mu},u,v)+h,$ with $h \in \{0,1,2,3,4\}$.
	\end{lemma}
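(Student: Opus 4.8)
The plan is to read the claimed identity off Observation~\ref{obs:dependence-parameters} and then bound the additive term using the admissible range of angle labels. First I would recall that, since $\mathcal E_{\mu}$ is a $uv$-external upward planar embedding of $G_\mu$, Observation~\ref{obs:dependence-parameters} applies and gives
\[
\tau_l(\mathcal E_{\mu},u,v)+\tau_r(\mathcal E_{\mu},u,v)+\lambda(\mathcal E_{\mu},u)+\lambda(\mathcal E_{\mu},v)=2 .
\]
Rearranging, $\tau_r(\mathcal E_{\mu},u,v)=-\tau_l(\mathcal E_{\mu},u,v)+h$ with $h:=2-\lambda(\mathcal E_{\mu},u)-\lambda(\mathcal E_{\mu},v)$, so it only remains to show $h\in\{0,1,2,3,4\}$.

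For this I would invoke Property~\textbf{UP0} of Theorem~\ref{th:upward-conditions}: every angle of $\mathcal E_\mu$ is assigned a value in $\{-1,0,1\}$. In particular the angles $\alpha_0$ at $u$ and $\alpha_k$ at $v$ inside the outer face $f_0$, whose labels are by definition $\lambda(\mathcal E_{\mu},u)$ and $\lambda(\mathcal E_{\mu},v)$, each lie in $\{-1,0,1\}$. Hence $\lambda(\mathcal E_{\mu},u)+\lambda(\mathcal E_{\mu},v)\in\{-2,-1,0,1,2\}$, and therefore $h=2-\lambda(\mathcal E_{\mu},u)-\lambda(\mathcal E_{\mu},v)\in\{0,1,2,3,4\}$, which is exactly the statement.

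I do not expect any real obstacle here; the only thing to get right is the connection between the turn numbers and the switch-angle count of $f_0$, which is precisely what Observation~\ref{obs:dependence-parameters} packages. (I would also remark, for use later, that if $G_\mu$ is a normal component for both $u$ and $v$ then $u$ and $v$ are switch vertices of $G_\mu$, so $\lambda(\mathcal E_{\mu},u),\lambda(\mathcal E_{\mu},v)\in\{-1,1\}$ again by \textbf{UP0}, forcing $h\in\{0,2,4\}$ with the three cases distinguished by whether the two pole labels are $1,1$, mixed, or $-1,-1$; but this refinement is not needed for the coarse bound asserted in the lemma.)
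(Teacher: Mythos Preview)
Your proof is correct and follows essentially the same approach as the paper: rearrange the identity $\tau_l+\tau_r+\lambda(\mathcal E_\mu,u)+\lambda(\mathcal E_\mu,v)=2$ from Observation~\ref{obs:dependence-parameters} and use that each pole label lies in $\{-1,0,1\}$ to obtain $h\in\{0,1,2,3,4\}$. Your parenthetical remark about the normal-component refinement also matches the finer case analysis the paper gives in its long version of the lemma.
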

\fi

	\iflong \begin{proof}
	If $G_{\mu}$ is a normal component for $u$, then either all the edges incident to $u$ are incoming $u$ or they are all outgoing $u$, hence the angle incident to $u$ in the outer face $f_0$ of $\mathcal{E}_{\mu}$ is a switch angle and $\lambda(\mathcal{E}_{\mu},u)\in \{-1,1\}$.  On the other hand, if $G_{\mu}$ is a special component for $u$, then either $\lambda(\mathcal{E}_{\mu},u)\in\{-1,0\}$, if $G_{\mu}$ contains at least one incoming and at least one outgoing edge of $u$, or $\lambda(\mathcal{E}_{\mu},u)=1$, if $G_{\mu}$ contains only the special edge of $u$. Similar considerations hold true for $v$. The rest of the statement follows by Corollary~\ref{cor:dependence-parameters}. Indeed, we have that $\tau_l(\mathcal{E}_{\mu},u,v)+\tau_r(\mathcal{E}_{\mu},u,v)+\lambda(\mathcal{E}_{\mu},u)+\lambda(\mathcal{E}_{\mu},v)=2$, hence $\tau_r(\mathcal{E}_{\mu},u,v)=2-\tau_l(\mathcal{E}_{\mu},u,v)-\lambda(\mathcal{E}_{\mu},u)-\lambda(\mathcal{E}_{\mu},v)$, from which the statements follow for $\tau_l(\mathcal{E}_{\mu},u,v)=c$ and by substituting the values of $\lambda(\mathcal{E}_{\mu},u)$ and $\lambda(\mathcal{E}_{\mu},v)$ of each case.
	\end{proof}\fi
	
\iflong
	\begin{figure}[!t]
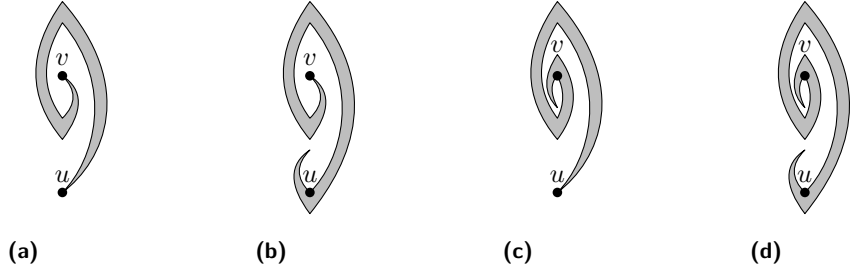

		\centering
		{\centering
		\begin{subfigure}{.15\textwidth}
			\includegraphics[page=1]{figures/general_comps}
			\subcaption{}
			\label{fig:normal_1}
		\end{subfigure}
		}
		\hfil
		{\centering
		\begin{subfigure}{.15\textwidth}
			\includegraphics[page=2]{figures/general_comps}
			\subcaption{}
			\label{fig:normal_2}
		\end{subfigure}
		}
		\hfil
		{\centering
		\begin{subfigure}{.15\textwidth}
			\includegraphics[page=3]{figures/general_comps}
			\subcaption{}
			\label{fig:normal_3}
		\end{subfigure}
		}
		\hfil
		{\centering
		\begin{subfigure}{.15\textwidth}
			\includegraphics[page=4]{figures/general_comps}
			\subcaption{}
			\label{fig:normal_4}
		\end{subfigure}
		}
		\caption{Illustrations of the various cases of Lemma~\ref{lem:shape_desc_values} for normal components.}
		\label{fig:normal_comps}
	\end{figure}
\fi

\iflong	
	The following two lemmata establish some useful properties for the shape descriptions of components that are parallel compositions of sub-components. Let $G_\mu$ be the pertinent digraph of a P-node $\mu$. Let $\nu'_1, \nu'_2 \dots, \nu'_{k'}$ be a subset of the children of $\mu$ in $T$ and let $G'_{\mu}$ be the subgraph of $G_{\mu}$ consisting of the union of $G_{\nu'_1}$, $G_{\nu'_2}$, $\dots$, $G_{\nu'_{k'}}$. Assume that the upward planar embedding $\mathcal{E}$ is such that the pertinent graphs $G_{\nu'_1}$, $G_{\nu'_2}$, $\dots$, $G_{\nu'_{k'}}$ appear in this order from left to right around $u$. 
	Denote by $\mathcal{E}_{\nu}$ the upward planar embedding of $G_\nu$ in $\mathcal{E}$. Then the outer boundary of $G'_{\mu}$ in $\mathcal{E}$ is formed by the left outer path of $\mathcal{E}_{\nu'_1}$ and the right outer path of $\mathcal{E}_{\nu'_{k'}}$. 
	Let $f'_i$ be the face bounded by the right outer path of $\mathcal{E}_{\nu'_i}$ and by the left outer path of $\mathcal{E}_{\nu'_{i+1}}$, for $i=1,2,\dots,k'-1$, and let $f'_0$ be the outer face of $G'_{\mu}$ in $\mathcal{E}$.
	
	\iflong \begin{lemma} \fi \ifshort \begin{lemma}\fi\label{lem:shape_face}
		Suppose $f'_i$ is an internal face of $G'_\mu$ with respect to an upward planar embedding $\mathcal{E}$ where $c$ is the right-turn-number of $\mathcal{E}_{\nu'_i}$, that is $c=\tau_r(\mathcal{E}_{\nu'_i}, u, v)$, and $\alpha_u(f'_i)$ and $\alpha_v(f'_i)$ are the angles at vertices $u$ and $v$ inside $f'_i$. Then the left-turn-number $\tau_l(\mathcal{E}_{\nu'_{i+1}}, u, v)$ of $G_{\nu'_{i+1}}$ satisfies:
		\begin{enumerate}
			\item $\tau_l(\mathcal{E}_{\nu'_{i+1}}, u, v)=-c$, if $\lambda(\alpha_u(f'_i))=\lambda(\alpha_v(f'_i))=1$
			\item $\tau_l(\mathcal{E}_{\nu'_{i+1}}, u, v)=-c-2$, if $\{\lambda(\alpha_u(f'_i)),\lambda(\alpha_v(f'_i))\}=\{-1,1\}$
			\item $\tau_l(\mathcal{E}_{\nu'_{i+1}}, u, v)=-c-4$, if $\lambda(\alpha_u(f'_i))=\lambda(\alpha_v(f'_i))=-1$.
		\end{enumerate}
	\end{lemma}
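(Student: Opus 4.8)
The plan is to read off the desired relation directly from Property~\textbf{UP3} of Theorem~\ref{th:upward-conditions} applied to the internal face $f'_i$, in much the same spirit in which Lemma~\ref{lem:shape_desc_values} exploits Corollary~\ref{cor:dependence-parameters} for the outer face of a component.

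First I would pin down the boundary of $f'_i$. Since $\nu'_i$ and $\nu'_{i+1}$ are distinct children of the P-node $\mu$, the subgraphs $G_{\nu'_i}$ and $G_{\nu'_{i+1}}$ meet only at $u$ and $v$; hence the right outer path $P_r$ of $\mathcal{E}_{\nu'_i}$ and the left outer path $P_l$ of $\mathcal{E}_{\nu'_{i+1}}$ are internally vertex-disjoint and together bound $f'_i$ (walk along $P_r$ from $u$ to $v$, then back along $P_l$ from $v$ to $u$). In particular, the angle $\alpha_u(f'_i)$ is the one between the edge of $P_r$ at $u$ and the edge of $P_l$ at $u$, and similarly for $\alpha_v(f'_i)$, and these are single well-defined angles. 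Thus the angles incident to $f'_i$ split into exactly four groups: those at internal vertices of $P_r$, those at internal vertices of $P_l$, the angle $\alpha_u(f'_i)$, and the angle $\alpha_v(f'_i)$.

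The crucial step is to relate the labels of the first two groups to the turn-numbers appearing in the statement. Forming $G'_\mu$ (and placing the remaining children of $\mu$) attaches further subgraphs to $G_{\nu'_i}$ only at $u$ and $v$, so the rotation at every internal vertex $w$ of $P_r$ is left unchanged; consequently the angle at $w$ facing the side of $P_r$ opposite $G_{\nu'_i}$ is the same angle, with the same $\lambda$-label, in $\mathcal{E}_{\nu'_i}$ and in $\mathcal{E}$ --- but in $\mathcal{E}_{\nu'_i}$ it is an angle of the outer face, while in $\mathcal{E}$ it is an angle of $f'_i$. Summing over the internal vertices of $P_r$ therefore contributes exactly $\tau_r(\mathcal{E}_{\nu'_i},u,v)=c$ to the $\lambda$-total of $f'_i$, and symmetrically the internal vertices of $P_l$ contribute exactly $\tau_l(\mathcal{E}_{\nu'_{i+1}},u,v)$. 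Since $f'_i$ is an internal face, \textbf{UP3} gives $n_1(f'_i)-n_{-1}(f'_i)=-2$, i.e.\ the $\lambda$-labels of all angles of $f'_i$ sum to $-2$. Combining the four groups yields
\[
c+\tau_l(\mathcal{E}_{\nu'_{i+1}},u,v)+\lambda(\alpha_u(f'_i))+\lambda(\alpha_v(f'_i))=-2,
\]
and solving for $\tau_l(\mathcal{E}_{\nu'_{i+1}},u,v)$ and substituting the three value-pairs for $(\lambda(\alpha_u(f'_i)),\lambda(\alpha_v(f'_i)))$ listed in the statement gives the three claimed equalities.

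The main obstacle I anticipate is the geometric bookkeeping hidden in the crucial step: one must verify carefully that each angle of $f'_i$ at an internal vertex of $P_r$ (resp.\ $P_l$) really is the angle of the outer face of $\mathcal{E}_{\nu'_i}$ (resp.\ $\mathcal{E}_{\nu'_{i+1}}$) at that vertex, keeping track of which side of a left/right outer path the face $f'_i$ falls on, so that the two turn-numbers enter the identity with the correct signs and the three cases come out with the stated constants. The remaining ingredients --- that the four-way partition of the angles of $f'_i$ is genuine, and the final arithmetic --- are routine.
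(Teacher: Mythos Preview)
Your proposal is correct and follows essentially the same approach as the paper's proof: apply \textbf{UP3} to the internal face $f'_i$, decompose the sum of angle labels along its boundary into the right-turn-number of $\mathcal{E}_{\nu'_i}$, the left-turn-number of $\mathcal{E}_{\nu'_{i+1}}$, and the two pole angles, and then substitute the three cases. Your version is more detailed in justifying why the outer-face angles of the components coincide with the angles of $f'_i$ at internal vertices of the boundary paths, which the paper simply asserts.
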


	\iflong \begin{proof}
		By  \textbf{UP3} we have that $n_1(f'_i)+n_{-1}(f'_i)=-2$. Since the boundary of $f'_i$ consists of the right outer path of $\mathcal{E}_{\nu'_{i}}$ and the right outer path of $\mathcal{E}_{\nu'_{i+1}}$, it follows that $n_1(f'_i)+n_{-1}(f'_i)=\tau_r(\mathcal{E}{\nu'_i}, u, v)+\tau_l(\mathcal{E}_{\nu'_{i+1}})+\lambda(\alpha_u(f'_i))+\lambda(\alpha_v(f'_i))=-2$. The lemma's statements follow by replacing $\tau_r(\mathcal{E}_{\nu'_i}, u, v)$ with $c$ and substituting the considered values of $\lambda(\alpha_u(f'_i))$ and $\lambda(\alpha_v(f'_i))$.
	\end{proof}\fi
	
	Let $\nu$ be a child of $\mu$ and let $G_{\nu}$ be the corresponding pertinent graph with poles $u$ and $v$. Suppose that $G_{\nu}$ admits a $uv$-external  upward planar embedding $\mathcal{E}_\nu$ such that the shape description of $G_{\nu}$ is $s$. We say that $s$ is a \emph{thin shape description} if it satisfies the following conditions, which we call \emph{thin conditions}:
	\begin{itemize}
		\item $\tau_r(\mathcal{E}_{\nu},u,v)=-\tau_l(\mathcal{E}_{\nu},u,v)$;
		\item $\lambda(\mathcal{E}_{\nu},u)=\lambda(\mathcal{E}_{\nu},v)=1$;
		\item $\rho_l(\mathcal{E}_{\nu},u)=\rho_r(\mathcal{E}_{\nu},u)$;
		\item $\rho_l(\mathcal{E}_{\nu},v)=\rho_r(\mathcal{E}_{\nu},v)$.
	\end{itemize}
	
	Let $N_\mu$ be the set of children $\nu_1, \nu_2 \dots, \nu_{k}$ of $\mu$. For any subset $N$ of $N_\mu$, we denote by $G_N$ the subgraph of $G_\mu$ that consists of all component $G_\nu$ with $\nu\in N$. Consider two subsets $N_1$ and $N_2$ of $N$ such that $N_2\subseteq N_1$ and assume that $G_{N_1}$ and $G_{N_2}$ are upward planar. We say that the upward planar embeddings $\mathcal{E}_{N_1}$ and $\mathcal{E}_{N_2}$ of $G_{N_1}$ and $G_{N_2}$ respectively, \emph{form an equivalence pair} if the components of $G_{N_2}$ appear in the same order in  $G_{N_1}$ around $u$ starting from the outer face, and for every $\nu\in N_2$ component $G_\nu$ has the same shape description in both $\mathcal{E}_{N_1}$ and $\mathcal{E}_{N_2}$.
Let $s$ be a thin shape description, and let $N_s$ be a subset of $N_\mu$ such that the pertinent graph $G_\nu$, for $\nu\in N_s$ admits an upward planar embedding with shape description $s$. For $N\subseteq N_\mu$ with $N\cap N_s\neq \emptyset$, we say that an upward planar embedding $\mathcal{E}_N$ of $G_N$ \emph{respects shape description }$s$ \emph{for the subset} $N_s$ if the components $G_\nu$ for $\nu\in N\cap N_s$ are consecutive around $u$ when starting from the outer face of $\mathcal{E}_N$, and all have shape description $s$ in $\mathcal{E}_N$.

	\iflong \begin{lemma} \fi \ifshort \begin{lemma}\fi\label{lem:parallel_reduce}
		Let $N_s$ be a subset of $N_\mu$ such that the pertinent graph $G_\nu$, for $\nu\in N_s$ admits an upward planar embedding with thin shape description $s$. Let also $N_1$ and $N_2$ be two subsets of $N$ with $N_2\subseteq N_1$ and $N_1\setminus N_2 \subseteq N_s$. Then, $G_{N_1}$ has an upward planar embedding 	$\mathcal{E}_{N_1}$ that respects $s$ for $N_s$ if and only if $G_{N_2}$ has an upward planar embedding 	$\mathcal{E}_{N_2}$ that respects $s$ for $N_s$, and embeddings $\mathcal{E}_{N_1}$ and $\mathcal{E}_{N_2}$ form an equivalence pair. Furthermore, the shape descriptions of $G_{N_1}$ and $G_{N_2}$ in $\mathcal{E}_{N_1}$ and $\mathcal{E}_{N_2}$, respectively, are the same.
	\end{lemma}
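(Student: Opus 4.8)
Proof plan.

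The plan is to prove the lemma by a surgery that, in one direction, deletes the shape‑$s$ components in $N_1\setminus N_2$ from $\mathcal{E}_{N_1}$ and, in the other direction, inserts them into $\mathcal{E}_{N_2}$, keeping every other angle label fixed and only re‑labelling the angles at the poles $u,v$ in the faces that get merged or split. The first step is to describe the local picture around a maximal block of consecutive shape‑$s$ components. If $\nu$ and $\nu'$ are two consecutive shape‑$s$ components of $\mathcal{E}_{N_1}$ and $f$ is the face bounded by the right outer path of $G_\nu$ and the left outer path of $G_{\nu'}$, then summing the angle labels along $\partial f$ and using that $\tau_r(\mathcal{E}_{\nu},u,v)+\tau_l(\mathcal{E}_{\nu'},u,v)=-\tau_l(s)+\tau_l(s)=0$ by the thin condition, together with Property~\textbf{UP3} of Theorem~\ref{th:upward-conditions}, gives that the angles of $f$ at $u$ and at $v$ are both labelled $-1$. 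Moreover, since all shape‑$s$ components share the same value $\rho_l(\cdot,u)=\rho_r(\cdot,u)$ (and similarly at $v$), each of them attaches to $u$ (resp.\ $v$) with edges of a single fixed orientation, so $u$ (resp.\ $v$) behaves like a switch vertex inside every such component; in particular each shape‑$s$ component contributes at $u$ only $-1$‑labelled angles.

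For the forward direction, let $\mathcal{E}_{N_1}$ respect $s$ for $N_s$, so the components of $N_1\cap N_s$ form a contiguous block $\nu_a,\dots,\nu_b$ in the left‑to‑right order of the composition; since $\mathcal{E}_{N_2}$ is also required to respect $s$ for $N_s$, we have $N_2\cap N_s\neq\emptyset$, i.e.\ at least one block component survives. Delete from $\mathcal{E}_{N_1}$ the components in $N_1\setminus N_2\subseteq\{\nu_a,\dots,\nu_b\}$; each deleted component sat between two faces that now become a single face. Keep every surviving angle label, label the angles at $u$ and $v$ of each newly merged \emph{internal} face by $-1$, and if the outer face absorbs some deleted components at the left or right end of the block, keep its angles at $u$ and $v$ equal to $\lambda(\mathcal{E}_{N_1},u)$ and $\lambda(\mathcal{E}_{N_1},v)$. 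One then checks against Theorem~\ref{th:upward-conditions} that the result is an upward planar embedding $\mathcal{E}_{N_2}$ of $G_{N_2}$: \textbf{UP0} is immediate, and \textbf{UP1}, \textbf{UP2} at internal vertices of surviving components and \textbf{UP3} at unmerged faces are inherited verbatim. At $u$ (symmetrically at $v$): a deleted shape‑$s$ component contributes there only $-1$‑labelled angles, as does each inter‑block face it bounded, so replacing this bundle by one $-1$‑labelled merged angle (or absorbing it into the unchanged outer‑face angle) decreases $\deg(u)$ and $n_{-1}(u)$ by exactly the amounts required by \textbf{UP1}/\textbf{UP2} — using that deleting only edges of one orientation at $u$, while keeping a shape‑$s$ component, does not change whether $u$ is a switch vertex. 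For a newly merged internal face $f'$, bounded by the right outer path of a surviving $\nu_i$ and the left outer path of a surviving $\nu_j$ (both of shape $s$), the label sum along $\partial f'$ is $\tau_r(\mathcal{E}_{\nu_i},u,v)+\tau_l(\mathcal{E}_{\nu_j},u,v)+(-1)+(-1)=-2$, so \textbf{UP3} holds; and the (possibly enlarged) outer face now has left and right outer paths belonging to shape‑$s$ components with the same turn numbers $\pm\tau_l(s)$ as before and unchanged pole angles, so its \textbf{UP3} is inherited from $\mathcal{E}_{N_1}$.

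Finally one observes that $\mathcal{E}_{N_2}$ respects $s$ for $N_s$, forms an equivalence pair with $\mathcal{E}_{N_1}$, and has the same shape description: the first two points are immediate, since surviving components keep their cyclic position around $u$ and their individual shape descriptions, and the surviving members of $N_s$ are exactly those of $N_1\cap N_s$ minus a subset of the block, hence still contiguous and still of shape $s$; and the global shape description is preserved because the left (right) outer path of $G_{N_2}$ is that of the unchanged leftmost (rightmost) component of $G_{N_1}$ or, if that one was deleted, of a surviving block component, whose turn number equals $\tau_l(\mathcal{E}_{N_1},u,v)$ (resp.\ $\tau_r(\mathcal{E}_{N_1},u,v)$), while the pole angles and the orientations $\rho_\cdot$ along the outer boundary are the ones we kept. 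The backward direction is the mirror image: given $\mathcal{E}_{N_2}$ respecting $s$ for $N_s$, insert the components of $N_1\setminus N_2$ — each equipped with a $uv$‑external upward planar embedding of shape $s$, which exists because $N_1\setminus N_2\subseteq N_s$ — consecutively into the shape‑$s$ block of $\mathcal{E}_{N_2}$, splitting the relevant face and giving each new inter‑block face pole labels $-1$, and argue with the same bookkeeping that the result is upward planar with the required shape. The step requiring the most care is precisely this bookkeeping at $u$ and $v$ in the case where the shape‑$s$ block abuts the outer face, since there one must also check that deleting (or inserting) thin components does not move the unique large angle at $u$ or $v$ out of (into) the outer face; this is exactly where the facts that a shape‑$s$ component survives and that all shape‑$s$ components attach to each pole with a single orientation are used.
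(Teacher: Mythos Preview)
Your proposal is correct and follows essentially the same surgery-and-verify approach as the paper: insert or delete thin shape-$s$ components inside the contiguous $N_s$-block, label the new inter-block pole angles by $-1$, and check \textbf{UP0}--\textbf{UP3} together with the outer-face bookkeeping. The one notable difference is that you allocate the detailed verification to the forward (deletion) direction and sketch the backward (insertion) direction, whereas the paper does the opposite: it dispatches the forward direction in one line by observing that the restriction of an upward planar embedding to a subgraph is again upward planar, and then carries out the explicit \textbf{UP0}--\textbf{UP3} verification only for the insertion step---so your forward argument is correct but could be replaced by that single restriction observation.
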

	\iflong \begin{proof}
		Suppose first that $G_{N_1}$ has an upward planar embedding 	$\mathcal{E}_{N_1}$ that respects $s$ for $N_s$. The restriction of $\mathcal{E}_{N_1}$ to $G_{N_2}$ is an upward planar embedding 	$\mathcal{E}_{N_2}$ of $G_{N_2}$ that respects $s$ for $N_s$. Moreover the components shared by  $G_{N_1}$ and $G_{N_2}$ have the same shape description and appear in the same order around $u$, that is, $\mathcal{E}_{N_1}$ and $\mathcal{E}_{N_2}$ form an equivalence pair.
		
		Suppose now that $G_{N_2}$ has an upward planar embedding 	$\mathcal{E}_{N_2}$ that respects $s$ for $N_s$. 
		Let $G_{\nu'_1}$, $G_{\nu'_2}$, $\dots$, $G_{\nu'_{k'}}$ be the sequence of the components of $G_{N_2}$ in the left to right order defined by $\mathcal{E}_{N_2}$.  Denote by $\mathcal{E}_\nu^2$ the upward planar embedding of component $G_\nu$ in $\mathcal{E}_{N_2}$, for $\nu\in N_2$. Let $f'_i$ be the face bounded by the right outer path of $\mathcal{E}^2_{\nu'_{i}}$ and by the left outer path of $\mathcal{E}^2_{\nu'_{i+1}}$, for $i=1,2,\dots,k'-1$, and let $f'_0$ be the outer face of $\mathcal{E}_{N_2}$.
		As $\mathcal{E}_{N_2}$ respects $N_s$, there exist consecutive components $G_{\nu'_\ell}$, $G_{\nu'_{\ell+1}}$, $\dots$, $G_{\nu'_{\ell+j}}$ of $G_{N_2}$ such that $\nu'_{\ell+i}\in N_s$  and $\mathcal{E}^2_{\nu'_{\ell+i}}$ has shape description $s$, for $i=0,\dots,j$. We now change the upward planar embedding $\mathcal{E}_{N_2}$ of $G_{N_2}$ to an upward planar embedding $\mathcal{E}_{N_1}$ of $G_{N_1}$ as follows. Note that, since $N_2 \subseteq N_1\cup N_s$, if component $G_\nu$ of $G_{N_1}$ does not belong to $G_{N_2}$ then $\nu\in N_s$. We replace components  $G_{\nu'_{\ell+i}}$ ($i=0,\dots,j$) with a sequence consisting of all the components $G_\nu$ of $G_{N_1}$ where $\nu\in N_s$. Let $\mathcal{G}_s$ denote the subgraph of $G_{N_1}$ that contains components $G_\nu$ where $\nu\in N_1\cap N_s$. To completely define $\mathcal{E}_{N_1}$ we need to define the upward planar embedding of all the components in $\mathcal{G}_s$, the labels of the angles at $u$ and $v$ inside the faces formed by consecutive elements of $\mathcal{G}_s$, and the labels of the angles at $u$ and $v$ inside the faces $f'_{\ell-1}$ and $f'_{\ell+j}$ (notice that the boundary of both faces is changed). For each component of $\mathcal{G}_s$ we choose an upward planar embedding such that the resulting shape description is $s$. We label $-1$ all the angles at $u$ and $v$ inside the faces formed by consecutive elements of $\mathcal{G}_s$. For the angles at $u$ and $v$ inside $f'_{\ell-1}$ and $f'_{\ell+j}$, we preserve the labels they had in $\mathcal{E}_{N_2}$. We denote by $\mathcal{E}_\nu^1$ the upward planar embedding of $G_\nu$ in $\mathcal{E}_{N_1}$ for $\nu\in N_1$. Note that $\mathcal{E}_\nu^1=\mathcal{E}_\nu^2$ for every $\nu\in N_1\setminus N_s$. In order to prove that the resulting labeling is a valid angle assignment we need to prove that properties \textbf{UP0}--\textbf{UP3} hold. Consider an angle $\alpha$ at a vertex $w$. If $\alpha$ is an angle that exists also in $\mathcal{E}_{N_2}$, then \textbf{UP0} holds for $\alpha$ in $\mathcal{E}_{N_1}$ as well; if $\alpha$ is an angle inside an internal face of $\mathcal{E}^1_{\nu}$ of a component $G_{\nu} \in \mathcal{G}_s$  then \textbf{UP0} holds for $\alpha$ in $\mathcal{E}_{N_1}$ because it holds in the upward planar embedding  chosen for $G_{\nu}$. If $\alpha$ is an angle inside the outer face of $\mathcal{E}^1_{\nu}$ at a vertex different from $u$ and $v$, then again, \textbf{UP0} holds in $\mathcal{E}_{N_1}$ because it holds in the upward planar embedding chosen for $G_{\nu}$. If $\alpha$ is an angle at $u$ or at $v$ inside a face $f_j$ formed by consecutive elements $G_{\nu}$ and $G_{\nu'}$ of $\mathcal{G}_s$ then the two edges defining $\alpha$ have the same orientation because the two upward planar embeddings $\mathcal{E}^1_{\nu}$ and $\mathcal{E}^1_{\nu'}$ have the same shape description; since $\alpha$ is labeled $1$, \textbf{UP0} holds. Finally, if $\alpha$ is an angle at $u$ or at $v$ inside $f'_{\ell-1}$ or $f'_{\ell+j}$, then the two edges defining $\alpha$ have the same orientation with the two edges defining the corresponding angle in $\mathcal{E}_{N_1}$ (because the leftmost component $G_{\nu'_\ell}$ of $G_{N_2}$ that is replaced by the sequence of components in $\mathcal{G}_s$ has the same shape description in $\mathcal{E}_{N_2}$ as the leftmost component of $\mathcal{G}_s$ in $\mathcal{E}_{N_1}$, and the rightmost component $G_{\nu'_{\ell+j}}$ of $G_{N_2}$ has the same shape description in $\mathcal{E}_{N_2}$ with the rightmost component of $\mathcal{G}_s$ in $\mathcal{E}_{N_1}$). Since the label at $\alpha$ is preserved, \textbf{UP0} holds because it holds in $\mathcal{E}_{N_2}$. Concerning \textbf{UP1} and \textbf{UP2}, we observe that for each vertex different from $u$ and $v$ \textbf{UP1} or \textbf{UP2} holds in $\mathcal{E}_{N_1}$ because it either holds in $\mathcal{E}_{N_2}$ or it holds in the upward planar embedding chosen for the components of $\mathcal{G}_s$. So, the only vertices for which we have to prove \textbf{UP1} or \textbf{UP2} are $u$ and $v$. To this aim we observe that either \textbf{UP1} or \textbf{UP2} holds for $u$ and $v$ in $\mathcal{E}_{N_2}$ and that the angles at $u$ and at $v$ that are in $\mathcal{E}_{N_1}$ but not in $\mathcal{E}_{N_2}$ are all labeled $-1$; thus, they cannot create a violation of \textbf{UP1} or of \textbf{UP2}. Concerning \textbf{UP3}, consider a face $f$ of $\mathcal{E}_{N_1}$. If $f$ is also a face of $\mathcal{E}_{N_2}$ then \textbf{UP3} holds for $f$ in $\mathcal{E}_{N_1}$ because it holds in $\mathcal{E}_{N_2}$. If $f$ is an internal face of $\mathcal{E}^1_{\nu}$ of a component $G_{\nu} \in \mathcal{G}_s$, then \textbf{UP3} holds for $\mathcal{E}_{N_1}$ because it holds for the upward planar embedding chosen for $G_{\nu}$. If $f$ is a face $f_j$ formed by consecutive elements $G_{\nu}$ and $G_{\nu'}$ of $\mathcal{G}_s$, then its boundary is formed by the right outer path of $\mathcal{E}^1_{\nu}$ and by the left outer path of $\mathcal{E}^1_{\nu'}$. Since $G_{\nu}$ and $G_{\nu'}$ have the same shape description that satisfies the thin conditions, we have $\tau_r(\mathcal{E}_{\nu}^1,u,v)=-\tau_l(\mathcal{E}_{\nu'}^1,u,v)$; moreover, the two angles at $u$ and $v$ inside $f_j$ are both labeled $-1$. Thus $n_1(f)-n_{-1}(f)=\tau_r(\mathcal{E}_{\nu}^1,u,v)+\tau_l(\mathcal{E}^1_{\nu'},u,v)-2=-2$ and \textbf{UP3} holds. Consider now the case when $f$ is the face $f'_{\ell-1}$ and assume that it is an internal face (a similar proof holds when it is the outer face). The boundary of $f'_{\ell-1}$ in $\mathcal{E}_{N_2}$ consists of the right outer path of  $\mathcal{E}^2_{\nu'_{\ell-1}}$ and of the left outer path of $\mathcal{E}^2_{\nu'_{\ell}}$; since \textbf{UP3} holds for $f'_{\ell-1}$ in $\mathcal{E}_{N_2}$, we have $n_1(f)-n_{-1}(f)=\tau_r(\mathcal{E}^2_{\nu'_{\ell-1}},u,v)+\tau_l(\mathcal{E}^2_{\nu'_{\ell}},u,v)+\alpha_u+\alpha_v=-2$, where $\alpha_u$ and $\alpha_v$ are the angles at $u$ and $v$ inside $f'_{\ell-1}$, respectively. The boundary of $f'_{\ell-1}$ in $\mathcal{E}_{N_1}$ consists of the right outer path of  $\mathcal{E}^1_{\nu'_{\ell-1}}$ and of the left outer path of $\mathcal{E}^1_{\nu}$, where $G_\nu$ is the first component of the set $\mathcal{G}_s$. Thus, we have $n_1(f)-n_{-1}(f)=\tau_r(\mathcal{E}^1_{\nu'_{\ell-1}},u,v)+\tau_l(\mathcal{E}^1_{\nu},u,v)+\alpha_u+\alpha_v$; since $\mathcal{E}^2_{\nu'_{\ell}}$ and $\mathcal{E}^1_{\nu}$ have the same shape description we have $\tau_l(\mathcal{E}^2_{\nu'_{\ell}},u,v)=\tau_l(\mathcal{E}^1_{\nu},u,v)$ and therefore \textbf{UP3} holds for $f'_{\ell-1}$ in $\mathcal{E}_{N_1}$. If $f$ is the face $f'_{\ell+j}$, the proof is analogous to the one for $f'_{\ell-1}$. This concludes the proof that properties \textbf{UP0}--\textbf{UP3} hold for $\mathcal{E}_{N_1}$. Now, since all $\mathcal{G}_s$ are consecutive in $\mathcal{E}_{N_1}$ and have the same shape description $s$, it follows that $\mathcal{E}_{N_1}$ respects $s$ for $N_s$. Additionally $\mathcal{E}_{N_1}$ and $\mathcal{E}_{N_2}$ form an equivalence pair. Hence we proved that if $G_{N_2}$ has an upward planar embedding 	$\mathcal{E}_{N_2}$ that respects $s$ for $N_s$, then  $G_{N_1}$ has an upward planar embedding 	$\mathcal{E}_{N_1}$ that respects $s$ for $N_s$ and the two embeddings form an equivalence pair.

		It remains to prove that the shape descriptions of $G_{N_1}$ in $\mathcal{E}_{N_1}$ and of $G_{N_2}$ in $\mathcal{E}_{N_2}$ are the same. In general, the shape description of a component in an upward planar embedding only depends on the outer boundary of that component. Let $f^1_0$ and $f^2_0$ be the outer faces of the two upward planar embeddings $\mathcal{E}_{N_1}$ and $\mathcal{E}_{N_2}$ described above. The boundary of $f^2_0$ consists of the left outer boundary of $\mathcal{E}^2_{\nu'_{1}}$ and by the right outer boundary of $\mathcal{E}^2_{\nu'_{k'}}$. If neither ${\nu'_{1}}$ nor ${\nu'_{k'}}$ belong to $N_s$, then the boundary of $f^1_0$ coincides with that of $f^2_0$, they have the same labeling and therefore $\mathcal{E}_{N_1}$ and $\mathcal{E}_{N_2}$  have the same shape description. If this is not the case then either ${\nu'_{1}}\in N_s$ or ${\nu'_{k'}}\in N_s$. 
		If ${\nu'_{1}}\in N_s$ the left boundary of $f^2_0$ consists of the left outer path of $\mathcal{E}^2_{\nu'_{1}}$ and the left boundary of $f^1_0$ consists of the left outer path of $\mathcal{E}^1_{\nu}$, where $G_\nu$ is a component  of $\mathcal{G}_s$. Since $\mathcal{E}^2_{\nu'_{1}}$ and $\mathcal{E}^1_\nu$ have the same shape description, they have the same left-turn-number. Similarly, if ${\nu'_{k'}}\in N_s$ the right outer paths of $f^2_0$ and $f^1_0$ have the same right-turn-numbers. Since the angles at $u$ and $v$ inside $f^2_0$ are the same as those inside $f^1_0$, $\mathcal{E}_{N_1}$ and $\mathcal{E}_{N_2}$  have the same shape description.
	\end{proof}\fi
\fi

\section{General Algorithm}\label{sec:general_algo}

\setboolean{longShapeDesc}{true}
\setboolean{longShapeSequence}{true}
\iflong
	Let $G$ be an $n$-vertex biconnected expanded DAG whose underlying undirected graph is planar and let $T$ be an SPQR-tree decomposition of $G$.
	Let also $\tau_{\min}$ and $\tau_{\max}$ be two integers such that $\tau_{\min}\leq \tau_{\max}$.
	 In this section, we present a general algorithm to compute the shape descriptions of all possible upward planar embeddings of $G$ with respect to $T$, and such that the left- and right-turn numbers of the induced upward planar embeddings of all pertinent graphs of $T$ are within the range $[\tau_{\min},\tau_{\max}]$. In case $G$ is not upward planar under these restrictions, the algorithm returns an empty set. 
	We visit the nodes of $T$ bottom-up and we compute for each node $\mu$ the set $\mathcal{F}_{\mu}$ of the shape descriptions of all possible upward planar embeddings of the pertinent graph $G_{\mu}$ of $\mu$.  We call $\mathcal{F}_{\mu}$  the \emph{feasible set} of $\mu$.
	The feasible set $\mathcal{F}_{\mu}$ is computed starting from the feasible sets $\mathcal{F}_{\nu_1}$, $\mathcal{F}_{\nu_2}$, $\ldots$, $\mathcal{F}_{\nu_k}$, where $\nu_1$, $\nu_2$, $\ldots$, $\nu_k$ are the children of $\mu$ in $T$. If at any point the feasible set $\mathcal{F}_{\mu}$ of node $\mu$ is empty, we conclude that $G$ is not upward planar (under the above restrictions) and the process returns an empty set, otherwise we continue the traversal of $T$. 
\fi

\ifshort
	Let $G$ be an $n$-vertex biconnected expanded DAG whose underlying  graph is planar and let $T$ be an SPQR-tree of $G$. Let $\tau_{\min}$ and $\tau_{\max}$ be two integers with $\tau_{\min}\leq \tau_{\max}$ and let $\tau=\tau_{\max}-\tau_{\min}+1$.
	We present a general algorithm to compute all possible shape descriptions of $G$ with respect to $T$, and such that the left- and right-turn numbers of all shape descriptions for all pertinent graphs of $T$ are in the range $[\tau_{\min},\tau_{\max}]$. 
	We visit the nodes of $T$ bottom-up and we compute for each node $\mu$ its \emph{feasible set} $\mathcal{F}_{\mu}$, i.e., the set of all  realizable shape descriptions of its pertinent graph $G_{\mu}$. 
If $\mathcal{F}_{\mu}=\emptyset$, the process stops and $G$ is not upward planar (under the above restrictions), otherwise we continue the traversal of~$T$. 
\fi 
	
	\iflong\paragraph*{Storing feasible sets}\fi
	\ifshort\subparagraph{Storing feasible sets.}\fi
	\iflong
	For each node $\mu$ of $T$ we associate a matrix $M(\mu)$ of size $(\tau_{\max}-\tau_{\min}+1)\times 5$ where the element $M(\mu)[i,j]$ of the matrix contains all shape descriptions for $G_{\mu}$ with left-turn-number $\tau_l=\tau_{\min}+i$ and right-turn-number $\tau_r=-\tau_l+j$. Note that by Lemma~\ref{lem:shape_desc_values}, the right-turn-number of $G_{\mu}$ can only take values in $[-c,-c+4]$, where $c$ is the left-turn-number $\tau_l$. 
	A shape description within $M(\mu)[i,j]$ is stored as a pair $\langle \lambda(\mathcal{E}_{\mu},u), \rho_l(\mathcal{E}_{\mu},u)\rangle$ as its first two values are implied by indices $i$ and $j$ and the other four values of the shape description can be derived from the indices and the stored values (recall that the values of a shape description are not all independent). The following lemma summarizes the space requirements and run-time of basic operations for feasible sets.
	\fi
	\ifshort
	For each node $\mu$ of $T$ we associate a matrix $M(\mu)$ of size $(\tau_{\max}-\tau_{\min}+1)\times 5$ where the element $M(\mu)[i,j]$ of the matrix contains all shape descriptions of $G_{\mu}$ with left turn-number $\tau_l=\tau_{\min}+i$ and right-turn-number $\tau_r=-\tau_l+j$. Note that by Lemma~\ref{lem:shape_desc_values}, $\tau_r$ can only take values in $[-\tau_l,-\tau_l+4]$.
	\fi
	\ifshort
		\begin{lemma}\label{lem:matrix_feasible}
				There are at most $18$ shape descriptions with given left- or right-turn-number.
			\end{lemma}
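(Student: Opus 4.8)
The plan is to fix the left-turn-number $\tau_l(\mathcal{E}_\mu,u,v)$ of a shape description and bound the number of shape descriptions that can have this value; the bound for a fixed right-turn-number then follows symmetrically, by exchanging the two outer paths of $G_\mu$ in Lemma~\ref{lem:shape_desc_values}. The key observation is that the entries of a shape description are heavily interdependent: $\lambda(\mathcal{E}_\mu,v)$ is implied by $\tau_l(\mathcal{E}_\mu,u,v)$, $\tau_r(\mathcal{E}_\mu,u,v)$ and $\lambda(\mathcal{E}_\mu,u)$; $\rho_r(\mathcal{E}_\mu,u)$ is implied by $\lambda(\mathcal{E}_\mu,u)$ and $\rho_l(\mathcal{E}_\mu,u)$; $\rho_l(\mathcal{E}_\mu,v)$ is implied by the parity of $\tau_l(\mathcal{E}_\mu,u,v)$ and by $\rho_l(\mathcal{E}_\mu,u)$; and $\rho_r(\mathcal{E}_\mu,v)$ is implied by $\lambda(\mathcal{E}_\mu,v)$ and $\rho_l(\mathcal{E}_\mu,v)$. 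Consequently, once $\tau_l(\mathcal{E}_\mu,u,v)$ is fixed, a shape description is uniquely determined by the values of $\tau_r(\mathcal{E}_\mu,u,v)$, $\lambda(\mathcal{E}_\mu,u)$ and $\rho_l(\mathcal{E}_\mu,u)$, so it suffices to count the possible triples of these three values.

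I would carry out the count as follows. By Lemma~\ref{lem:shape_desc_values}, $\tau_r(\mathcal{E}_\mu,u,v)=-\tau_l(\mathcal{E}_\mu,u,v)+h$ for some $h\in\{0,1,2,3,4\}$, giving at most $5$ choices for $\tau_r(\mathcal{E}_\mu,u,v)$. Having fixed such a choice, Observation~\ref{obs:dependence-parameters} gives $\lambda(\mathcal{E}_\mu,u)+\lambda(\mathcal{E}_\mu,v)=2-h$; since both summands lie in $\{-1,0,1\}$, the number of admissible values of $\lambda(\mathcal{E}_\mu,u)$ equals the number of ordered pairs in $\{-1,0,1\}^2$ with sum $2-h$, i.e.\ $1,2,3,2,1$ for $h=0,1,2,3,4$ respectively. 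Summing over $h$, there are at most $1+2+3+2+1=9$ admissible pairs of values $(\tau_r(\mathcal{E}_\mu,u,v),\lambda(\mathcal{E}_\mu,u))$, and together with the at most $2$ possibilities for $\rho_l(\mathcal{E}_\mu,u)$ this yields at most $9\cdot 2=18$ shape descriptions with the given left-turn-number.

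I expect the point that needs to be pinned down most carefully is the dependency stating that $\rho_l(\mathcal{E}_\mu,v)$ is determined by $\rho_l(\mathcal{E}_\mu,u)$ and the parity of $\tau_l(\mathcal{E}_\mu,u,v)$: dropping it would make the four orientation entries contribute a factor $4$ rather than $2$, worsening the bound to $36$. To establish it I would walk along the left outer path $P_l=\langle v_0=u,v_1,\dots,v_k=v\rangle$ and observe that at a vertex $v_i$ with a flat angle $\alpha_i$ in the outer face (that is, $\lambda(\alpha_i)=0$) the orientation of the path-edges relative to the walk is preserved, whereas at a switch angle ($\lambda(\alpha_i)\in\{-1,1\}$) it is reversed; hence the orientation of the last edge of $P_l$ relative to $v$ differs from that of its first edge relative to $u$ exactly according to the parity of the number of switch angles among $\alpha_1,\dots,\alpha_{k-1}$, which coincides with the parity of $\tau_l(\mathcal{E}_\mu,u,v)=\sum_{i=1}^{k-1}\lambda(\alpha_i)$.
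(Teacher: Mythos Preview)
Your proof is correct and follows essentially the same approach as the paper: both arguments observe that, once one turn-number is fixed, a shape description is determined by three remaining ``free'' values and then count. The paper simply takes $(\lambda(\mathcal{E}_\mu,u),\lambda(\mathcal{E}_\mu,v),\rho_l(\mathcal{E}_\mu,u))$ as the free triple and multiplies $3\cdot 3\cdot 2=18$ directly, whereas you equivalently take $(\tau_r(\mathcal{E}_\mu,u,v),\lambda(\mathcal{E}_\mu,u),\rho_l(\mathcal{E}_\mu,u))$ and reach the same total via $9\cdot 2$; these two parameterizations are interchangeable through Observation~\ref{obs:dependence-parameters}.
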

	\fi
	
	\iflong \begin{lemma} \label{lem:matrix_feasible}
	Let $\tau=\tau_{\max}-\tau_{\min}+1$. Then:
	\begin{enumerate}
	\item creating matrix $M(\mu)$ for all nodes $\mu$ of $T$ takes $\bigoh(|T|\tau)$ time and $\bigoh(|T|\tau)$ space;
	\item retrieving all shape descriptions of $M(\mu)$ takes $\bigoh(\tau)$ time;
	\item adding, deleting or finding a shape description in $M(\mu)$ requires $\bigoh(1)$ time;
	\item there are at most $18$ shape descriptions  with given left- or right-turn-number, and finding them takes $\bigoh(1)$ time.
	\end{enumerate}
	\end{lemma}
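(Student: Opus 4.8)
The plan is to dispatch items 1--3 as routine bookkeeping about a fixed-size array, and to spend the real effort on item 4, from which the $\bigoh(1)$ bounds in items 1 and 3 also follow. Each matrix $M(\mu)$ has $(\tau_{\max}-\tau_{\min}+1)\cdot 5 = 5\tau$ cells, so allocating and emptying it costs $\bigoh(\tau)$ time and space; summing over the $|T|$ nodes of $T$ gives the $\bigoh(|T|\tau)$ bounds of item 1, provided each cell holds only $\bigoh(1)$ shape descriptions, which is exactly a special case of item 4 (all descriptions in a cell share the same left-turn-number). Item 2 is then a scan of all $5\tau$ cells, each with $\bigoh(1)$ entries, hence $\bigoh(\tau)$ time. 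For item 3, from a shape description with left-turn-number $\tau_l$ and right-turn-number $\tau_r=-\tau_l+j$ one reads off the row index $i=\tau_l-\tau_{\min}$ and the column index $j\in\{0,\dots,4\}$ in $\bigoh(1)$, and then insertion, deletion, and membership in the $\bigoh(1)$-size cell $M(\mu)[i,j]$ are $\bigoh(1)$ (recall a description is stored compactly as the pair $\langle\lambda(\mathcal{E}_\mu,u),\rho_l(\mathcal{E}_\mu,u)\rangle$, from which, together with $i$ and $j$, the whole tuple is reconstructed).

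For item 4, fix the left-turn-number $\tau_l=c$. By Lemma~\ref{lem:shape_desc_values} the right-turn-number is forced to be $-c+h$ for some $h\in\{0,1,2,3,4\}$, and by the identity $\tau_l+\tau_r+\lambda(\mathcal{E}_\mu,u)+\lambda(\mathcal{E}_\mu,v)=2$ (Corollary~\ref{cor:dependence-parameters}) we get $\lambda(\mathcal{E}_\mu,u)+\lambda(\mathcal{E}_\mu,v)=2-h$. Since $\lambda(\mathcal{E}_\mu,u),\lambda(\mathcal{E}_\mu,v)\in\{-1,0,1\}$, enumerating $h$ shows that the admissible triples $(h,\lambda(\mathcal{E}_\mu,u),\lambda(\mathcal{E}_\mu,v))$ are exactly $1+2+3+2+1=9$ in number ($h=0$: $(0,1,1)$; $h=1$: $(1,0,1),(1,1,0)$; $h=2$: $(2,-1,1),(2,0,0),(2,1,-1)$; and symmetrically for $h=3,4$). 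For each of these, the orientation $\rho_l(\mathcal{E}_\mu,u)\in\{in,out\}$ is still a free binary choice, and once $c$, $\lambda(\mathcal{E}_\mu,u)$, $\lambda(\mathcal{E}_\mu,v)$, and $\rho_l(\mathcal{E}_\mu,u)$ are pinned down, the remaining three coordinates are forced by the dependencies listed right after the definition of a shape description: the parity of $c$ together with $\rho_l(\mathcal{E}_\mu,u)$ determines $\rho_l(\mathcal{E}_\mu,v)$; $\lambda(\mathcal{E}_\mu,u)$ and $\rho_l(\mathcal{E}_\mu,u)$ determine $\rho_r(\mathcal{E}_\mu,u)$; and $\lambda(\mathcal{E}_\mu,v)$ and $\rho_l(\mathcal{E}_\mu,v)$ determine $\rho_r(\mathcal{E}_\mu,v)$. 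Hence there are at most $9\cdot 2=18$ shape descriptions with left-turn-number $c$, and since they arise from a bounded arithmetic-plus-case computation in $c$ they can all be listed in $\bigoh(1)$ time. The case of a prescribed right-turn-number is symmetric: rewrite Lemma~\ref{lem:shape_desc_values} as $\tau_l=-\tau_r+h$, observe that $\tau_r$ and $h$ together fix $\tau_l$, and rerun the identical $9\times 2$ count.

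The one place I would be careful is the claim, used in item 4, that $\rho_l(\mathcal{E}_\mu,u)$ genuinely contributes a factor of $2$ for every value of $\lambda(\mathcal{E}_\mu,u)$: when $\lambda(\mathcal{E}_\mu,u)\in\{-1,1\}$ the two outer-face edges at $u$ have a common orientation, but that orientation may still be $in$ or $out$; and when $\lambda(\mathcal{E}_\mu,u)=0$ the two edges differ and $\rho_l(\mathcal{E}_\mu,u)$ records which one is on the left. So in all cases $\rho_l(\mathcal{E}_\mu,u)$ ranges over $\{in,out\}$. Even if a few of these $18$ combinations turned out not to be realizable by any upward planar embedding, the statement asserts only an upper bound, so it suffices that $18$ bounds the number of tuples consistent with all the dependency relations, which the count above establishes.
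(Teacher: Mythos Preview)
Your proof is correct and follows essentially the same approach as the paper. The only cosmetic difference is in the count for item~4: the paper simply notes that fixing $\tau_l$ (or $\tau_r$), $\lambda(\mathcal{E}_\mu,u)$, $\lambda(\mathcal{E}_\mu,v)$, and $\rho_l(\mathcal{E}_\mu,u)$ determines the rest, giving $3^2\cdot 2=18$, whereas you refine this by enumerating over $h$ to arrive at the same $9\cdot 2=18$; and for the $\bigoh(1)$ retrieval, the paper points directly to the relevant row (or five cells) of $M(\mu)$ rather than re-enumerating the $18$ candidates, but both arguments are equivalent.
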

	\begin{proof}
	The first part of the lemma is obvious. We now argue that for specific values of $\tau_l(\mathcal{E}_{\mu},u,v)$ and $\tau_r(\mathcal{E}_{\mu},u,v)$ there is a bounded number of possible shape descriptions with the given left- and right-turn-numbers. Indeed, for the pair $\langle \lambda(\mathcal{E}_{\mu},u), \rho_l(\mathcal{E}_{\mu},u)\rangle$, $\lambda(\mathcal{E}_{\mu},u)$ takes a value from $\{-1,0,1\}$, while $\rho_l(\mathcal{E}_{\mu},u)\in \{in,out\}$. Hence the possible values for the pairs are in total six. Hence, each element $M(\mu)[i,j]$ of matrix $M(\mu)$ contains at most six pairs, and parts 2 and 3 of the lemma follows. 
	For the last part of the lemma, we first argue that the number of shape descriptions with given left- or right-turn-number are at most $18$. Indeed, if we specify  the values $\lambda(\mathcal{E}_{\mu},u)$, $\lambda(\mathcal{E}_{\mu},v)$, $\rho_l(\mathcal{E}_{\mu},u)$, and either the left- or the right-turn-number, then the remaining values of a shape description are uniquely determined. As there are $3^2\cdot 2=18$ ways to specify all of $\lambda(\mathcal{E}_{\mu},u)$, $\lambda(\mathcal{E}_{\mu},v)$ and $\rho_l(\mathcal{E}_{\mu},u)$, it follows that there are at most $18$ shape descriptions with specific left- or right-turn-number. Now, the shape descriptions with left-turn-number equal to $\tau_l$ are stored in row $i=\tau_l-\tau_{\min}$ of $M(\mu)$, which has length $5$. On the other hand, if the right-turn-number $\tau_r$ is given, by Lemma~\ref{lem:shape_desc_values} the right-turn-number can only take values in the interval $[\tau_r,-\tau_r+4]$, and therefore it suffices to consider the pairs of at most five elements of the matrix. In both cases the shape descriptions with given left- or right-turn-number can be found in $\bigoh(1)$ time.
	\end{proof}\fi

	We describe how to compute the feasible set $\mathcal{F}_{\mu}$ of a node $\mu$ of $T$ depending on its type.
	
	
	\iflong\paragraph*{Q-node}\fi
	\ifshort\subparagraph{Q-node.}\fi
	\iflong
	In this case, the pertinent graph $G_{\mu}$ is a single directed edge connecting poles $u$ and $v$. If the edge is directed from $u$ to $v$ then the feasible set consists of the tuple 
	$\shapeDesc{0}{0}{1}{1}{{out}}{{out}}{{in}}{{in}}$, 
	while if the edge is directed from $v$ to $u$ then it consists of the tuple 
	$\shapeDesc{0}{0}{1}{1}{{in}}{{in}}{{out}}{{out}}$.
	\fi
	\ifshort
		The pertinent graph $G_{\mu}$ is either the edge $(u,v)$ or $(v,u)$. Hence, the feasible set consists of the tuple 
		$\shapeDesc{0}{0}{1}{1}{{out}}{{out}}{{in}}{{in}}$ or  
		$\shapeDesc{0}{0}{1}{1}{{in}}{{in}}{{out}}{{out}}$, respectively.
	\fi
	
	\begin{lemma}\label{lem:Q_node_general}
		Let $\mu$ be a Q-node of $T$. The feasible set $\mathcal{F}_\mu$ can be computed in $\bigoh(1)$ time.
	\end{lemma}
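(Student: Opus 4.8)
The plan is simply to unwind the definitions from Section~\ref{sec:comp_shapes}, since the pertinent graph of a Q-node is as simple as a graph gets. First I would recall that, by the base case of the SPQR-tree construction, a Q-node $\mu$ has pertinent graph $G_\mu$ equal to a single directed edge joining its poles $u$ and $v$. Up to its flip, this graph has a unique planar embedding $\mathcal{E}_\mu$; it has a single face $f_0$ (the outer one), and it is trivially upward planar (draw the edge as a vertical segment). Consequently $\mathcal{F}_\mu$ consists of at most one shape description, and the whole task is to write it down.

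To compute that shape description, I would evaluate each of its eight entries directly. Both the left and the right outer path of $\mathcal{E}_\mu$ are the single edge $uv$, which has no internal vertices, so the left- and right-turn-numbers are empty sums: $\tau_l(\mathcal{E}_\mu,u,v)=\tau_r(\mathcal{E}_\mu,u,v)=0$. In $G_\mu$ the vertices $u$ and $v$ have degree one and are therefore switch vertices, so by Property~\textbf{UP1} of Theorem~\ref{th:upward-conditions} the unique angle at $u$ in $f_0$ and the unique angle at $v$ in $f_0$ are both labeled $1$, i.e., $\lambda(\mathcal{E}_\mu,u)=\lambda(\mathcal{E}_\mu,v)=1$ (consistently with Observation~\ref{obs:dependence-parameters} and Property~\textbf{UP3}, since $0+0+1+1=2$). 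Finally, the four orientation entries are read off the direction of the edge: the edge incident to $u$ on both the left and the right outer path is the same edge $uv$, so $\rho_l(\mathcal{E}_\mu,u)=\rho_r(\mathcal{E}_\mu,u)$ equals $out$ or $in$ according to whether the edge leaves or enters $u$, and likewise for $v$. Hence the only candidate in $\mathcal{F}_\mu$ is $\shapeDesc{0}{0}{1}{1}{{out}}{{out}}{{in}}{{in}}$ when the edge is oriented from $u$ to $v$, and $\shapeDesc{0}{0}{1}{1}{{in}}{{in}}{{out}}{{out}}$ when it is oriented from $v$ to $u$.

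It remains to respect the turn-number restriction imposed by the general algorithm: the candidate shape description has both turn-numbers equal to $0$, hence it is admissible precisely when $\tau_{\min}\le 0\le \tau_{\max}$; in that case $\mathcal{F}_\mu$ is the corresponding singleton, and otherwise $\mathcal{F}_\mu=\emptyset$. Determining which case applies requires only inspecting the orientation of the edge and comparing $0$ against $\tau_{\min}$ and $\tau_{\max}$, after which the resulting tuple is inserted into $M(\mu)$ in $\bigoh(1)$ time by Lemma~\ref{lem:matrix_feasible}. There is no genuine difficulty here; the only point deserving a moment of care is matching the degenerate left and right outer paths of a single edge against the general definition of a shape description, so that the orientation entries $\rho_l,\rho_r$ at $u$ and at $v$ come out consistent with the case distinction above.
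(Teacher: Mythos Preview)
Your proposal is correct and follows the same approach as the paper, which simply states (in the paragraph preceding the lemma) that the feasible set is the singleton $\shapeDesc{0}{0}{1}{1}{{out}}{{out}}{{in}}{{in}}$ or $\shapeDesc{0}{0}{1}{1}{{in}}{{in}}{{out}}{{out}}$ depending on the edge orientation, and then asserts the lemma without further proof. Your write-up merely spells out the justification for each entry of the tuple, which the paper leaves implicit.
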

	
	\iflong\paragraph*{S-node}\fi
	\ifshort\subparagraph{S-node.}\fi 
	\iflong
	Recall that node $\mu$ has exactly two children in $T$, $\nu_1$ and $\nu_2$. Let $u_1$, $v_1$ be the poles of $G_{\nu_1}$ and  $u_2$, $v_2$ be the poles of $G_{\nu_2}$. The poles of $G_\mu$ are $u_1$ and $v_2$, that is $v_1=u_2$. Let 
	$\shapeDesc{\tau_l(\mathcal{E}_{\nu_1},u_1,v_1)}{\tau_r(\mathcal{E}_{\nu_1},u_1,v_1)}{\lambda(\mathcal{E}_{\nu_1},u_1)}{\lambda(\mathcal{E}_{\nu_1},v_1)}{\rho_l(\mathcal{E}_{\nu_1},u_1)}{\rho_r(\mathcal{E}_{\nu_1},u_1)}{\rho_l(\mathcal{E}_{\nu_1},v_1)}{\rho_r(\mathcal{E}_{\nu_1},v_1)}$
	be a tuple in $\mathcal{F}_{\nu_1}$ and  let 
	$\shapeDesc{\tau_l(\mathcal{E}_{\nu_2},u_2,v_2)}{\tau_r(\mathcal{E}_{\nu_2},u_2,v_2)}{\lambda(\mathcal{E}_{\nu_2},u_2)}{\lambda(\mathcal{E}_{\nu_2},v_2)}{\rho_l(\mathcal{E}_{\nu_2},u_2)}{\rho_r(\mathcal{E}_{\nu_2},u_2)}{\rho_l(\mathcal{E}_{\nu_2},v_2)}{\rho_r(\mathcal{E}_{\nu_2},v_2)}$
	be a tuple in $\mathcal{F}_{\nu_2}$.
	
	Let $\alpha_l$ be the angle at the common pole that is created by the two left outer paths of $\mathcal{E}_{\nu_1}$ and $\mathcal{E}_{\nu_2}$ (see Fig.~\ref{fig:series-composition}). Similarly, let $\alpha_r$ be the angle at the common pole that is created by the two right outer paths of $\mathcal{E}_{\nu_1}$ and $\mathcal{E}_{\nu_2}$. 
	We assign the labels $\lambda_l$ and $\lambda_r$ to $\alpha_l$ and $\alpha_r$ respectively as follows:
	$\lambda_l=0$ if $\rho_l(\mathcal{E}_{\mu},v_1)\neq \rho_l(\mathcal{E}_{\mu},u_2)$ otherwise $\lambda_l\in\{-1,1\}$, and 
	$\lambda_r=0$ if $\rho_r(\mathcal{E}_{\mu},v_1)\neq \rho_r(\mathcal{E}_{\mu},u_2)$ otherwise $\lambda_r\in\{-1,1\}$.
	Note that we exclude the case $\lambda_l=\lambda_r=1$ as this would imply that two angles on the same vertex of $G$ are large, which is not possible.
	Given the two values for $\lambda_l$ and $\lambda_r$, we construct a candidate tuple 
	$\shapeDesc{\tau_l(\mathcal{E}_{\mu},u_1,v_2)}{\tau_r(\mathcal{E}_{\mu},u_1,v_2)}{\lambda(\mathcal{E}_{\mu},u_1)}{\lambda(\mathcal{E}_{\mu},v_2)}{\rho_l(\mathcal{E}_{\mu},u_1)}{\rho_r(\mathcal{E}_{\mu},u_1)}{\rho_l(\mathcal{E}_{\mu},v_2)}{\rho_r(\mathcal{E}_{\mu},v_2)}$ with
	\begin{itemize}
	\item $\tau_l(\mathcal{E}_{\mu},u_1,v_2)=\tau_l(\mathcal{E}_{\nu_1},u_1,v_1)+\tau_l(\mathcal{E}_{\nu_2},u_2,v_2)+\lambda_l$,
	\item $\tau_r(\mathcal{E}_{\mu},u_1,v_2)=\tau_r(\mathcal{E}_{\nu_1},u_1,v_1)+\tau_r(\mathcal{E}_{\nu_2},u_2,v_2)+\lambda_r$,
	\item $\lambda(\mathcal{E}_{\mu},u_1)=\lambda(\mathcal{E}_{\nu_1},u_1)$,
	\item $\lambda(\mathcal{E}_{\mu},v_2)=\lambda(\mathcal{E}_{\nu_2},v_2)$,
	\item $\rho_l(\mathcal{E}_{\mu},u_1)=\rho_l(\mathcal{E}_{\nu_1},u_1)$,
	\item $\rho_r(\mathcal{E}_{\mu},u_1)=\rho_r(\mathcal{E}_{\nu_1},u_1)$,
	\item $\rho_l(\mathcal{E}_{\mu},v_2)=\rho_l(\mathcal{E}_{\nu_2},v_2)$,
	\item $\rho_r(\mathcal{E}_{\mu},v_2)=\rho_r(\mathcal{E}_{\nu_2},v_2)$.
	\end{itemize}
	
	We accept the candidate tuple if and only if
	$\tau_l(\mathcal{E}_{\mu},u_1,v_2)+\tau_r(\mathcal{E}_{\mu},u_1,v_2)+\lambda(\mathcal{E}_{\mu},u_1)+\lambda(\mathcal{E}_{\mu},v_2)=2$ holds, that is if \iflong Corollary~\ref{cor:dependence-parameters} \fi \ifshort Observation~\ref{cor:dependence-parameters} \fi holds for the outer face of $\mathcal{E}_{\mu}$.
	
	\begin{figure}[htb]
		\centering
		\begin{subfigure}{.45\textwidth}
			\centering
			\includegraphics[width=\columnwidth, page=2]{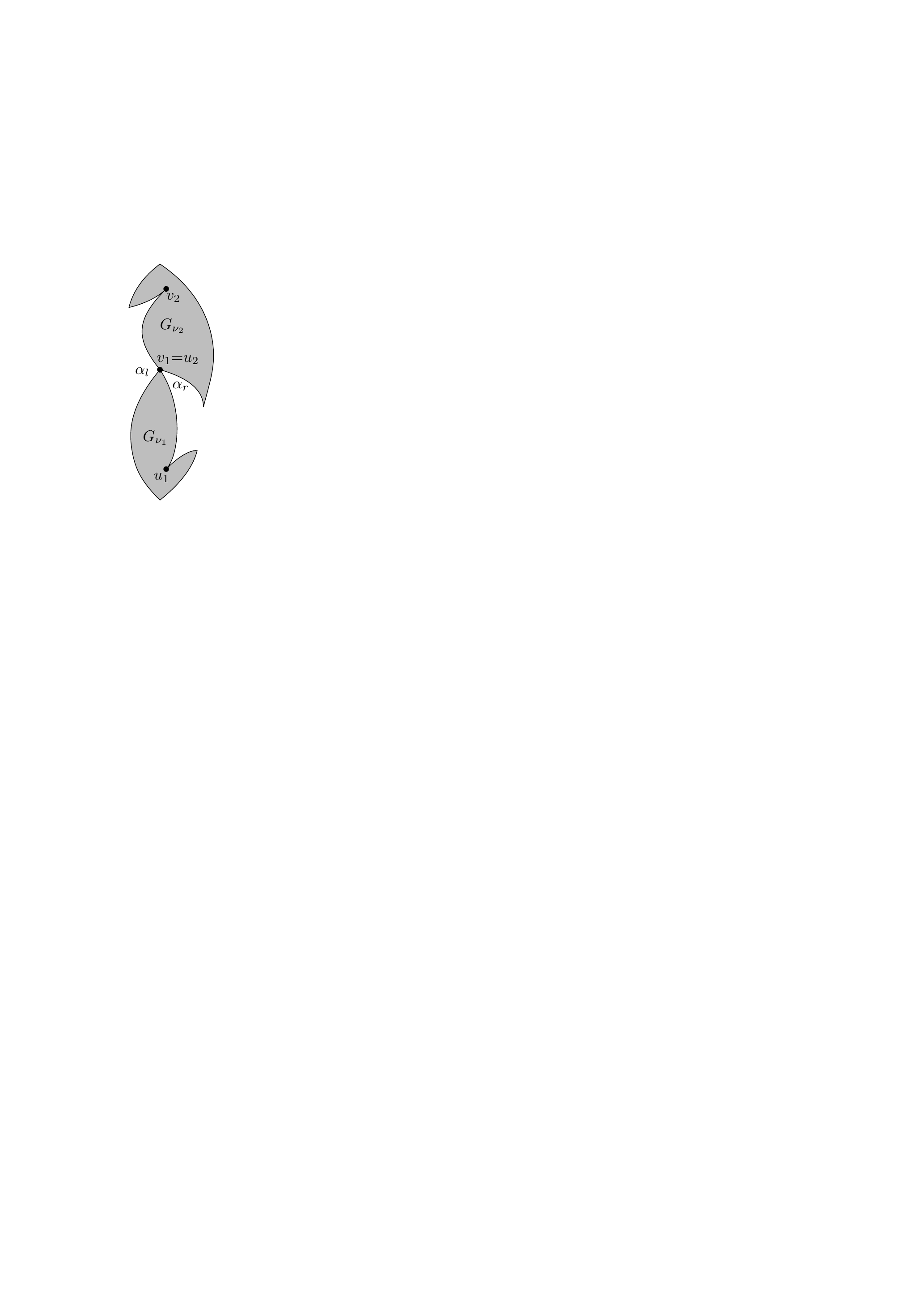}
		\end{subfigure}
		\caption{Series composition. The resulting shape description is $\shapeDesc{2}{2}{-1}{-1}{in}{in}{out}{out}$ \label{fig:series-composition}}
	\end{figure}

	\fi
	
	\ifshort
		Let $\nu_1$ and $\nu_2$ be the children of $\mu$, with poles $u$, $w$ and $w$, $v$, respectively. Let $\shapeDesc{\tau^1_l}{\tau^1_r}{\lambda^1_u}{\lambda^1_w}{\rho^1_{l,u}}{\rho^1_{r,u}}{\rho^1_{l,w}}{\rho^1_{r,w}}$
		be a tuple in $\mathcal{F}_{\nu_1}$ and let 
		$\shapeDesc{\tau^2_l}{\tau^2_r}{\lambda^2_w}{\lambda^2_v}{\rho^2_{l,w}}{\rho^2_{r,w}}{\rho^2_{l,v}}{\rho^2_{r,v}}$
		be a tuple in $\mathcal{F}_{\nu_2}$.
		Let $\alpha_l$ (resp. $\alpha_r$) be the angle at $w$ created by the two left (resp. right) outer paths of $G_{\nu_1}$ and $G_{\nu_2}$ (see Fig.~\ref{fig:series-composition}).
		We assign the labels $\lambda_l$ and $\lambda_r$ to $\alpha_l$ and $\alpha_r$ respectively as follows:
		$\lambda_l=0$ if $\rho^1_{l,w}\neq \rho^2_{l,w}$ otherwise $\lambda_l\in\{-1,1\}$, and 
		$\lambda_r=0$ if $\rho^1_{r,w}\neq \rho^2_{r,w}$ otherwise $\lambda_r\in\{-1,1\}$.
		Note that, by \textbf{UP1} it must be $\lambda_l+\lambda_r < 2$. For all possible values of $\lambda_l$ and $\lambda_r$ satisfying the previous constraints, we construct a candidate tuple 
		$\shapeDesc{\tau_l}{\tau_r}{\lambda_u}{\lambda_v}{\rho_{l,u}}{\rho_{r,u}}{\rho_{l,v}}{\rho_{r,v}}$ with: (i) $\tau_l=\tau^1_l+\tau^2_l+\lambda_l$, (ii) $\tau_r=\tau^1_r+\tau^2_r+\lambda_r$, (iii) $\lambda_u=\lambda^1_u$, (iv) $\lambda_v=\lambda^2_v$, (v) $\rho_{l,u}=\rho^1_{l,u}$, (vi) $\rho_{r,u}=\rho^1_{r,u}$, (vii) $\rho_{l,v}=\rho^2_{l,v}$, (viii) $\rho_{r,v}=\rho^2_{r,v}$. We accept the candidate tuple if and only if it satisfies \iflong Corollary~\ref{cor:dependence-parameters}.\fi \ifshort Observation~\ref{obs:dependence-parameters}.\fi
		
		\begin{figure}[htb]
			\centering
			\begin{subfigure}{.45\textwidth}
				\centering
				\includegraphics[width=\columnwidth, page=2]{figures/series-composition}
			\end{subfigure}
			\caption{Series composition. The resulting shape description is $\shapeDesc{2}{2}{-1}{-1}{in}{in}{out}{out}$\label{fig:series-composition}.}
		\end{figure}
	
	\fi

	\iflong \begin{lemma} \fi \ifshort \begin{lemma}\fi\label{lem:S_node_general}
		Let $\mu$ be an S-node of $T$ with children $\nu_1$ and $\nu_2$. The feasible set $\mathcal{F}_\mu$ can be computed in $\bigoh(\tau+|\mathcal{F}_{\nu_1}|\cdot|\mathcal{F}_{\nu_2}|)$ time.
	\end{lemma}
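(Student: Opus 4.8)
The plan is to show that the procedure described above (build the matrix $M(\mu)$, then loop over $\mathcal{F}_{\nu_1}\times\mathcal{F}_{\nu_2}$, gluing each pair with every admissible pair of labels $(\lambda_l,\lambda_r)$) outputs exactly the feasible set $\mathcal{F}_{\mu}$, and does so within the stated time bound. For correctness, I would first recall the structure of a series composition: since $G_{\mu}=G_{\nu_1}\cup G_{\nu_2}$ and the two subgraphs share only the cut vertex $w$, any $uv$-external upward planar embedding $\mathcal{E}_{\mu}$ of $G_{\mu}$ arises by taking a $uw$-external embedding $\mathcal{E}_{\nu_1}$ and a $wv$-external embedding $\mathcal{E}_{\nu_2}$, identifying $w$, and placing $G_{\nu_2}$ inside the outer-face sector of $\mathcal{E}_{\nu_1}$ at $w$ (or vice versa). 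In this operation all internal faces of $G_{\mu}$ are internal faces of $G_{\nu_1}$ or $G_{\nu_2}$, the two outer faces merge into the single outer face of $G_{\mu}$, every angle of $\mathcal{E}_{\mu}$ except $\alpha_l$ and $\alpha_r$ is inherited unchanged, and the only angles that cease to exist are the two outer-face angles at $w$ of $\mathcal{E}_{\nu_1}$ and $\mathcal{E}_{\nu_2}$. Consequently, iterating over all pairs of tuples and all admissible $(\lambda_l,\lambda_r)$ enumerates every candidate shape description that can possibly be realized; completeness follows because $\mathcal{F}_{\nu_1},\mathcal{F}_{\nu_2}$ are (inductively) the full feasible sets of the children, and the range restriction on $G_{\nu_1},G_{\nu_2}$ is exactly ``being in $\mathcal{F}_{\nu_i}$''.

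It then remains to argue soundness, i.e.\ that a candidate is accepted precisely when it is the shape description of an upward planar embedding of $G_{\mu}$ whose turn numbers lie in $[\tau_{\min},\tau_{\max}]$. Here I would invoke Theorem~\ref{th:upward-conditions}: conditions \textbf{UP0}, \textbf{UP1}, \textbf{UP2} at every vertex other than $w$, and \textbf{UP3} at every internal face, hold automatically since the relevant angles, vertices and faces are inherited from $\mathcal{E}_{\nu_1}$ or $\mathcal{E}_{\nu_2}$; \textbf{UP0} for $\alpha_l$ (resp.\ $\alpha_r$) holds because the construction assigns label $0$ exactly when the orientations $\rho^1_{l,w},\rho^2_{l,w}$ (resp.\ $\rho^1_{r,w},\rho^2_{r,w}$) differ; and \textbf{UP3} for the merged outer face of $G_{\mu}$, once its boundary is decomposed into the interiors of the two left and two right outer paths plus the four angles $\lambda_u,\lambda_v,\lambda_l,\lambda_r$, is equivalent to the accepting equality $\tau_l+\tau_r+\lambda_u+\lambda_v=2$ (using that $\tau_l,\tau_r$ as defined already absorb $\lambda_l,\lambda_r$), i.e.\ Observation~\ref{obs:dependence-parameters}. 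The one condition requiring a genuine argument is \textbf{UP1}/\textbf{UP2} at $w$: I would show that, since the rotations around $w$ are fixed by $\mathcal{E}_{\nu_1}$ and $\mathcal{E}_{\nu_2}$, the only freedom is $(\lambda_l,\lambda_r)$, and that the constraint $\lambda_l+\lambda_r<2$ together with the accepting equality — which, via Observation~\ref{obs:dependence-parameters} applied to $G_{\nu_1}$ and $G_{\nu_2}$, forces $\lambda^1_w+\lambda^2_w$ to a specific value — pins the number of angles of each label at $w$ to exactly the values demanded by \textbf{UP1} (if $w$ is a switch of $G$) or \textbf{UP2} (otherwise). This case distinction, over the switch status of $w$ and the lost outer-angle labels $\lambda^1_w,\lambda^2_w$, is the main obstacle; the rest is bookkeeping of inherited structure. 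Finally, a tuple is stored in $M(\mu)$ only after checking that both $\tau_l$ and $\tau_r$ fall in $[\tau_{\min},\tau_{\max}]$; an accepted tuple has $\tau_l+\tau_r=2-\lambda_u-\lambda_v\in\{0,\dots,4\}$, consistent with Lemma~\ref{lem:shape_desc_values}, so it occupies a well-defined cell of the matrix.

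For the running time, building $M(\mu)$ costs $\bigoh(\tau)$ time by Lemma~\ref{lem:matrix_feasible}, and reading all of $\mathcal{F}_{\nu_1}$ and $\mathcal{F}_{\nu_2}$ from their matrices costs $\bigoh(\tau)$ time as well. We then run a double loop over the $|\mathcal{F}_{\nu_1}|\cdot|\mathcal{F}_{\nu_2}|$ pairs; for each pair there are only constantly many admissible choices of $(\lambda_l,\lambda_r)$ (each label lies in $\{-1,0,1\}$, one is forced to $0$ or to a two-element set, and $\lambda_l=\lambda_r=1$ is excluded), each candidate tuple is assembled in $\bigoh(1)$ time from the explicit formulas, the accepting equality and the range test take $\bigoh(1)$ time, and inserting the tuple into the appropriate cell of $M(\mu)$ takes $\bigoh(1)$ time by Lemma~\ref{lem:matrix_feasible}. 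Summing the contributions gives the claimed $\bigoh(\tau+|\mathcal{F}_{\nu_1}|\cdot|\mathcal{F}_{\nu_2}|)$ bound.
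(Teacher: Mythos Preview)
Your proposal is correct and, for the running-time bound that the lemma actually asserts, takes essentially the same approach as the paper: read both children's feasible sets in $\bigoh(\tau)$ time via Lemma~\ref{lem:matrix_feasible}, loop over the $|\mathcal{F}_{\nu_1}|\cdot|\mathcal{F}_{\nu_2}|$ pairs, and for each pair spend $\bigoh(1)$ time building, testing, and storing the constantly many candidate tuples. The paper's own proof says exactly this and nothing more; it does not argue correctness of the S-node procedure inside the proof, treating the algorithm description preceding the lemma as self-evident. Your additional correctness discussion (inherited faces and angles satisfy \textbf{UP0}--\textbf{UP3}, the accepting equality is \textbf{UP3} for the merged outer face, and \textbf{UP1}/\textbf{UP2} at $w$ is forced by the identity $\lambda^1_w+\lambda^2_w=\lambda_l+\lambda_r+2$ obtained from Observation~\ref{obs:dependence-parameters} on both children) is sound and goes beyond what the paper writes out.
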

	\iflong \begin{proof}
	By Lemma~\ref{lem:matrix_feasible}, getting the shape descriptions of each feasible set $\mathcal{F}_{\nu_1}$ and $\mathcal{F}_{\nu_2}$ takes $\bigoh(\tau)$ time. There are $|\mathcal{F}_{\nu_1}|\cdot|\mathcal{F}_{\nu_2}|$ pairs of shape descriptions for $\mathcal{E}_{\nu_1}$ and $\mathcal{E}_{\nu_2}$, and for each pair, the candidate tuple is constructed in $\bigoh(1)$ time. Finally, by Lemma~\ref{lem:matrix_feasible} storing each accepted tuple requires $\bigoh(1)$ time. Hence, the feasible set $\mathcal{F}_\mu$ can be computed in $\bigoh(\tau+|\mathcal{F}_{\nu_1}|\cdot|\mathcal{F}_{\nu_2}|)$ time, as claimed.
	\end{proof}\fi

	\renewcommand{\*}{\pmb{^*}}
	\newcommand{\+}{\pmb{^+}}
	
	\iflong\paragraph*{P-node}\fi
	\ifshort\subparagraph{P-node.}\fi	
	\iflong
	Let $\mu$ be a P-node with poles $u$ and $v$ and children $\nu_1, \nu_2 \dots, \nu_{k}$. Let $\nu'_1, \nu'_2 \dots, \nu'_{k'}$ be a subset of the children of $\mu$ and let $G'_{\mu}$ be the subgraph of $G_{\mu}$ consisting of the union of $G_{\nu'_1}$, $G_{\nu'_2}$, $\dots$, $G_{\nu'_{k'}}$. Consider an upward planar embedding $\mathcal{E}'_\mu$ of $G'_{\mu}$ and assume that the pertinent graphs $G_{\nu'_1}$, $G_{\nu'_2}$, $\dots$, $G_{\nu'_{k'}}$ appear in this order from left to right around $u$. Denote by $\mathcal{E}_{\nu'_i}$ the $uv$-external upward planar embedding of $G_{\nu'_i}$, for $i=1,2,\dots,k'$. Then the outer boundary of $\mathcal{E}'_{\mu}$ is formed by the left outer path of $\mathcal{E}_{\nu'_1}$ and the right outer path of $\mathcal{E}_{\nu'_{k'}}$. 
	Denote by $f'_i$ the face bounded by the right outer path of $\mathcal{E}_{\nu'_i}$ and by the left outer path of $\mathcal{E}_{\nu'_{i+1}}$, for $i=1,2,\dots,k'-1$, and let $f'_0$ be the outer face of $G'_{\mu}$. Denote by $S'$ the sequence of shape descriptions of the pertinent digraphs of $\nu'_1, \nu'_2 \dots, \nu'_{k'}$ in the order $G_{\nu'_1}$, $G_{\nu'_2}$, $\dots$, $G_{\nu'_k}$. The sequence $S'$ is the \emph{shape sequence} of $G'_{\mu}$ with respect to the upward planar embedding $\mathcal{E}'_\mu$. To describe the sequence $S'$ we use the following notation: $a\*$ denotes a subsequence of $S'$ consisting of 0 or more elements equal to $a$; $a\+$ denotes a subsequence of $S'$ consisting of 1 or more elements equal to $a$. Not all sequences of shape descriptions of $\mathcal{E}_{\nu'_1}$, $\mathcal{E}_{\nu'_2}$, $\dots$, $\mathcal{E}_{\nu'_{k'}}$ are shape sequences of $G'_{\mu}$. For $G'_{\mu}$ we say that a shape description $s'$ \emph{corresponds} to a shape sequence $S'$ if there exists an upward planar embedding of $G'_{\mu}$ with shape description $s'$ and whose shape sequence is $S'$. Notice that, the same shape sequence might correspond to different upward planar embeddings of $G'_{\mu}$, even if the order of $G_{\nu'_1}$, $G_{\nu'_2}$, $\dots$, $G_{\nu'_{k'}}$ is the same (see Fig.~\ref{fig:shape_seq_desc_diff}).
	Let $S$ be a sequence of shape descriptions; the \emph{reduced sequence} of $S$ is the sequence of shape descriptions obtained from $S$ by replacing each maximal subsequence $a\+$ of $S$ with the single element $a$. We define the \emph{size} of $S$ as the number of elements in its reduced sequence. 
	\fi
	\ifshort
	Let $\mu$ be a P-node with poles $u$ and $v$ and $k$ children $\nu_1, \nu_2 \dots, \nu_{k}$. Let $N'$  be a subset of the children of $\mu$ and let $G'_{\mu}$ be the subgraph of $G_{\mu}$ consisting of components $G_{\nu'}$ for $\nu' \in N'$. Consider a $uv$-external upward planar embedding $\mathcal{E}'_\mu$ of $G'_{\mu}$.
	Denote by $S'$ the sequence of shape descriptions of the components of $G'_{\mu}$ in the clockwise order in which they appear around $u$ starting from the outer face. The sequence $S'$ is the \emph{shape sequence} of $G'_{\mu}$ with respect to $\mathcal{E}'_\mu$. To describe $S'$ we write: $a\*$ (resp. $a\+$) to denote a subsequence of $S'$ consisting of 0 (resp. 1) or more elements equal to $a$. We say that a shape description $s'$ of $G'_{\mu}$ \emph{corresponds} to $S'$ if there exists an upward planar embedding of $G'_{\mu}$ with shape description $s'$ and whose shape sequence is $S'$. 
	Let $S$ be a sequence of shape descriptions; the \emph{reduced sequence} of $S$ is obtained from $S$ by replacing each maximal subsequence $a\+$ of $S$ with the single element $a$. The \emph{size} of $S$ is the number of elements in its reduced sequence. 
	\fi
	
	\iflong
	\begin{figure}[!t]
		\centering
		{\centering
		\begin{subfigure}{.15\textwidth}
			\includegraphics[page=8]{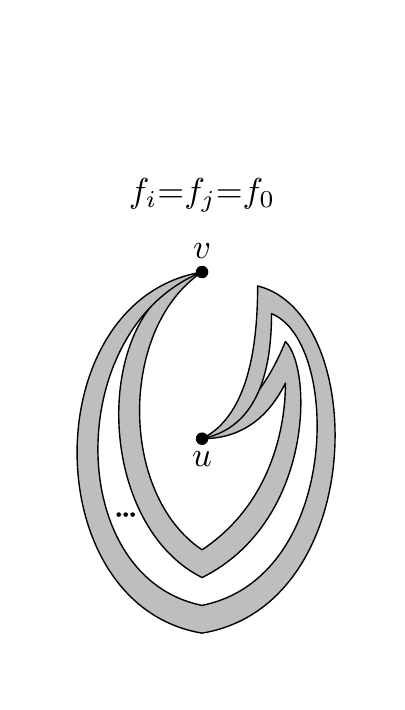}
			\subcaption{}
			\label{fig:shape_seq_desc_1}
		\end{subfigure}
		}
		\hfil
		{\centering
		\begin{subfigure}{.15\textwidth}
			\includegraphics[page=9]{figures/p_nodes}
			\subcaption{}
			\label{fig:shape_seq_desc_2}
		\end{subfigure}
		}
		\caption{Different shape descriptions corresponding to the shape sequence $[s_1,s_2]$, with $s_1=\shapeDesc{0}{0}{1}{1}{out}{out}{in}{in}$ (gray shaded) and $s_2=\shapeDesc{-2}{2}{1}{1}{out}{out}{in}{in}$ (orange shaded). The shape descriptions are (a)~$\shapeDesc{0}{2}{1}{-1}{out}{out}{in}{in}$ and (b)~$\shapeDesc{0}{2}{-1}{1}{out}{out}{in}{in}$.}
		\label{fig:shape_seq_desc_diff}
	\end{figure}
	\fi
	
	\iflong \begin{lemma} \label{lem:p_nodes_seq_valid_test}
		Let $S$ be a sequence of shape descriptions from the feasible sets of $G_{\nu'_1}$, $G_{\nu'_2}$, $\dots$, $G_{\nu'_{k'}}$. We can decide whether $S$ is a shape sequence of $G'_{\mu}$ and compute the corresponding shape descriptions of  $G'_{\mu}$ in $\bigoh(r^3)$ time, where $r$ is the size of $S$. Furthermore there are $\bigoh(r^2)$ computed shape descriptions of  $G'_{\mu}$.
	\end{lemma}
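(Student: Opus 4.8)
\emph{Proof plan.} Fix the left-to-right order of $G_{\nu'_1},\dots,G_{\nu'_{k'}}$ prescribed by $S$. An upward planar embedding of $G'_\mu$ that realizes this order is completely determined, up to choices internal to the components that do not influence any shape description, by the labels assigned to the two angles at the poles $u$ and $v$ inside each of the $k'$ faces $f'_0,f'_1,\dots,f'_{k'-1}$. Given such a labeling, the shape description of $G'_\mu$ is immediate: its left- and right-turn-numbers equal $\tau_l(\mathcal{E}_{\nu'_1},u,v)$ and $\tau_r(\mathcal{E}_{\nu'_{k'}},u,v)$, its four orientation entries are inherited from the first and the last element of $S$, and $\lambda(\mathcal{E}'_\mu,u)$ and $\lambda(\mathcal{E}'_\mu,v)$ are the labels chosen for the two angles of the outer face $f'_0$. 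Thus the lemma reduces to enumerating the valid labelings and collecting the resulting pairs $(\lambda(\mathcal{E}'_\mu,u),\lambda(\mathcal{E}'_\mu,v))$.

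The plan is to split the conditions of Theorem~\ref{th:upward-conditions} into a local and a global part. Locally, Property \textbf{UP0} together with the orientation entries of consecutive elements of $S$ fixes, for every face $f'_i$, whether its angle at $u$ (resp.\ $v$) is flat (label $0$) or a switch angle (label in $\{-1,1\}$); and Property \textbf{UP3} — Lemma~\ref{lem:shape_face} for the internal faces and Corollary~\ref{cor:dependence-parameters} for $f'_0$ — pins the sum of the two pole labels inside each face as an explicit function of $\tau_r(\mathcal{E}_{\nu'_i},u,v)$ and $\tau_l(\mathcal{E}_{\nu'_{i+1}},u,v)$. Hence each face independently admits only $O(1)$ admissible label pairs. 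Globally, Properties \textbf{UP1}/\textbf{UP2} at $u$ and at $v$ remain; since the angles at $u$ internal to each component are inherited unchanged, only the $k'$ face angles are free, and summing the vertex condition at $u$ over the individual components shows that $\sum_{i=0}^{k'-1}\lambda(\alpha_u(f'_i))$ is forced and that the number of faces whose $u$-angle is labeled $1$ must be $0$ or $1$, depending on whether $u$ is a switch vertex of $G'_\mu$ and on the labels $\lambda(\mathcal{E}_{\nu'_j},u)$; the same holds at $v$.

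Because $S$ may be far longer than $r$, I would first pass to the reduced sequence $b_1,\dots,b_r$. For a maximal run of a thin shape description, the proof of Lemma~\ref{lem:parallel_reduce} already shows that the intra-run faces are forced (both pole angles labeled $-1$, consistently, since there $\tau_r+\tau_l=0$ and both these angles are switch angles) and that collapsing the run to a single copy changes neither realizability nor the shape description of $G'_\mu$. For a maximal run of a non-thin shape description, a short case analysis of the face equation for two adjacent copies shows that the run is either infeasible (so $S$ is rejected) or forces its intra-run labels and has effectively bounded length; it then suffices to remember for each $b_m$ one bit distinguishing run length $1$ from $\ge 2$. On the reduced sequence I would run an interval dynamic program: for each contiguous sub-sequence $b_i,\dots,b_j$ it computes the set of shape descriptions of the corresponding sub-union, each augmented by the number of large angles it already carries at $u$ and at $v$ in its interior; a transition combines $b_i,\dots,b_m$ with $b_{m+1},\dots,b_j$ by picking one of the $O(1)$ admissible label pairs for the newly created face between them and summing the interior tallies. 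Any sub-union that exceeds the global quota of one large angle per pole is discarded, so those tallies are bounded by a constant and each of the $O(r^2)$ sub-sequences carries only $O(1)$ augmented shape descriptions — giving $O(r^2)$ computed shape descriptions in total — while filling each sub-sequence tries its $O(r)$ split points, for $O(r^3)$ time. The shape descriptions of $G'_\mu$ are then read off the entry for $b_1,\dots,b_r$ after imposing the outer-face equation and the final \textbf{UP1}/\textbf{UP2} counts at $u$ and $v$; correctness is exactly Theorem~\ref{th:upward-conditions}.

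I expect the main obstacle to be the global conditions \textbf{UP1}/\textbf{UP2}: they genuinely couple all $k'$ faces, both through the per-face sum (so the label at $u$ inside $f'_i$ is tied to the label at $v$ inside $f'_i$) and through the at-most-one-large-angle budget at each pole, and the crux of the correctness argument is that tracking only the two interior large-angle tallies through the interval recursion suffices to realize this coupled constraint. A secondary delicate point is checking that the thin-run reduction is lossless and that the case analysis for non-thin runs is exhaustive.
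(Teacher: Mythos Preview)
Your proposal is correct, but it takes a more involved route than the paper. Both arguments agree on the reduction step (collapsing $S$ to its reduced sequence $b_1,\dots,b_r$) and on the observation that, once the component shapes are fixed, the only undetermined data are the $2r$ pole labels $\lambda(\alpha_u(f'_i)),\lambda(\alpha_v(f'_i))$ at the $r$ faces. The paper then exploits directly the very fact you call ``the main obstacle'': Properties \textbf{UP1}/\textbf{UP2} force at most one of the faces to carry a large angle at $u$ and at most one at $v$, while every other switch angle is $-1$ and every flat angle is $0$. Hence there are only $(r+1)^2$ candidate labelings---one choice of ``large-angle face'' (or ``none'') per pole---and each candidate is checked against \textbf{UP3} face by face in $O(r)$ time, after which the shape description of $G'_\mu$ is read off from the outer-face labels and the first/last element of the reduced sequence. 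This yields the $O(r^3)$ time and the $O(r^2)$ output bound with no dynamic programming at all.

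Your interval DP reaches the same bounds but is overkill: because merging two adjacent blocks creates a single new internal face with $O(1)$ admissible label pairs, even a left-to-right scan (split point $m=j-1$) would suffice, and branching over all $O(r)$ split points per interval buys nothing. Also, your hedge about ``non-thin runs of effectively bounded length'' is unnecessary: the face equation $\tau_r(s)+\tau_l(s)+\lambda(\alpha_u)+\lambda(\alpha_v)=-2$ between two adjacent copies of the same $s$ forces $h=0$ and both pole labels $-1$, so $s$ must be thin; any non-thin repeated element rejects $S$ outright, and your extra length bit is never needed.
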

	\begin{proof}
	Assume first that $S$ is reduced. We want to use {\textbf {UP0}-\textbf{UP3}} in order to compute all possible corresponding shape descriptions of $G'_{\mu}$. Let $\mathcal{E}'_{\mu}$ be an upward planar embedding of $G'_\mu$ with shape sequence $S$. Let $G_{\nu'_i}$ be a component of $G'_{\mu}$. Denote by $\mathcal{E}_{\nu'_i}$ its $uv$-external upward embedding induced by $\mathcal{E}'_\mu$ and let $s_i=\shapeDesc{\tau_l(s_i)}{\tau_r(s_i)}{\lambda_u(s_i)}{\lambda_v(s_i)}{\rho_l(u,s_i)}{\rho_r(u,s_i)}{\rho_l(v,s_i)}{\rho_r(v,s_i)}$ be its shape description in $S$. 
	For all faces $f'_i$ of $\mathcal{E}'_{\mu}$, \textbf{UP3} gives $\tau_r(s_i)+\tau_l(s_{i+1})+\lambda(\alpha_u(f'_i))+\lambda(\alpha_v(f'_i))=-2$ holds if $i\neq 0$, and $\tau_r(s_{k'})+\tau_l(s_1)+\lambda(\alpha_u(f'_0))+\lambda(\alpha_v(f'_0))=2$ for the outer face $f'_0$. For the labels at the angles $\alpha_u(f'_i)$ and $\alpha_v(f'_i)$, we have that $\lambda(\alpha_w(f'_i))=0$ (i.e. $\alpha_w(f'_i)$ is a flat angle) if and only if $\rho_r(w,s_i)\neq \rho_l(w,s_{i+1})$, where $w$ is a pole of $\mu$. Hence the label of $\alpha_w(f'_i)$ is not determined only if it is a switch angle and its label can be either $-1$ or $1$. Note that by \textbf{UP2} there is exactly one angle with label $1$. Then, if $u$ is a switch vertex, the possible labelings of all angles $\alpha_u(f'_i)$ are $r+1$; namely there are $r$ faces $f'_i$ that can have the unique label $1$, or none of them has label $1$.  Similarly if $v$ is a switch vertex, the possible labelings of all angles $\alpha_v(f'_i)$ are $r+1$. This implies that there exist in total at most $(r+1)^2$ different labelings for the angles at $u$ and $v$ inside all faces $f'_i$ of $G'_\mu$. For each labeling, we want to test whether \textbf{UP3} holds for all faces $f'_i$ of $G'_\mu$. This can be done in constant time for each face, and thus in $\bigoh(r)$ time for all faces. Still we need to guarantee that \textbf{UP1} and \textbf{UP2} also hold. 
	
	Assume that pole $w$ is a switch vertex of $G'_\mu$. Note that if the shape description for a component $G_{\nu'_i}$ has label at pole $w$ equal to $-1$, that is $\lambda(w,\mathcal{E}_{\nu'_i})=-1$, then the large angle of $w$ is inside an interior face of $\mathcal{E}_{\nu'_i}$. Then $\lambda(\alpha_w(f'_i))=-1$ must hold for all faces $f'_i$ of $\mathcal{E}'_{\mu}$,  and $\lambda(w,\mathcal{E}_{\nu'_j})=1$ for all components $G_{\nu'_j}$ different from $G_{\nu'_i}$. Otherwise, $\lambda(w,\mathcal{E}_{\nu'_i})=1$ holds for all components $G_{\nu'_i}$, and the large angle of $w$ must be inside a face $f'_i$ of $\mathcal{E}'_{\mu}$, that is $\lambda(\alpha_w(f'_i))=1$ and $\lambda(\alpha_w(f'_j))=-1$ for all other faces $f'_j$. Note that all shape descriptions used for the components of $G'_\mu$ belong to their feasible sets. Then  all angles of $w$ in the interior of all components (except possibly for one angle in the interior of one $\mathcal{E}_{\nu'_i}$) have label $-1$. This implies that the above restrictions are sufficient to test whether \textbf{UP1} holds for $w$, and this can be done in $\bigoh(r)$ time for each labeling.
	Now, assume that pole $w$ is a non-switch vertex of $G'_\mu$. Note that if the shape description for a component $G_{\nu'_i}$ has label at pole $w$ equal to $-1$, that is $\lambda(w,\mathcal{E}_{\nu'_i})=-1$, then both flat angles of $u$ are inside 
	$\mathcal{E}_{\nu'_i}$, while if the label equals $0$ that is $\lambda(w,\mathcal{E}_{\nu'_i})=0$, then one flat angle of $u$ is inside $\mathcal{E}_{\nu'_i}$. In the first case, we have that $\lambda(\alpha_w(f'_i))=-1$ must hold for all faces $f'_i$,  and $\lambda(w,\mathcal{E}_{\nu'_j})=1$ for all components $G_{\nu'_j}$ different from $G_{\nu'_i}$. In the second case, $G_{\nu'_i}$ is the unique special component of pole $w$, and $\lambda(w,\mathcal{E}_{\nu'_j})=1$ must hold for all other components $G_{\nu'_j}$, while $\lambda(\alpha_w(f'_i))=1$ must hold for all faces $f'_i$, except for one that must have label $0$. In both cases, since all shape descriptions used for the components of $G'_\mu$ belong to their feasible sets, the switch angles of $w$ in the interior of all components have label $-1$. Hence the above restrictions are sufficient to test whether \textbf{UP2} holds for $w$, and this can be done in $\bigoh(r)$ time for each labeling.

	Overall, there exist $\bigoh(r^2)$ possible labelings for the angles $\alpha_u(f'_i)$ and $\alpha_v(f'_i)$ of $u$ and $v$ respectively inside all faces $f'_i$ of $\mathcal{E}'_\mu$. Testing whether \textbf{UP1}-\textbf{UP3} holds for each of them requires $\bigoh(r)$ time. If this is the case, by Theorem~\ref{th:upward-conditions} $G'_\mu$ is upward planar and the shape description of $\mathcal{E}'_\mu$ can be computed in constant time from the first element of $S$, its last element, and the labels at $u$ and $v$ chosen for the outer face $f'_0$. Thus, the overall complexity of deciding whether a sequence $S$ is a shape sequence for $G'_{\mu}$ is $\bigoh(r^3)$. In the same time all $\bigoh(r^2)$ shape descriptions for $G'_\mu$ that correspond to $S$ are computed.

	Assume now that $S$ is not reduced. We describe how this case can be brought down to the previous one. Let $s\+$ be a maximal subsequence of $S$. Let $G_{\nu'_j}$, $G_{\nu'_{j+1}}$, $\dots$, $G_{\nu'_{j+\ell}}$ be the components of $G'_{\mu}$ that correspond to the subsequence $s\+$ and let $G^+_{\mu}$ be the subgraph of $G'_{\mu}$ consisting of these components. Let $S^-$ be the sequence obtained from $S$ by replacing $s\+$ with $s$ and let $G^-_{\mu}$ the graph obtained from $G'_{\mu}$ by replacing $G^+_{\mu}$ with $G_{\nu'_j}$. 
	
	First we check under which conditions $s\+$ is a shape sequence for $G^+_{\mu}$. Let $G_{\nu'_i}$ and $G_{\nu'_{i+1}}$ be two consecutive components of $G'_{\mu}$ that both have shape description $s$ in $S$. The internal face $f'_i$ of $\mathcal{E}'_{\mu}$ is defined by the right outer path of $\mathcal{E}_{\nu'_i}$ and the left outer path of $\mathcal{E}_{\nu'_{i+1}}$. Let $\alpha_u(f'_i)$ and $\alpha_v(f'_i)$ be the angles at $u$ and $v$ respectively inside $f'_i$. If $S$ is a shape sequence of $G'_{\mu}$, {\textbf{UP3}} must be satisfied for $f'_i$, that is $\tau_r(s)+\tau_l(s)+\lambda(\alpha_u(f'_i))+\lambda(\alpha_v(f'_i))=-2$. Recall that by Lemma~\ref{lem:shape_desc_values} the right-turn-number of $s$ satisfies $\tau_r(s)=-\tau_l(s)+h$ with $h=0,1,2,3,4$. Also the labels $\lambda(\alpha_u(f'_i))$ and $\lambda(\alpha_v(f'_i))$ take values in $\{-1,0,1\}$. The previous equation gives $h+\lambda(\alpha_u(f'_i))+\lambda(\alpha_v(f'_i))=-2$ which is only satisfied if $h=0$ and $\lambda(\alpha_u(f'_i))=\lambda(\alpha_v(f'_i))=-1$. Hence the shape description $s=\shapeDesc{\tau_l(s)}{\tau_r(s)}{\lambda_u(s)}{\lambda_v(s)}{\rho_l(u,s)}{\rho_r(u,s)}{\rho_l(v,s)}{\rho_r(v,s)}$ must satisfy the thin conditions presented in Section~\ref{sec:comp_shapes}.
	Checking whether $s$ satisfies the thin conditions can be done in constant time. Let $N_1$ be the set $\{\nu'_1,\dots,\nu'_{k'}\}$ of children of $\mu$, $N_s=\{\nu'_j,\dots,\nu'_{j+\ell}\}$ and $N_2=N_1\setminus N_s\cup \{\nu'_j\}$. Note that $G'_\mu$ and $G^-_\mu$ are the graphs $G_{N_1}$ and $G_{N_2}$ of Lemma~\ref{lem:parallel_reduce}. Also, an upward embedding $\mathcal{E}_{N_1}$ of $G_{N_1}$ with shape sequence $S$, and an upward embedding $\mathcal{E}_{N_2}$ of $G_{N_2}$ with shape sequence $S^-$, both respect $s$ for the set $N_s$, and they form an equivalence pair. Hence, by Lemma~\ref{lem:parallel_reduce}, if $s$ satisfies the thin conditions, then every shape description for $G'_{\mu}$ that corresponds to $S$ is also a shape description for $G^-_{\mu}$ that corresponds to $S^-$ and vice versa. 
	
	We try to replace every maximal subsequence $s\+$ of $S$ with $s$. If this is not possible, that is $s$ does not satisfy the thin conditions, then $S$ is not a shape sequence of $G'_{\mu}$. Otherwise, we have a subgraph $G^-_{\mu}$ of $G'_{\mu}$ and a sequence $S^-$ which is reduced. By using the procedure described at the beginning of the proof, we can compute the shape descriptions of $G^-_{\mu}$ that corresponds to $S^-$ in $\bigoh(r^3)$ time.
	\end{proof}

	Let $S'$ be the shape sequence of $G'_{\mu}$ with respect to  $\mathcal{E}'_\mu$. Let $\nu$ be a child of $G_{\mu}$ that is not in the set $\{\nu'_1, \nu'_2 \dots, \nu'_{k'}\}$, let $s$ be a shape description from the feasible set of $G_{\nu}$, and let $G''_{\mu}$ be the subgraph of $G_{\mu}$ consisting of the union of $G_{\nu}$ and $G'_{\mu}$. We say that $S'$ \emph{can be extended} with $s$ to a shape sequence $S''$ of $G''_{\mu}$ if $S''$ is a shape sequence of $G''_{\mu}$, $s$ belongs to $S''$, and removing $s$ from $S''$ we obtain $S'$. Notice that $s$ can be either the first, or the last, or an intermediate element of $S''$.

	\begin{lemma} \label{lem:p_nodes_seq_extend_test}
		Let $S'$ be a shape sequence of $G'_{\mu}$. Given a shape description $s$ from the feasible set of  $G_{\nu}$, we can decide whether $S'$ can be extended with $s$ to a shape sequence $S''$ of $G''_{\mu}$ and compute the corresponding shape descriptions for $G''_{\mu}$ in $\bigoh(r^4)$ time, where $r$ is the size of $S'$.
	\end{lemma}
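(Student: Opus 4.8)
The plan is to reduce this task to a bounded number of applications of Lemma~\ref{lem:p_nodes_seq_valid_test}. Since $S''$ is obtained from $S'$ by inserting a single copy of $s$, the reduced sequence of $S''$ is entirely determined by where this copy lands with respect to the reduced sequence $[a_1,\dots,a_r]$ of $S'$. Concretely, the new copy of $s$ either falls into one of the $r+1$ positions separating two consecutive maximal blocks of $S'$ (including before the first block and after the last one), or into the interior of one of the $r$ maximal blocks of $S'$. In the first case, the reduced sequence of $S''$ is obtained from $[a_1,\dots,a_r]$ by inserting $s$ at the corresponding position and re-reducing, so it has size at most $r+1$. In the second case, if $s$ is inserted into the interior of a maximal block of copies of $a_i$ with $a_i\neq s$ the block is split, the reduced sequence contains the fragment $a_i, s, a_i$ at that spot and has size at most $r+2$, whereas if $a_i=s$ the block merely grows and the reduced sequence is unchanged. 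In all cases the reduced sequence of $S''$ has size $\bigoh(r)$, and there are only $\bigoh(r)$ such candidate sequences.

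First I would enumerate these $\bigoh(r)$ candidates, each represented by its reduced sequence together with a flag, for the single block affected by the insertion, recording whether it now has multiplicity at least two. For each candidate $S''$ I would invoke Lemma~\ref{lem:p_nodes_seq_valid_test}, which in $\bigoh(r^3)$ time decides whether $S''$ is a shape sequence of $G''_{\mu}$ and, in the affirmative, outputs the $\bigoh(r^2)$ shape descriptions of $G''_{\mu}$ corresponding to $S''$. Collecting these descriptions over all candidates gives precisely the shape descriptions of $G''_{\mu}$ obtainable by extending $S'$ with $s$, which I would store, with deduplication, in a matrix as in Lemma~\ref{lem:matrix_feasible}; moreover $S'$ can be extended with $s$ exactly when at least one candidate passes the test. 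As there are $\bigoh(r)$ candidates and each is handled in $\bigoh(r^3)$ time, the total running time is $\bigoh(r^4)$.

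The crux of the argument, and the step I expect to require the most care, is justifying that it suffices to enumerate candidate \emph{reduced} sequences: I must argue that both the validity of $S''$ and the set of shape descriptions it yields depend only on which gap or block of $S'$ absorbs the new copy of $s$, and not on the precise block multiplicities. This is exactly the content of Lemma~\ref{lem:parallel_reduce} together with the thin-condition analysis already carried out in the proof of Lemma~\ref{lem:p_nodes_seq_valid_test}: any maximal block of length at least two forces its shape description to be thin, and, conversely, as long as the thin conditions hold the block's multiplicity can be changed freely without affecting upward planarity or the composed shape description. The blocks of $S''$ untouched by the insertion are inherited from the valid shape sequence $S'$, so they already satisfy the thin conditions whenever they need to, and the only block whose thin condition might newly become relevant is the one possibly enlarged by $s$ --- which is precisely the block whose thin condition Lemma~\ref{lem:p_nodes_seq_valid_test} re-checks when handed the flagged candidate. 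Once this reduction-invariance is established, correctness and the running time follow immediately from the $\bigoh(r)$ candidates and the guarantees of Lemma~\ref{lem:p_nodes_seq_valid_test}.
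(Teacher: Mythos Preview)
Your proposal is correct and follows essentially the same approach as the paper: enumerate the $\bigoh(r)$ essentially distinct sequences $S''$ obtainable by inserting $s$ into $S'$, and apply Lemma~\ref{lem:p_nodes_seq_valid_test} to each in $\bigoh(r^3)$ time. The only minor difference is that the paper dismisses the block-splitting case (inserting $s\neq a_i$ into the interior of an $a_i$-block) with a direct argument---showing, via the thin-condition analysis from the proof of Lemma~\ref{lem:p_nodes_seq_valid_test}, that such an $S''$ can never be a valid shape sequence---whereas you simply include it among the candidates and let Lemma~\ref{lem:p_nodes_seq_valid_test} reject it; this does not affect correctness or the $\bigoh(r^4)$ bound.
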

	\begin{proof}
	The shape description $s$ can be placed either at the beginning, at the end or between two consecutive elements of $S'$. In the first two cases, $S''=[s,S']$ or $S''=[S',s]$. The shape descriptions that correspond to $S''$ can be computed using Lemma~\ref{lem:p_nodes_seq_valid_test} in $\bigoh(r^3)$ time each. For the last case, let $a$ and $b$ be the  two consecutive elements of $S'$ where $s$ is placed. If $a\neq b$ there are at most $r-1$ possible placements for $s$, that is, $r-1$ different sequences $S''$. Note that if $s=a$ or $s=b$, then $S''$ might be the same as $S'$. On the other hand, if $a=b$ there are at most $r$ such sequences $S''$. In fact, arguing similarly with the first part of the proof of Lemma~\ref{lem:p_nodes_seq_valid_test}, if $s\neq a$ the sequence $S''$ is not a shape sequence of $G''_{\mu}$ and if $s=a$ then $S''$ is the same sequence with $S'$. By Lemma~\ref{lem:p_nodes_seq_valid_test}, we can test in $\bigoh(r^3)$ time if each produced sequence $S''$ is a shape sequence of  $G''_{\mu}$ or not. The total require time is $\bigoh(r^4)$ as claimed.
	\end{proof}\fi

	\ifshort
	 
	\begin{lemma}\label{lem:p_nodes_seq_valid_test}
		Let $S'$ be a sequence of shape descriptions from the feasible sets of every $G_{\nu'}$, with $\nu' \in N'$. We can decide whether $S'$ is a shape sequence of $G'_{\mu}$ and compute the corresponding shape descriptions of  $G'_{\mu}$ in $\bigoh(r^3)$ time, where $r$ is the size of $S'$. Furthermore there are $\bigoh(r^2)$ computed shape descriptions of  $G'_{\mu}$.
	\end{lemma}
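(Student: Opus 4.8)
The plan is to handle first the case where $S'$ coincides with its reduced sequence, so that $r=|N'|$ and $S'=[s_1,\dots,s_r]$ with no two consecutive $s_i$ equal, and then reduce the general case to this one. Suppose $S'$ is reduced and fix the cyclic left-to-right order $G_{\nu'_1},\dots,G_{\nu'_r}$ of the components around $u$ prescribed by $S'$. For $i=1,\dots,r$, let $f'_i$ be the face bounded by the right outer path of $\mathcal{E}_{\nu'_i}$ and the left outer path of $\mathcal{E}_{\nu'_{i+1}}$, with indices taken cyclically so that $f'_0$ is the outer face. The first observation is that, once the shape description $s_i$ of every $G_{\nu'_i}$ is fixed, the only remaining freedom in a hypothetical upward planar embedding $\mathcal{E}'_\mu$ with shape sequence $S'$ is the choice of the labels $\lambda(\alpha_u(f'_i))$ and $\lambda(\alpha_v(f'_i))$ of the angles at the poles $u$ and $v$ inside each $f'_i$; every other angle label is already determined by $s_1,\dots,s_r$. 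Moreover the flat/switch status of these pole angles is forced: $\alpha_u(f'_i)$ is flat (and thus labeled $0$) exactly when $\rho_r(u,s_i)\neq\rho_l(u,s_{i+1})$, and is a switch angle (labeled $-1$ or $1$) otherwise, and symmetrically at $v$; hence \textbf{UP0} is automatically satisfied and it remains to enumerate the labelings consistent with \textbf{UP1}--\textbf{UP3}.

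For the enumeration I would use that each $s_i$ comes from the feasible set of $G_{\nu'_i}$, so all switch angles at $u$ in the interiors of the components are already labeled $-1$, with at most one exception originating from the special component of $u$. Consequently \textbf{UP1}/\textbf{UP2} at $u$ is governed entirely by where the unique large angle of $u$ sits, which leaves at most $r+1$ admissible labelings of the pole angles at $u$ (``which face $f'_i$, if any, hosts it''), and likewise at most $r+1$ at $v$, for $\bigoh(r^2)$ candidates in total. For each candidate one checks \textbf{UP1}/\textbf{UP2} at $u$ and $v$ in constant time and \textbf{UP3} at each face, i.e.\ that $\tau_r(s_i)+\tau_l(s_{i+1})+\lambda(\alpha_u(f'_i))+\lambda(\alpha_v(f'_i))$ equals $-2$ for $i\neq 0$ and $2$ for $i=0$, in $\bigoh(r)$ time. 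This gives $\bigoh(r^3)$ overall, and each surviving candidate is, by Theorem~\ref{th:upward-conditions}, an upward planar embedding whose shape description is read off in constant time from $s_1$, $s_r$ and the labels chosen inside $f'_0$; there are $\bigoh(r^2)$ such shape descriptions.

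To reduce a non-reduced $S'$ to this case, I would collapse every maximal block $a^+$ of equal shape descriptions to a single $a$. If $G_{\nu'_i}$ and $G_{\nu'_{i+1}}$ are consecutive components both with shape description $a$, then \textbf{UP3} applied to the internal face $f'_i$ between them forces $\tau_r(a)+\tau_l(a)+\lambda(\alpha_u(f'_i))+\lambda(\alpha_v(f'_i))=-2$; since $\tau_r(a)=-\tau_l(a)+h$ with $h\in\{0,1,2,3,4\}$ by Lemma~\ref{lem:shape_desc_values}, this is solvable only for $h=0$ with both pole angles labeled $-1$, i.e.\ $a$ must satisfy the thin conditions of Section~\ref{sec:comp_shapes} --- a constant-time test. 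If it fails for some repeated block, then $S'$ is not a shape sequence of $G'_\mu$. Otherwise Lemma~\ref{lem:parallel_reduce} guarantees that replacing each block $a^+$ by a single $a$ changes neither whether the sequence is a shape sequence of $G'_\mu$ nor the set of corresponding shape descriptions, so we are back in the reduced case on a sequence of size $r$, at a constant extra cost per block. The total running time is $\bigoh(r^3)$ and the number of computed shape descriptions is $\bigoh(r^2)$.

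The step I expect to be the main obstacle is the bookkeeping behind the enumeration in the reduced case: one has to argue rigorously that --- precisely because every component uses a feasible shape description, so that its interior switch angles at a pole are already all $-1$ --- the only way \textbf{UP1} or \textbf{UP2} can be violated at a pole is through the placement of its unique large angle, so that the $\bigoh(r)$-size family of labelings per pole is simultaneously exhaustive and sound; and one must separately treat the mildly asymmetric situation in which a pole is a non-switch vertex whose two flat angles interact with its unique special component. Given that, the non-reduced reduction is routine, relying only on Lemmas~\ref{lem:shape_desc_values} and~\ref{lem:parallel_reduce}.
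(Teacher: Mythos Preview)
Your proposal is correct and follows essentially the same approach as the paper: enumerate the $\bigoh(r^2)$ labelings of the pole angles in the reduced case, verify \textbf{UP1}--\textbf{UP3} in $\bigoh(r)$ time each, and collapse repeated blocks via the thin conditions and Lemma~\ref{lem:parallel_reduce} in the non-reduced case. One small terminological slip: when you write ``at most one exception originating from the special component of $u$'', the paper's notion of \emph{special component} applies only to non-switch poles; for a switch pole the exceptional component is simply the (at most one) $G_{\nu'_i}$ whose shape description has $\lambda_u(s_i)=-1$, and detecting it requires scanning the $s_i$'s once---still well within your $\bigoh(r)$ per-candidate budget.
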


	Let $\nu$ be a child of $G_{\mu}$ with $\nu \not \in N'$, let $s$ be a shape description of $G_{\nu}$, and let $G''_{\mu}$ be the union of $G_{\nu}$ and $G'_{\mu}$. We say that $S'$ \emph{can be extended} with $s$ to a shape sequence $S''$ of $G''_{\mu}$ if $S''$ is a shape sequence of $G''_{\mu}$, $s$ belongs to $S''$, and removing $s$ from $S''$ we obtain $S'$.
	
	\begin{lemma}\label{lem:p_nodes_seq_extend_test}
		Let $S'$ be a shape sequence of $G'_{\mu}$. Given a shape description $s$ of $G_{\nu}$, we can decide whether $S'$ can be extended with $s$ to a shape sequence $S''$ of $G''_{\mu}$ and compute the corresponding shape descriptions of $G''_{\mu}$ in $\bigoh(r^4)$ time, where $r$ is the size of $S'$.
	\end{lemma}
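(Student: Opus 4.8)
The guiding idea is that, after contracting maximal runs of equal shape descriptions, there are only $\bigoh(r)$ essentially different ways to insert $s$ into $S'$, and each of them can be tested with Lemma~\ref{lem:p_nodes_seq_valid_test}. Any $S''$ witnessing that $S'$ can be extended with $s$ is obtained from $S'$ by inserting the single element $s$, so the reduced sequence of $S''$ is determined by where $s$ lands relative to the maximal runs of $S'$. The plan is to enumerate the candidate reduced sequences $[s,S']$ and $[S',s]$, the $r-1$ sequences obtained by inserting $s$ between two consecutive runs of $S'$, and the $r$ sequences obtained by inserting $s$ inside a maximal run; each has size $\bigoh(r)$, so testing it via Lemma~\ref{lem:p_nodes_seq_valid_test} costs $\bigoh(r^3)$ time and returns the corresponding shape descriptions of $G''_\mu$, for a total of $\bigoh(r^4)$. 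Everything then reduces to showing that this enumeration misses no valid extension.

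For $[s,S']$, $[S',s]$ and for insertions of $s$ strictly between two consecutive runs of $S'$ there is nothing further to do: these are genuine instances of Lemma~\ref{lem:p_nodes_seq_valid_test}. The only delicate situation is inserting $s$ inside a maximal run $a^{+}$ of $S'$, i.e.\ placing $G_\nu$ between two components of shape description $a$; here I would split into $s=a$ and $s\neq a$. If $s=a$, the reduced sequence of $S''$ equals that of $S'$. Since a run $a^{+}$ forces $a$ to satisfy the thin conditions (exactly as in the proof of Lemma~\ref{lem:p_nodes_seq_valid_test}), Lemma~\ref{lem:parallel_reduce} applies with $N_1$ the children of $G''_\mu$, $N_2=N'$, and $N_s$ consisting of the components of the run together with $\nu$; it gives that $G''_\mu$ admits an upward planar embedding realizing $S''$ if and only if $G'_\mu$ admits one realizing $S'$ --- which holds by hypothesis --- and that the resulting shape descriptions of $G''_\mu$ coincide with those already computed for $G'_\mu$ on $S'$. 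So this case produces nothing new and requires no further test.

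The remaining case, $s\neq a$, is the crux, and I would handle it by the same bookkeeping used in Lemma~\ref{lem:p_nodes_seq_valid_test}. As $a$ occurs at least twice consecutively in $S'$, \textbf{UP3} at the face between two consecutive $a$-components forces $\tau_r(a)=-\tau_l(a)$ and both pole angles of that face to be labeled $-1$, so $a$ is thin. In $S''$ the inserted $s$-component bounds two (internal) faces, one together with an $a$-component on either side; summing the two \textbf{UP3} identities and using $\tau_l(a)+\tau_r(a)=0$ yields that $\tau_l(s)+\tau_r(s)$ plus four labels, each in $\{-1,0,1\}$, equals $-4$, whence by Lemma~\ref{lem:shape_desc_values} ($\tau_l(s)+\tau_r(s)\geq 0$) we must have $\tau_l(s)+\tau_r(s)=0$ and all four labels equal to $-1$. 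The \textbf{UP3} count alone does not yet finish the argument --- at this point $s$ only ``looks thin'' --- so one has to invoke \textbf{UP1}/\textbf{UP2} at the poles $u$ and $v$: the small angles forced around the run together with the $\lambda$-values forced on $s$ and on its thin neighbours leave no room for the unique large angle allowed at a switch pole (respectively, for the two flat angles at a non-switch pole), contradicting \textbf{UP1}/\textbf{UP2} exactly as in the reduction step of Lemma~\ref{lem:p_nodes_seq_valid_test}. Thus $S''$ is not a shape sequence of $G''_\mu$. Stating this last contradiction cleanly across all pole types (switch vs.\ non-switch, normal vs.\ special) is the main obstacle of the proof; the rest is routine.
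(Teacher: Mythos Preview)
Your overall plan is exactly the paper's: enumerate the $\bigoh(r)$ essentially different insertions ($[s,S']$, $[S',s]$, one per gap between consecutive runs, one per run interior), test each with Lemma~\ref{lem:p_nodes_seq_valid_test}, and dispose of the ``inside a run'' case. The paper is terser and simply says this last case is handled ``arguing similarly with the first part of the proof of Lemma~\ref{lem:p_nodes_seq_valid_test}''; your use of Lemma~\ref{lem:parallel_reduce} for the $s=a$ sub-case is a clean way to spell that out.

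The one place you go astray is the $s\neq a$ sub-case, where you flag a phantom obstacle. You derive, from the \emph{sum} of the two \textbf{UP3} identities at the two new internal faces, that $\tau_l(s)+\tau_r(s)=0$ and that all four pole-labels equal $-1$, and then say ``the \textbf{UP3} count alone does not yet finish the argument''. It does, if you return to the \emph{individual} identities rather than only their sum: from $\tau_r(a)+\tau_l(s)-1-1=-2$ and $\tau_r(a)=-\tau_l(a)$ you get $\tau_l(s)=\tau_l(a)$, hence also $\tau_r(s)=\tau_r(a)$. Now Corollary~\ref{cor:dependence-parameters} applied to $s$ gives $\lambda_u(s)+\lambda_v(s)=2$, so $\lambda_u(s)=\lambda_v(s)=1$; and since the four labels are $-1$, by \textbf{UP0} those four angles are switch angles, forcing $\rho_l(s,u)=\rho_r(a,u)$, $\rho_r(s,u)=\rho_l(a,u)$, and similarly at $v$. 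As $a$ is thin these all equal $\rho_u$ (resp.\ $\rho_v$). All eight entries of $s$ now coincide with those of $a$, so $s=a$, contradicting $s\neq a$. No case analysis over switch/non-switch or normal/special poles via \textbf{UP1}/\textbf{UP2} is needed; the ``main obstacle'' you identify simply isn't there.
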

	\fi
	
	\iflong
	Suppose that $G_{\mu}$ is upward planar and consider an upward planar embedding $\mathcal{E}_\mu$ of $G_{\mu}$. We first remove the special components of $u$ and $v$ and the normal components $G_{\nu}$ whose $uv$-external upward planar embedding $\mathcal{E}_{\nu}$ induced by $\mathcal{E}_{\mu}$ has shape description with $\lambda(\mathcal{E}_{\nu},u)=-1$ or $\lambda(\mathcal{E}_{\nu},v)=-1$. 
	Denote by $G'_{\mu}$ the subgraph of $G_{\mu}$ obtained after this removal. We say that $G'_{\mu}$ is the \emph{thin subgraph} of $G_{\mu}$ with respect to $\mathcal{E}_\mu$. Denote by $\mathcal{E}'_{\mu}$ the upward planar embedding of $G'_{\mu}$
	Let $\nu'_1, \nu'_2 \dots, \nu'_{k'}$ be the children of $\mu$ corresponding to the remaining components, that is, normal components of both $u$ and $v$ where $\lambda(\mathcal{E}_{\nu'_i},u)=\lambda(\mathcal{E}_{\nu'_i},v)=1$ for $i=1,\cdots,k'$. Assume that the pertinent graphs $G_{\nu'_1}$, $G_{\nu'_2}$, $\dots$, $G_{\nu'_{k'}}$ appear in this order from left to right around $u$, such that the outer boundary of $\mathcal{E}'_{\mu}$ is formed by the left outer path of $\mathcal{E}_{\nu'_1}$ and the right outer path of $\mathcal{E}_{\nu'_{k'}}$.
	\fi
	\ifshort
	Suppose that $G_{\mu}$ is upward planar and consider a $uv$-external upward planar embedding $\mathcal{E}_\mu$ of $G_{\mu}$. We  remove the special components of $u$ and $v$ and the normal components $G_{\nu}$ whose shape description labels the angle at $u$ or $v$ with $-1$. There are at most two such components, as each one labels an internal angle at a pole with $1$.
	Let $G'_{\mu}$ be the subgraph of $G_{\mu}$ obtained after this removal; $G'_{\mu}$ is the \emph{thin subgraph} of $G_{\mu}$ with respect to $\mathcal{E}_\mu$.
	\fi
	\iflong 
	We first present few technical lemmata that will be used later in this section.
	
	\begin{lemma} \label{lem:at_most_two_comps}
	The thin subgraph of $G_{\mu}$ with respect to an upward planar embedding $\mathcal{E}_\mu$ contains $k'$ components of $G_\mu$, where $k-2\leq k'\leq k$.
	\end{lemma}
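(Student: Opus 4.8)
The plan is to show that at most two components are removed, by arguing that at most one component can be removed ``on account of $u$'' and at most one ``on account of $v$''. Call a component $G_{\nu_i}$ \emph{bad for $u$} if it is a special component for $u$, or it is a normal component for $u$ whose induced embedding $\mathcal{E}_{\nu_i}$ satisfies $\lambda(\mathcal{E}_{\nu_i},u)=-1$; define \emph{bad for $v$} symmetrically. By the definition of the thin subgraph, a component of $G_\mu$ is removed precisely when it is bad for $u$ or bad for $v$, so it suffices to show that there is at most one component bad for $u$; the inequality $k-2\le k'\le k$ then follows, the upper bound being immediate.

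The key structural observation is that a P-node composition glues $G_{\nu_1},\dots,G_{\nu_k}$ together only at $u$ and $v$: hence every internal face of each $G_{\nu_i}$ remains a face of $G_\mu$ in $\mathcal{E}_\mu$, and the induced embedding $\mathcal{E}_{\nu_i}$ assigns to the angles internal to $G_{\nu_i}$ the same labels that $\mathcal{E}_\mu$ does. In particular, the only angle of $\mathcal{E}_{\nu_i}$ at $u$ that is not inherited by $\mathcal{E}_\mu$ is the one incident to the outer face of $\mathcal{E}_{\nu_i}$, namely the one labelled $\lambda(\mathcal{E}_{\nu_i},u)$. Now if $G_{\nu_i}$ is a \emph{normal} component for $u$, then $G_{\nu_i}$ misses the special edge of $u$, so inside $G_{\nu_i}$ the vertex $u$ is incident only to outgoing (or only to incoming) edges --- that is, $u$ is a switch vertex of $G_{\nu_i}$ --- and \textbf{UP1} applied to $\mathcal{E}_{\nu_i}$ yields a unique large angle at $u$. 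If moreover $\lambda(\mathcal{E}_{\nu_i},u)=-1$, this unique large angle lies in an internal face of $G_{\nu_i}$ and therefore survives as a large angle at $u$ in $\mathcal{E}_\mu$; large angles obtained this way from distinct normal components bad for $u$ lie in distinct faces, hence are pairwise distinct.

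I would then split into two cases according to whether $u$ is a switch vertex of $G_\mu$. If it is, then the special edge of $u$ (if it exists at all) does not lie in $G_\mu$, so no component is special for $u$; every component bad for $u$ is thus a normal one with $\lambda(\mathcal{E}_{\nu_i},u)=-1$ and contributes a distinct large angle at $u$ to $\mathcal{E}_\mu$. Since $\mathcal{E}_\mu$ is an upward planar embedding and $u$ is a switch vertex of $G_\mu$, \textbf{UP1} gives $n_1(u)=1$ in $\mathcal{E}_\mu$, so there is at most one bad component for $u$. If instead $u$ is a non-switch vertex of $G_\mu$, then $G_\mu$ --- hence exactly one of the $G_{\nu_i}$ --- contains the special edge of $u$, so there is exactly one special component for $u$; and \textbf{UP2} gives $n_1(u)=0$ in $\mathcal{E}_\mu$, which by the previous paragraph forbids any normal component bad for $u$. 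Either way there is at most one component bad for $u$; symmetrically at most one is bad for $v$, so at most two components are removed in total and $k'\ge k-2$.

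The proof is essentially bookkeeping once this viewpoint is fixed, so the place that needs care --- and the step I would scrutinize --- is the interplay between being a switch vertex of $G$, of $G_\mu$, and of a single component $G_{\nu_i}$, since the special edge of a pole may or may not lie inside $G_\mu$. The cleanest way to handle this is to phrase the case distinction in terms of $G_\mu$ alone, and to use the implication ``$G_{\nu_i}$ normal for a pole $\Rightarrow$ that pole is a switch vertex of $G_{\nu_i}$'', which is exactly what licenses applying \textbf{UP1} inside a component. With those conventions pinned down, the bound ``(number of normal bad components for $u$) $\le n_1(u)$ in $\mathcal{E}_\mu$'' and the two-case analysis follow directly from Theorem~\ref{th:upward-conditions}.
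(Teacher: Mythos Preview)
Your proof is correct and follows essentially the same approach as the paper: both associate each removed component with a pole (you say ``bad for $u/v$'', the paper says ``associated with $w$'') and then use \textbf{UP1}/\textbf{UP2} on $\mathcal{E}_\mu$ to bound the number of such components per pole by one. Your case split on whether $u$ is a switch vertex of $G_\mu$ is slightly more explicit than the paper's pigeonhole-plus-contradiction phrasing, but the content is identical; the one step you flag as needing care (why the special edge of $u$ cannot lie in $G_\mu$ when $u$ is a switch of $G_\mu$) is indeed the only subtlety, and it holds because a P-node has $k\ge 2$ children and hence at least two edges at $u$, forcing $u$ to be non-switch in $G_\mu$ whenever its special edge is present.
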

	\begin{proof}
	As $G_\mu$ has $k$ components, the thin subgraph $G'_\mu$ contains $k'\leq k$ components of $G_\mu$. In order to prove that $k'\geq k-2$ we argue that there are most two components of $G_\mu$ that don't belong to the thin subgraph $G'_\mu$.
	Note that if component $G_{\nu_i}$ does not belong to $G'_\mu$, then either $G_{\nu_i}$ is a special component of $u$ or $v$, or it is a normal component for both poles $u$ and $v$, with $\lambda(\mathcal{E}_{\nu_i},u)=-1$ or $\lambda(\mathcal{E}_{\nu_i},v)=-1$. We say that $G_{\nu_i}$ is \emph{associated} with pole $w$ if $G_{\nu_i}$ is either a special component of $w$ or it is a normal component for both poles with $\lambda(\mathcal{E}_{\nu_i},w)=-1$. Note that if $G_{\nu_i}$ is a special component of $w$ then $w$ is a non-switch vertex of $G_\mu$. On the other hand, if it is a normal component of $w$ with $\lambda(\mathcal{E}_{\nu_i},w)=-1$, then there is a large angle at $w$ inside $\mathcal{E}_{\nu_i}$ and $w$ is a switch vertex of $G_\mu$. Assume for a contradiction that there exist at least three components that don't belong to the thin subgraph $G'_\mu$. Then at least two of them are associated with the same pole, say $w$. If $w$ is a non-switch vertex, then both components are special components of $w$; contradiction, since there is at most one special component of $w$. Hence $w$ is a switch vertex and both components are normal components with a large angle at $w$ in their interior; contradiction to \textbf{UP2}, since, in any upward planar embedding, each switch vertex has exactly one large angle. 
	\end{proof}

	\begin{figure}[htb]
		\centering
		\begin{subfigure}{.2\textwidth}
			\centering
			\includegraphics[width=\columnwidth, page=1]{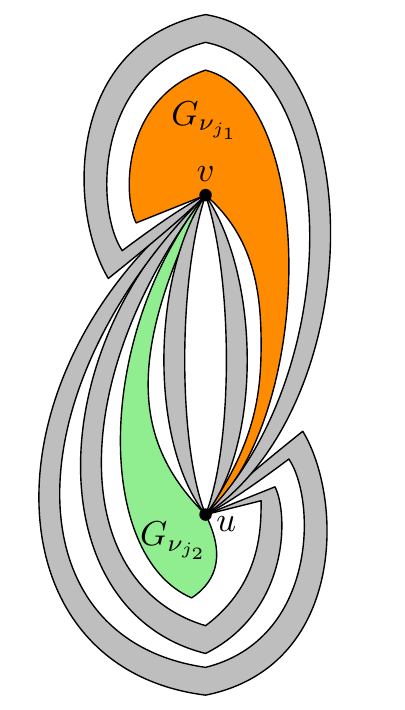}
		\end{subfigure}
		\caption{\label{fig:2-special} A graph $G_{\mu}$ that is a parallel composition of seven components; the thin subgraph is show in gray. The component $G_{\nu_{j_1}}$ (shown in orange) is a normal component for $u$ and $v$ with $\lambda(G_{\nu_{j_1}},v)=-1$; the component $G_{\nu_{j_2}}$ (shown in green) is a special component of $u$.}
	\end{figure}
	\fi
In the next lemma, if $w\in\{u,v\}$ is a top vertex then $\rho_w=out$, otherwise $\rho_w=in$. 
	
	\iflong \begin{lemma} \fi \ifshort \begin{lemma}\fi\label{lem:shapes-seq-necessity}
		Let $\mu$ be a $P$-node such that $G_{\mu}$ is upward planar and let $\mathcal{E}_\mu$ be \iflong an \fi \ifshort a $uv$-external \fi upward planar embedding of $G_{\mu}$ such that the left-turn-number of $G'_{\mu}$ is $c$. Then the shape sequence of $G'_{\mu}$ with respect to $\mathcal{E}_\mu$ is 
		$[s_1\+,s_2\*,s_3\*]$, with 
		$s_1=\shapeSequence{c}{-c}{1}{1}{\rho_u}{\rho_u}{\rho_v}{\rho_v}$,
		$s_2=\shapeSequence{c-2}{-c+2}{1}{1}{\rho_u}{\rho_u}{\rho_v}{\rho_v}$,			
		$s_3=\shapeSequence{c-4}{-c+4}{1}{1}{\rho_u}{\rho_u}{\rho_v}{\rho_v}$.
	\end{lemma}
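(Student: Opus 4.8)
# Proof Proposal for Lemma~\ref{lem:shapes-seq-necessity}

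The plan is to analyze the thin subgraph $G'_{\mu}$, which by construction consists only of normal components $G_{\nu'_i}$ for which both poles receive label $1$ in the induced embedding $\mathcal{E}_{\nu'_i}$. First I would fix the left-to-right order $G_{\nu'_1},\dots,G_{\nu'_{k'}}$ of these components around $u$ in $\mathcal{E}_\mu$, and record the corresponding faces $f'_0,\dots,f'_{k'-1}$, where $f'_i$ (for $i\ge 1$) is bounded by the right outer path of $\mathcal{E}_{\nu'_i}$ and the left outer path of $\mathcal{E}_{\nu'_{i+1}}$. Since every component in $G'_{\mu}$ has poles labeled $1$, Lemma~\ref{lem:shape_desc_values} forces its shape description to have the form $\shapeSequence{c_i}{-c_i}{1}{1}{\rho_u}{\rho_u}{\rho_v}{\rho_v}$; here the orientations $\rho_u,\rho_v$ are determined because each pole, being a switch vertex of the component (it has label $1$, i.e.\ a switch angle), has all its edges in that component oriented the same way, matching whether $w$ is a top or bottom vertex. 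So the only degree of freedom across the components is the left-turn-number $c_i$.

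The key step is to pin down how consecutive $c_i$'s relate, using Lemma~\ref{lem:shape_face}. For an internal face $f'_i$ between $G_{\nu'_i}$ and $G_{\nu'_{i+1}}$, Lemma~\ref{lem:shape_face} gives $\tau_l(\mathcal{E}_{\nu'_{i+1}},u,v)=-\tau_r(\mathcal{E}_{\nu'_i},u,v) - 2 - (\lambda(\alpha_u(f'_i))-1) - (\lambda(\alpha_v(f'_i))-1)$ — more precisely it yields $c_{i+1} \in \{-(-c_i), -(-c_i)-2, -(-c_i)-4\} = \{c_i, c_i+2, c_i+4\}$ depending on whether the two angles of $f'_i$ at the poles are labeled $(1,1)$, one $1$ one $-1$, or $(-1,-1)$. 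Thus the sequence $c_1,c_2,\dots$ is nondecreasing in steps of $0$, $2$, or $4$. To get the claimed shape $[s_1\+,s_2\*,s_3\*]$ I need to show the total increase from $c_1$ to $c_{k'}$ is at most $4$, so that at most three distinct values $c$, $c-2$(?),... appear; in fact I want to show $c_1 = c$ is the left-turn-number of $G'_\mu$ (which follows since the left outer path of $G'_\mu$ is the left outer path of $\mathcal{E}_{\nu'_1}$), and then the remaining values can only be $c$, $c+2$, $c+4$. Wait — I should double-check the sign convention: the statement lists $s_1$ with left-turn $c$, $s_2$ with $c-2$, $s_3$ with $c-4$, so in fact the sequence of left-turn-numbers is nonincreasing; I would align the orientation of the walk accordingly, but the mechanism is the same: the partial sums of angle labels at $u$ (and at $v$) over the faces $f'_i$ are bounded by \textbf{UP1}/\textbf{UP2}.

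The main obstacle — and the real content — is bounding how many ``jumps'' occur and their total size, which I expect to handle via a counting argument on the pole angles. Each pole $w\in\{u,v\}$ is a switch vertex (in $G'_\mu$, after removing special components it has all edges oriented the same way), so by \textbf{UP1} exactly one of its angles over all of $\mathcal{E}_\mu$ is labeled $1$ and the rest $-1$; restricted to the faces $f'_0,\dots,f'_{k'-1}$ and the interiors of the components, this means at most one angle of $w$ among the $f'_i$ is labeled $1$ and in each component interior all its angles are $-1$ (consistent with label $1$ at the pole of the component meaning a single small switch angle there — I need to be careful here and use that component poles are labeled $1$, hence contribute a ``$-1$'' to... actually the $+1$ of the component's shape description records that the pole's angle inside that component's outer face is large; inside the component all other pole-angles are $-1$). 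The upshot: summing the labels $\lambda(\alpha_w(f'_i))$ over all $i\ge 1$ gives a value in $\{-( k'-1), \dots\}$ but constrained so that only one term is $+1$ and the rest $-1$ unless $w$ takes its large angle in the outer face $f'_0$. Combining the two poles, each ``big jump'' of size $4$ at some $f'_i$ forces both $\lambda(\alpha_u(f'_i))=\lambda(\alpha_v(f'_i))=-1$, a jump of size $2$ forces exactly one of them to be $-1$... hmm, every internal face has $n_{-1}(f'_i)=n_1(f'_i)+2$, so actually the $c_i$'s must be nondecreasing and I must instead cap the increase by noting that the outer face $f'_0$ ``absorbs'' the surplus via \textbf{UP3} for the outer face ($n_1=n_{-1}+2$). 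So the bound ``total increase $\le 4$'' should come from: the outer face of $G'_\mu$ has left-turn $c$ on one side and right-turn $-c_{k'}$ on the other, and \textbf{UP3} for the outer face together with the pole labels being in $\{-1,1\}$ forces $c - c_{k'} + \lambda_u^{out} + \lambda_v^{out} \in \{\text{small range}\}$, giving $c_{k'}\in\{c, c-2? , c-4\}$ — matching the listed $s_1,s_2,s_3$. Once the value set $\{c, c-2, c-4\}$ (I'll fix the exact signs when writing) is established and the sequence is monotone, the shape sequence is automatically of the form $[s_1\+, s_2\*, s_3\*]$ with $s_1\+$ nonempty because $G_{\nu'_1}$ exists (the thin subgraph is nonempty since $k'\ge k-2\ge 1$, or we note $G_\mu$ upward planar with a P-node implies $k\ge 2$; the degenerate case $k'=0$ can be excluded or handled separately). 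I would write up the monotonicity from Lemma~\ref{lem:shape_face}, then the endpoint constraint from \textbf{UP3} on $f'_0$, then conclude.
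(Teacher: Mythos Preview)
Your final plan is correct and is essentially the paper's argument in a different wrapper. The paper uses \textbf{UP1} to pin down the (unique) faces $f'_i,f'_j$ containing the large angles at $u$ and $v$, and then does a four-way case split on whether $i=0$ and/or $j=0$, computing the shape sequence in each case directly via Lemmata~\ref{lem:shape_desc_values} and~\ref{lem:shape_face}. Your route---derive $c_{i+1}\in\{c_i,c_i-2,c_i-4\}$ from Lemma~\ref{lem:shape_face} (hence the $c_i$'s are nonincreasing), bound $c_{k'}\ge c-4$ from \textbf{UP3} on the outer face $f'_0$ (equivalently, Corollary~\ref{cor:dependence-parameters} applied to $G'_\mu$), and conclude by parity that every $c_i\in\{c,c-2,c-4\}$---uses the same ingredients and is arguably a bit more streamlined, trading the explicit case split for a monotonicity-plus-endpoint argument.

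Two things to tighten when you write it up. First, your sign bookkeeping wavers: you write ``nondecreasing'', correct to ``nonincreasing'', then waver again; the correct statement is $c_{i+1}\in\{c_i,c_i-2,c_i-4\}$, so the sequence is nonincreasing. Second, make explicit---as the paper does---that $u$ and $v$ are switch vertices \emph{of $G'_\mu$} (since only normal components for both poles remain), so all pole angles at each $f'_i$ lie in $\{-1,1\}$ and \textbf{UP1} gives exactly one $+1$ per pole across all faces of $\mathcal E'_\mu$; this is precisely what makes both your monotonicity step and your endpoint bound go through.
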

	\iflong \begin{proof}
	Since all components of $G'_{\mu}$ are normal components of both poles $u$ and $v$, it follows that $u$ and $v$ are switch vertices of $G'_{\mu}$. Hence, for every $i=1,\cdots,k'$,  $\rho_l(\mathcal{E}_{\nu'_i},u)=\rho_r(\mathcal{E}_{\nu'_i},u)=\rho_u$ and $\rho_l(\mathcal{E}_{\nu'_i},v)=\rho_r(\mathcal{E}_{\nu'_i},v)=\rho_v$. Additionally, at each pole $u$ and $v$ there is exactly one large angle. By our assumptions $\lambda(\mathcal{E}_{\nu'_i},u)=\lambda(\mathcal{E}_{\nu'_i},v)=1$ for $i=1,\cdots,k'$, hence the large angles at $u$ and $v$ lie inside two  (not necessarily distinct) faces of $\mathcal{E}'_{\mu}$. Let $f_i$ and $f_j$ be the two faces where the large angle at $u$ and $v$ lie in respectively, where $0\leq i,j<k'$.
		We distinguish some cases depending on whether $i=0$ or $j=0$ holds.
		
		\begin{figure}[htbp]
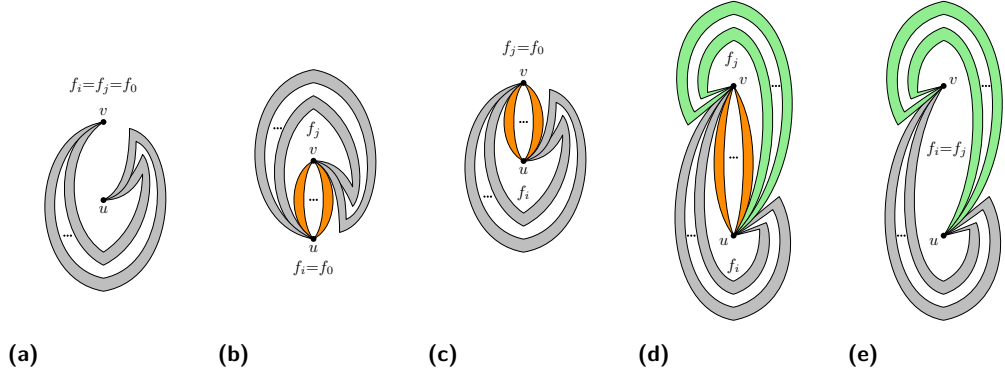

			\centering
			\begin{subfigure}{.18\textwidth}
				\centering
				\includegraphics[width=\textwidth, page=1]{figures/p_nodes}
				\subcaption{}
				\label{fig:p_case_1}
			\end{subfigure}
			\hfil
			\begin{subfigure}{.18\textwidth}
				\centering
				\includegraphics[width=\textwidth, page=2]{figures/p_nodes}
				\subcaption{}
				\label{fig:p_case_2a}
			\end{subfigure}
			\hfil
			\begin{subfigure}{.18\textwidth}
				\centering
				\includegraphics[width=\textwidth, page=3]{figures/p_nodes}
				\subcaption{}
				\label{fig:p_case_2b}
			\end{subfigure}
			\hfil
			\begin{subfigure}{.18\textwidth}
				\centering
				\includegraphics[width=\textwidth, page=4]{figures/p_nodes}
				\subcaption{}
				\label{fig:p_case_4a}
			\end{subfigure}
			\hfil
			\begin{subfigure}{.18\textwidth}
				\centering
				\includegraphics[width=\textwidth, page=5]{figures/p_nodes}
				\subcaption{}
				\label{fig:p_case_4b}
			\end{subfigure}
			\caption{\label{fig:p_nodes}Illustration for Lemma~\ref{lem:shapes-seq-necessity}. Components with shape description $s_1$, $s_2$ and $s_3$ are shaded with gray, orange and green respectively.}
		\end{figure}
		
		\begin{description}
			\item[Case 1:] $i=j=0$. The large angles at $u$ and $v$ are both inside the outer face of $\mathcal{E}'_{\mu}$, that is, $\lambda(G'_{\mu},u)=\lambda(G'_{\mu},v)=1$ (see Fig.~\ref{fig:p_case_1}). In this case, by Lemma~\ref{lem:shape_desc_values}, the left and the right turn number of $\mathcal{E}'_{\mu}$ are the same in absolute value, that is $\tau_l(\mathcal{E}'_{\mu},u,v)=-\tau_r(\mathcal{E}'_{\mu},u,v)$. Since the angles at $u$ and $v$ inside the face $f_h$ is small, it follows from Lemma~\ref{lem:shape_face} that $\tau_r(\mathcal{E}_{\nu'_h},u,v)=-\tau_l(\mathcal{E}_{\nu'_{h+1}},u,v)$, for each $h=1,2,\dots, k-1$. Also, since the angles at $u$ and $v$ inside the internal faces of $\mathcal{E}_{\nu'_h}$ are small, it follows that $\lambda(\mathcal{E}_{\nu'_h},u)=\lambda(\mathcal{E}_{\nu'_h},v)=1$, which implies by Lemma~\ref{lem:shape_desc_values} that $\tau_l(\mathcal{E}_{\nu'_h},u,v)=-\tau_r(\mathcal{E}_{\nu'_{h}},u,v)$. Combining these equalities we have that  all $\mathcal{E}_{\nu_h}$ (gray components in Fig.~\ref{fig:p_case_1}) have the same shape description 
			$\shapeDesc{c}{-c}{1}{1}{\rho_u}{\rho_u}{\rho_v}{\rho_v}$, 
			where $c=\tau_l(\mathcal{E}'_{\mu},u,v)$. 
			Hence the shape sequence is $[\shapeSequence{c}{-c}{1}{1}{\rho_u}{\rho_u}{\rho_v}{\rho_v}\+]$.
			
			\item[Case 2:] $i=0$ and $j>0$. The large angle at $u$ is inside the outer face of $\mathcal{E}'_{\mu}$ while the large angle at $v$ is inside  face $f'_j$, with $1 \leq j < k'$, that is, $\lambda(\mathcal{E}'_{\mu},u)=1$ and $\lambda(\mathcal{E}'_{\mu},v)=-1$ (see Fig.~\ref{fig:p_case_2a}).  
			By Lemma~\ref{lem:shape_face} we have that $\tau_r(\mathcal{E}_{\nu'_{i}},u,v)=-\tau_l(\mathcal{E}_{\nu'_{i+1}},u,v)-2$. 
			For all $h\neq j$, the angles at $u$ and $v$ inside the face $f'_h$ are small, hence from Lemma~\ref{lem:shape_face} it follows that $\tau_r(\mathcal{E}_{\nu'_h},u,v)=-\tau_l(\mathcal{E}_{\nu'_{h+1}},u,v)$. Also, since the angles at $u$ and $v$ inside the internal faces of $\mathcal{E}_{\nu'_h}$ are small, it follows that $\lambda(\mathcal{E}_{\nu'_h},u)=\lambda(\mathcal{E}_{\nu'_h},v)=1$, which implies by Lemma~\ref{lem:shape_desc_values} that $\tau_l(\mathcal{E}_{\nu'_h},u,v)=-\tau_r(\mathcal{E}_{\nu'_{h}},u,v)$.
			Combining these equations, we conclude that all $\mathcal{E}_{\nu_h}$ with $h \leq j$ (gray components in Fig.~\ref{fig:p_case_2a}) have the same shape description 
			$\shapeDesc{c}{-c}{1}{1}{\rho_u}{\rho_u}{\rho_v}{\rho_v}$,
			where $c=\tau_l(\mathcal{E}'_{\mu},u,v)$, 
			while all $\mathcal{E}_{\nu_h}$ with $h > j$ (orange components in Fig.~\ref{fig:p_case_2a}) have the same, yet different, shape description 
			$\shapeDesc{c-2}{-c+2}{1}{1}{\rho_u}{\rho_u}{\rho_v}{\rho_v}$.
			Then the shape sequence for $\mathcal{E}'_{\mu}$ is 
			$[\shapeSequence{c}{-c}{1}{1}{\rho_u}{\rho_u}{\rho_v}{\rho_v}\+,
					\shapeSequence{c-2}{-c+2}{1}{1}{\rho_u}{\rho_u}{\rho_v}{\rho_v}\+]$
			
			\item[Case 3:] $i>0$ and $j=0$. The large angle at $v$ is inside the outer face of $\mathcal{E}'_{\mu}$ while the large angle at $u$ is inside  face $f'_i$, with $1 \leq i < k'$, that is, $\lambda(\mathcal{E}'_{\mu},u)=-1$ and $\lambda(\mathcal{E}'_{\mu},v)=1$ (see Fig.~\ref{fig:p_case_2b}). 
			Symmetrically to the previous case, we can prove that all $\mathcal{E}_{\nu'_h}$ with $h\leq i$ (gray components in Fig.~\ref{fig:p_case_2b}) have the same shape description 
			$\shapeDesc{c}{-c}{1}{1}{\rho_u}{\rho_u}{\rho_v}{\rho_v}$,
			where $c=\tau_l(\mathcal{E}'_{\mu},u,v)$, while all $\mathcal{E}_{\nu'_h}$ with $h > i$ (orange components in Fig.~\ref{fig:p_case_2b}) have the same, yet different, shape description 
			$\shapeDesc{c-2}{-c+2}{1}{1}{\rho_u}{\rho_u}{\rho_v}{\rho_v}$.
			As before, the shape sequence for $\mathcal{E}'_{\mu}$ is 
			$[\shapeSequence{c}{-c}{1}{1}{\rho_u}{\rho_u}{\rho_v}{\rho_v}\+,
					\shapeSequence{c-2}{-c+2}{1}{1}{\rho_u}{\rho_u}{\rho_v}{\rho_v}\+]$.

			\item[Case 4:] $i,j>0$. None of the large angles at $u$ and $v$ is inside the outer face of $\mathcal{E}'_{\mu}$, that is, 	$\lambda(\mathcal{E}'_{\mu},u)=\lambda(\mathcal{E}'_{\mu},v)=-1$. The large angle at $u$ is  inside  face $f'_i$, with $1 \leq i < k'$ and  the large angle at $v$ is  inside  face $f'_j$, with $1 \leq j < k'$.
			Assume first that $i<j$ holds. For each $h\neq i,j$, we have that the angles at $u$ and $v$ inside the face $f'_h$ and inside the internal faces of $\mathcal{E}_{\nu'_h}$ are small. By Lemmata~\ref{lem:shape_face} and \ref{lem:shape_desc_values} it follows that $\tau_r(\mathcal{E}_{\nu'_h},u,v)=-\tau_l(\mathcal{E}_{\nu'_{h+1}},u,v)$ and $\tau_l(\mathcal{E}_{\nu'_h},u,v)=-\tau_r(\mathcal{E}_{\nu'_{h}},u,v)$. This means that all $\mathcal{E}_{\nu'_h}$ with $h \leq i$ (gray components in Fig.~\ref{fig:p_case_4a}) have the same shape description 
			$\shapeDesc{c}{-c}{1}{1}{\rho_u}{\rho_u}{\rho_v}{\rho_v}$,
			where $c=\tau_l(\mathcal{E}'_{\mu},u,v)$, all $\mathcal{E}_{\nu'_h}$ with $i< h\leq j$ (orange components in Fig.~\ref{fig:p_case_4a}) have the same shape description 
			$\shapeDesc{c-2}{-c+2}{1}{1}{\rho_u}{\rho_u}{\rho_v}{\rho_v}$,
			and all $\mathcal{E}_{\nu'_h}$ with $j<h$ (green components in Fig.~\ref{fig:p_case_4a}) have the same shape description 
			$\shapeDesc{c-4}{-c+4}{1}{1}{\rho_u}{\rho_u}{\rho_v}{\rho_v}$; see Fig.~\ref{fig:p_case_4a}.
			The shape sequence defined is $[\shapeSequence{c}{-c}{1}{1}{\rho_u}{\rho_u}{\rho_v}{\rho_v}\+,
					\shapeSequence{c-2}{-c+2}{1}{1}{\rho_u}{\rho_u}{\rho_v}{\rho_v}\+, 			
					\shapeSequence{c-4}{-c+4}{1}{1}{\rho_u}{\rho_u}{\rho_v}{\rho_v}\+]$.
			Note that the shape sequence is the same when $i>j$, while for $i=j$ it becomes 
			$[\shapeSequence{c}{-c}{1}{1}{\rho_u}{\rho_u}{\rho_v}{\rho_v}\+, 			
					\shapeSequence{c-4}{-c+4}{1}{1}{\rho_u}{\rho_u}{\rho_v}{\rho_v}\+]$; see Fig.~\ref{fig:p_case_4b}. \qedhere
		\end{description}
	\end{proof}\fi

	\iflong
	Based on Lemma~\ref{lem:shapes-seq-necessity}, our algorithm computes the shape descriptions of all possible upward planar embeddings of $G_{\mu}$ that match some fixed left- and right-turn-numbers in three steps. In the first step, we decide which sequences $S'$ of shape descriptions given in Lemma~\ref{lem:shapes-seq-necessity} are shape sequences for some maximal subgraph $G'_{\mu}$ of $G_{\mu}$. If $S'$ is such a sequence, then, by Lemma~\ref{lem:at_most_two_comps}, there are at most two children whose shape descriptions are not in $S'$. So, in the second step we compute all possible extensions of $S'$ with the missing shape descriptions to shape sequences of $G_{\mu}$. In the last step, an extended sequence $S$ of $G_{\mu}$ is modified by appropriately changing the shape descriptions for at most two components of $G'_{\mu}$ in order to match the given left- and right-turn-numbers.
	
	We now describe in detail how to compute the feasible set $\mathcal F_{\mu}$ of a P-node $\mu$. For each possible value $c_l$ of $\tau_l(\mathcal{E}_{\mu},u,v)$ and $c_r$ of $\tau_r(\mathcal{E}_{\mu},u,v)$, that is $c_l,c_r\in[\tau_{min}, \tau_{max}]$, we compute the shape descriptions for all possible upward planar embeddings of $G_{\mu}$ with the given left- and right-turn-numbers. All normal components of $G_{\mu}$ have $uv$-external upward planar embeddings with left- and right-turn-number odd if both $u$ and $v$ are either top or bottom vertices, and even if one of them is a top vertex and the other one is a bottom vertex.
	Then any $uv$-external upward planar embedding of every subgraph $G'_{\mu}$ that contains only normal components has a maximal left-turn-number $c'_l$ that is either equal to $c_l$ or to $c_l-1$ depending on the parity of both $c_l$ and $c'_l$. For the first step of the algorithm we consider all sequences of shape descriptions $S'=[s_1\*,s_2\*,s_3\*]$  where $s_1=
	\shapeSequence{c_l'}{-c_l'}{1}{1}{\rho_u}{\rho_u}{\rho_v}{\rho_v}$,
	$s_2=\shapeSequence{c_l'-2}{-c_l'+2}{1}{1}{\rho_u}{\rho_u}{\rho_v}{\rho_v}$,		
	$s_3=\shapeSequence{c_l'-4}{-c_l'+4}{1}{1}{\rho_u}{\rho_u}{\rho_v}{\rho_v}$.
	For each of them we identify a maximal subgraph $G'_{\mu}$ of $G_{\mu}$ such that $S'$ is a shape sequence of $G'_{\mu}$. To this aim, we check whether it is possible to choose a shape description from the feasible set of the pertinent graph $G_{\nu_i}$ of each child $\nu_i$ of $\mu$ (with $i=1,2,\dots,k$) and to sort these pertinent graphs so that the resulting shape sequence is $S'$. We observe that $S'$ contains at most three different shape descriptions, namely $s_1$, $s_2$ and $s_3$ which can occur multiple times in the sequence. For each child $\nu_i$ of $\mu$ (with $i=1,2,\dots,k$) we check whether the feasible set $\mathcal F_{\nu_i}$ contains one of the three shape descriptions of $S'$ in the order that they appear in $S'$; if so, we choose it for $G_{\nu_i}$. Note that the shape description assigned to a child is the first one among $s_1$, $s_2$ and $s_3$ that is contained in its feasible set. However, this does not necessarily produce the desired sequence $S'$ for $G'_{\mu}$, especially if $S'$ has two or three elements. Thus, we check whether it is possible to modify the assigned shape descriptions so as to produce the sequence $S'$. 
	If $S'$ consists of two elements and all components of $G'_{\mu}$ are assigned to the first one, we check whether one of them can be reassigned the second one. If this is not possible, then no subgraph $G'_{\mu}$ has $S'$ as its shape sequence. If $S'$ consists of three elements, that is $S'=[s_1\+,s_2\+,s_3\+]$, then we first make sure that $s_2$ has at least one assigned component by reassigning a component from $s_1$, as in the previous case. If this is not possible, then no subgraph $G'_{\mu}$ has $S'$ as its shape sequence. Otherwise, if $s_3$ has at least one assigned component we are done. If not, we check whether a component from $s_1$ or $s_2$ can be reassigned to $s_3$. If such a component exists, and both $s_1$ and $s_2$ have still at least one element, we are done. Otherwise if $s_1$ is empty, then no subgraph $G'_{\mu}$ has $S'$ as its shape sequence; if $s_2$ is empty we check whether a component from $s_1$ can be reassigned to $s_2$. 
	
For the second step of the algorithm, assume that for some sequence $S'$ we have identified a maximal subgraph $G'_{\mu}$ of $G_{\mu}$ such that $S'$ is a shape sequence of $G'_{\mu}$. Consider now those  children of $\mu$ where no shape description could be assigned for their pertinent graph. Recall that by Lemma~\ref{lem:at_most_two_comps} there can be at most two such children of $\mu$, say $\nu$ and $\nu'$. If this is not true, we conclude that $G_{\mu}$ does not admit an upward drawing whose thin subgraph has $S'$ as its shape sequence.
Otherwise, let $s_\nu$ ($s_{\nu'}$) be a shape description in the feasible set of $\nu$ ($\nu'$, respectively), if it exists. Using Lemma~\ref{lem:p_nodes_seq_extend_test} we compute all possible extensions of $S'$ with $s_\nu$ to shape sequences of $G''_{\mu}$, where $G''_{\mu}$ is the subgraph of $G_{\mu}$ consisting of $G'_{\mu}$ and $G_\nu$. If $\nu'$ does not exist, then $G''_{\mu}=G_{\mu}$ and each computed shape sequence $S''$ is a shape sequence of $G_{\mu}$. Otherwise, using the same lemma, for each computed shape sequence $S''$, we further compute all possible extensions of $S''$ with $s_{\nu'}$ to shape sequences of $G_{\mu}$. For every computed shape sequence $S$ of $G_{\mu}$ we check whether it matches the given left- and right-turn-numbers, that is, the left-turn-number of the first element of $S$ equals $c_l$ and the right-turn-number of its last element equals $c_r$. If this is the case, then we add all shape descriptions for $G_{\mu}$ that correspond to $S$ to the feasible set $\mathcal{F}_{\mu}$ of $\mu$. Otherwise, we proceed to the third step of the algorithm with $S$.

For the last step of our algorithm, let $S$ be a shape sequence of $G_{\mu}$ given by the previous step. That is, $S$ is an extension of $S'$ and contains $s_{\nu}$ and $s_{\nu'}$ for children $\nu$ and $\nu'$ (if they exist) at specific positions, and such that the given left-turn-number and/or the right-turn-numbers are not matched. We consider cases depending on whether only one or both of the two turn-numbers are not matched.
Assume first the the first element of $S$ has left-turn-number different from $c_l$, while the last element of $S$ has right-turn-number equal to $c_r$. Then the first element has left-turn-number smaller than $c_l$.
Our goal is to find a component $G_{\nu'_i}$ of the thin subgraph $G'_{\mu}$, remove its current shape description from $S$ and use another one from its feasible set at the beginning of the sequence in order to match the desired left-turn-number. Note that this is possible only if $c_l$ and $c'_l$ have the same parity.
So, consider component $G_{\nu'_i}$ of the thin subgraph $G'_{\mu}$ and let $s_i\in\{s_1,s_2,s_3\}$ be the shape description assigned to $G_{\nu'_i}$ in $S$. Let $s'_i$ be another shape description in the feasible set of $G_{\nu'_i}$ that is different from $s_1$, $s_2$ and $s_3$. We denote by $S^-$ the sequence $S$ after removing $s_i$. Before proceeding, we want to assure that $S^-$ contains the same elements with $S$. This is not true if $G_{\nu'_i}$ was the only component assigned to $s_i$ during the first step of the algorithm. Note that all components of $G'_{\mu}$ are normal components for both $u$ and $v$ and the shape descriptions in their feasible sets have maximum left-turn-number equal to $c_l'$. Hence $s_1$ does not belong to $S$, that is, the shape sequence of $G'_{\mu}$ is $S'=[s_2\*,s_3\*]$ and $G_{\nu'_i}$ is initially assigned either to $s_2$ or to $s_3$. If $G_{\nu'_i}$ was the only component of $G'_{\mu}$  assigned to $s_2$, then $s_2$ does not belong to the feasible set of the other components of $G'_{\mu}$ and $S^-$ can't be modified to contain all elements of $S'$. If $G_{\nu'_i}$ was the only component of $G'_{\mu}$  assigned to $s_3$, then we check whether $s_2$ belongs to the feasible set of the other components of $G'_{\mu}$. If we can't find such a component, then $S^-$ can't be modified to contain all elements of $S'$, otherwise, we reassign that component to $s_2$ and $S^-$ contains the same elements with $S$. Using Lemma~\ref{lem:p_nodes_seq_valid_test}, we can compute the shape descriptions of all upward planar embeddings of $G_{\mu}$ that correspond to the sequence $[s'_i,S^-]$. The computed shape descriptions have the desired left- and right-turn-number and are added to the feasible set $\mathcal{F}_{\mu}$ of $\mu$.

For the case where $S$ matches the left-turn-number $c_l$ but the last element of $S$ has right-turn-number different from $c_r$, we follow a similar approach. In this case, $s_3$ does not belong to $S$ and for each component $G_{\nu'_i}$ of the thin subgraph $G'_{\mu}$ we first remove its current shape description $s_i$ from $S$. If the new sequence $S^-$ does not contain all elements of $S'$, then $s_i=s_2$ and we check if we can reassign another component of $G'_{\mu}$ from $s_1$ to $s_2$. If this is not possible, then $S^-$ can't be modified to contain all elements of $S'$. Otherwise, by Lemma~\ref{lem:p_nodes_seq_valid_test}, we can compute all possible shape descriptions for $G_{\mu}$ that correspond to the sequence $[S^-,s'_i]$. The computed shape descriptions match the desired left- and right-turn-number and are added to the feasible set $\mathcal{F}_{\mu}$ of $\mu$.

In the special case where both the left- and right-turn-number of $S$ do not match $c_l$ and $c_r$ respectively, then $S$ does not contain neither $s_1$  nor $s_3$. Hence, the initial sequence for the thin subgraph $G'_{\mu}$ is $S'=[s_2\+]$. Now, the shape sequence of a normal component for both $u$ and $v$ that has left-turn-number equal to $c'_l$, right-turn-number at most $-c'_l+2$ and at least one label of the angles at $u$ or $v$ is $-1$, is either $t_l^u=\shapeDesc{c'_l}{-c'_l+2}{-1}{1}{\rho_u}{\rho_u}{\rho_v}{\rho_v}$ or $t_l^v=\shapeDesc{c'_l}{-c_l'+2}{1}{-1}{\rho_u}{\rho_u}{\rho_v}{\rho_v}$ (see Lemma~\ref{lem:shape_desc_values}). Similarly the shape sequence of such a component with left-turn-number at most $c_l'-2$ and right-turn-number $-c_l'+4$ is either $t_r^u=\shapeDesc{c'_l-2}{-c'_l+4}{-1}{1}{\rho_u}{\rho_u}{\rho_v}{\rho_v}$ or $t_r^v=\shapeDesc{c'_l-2}{-c_l'+4}{1}{-1}{\rho_u}{\rho_u}{\rho_v}{\rho_v}$. In order to match both the given left- and  right-turn-numbers, $S$ needs to be extended with $t_l^u$ or $t_l^v$ at the beginning, and with $t_r^u$ or $t_r^v$ at the end. As $t_l^u$ and $t_r^u$ both have $\lambda(u)=-1$, and $t_l^v$ and $t_r^v$ both have $\lambda(v)=-1$, the possible extensions of $S$ are $[t_l^u,S,t_r^v]$ and $[t_l^v,S,t_r^u]$. We consider all components $G_{\nu'_i}$ of the thin subgraph $G'_{\mu}$ and check whether $t_l^u$, $t_l^v$, $t_r^u$ and $t_r^v$ are contained in their feasible set. Let $T_l^u$ be the set of components $G_{\nu'_i}$ that contain $t_l^u$ in their feasible set; sets $T_l^v$, $T_r^u$ and $T_r^v$ are defined similarly. For the sequence $[t_l^u,S,t_r^v]$ we want to select one component from $T_l^u$ and a different one from $T_r^v$ and reassign their shape descriptions from $s_2$ to $t_l^u$ and $t_r^v$ respectively. This is not possible only if either one of  $T_l^u$ or $T_r^v$ is empty, or if $T_l^u$ is the same as $T_r^v$ and they both contain one element. If this is the case, $S$ can't be extended to $[t_l^u,S,t_r^v]$, otherwise, we use Lemma~\ref{lem:p_nodes_seq_valid_test} to decide whether this is a shape sequence of $G_{\mu}$ and to compute the corresponding shape descriptions for $\mathcal{F}_\mu$. Similarly, $S$ can't be extended to the second sequence $[t_l^v,S,t_r^u]$ if either one of $T_l^v$ and $T_r^u$ is empty or if they both have size one and contain the same element, otherwise the shape descriptions for $\mathcal{F}_\mu$ that correspond to $[t_l^v,S,t_r^u]$ are computed with Lemma~\ref{lem:p_nodes_seq_valid_test} and added to the feasible set $\mathcal{F}_{\mu}$ of $\mu$.
	\fi
	
	\ifshort
		Based on Lemma~\ref{lem:shapes-seq-necessity}, our algorithm computes in three steps the shape descriptions of $G_{\mu}$ that match some fixed left-turn-number $c_l$ and right-turn-number $c_r$. 
		Let $c'_l$ be  equal to $c_l$ or $c_l-1$, depending on whether exactly one of $u$ and $v$ is a bottom vertex or not. For the first step, we consider all sequences $S'=[s_1\*,s_2\*,s_3\*]$  where $s_i=
		\shapeSequence{c_l'-2(i-1)}{-c_l'+2(i-1)}{1}{1}{\rho_u}{\rho_u}{\rho_v}{\rho_v}$, for $i=1,2,3$.
		For each of them we identify a maximal subgraph $G'_{\mu}$ of $G_{\mu}$ such that $S'$ is a shape sequence of $G'_{\mu}$. For each child $\nu_i$ of $\mu$ (with $i=1,2,\dots,k$), we check whether the feasible set $\mathcal F_{\nu_i}$ contains shape descriptions of $S'$ in the order that they appear in $S'$; if so, we choose it for $G_{\nu_i}$. This greedy process does not necessarily produce the desired sequence $S'$. By reassigning at most two components of $G'_{\mu}$ either we get $S'$ or no subgraph $G'_{\mu}$ has $S'$ as its shape sequence. 
	
		For the second step, we focus on the children of $\mu$ that, when considering a shape sequence $S'$, have not been assigned a shape description so far. There are at most two such children, say $\nu$ and $\nu'$, otherwise $G_{\mu}$ does not admit an upward planar embedding whose thin subgraph has $S'$ as its shape sequence. Let $s_\nu$ (resp. $s_{\nu'}$) be a shape description in $\mathcal{F}_\nu$ (resp. $\mathcal{F}_{\nu'}$). Using Lemma~\ref{lem:p_nodes_seq_extend_test} we compute all possible extensions of $S'$ with $s_\nu$ and $s_{\nu'}$ to shape sequences of $G_{\mu}$ {(in $\bigoh(1)$ time since the size $r$ of $S'$ is at most $3$)}. For every computed shape sequence $S$ of $G_{\mu}$ we check whether it matches $c_l$ and $c_r$. If so, we add to $\mathcal{F}_{\mu}$ all shape descriptions of $G_{\mu}$ that correspond to $S$ {(in $\bigoh(1)$ time since the size $r$ of $S$ is at most $5$)}. Otherwise, we proceed to the third step with $S$.
	
	To complete the procedure, we perform a case analysis to handle situations where one or both of $c_l$ and $c_r$ are not matched. Intuitively, our goal is to find a component of the thin subgraph $G'_{\mu}$, remove its current shape description from $S$ and use another one from its feasible set at the beginning or at the end of the sequence in order to match $c_l$ or $c_r$. If none of the components of $G'_{\mu}$ can be used for this purpose, we conclude that the pair $c_l$ and $c_r$ cannot be realized. Otherwise, using Lemma~\ref{lem:p_nodes_seq_valid_test} (where the size $r$ is at most 5), we compute all corresponding shape descriptions of $G_{\mu}$ and add them to $\mathcal{F}_{\mu}$. 
	\fi

	\iflong \begin{lemma} \label{lem:P_node_correctness}
		Let $\mu$ be an P-node of $T$. The described algorithm computes correctly the feasible set $\mathcal{F}_\mu$  of $\mu$.
	\end{lemma}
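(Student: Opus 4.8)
The plan is to establish \emph{soundness} and \emph{completeness} of the three-step procedure separately, each time relying on the combinatorial machinery developed above for P-nodes.

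For soundness I would argue that every shape description the algorithm inserts into $\mathcal{F}_\mu$ arises from an application of Lemma~\ref{lem:p_nodes_seq_valid_test} or Lemma~\ref{lem:p_nodes_seq_extend_test} to a sequence $S$ of shape descriptions, one taken from the feasible set of each child of $\mu$. These lemmas guarantee that whenever they return a non-empty set, $S$ is genuinely a shape sequence of the corresponding subgraph and each returned shape description is realized by an actual $uv$-external upward planar embedding of that subgraph whose shape sequence is $S$. In step~2 (after the extensions with $\nu$ and $\nu'$) and in step~3 this subgraph is all of $G_\mu$, so we obtain an upward planar embedding of $G_\mu$ realizing the inserted shape description. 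Finally, since each child's chosen shape description lies in its feasible set and hence already has left- and right-turn-numbers in $[\tau_{\min},\tau_{\max}]$, and since the algorithm only accepts a sequence $S$ whose first element has left-turn-number $c_l\in[\tau_{\min},\tau_{\max}]$ and whose last element has right-turn-number $c_r\in[\tau_{\min},\tau_{\max}]$, every accepted shape description of $G_\mu$ has all relevant turn-numbers in the prescribed range; in particular, if $G_\mu$ admits no such embedding nothing is inserted.

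For completeness I would fix an arbitrary $uv$-external upward planar embedding $\mathcal{E}_\mu$ of $G_\mu$ all of whose turn-numbers lie in range, let $s$ be its shape description, and show $s$ is computed. Let $G'_\mu$ be the thin subgraph of $G_\mu$ with respect to $\mathcal{E}_\mu$; by Lemma~\ref{lem:at_most_two_comps} at most two children $\nu,\nu'$ of $\mu$ are missing from $G'_\mu$, and by Lemma~\ref{lem:shapes-seq-necessity} the shape sequence of $G'_\mu$ induced by $\mathcal{E}_\mu$ has the special form given there for $c'_l=\tau_l(\mathcal{E}'_\mu,u,v)$. I then trace the three steps with target turn-numbers $c_l=\tau_l(\mathcal{E}_\mu,u,v)$, $c_r=\tau_r(\mathcal{E}_\mu,u,v)$ and the candidate $S'$ equal to the shape sequence of $\mathcal{E}'_\mu$: step~1 enumerates this $S'$; step~2, via Lemma~\ref{lem:p_nodes_seq_extend_test} which tries all insertion positions for $\nu$ and $\nu'$, produces a shape sequence $S$ of $G_\mu$ that coincides after reduction with the shape sequence of $\mathcal{E}_\mu$ and to which $s$ corresponds. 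If $(c_l,c_r)$ is already matched by $S$ then $s$ enters $\mathcal{F}_\mu$ in step~2; otherwise I would show that $c_l>c'_l$ and/or $c_r>-c'_l$, which forces the leftmost and/or rightmost thin component of $\mathcal{E}_\mu$ to carry one of the four non-thin shape descriptions $t_l^u,t_l^v,t_r^u,t_r^v$ (by Lemma~\ref{lem:shape_desc_values} these are the only possibilities), which is exactly the modification of $S$ that step~3 performs, after which Lemma~\ref{lem:p_nodes_seq_valid_test} recovers $s$.

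The crux, and the step I expect to be the main obstacle, is proving that the greedy assignment together with the bounded number of reassignments in step~1 is complete, i.e., that whenever some maximal subgraph of $G_\mu$ has a candidate sequence $S'$ as its shape sequence, the described procedure finds such a maximal subgraph. The key reduction is Lemma~\ref{lem:parallel_reduce}: since $s_1,s_2,s_3$ all satisfy the thin conditions, the multiplicities of the three ``slots'' in $S'$ are irrelevant, so realizability of $S'$ depends only on which of $s_1,s_2,s_3$ each child's feasible set contains; whether a system of representatives covering the one, two, or three non-empty slots of $S'$ exists is then a small Hall-type condition, and I would verify that the reassignment logic (first make the $s_2$-slot non-empty, then the $s_3$-slot, reassigning from earlier slots only when forced) succeeds precisely when this condition holds. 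A similar but smaller bookkeeping argument is needed in step~3, to check that after removing a thin component's shape description and re-inserting one of $t_l^u,t_l^v,t_r^u,t_r^v$ the sequence $S^-$ still contains every element of $S'$ whenever $\mathcal{E}_\mu$ witnesses that it should. The remaining ingredients --- the step~1--2 enumeration, the turn-number bookkeeping, and the appeals (through Lemmas~\ref{lem:p_nodes_seq_valid_test} and~\ref{lem:p_nodes_seq_extend_test}) to Theorem~\ref{th:upward-conditions} --- are routine.
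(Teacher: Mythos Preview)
Your overall plan matches the paper's: soundness via Lemmata~\ref{lem:p_nodes_seq_valid_test} and~\ref{lem:p_nodes_seq_extend_test}, completeness by tracing the three steps against a fixed witness embedding $\mathcal{E}_\mu$, and identifying the greedy assignment in step~1 as the crux. Your Hall-type argument for step~1 is a clean way to phrase what the paper argues more informally.

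There is, however, a genuine gap in your step~2/step~3 analysis. You write that step~2 ``produces a shape sequence $S$ of $G_\mu$ that coincides after reduction with the shape sequence of $\mathcal{E}_\mu$''. This is not true in general, because the components $\nu,\nu'$ handled in step~2 are those outside the \emph{algorithm's} maximal thin subgraph, not those outside the thin subgraph $G'_{\mathcal{E}_\mu}$ of the witness. A component $G_\eta$ that is non-thin in $\mathcal{E}_\mu$ (say with $\lambda(\mathcal{E}_\eta,v)=-1$) may well have one of $s_1,s_2,s_3$ in its feasible set; the greedy in step~1 then absorbs $G_\eta$ into the algorithm's thin subgraph with a thin shape, so $G_\eta$ is \emph{not} among the $\nu,\nu'$ passed to step~2. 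After step~2 the sequence $S$ therefore need not contain the element $s_\eta$ of $S_\mu$ at all; the paper phrases this as ``all elements of $S$ appear in $S_\mu$ in the same order'', which is an inclusion, not an equality.

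This is exactly what step~3 is for, and your description of step~3 mischaracterises it. The relevant mismatch case is not ``$c_l>c'_l$'' (that inequality just records a parity difference, which is resolved by step~2 using a special component). Rather, step~3 fires when $c_l=c'_l$ but $S$ starts with $s_2$ or $s_3$: the leftmost component of $\mathcal{E}_\mu$ was a normal component with shape $t_l^u$ or $t_l^v$ (so left-turn-number $c'_l$) that the algorithm absorbed with a thin shape of smaller left-turn-number. Step~3 iterates over the components of the algorithm's thin subgraph, removes one from $S$, and reinserts it at the front with a shape of left-turn-number $c_l$ from its feasible set; when it reaches $G_\eta$ it finds $s_\eta$ there and recovers $s_\mu$. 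The paper's completeness argument makes exactly this distinction (``$G_{\nu_1}$ is a component $G_{\nu'_i}$ of the maximal thin subgraph $G'_\mu$ computed in the first step''). Once you separate the algorithm's thin subgraph from the witness's thin subgraph, your bookkeeping for step~3 goes through along the lines you sketch.
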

    \begin{proof}
	Our algorithm computes sequences of shape descriptions for $G_{\mu}$ based on the feasible sets of the pertinent graphs of the children of $\mu$. For each computed sequence Lemma~\ref{lem:p_nodes_seq_valid_test} tests whether it is a shape sequence of $G_{\mu}$ and the corresponding shape descriptions are added in the feasible set $\mathcal{F}_\mu$  of $\mu$. Hence, whenever a shape description $s$ is computed for a sequence $S$, there exists an upward planar embedding $\mathcal{E}_{\mu}$ of $G_{\mu}$ with shape sequence $S$ and whose shape description is $s$. 
	
	It remains to prove that if $s_\mu$ is a shape description of $G_{\mu}$, then our algorithm finds a shape sequence $S$ such that $s_\mu$ corresponds to $S$. Let $\mathcal{E}_{\mu}$ be an upward planar embedding of $G_\mu$ with shape sequence $S_\mu$ and such that the shape description of $G_\mu$ is $s_\mu$. Let $G'_{\mathcal{E}_{\mu}}$ be the thin subgraph of $G_\mu$ with respect to $\mathcal{E}_{\mu}$ and let $S'_\mu$  be its shape sequence. 
	
For the left- and right-turn-numbers of $s_\mu$, the first step of the algorithm computes a maximal thin subgraph $G'_\mu$ for $S'_\mu$ assigning each child of $G_\mu$ to the first element of $S'_\mu$ that is contained in its feasible set. If an element of $S'_{\mu}$ is not assigned, the algorithm tries to reassign some components of $G'_\mu$. This is not possible only if an element of $S'_{\mu}$ is not contained in the feasible sets of all components of $G_{\mu}$. Note that the children of $\mu$ whose pertinent graphs belong to the thin subgraph $G'_{\mathcal{E}_{\mu}}$ contain at least one element of $S'_\mu$ in their feasible set. Hence, the algorithm will successfully compute a maximal thin subgraph $G'_\mu$ for $S'_\mu$. Additionally, the computed maximal thin subgraph $G'_\mu$ contains all components of $G'_{\mathcal{E}_{\mu}}$, although they may be assigned to different elements of $S'_\mu$. 

If $G'_{\mathcal{E}_{\mu}}=G_\mu$, that is $S_\mu=S'_\mu$, then all components of $G_\mu$ are assigned to elements of $S'_\mu$, the second step can't extend $S'_\mu$, and the left- and right-turn-numbers of $S'_\mu$ match the given values. Hence, during the third step of the algorithm all shape descriptions for $G_\mu$ that correspond to $S'_\mu$ will be computed, including $s_\mu$. So, assume that there is at least one component of $G_\mu$, say $G_\nu$, whose shape description $s_\nu$ in $S_\mu$ is not an element of $S'_\mu$. If $G_\nu$ is a special component of one of the poles, or a normal component for both poles whose feasible set contains no element of $S'_\mu$, then the second step of the algorithm tests whether $S'_\mu$ can be extended with all shape descriptions in the feasible set of $G_\nu$, including $s_\nu$. Since $S_\mu$ is a shape sequence of $G_\mu$, the algorithm will extend $S'_\mu$ with $s_\nu$ to all possible shape sequences of the subgraph consisting of $G'_{\mu}$ and $G_\nu$. In particular the sequence where $s_\nu$ is placed between the same elements of $S'_\mu$ as in $S_\mu$, will be one of the computed shape sequences. If there is another component $G_{\nu'}$ that is either a special component of one of the poles, or a normal component for both poles whose feasible set contains no element of $S'_\mu$, then $S'_\mu$ will be further extended to contain the shape description $s_{\nu'}$ of $G_{\nu'}$ at the correct position. At the end of the second step, either the computed shape sequence $S$ is the same with $S_\mu$, or all elements of $S$ appear in $S_\mu$ in the same order from left to right.

If $S$ has the same left- and right-turn-numbers with $S_\mu$, then the set of shape descriptions that correspond to $S$ will include $s_\mu$, since all elements of $S$ are contained in $S'_\mu$. So, we examine the case where the left- and/or the right-turn-numbers of $S$ do not match those of $S_\mu$. Assume first that the left-turn-number is different from $S_\mu$, while the right-turn-number is the same. Let $s_1$ be the first element of $S_\mu$ that corresponds to component $G_{\nu_1}$ of $G_\mu$ with respect to $\mathcal{E}_\mu$. Then,  $G_{\nu_1}$ is a component $G_{\nu'_i}$ of the maximal thin subgraph $G'_\mu$ computed in the first step, and its shape description in $S$ belongs to $S'_\mu$. During the third step of the algorithm, it will be tested whether the sequence $[s_1,S]$ is a shape sequence for $G_\mu$. Note that since all elements of $[s_1,S]$ belong to $S_\mu$ in the same order, and the left- and right-turn-numbers of the two sequences are the same, the shape description $s_\mu$ of $G_\mu$ corresponds to both $S_\mu$ and $[s_1,S]$ and will be therefore computed by the algorithm. Similarly, if $S_\mu$ and $S$ have the same left-turn-number and different right-turn-number, the third step of the algorithm will identify $s_\mu$ as a shape description corresponding to the sequence $[S,s_k]$, where $s_k$ is the last element of $S_\mu$ and it is the shape description used for the last component $G_{\nu_k}$ of $G_\mu$ with respect to $\mathcal{E}_\mu$. Hence it remains to consider the case where $S$ and $S_\mu$ have different left- and right-turn-numbers. In this case $S_\mu$ is one of $[t_l^u,S'_\mu,t_r^v]$ or $[t_l^v,S'_\mu,t_r^u]$, and $S'_\mu=[s_2\+]$. The last part of the algorithm tests whether it is possible to reassign two components of $G_{\mu}$ from $s_2$, one to $t_l^u$ and the other to $t_r^v$ for the first sequence, or one to $t_l^v$ and the other to $t_r^u$ for the second one. Hence $S_\mu$ will be successfully identified as a shape sequence of $G_\mu$, concluding the lemma.
	\end{proof}

 	\begin{lemma} \label{lem:P_node_general}
		Let $\mu$ be an P-node of $T$. The feasible set $\mathcal{F}_\mu$ can be computed in $\bigoh(\tau\cdot k)$ time, where $k$ is the number of children of $\mu$.
	\end{lemma}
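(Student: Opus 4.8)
The plan is purely a running-time analysis, since correctness of the three-step procedure is already given by Lemma~\ref{lem:P_node_correctness}. First I would observe that, although the procedure is phrased as ranging over all pairs $(c_l,c_r)\in[\tau_{\min},\tau_{\max}]^2$, Lemma~\ref{lem:shape_desc_values} forces every shape description of $G_\mu$ to satisfy $c_r=-c_l+h$ with $h\in\{0,1,2,3,4\}$; hence it suffices to iterate over the $\bigoh(\tau)$ values of $c_l$ and, for each, over only the $\bigoh(1)$ admissible values of $c_r$ lying in $[\tau_{\min},\tau_{\max}]$. For each such pair the value $c'_l\in\{c_l,c_l-1\}$ is determined, and with it the three descriptions $s_1,s_2,s_3$ of Lemma~\ref{lem:shapes-seq-necessity}, so there are only $\bigoh(1)$ candidate reduced sequences $S'=[s_1\*,s_2\*,s_3\*]$ per pair. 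Thus the whole computation reduces to $\bigoh(\tau)$ executions of the three-step procedure, and it remains to bound a single execution by $\bigoh(k)$.

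For a fixed triple $(c_l,c_r,S')$ I would bound the three steps as follows. Step one scans the $k$ children once, issuing a constant-time membership query into each $\mathcal{F}_{\nu_i}$ (Lemma~\ref{lem:matrix_feasible}), and then performs $\bigoh(1)$ reassignment attempts, each of which is another $\bigoh(k)$ scan to locate a suitable component; this is $\bigoh(k)$ in total. Step two either detects that more than two children are still unassigned --- an $\bigoh(k)$ check after which this $S'$ is discarded, which is sound by Lemma~\ref{lem:at_most_two_comps} --- or else applies Lemma~\ref{lem:p_nodes_seq_extend_test} at most twice, with the size parameter equal to $3$ and then to $4$, hence $\bigoh(1)$; this produces only $\bigoh(1)$ shape sequences $S$ of $G_\mu$, each of constant reduced size. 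For each such $S$, checking whether it matches $(c_l,c_r)$ and, if so, extracting the corresponding shape descriptions via Lemma~\ref{lem:p_nodes_seq_valid_test} is $\bigoh(1)$, since the size parameter is bounded. When instead the third step is entered for some $S$, it scans the at most $k$ components of the thin subgraph, doing $\bigoh(1)$ work per component (a feasible-set query plus one call to Lemma~\ref{lem:p_nodes_seq_valid_test} on a sequence of bounded size), hence $\bigoh(k)$.

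Summing over the $\bigoh(\tau)$ triples, each costing $\bigoh(k)$ time --- which also absorbs the $\bigoh(1)$-time insertions into $M(\mu)$ of the shape descriptions produced (Lemma~\ref{lem:matrix_feasible}) --- and adding the $\bigoh(\tau)$ time to initialize $M(\mu)$ (Lemma~\ref{lem:matrix_feasible}) yields the claimed bound $\bigoh(\tau\cdot k)$.

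The part I expect to be most delicate is making the accounting of Step three rigorous: one must verify that Lemma~\ref{lem:p_nodes_seq_valid_test} is never invoked on a sequence whose size grows with $k$. This holds because every sequence the algorithm manipulates has, after merging equal consecutive entries, at most a constant number of distinct reduced elements --- the three descriptions $s_1,s_2,s_3$, the shape descriptions of the at most two components outside the thin subgraph, and at most one extra element prepended or appended in Step three --- so the size parameter in all calls to Lemmas~\ref{lem:p_nodes_seq_valid_test} and~\ref{lem:p_nodes_seq_extend_test} is a universal constant, and the only super-constant work per triple consists of the few $\bigoh(k)$ scans over the children.
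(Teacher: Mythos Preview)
Your overall structure matches the paper's proof, and most of your accounting is sound. There is, however, a genuine gap in your treatment of Step~2.

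You write that Step~2 ``applies Lemma~\ref{lem:p_nodes_seq_extend_test} at most twice \ldots\ this produces only $\bigoh(1)$ shape sequences $S$''. But the algorithm does not fix a single $s_\nu$ and a single $s_{\nu'}$: it iterates over \emph{all} pairs $(s_\nu,s_{\nu'})\in\mathcal{F}_\nu\times\mathcal{F}_{\nu'}$, and for each such pair invokes Lemma~\ref{lem:p_nodes_seq_extend_test} (twice). The feasible sets $\mathcal{F}_\nu$, $\mathcal{F}_{\nu'}$ have size $\bigoh(\tau)$ in general, so as written Step~2 would generate $\bigoh(\tau^2)$ sequences~$S$, each of which may trigger an $\bigoh(k)$ Step~3; summed over the $\bigoh(\tau)$ choices of $(c_l,c_r)$ this yields $\bigoh(\tau^3 k)$, not $\bigoh(\tau k)$.

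The paper closes this gap by an observation you omit: once $(c_l,c_r)$ is fixed, any shape description that can participate in a shape sequence of $G_\mu$ with those turn-numbers must itself have left- and right-turn-numbers in a bounded window around $c_l$ and $c_r$ (this follows from Lemma~\ref{lem:shape_face} applied to the internal faces adjacent to the inserted component, together with Lemma~\ref{lem:shape_desc_values}). By Lemma~\ref{lem:matrix_feasible} there are only $\bigoh(1)$ shape descriptions in each $\mathcal{F}_{\nu_i}$ with turn-numbers in any fixed bounded window, and they can be fetched in $\bigoh(1)$ time. Restricting the iteration in Step~2 to this constant-size slice of $\mathcal{F}_\nu\times\mathcal{F}_{\nu'}$ is both sound and what makes Step~2 genuinely $\bigoh(1)$. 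The same observation is what justifies, in Step~3, that there are only $\bigoh(1)$ candidates $s'_i$ to prepend or append (those with the prescribed left- or right-turn-number), which you use implicitly but do not state.

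A minor secondary point: in Step~3 you claim $\bigoh(1)$ work per scanned component, but for the (at most one) component whose removal empties one of the blocks $s_1,s_2,s_3$ the algorithm performs an additional $\bigoh(k)$ reassignment scan. Since this happens for at most one component the total is still $\bigoh(k)$, so your conclusion is correct, but the per-component bound is not uniformly $\bigoh(1)$.
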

	\begin{proof}
	We prove that the time complexity of the algorithm described above is  $\bigoh(\tau\cdot k)$.
	The number of choices for the left-turn-number $c_l$ of a $uv$-external upward planar embedding of $G_{\mu}$ is $\tau$, and for each value of $c_l$ by Lemma~\ref{lem:shape_desc_values} the right-turn-number $c_r$ can only take values $-c_l+h$ for $h=0,1,2,3,4$. Hence there are $5\tau$ possible pairs for $c_l$ and $c_r$. For each pair, 
		we compute the shape descriptions of all possible  $uv$-external upward planar embeddings of $G_{\mu}$ with left-turn-number equal to $c_l$ and right-turn-number equal to $c_r$. For given $c_l$ and $c_r$, first we retrieve the feasible sets of the children of $\mu$. For each feasible set we are only interested in those shape descriptions that have left-turn-number in $[c_l-4,c_l]$ and right-turn-number in $[c_r-4,c_r]$. Let $|\mathcal{F}|$ denote the size of the largest feasible set under these constraints. By Lemma~\ref{lem:matrix_feasible} the shape descriptions with specific left- and right-turn-numbers are at most six and can be fetched in $\bigoh(1)$ time. Hence $|\mathcal{F}|$ is a constant and all feasible sets can be computed in $\bigoh(k|\mathcal{F}|)$ time. 
		
		For the first step of the algorithm, we consider all possible sequences $S'=[s_1\*,s_2\*,s_3\*]$. There are eight such sequences depending on whether each of $s_1$, $s_2$ and $s_3$ belongs to the sequence or not. Let $S'$ be one of them.
		For $i=1,\ldots,k$ we check if $\mathcal F_{\nu_i}$ contains one of the at most three elements of $S'$; by Lemma~\ref{lem:matrix_feasible}  each check takes $\bigoh(1)$ time, and we store which nodes among $\nu_i$ are assigned to $s_1$, $s_2$ and $s_3$ (if they belong to $S'$). This step takes $\bigoh(k)$ time and the maximal thin subgraph $G'_{\mu}$ is identified. If an element of $S'$ has no nodes assigned to it, at most two components of $G'_{\mu}$ are reassigned. Each reassignment process takes $\bigoh(k)$ time, since a specific shape description is looked for in the feasible sets of the components of $G'_{\mu}$. Hence the maximal thin subgraph $G'_{\mu}$ with shape sequence $S'$ is identified in $\bigoh(k)$ time or the algorithm concludes that there is no subgraph of $G_{\mu}$ with shape sequence $S'$.
 
The second step considers children $\nu$ and $\nu'$ of $\mu$ (if they exist). For each $s_\nu\in\mathcal{F}_{\nu}$ and $s_{\nu'}\in\mathcal{F}_{\nu'}$, shape sequence $S'$ is extended in all possible ways with $s_\nu$ and $s_{\nu'}$ to a shape sequence $S$ for $G_{\mu}$. Each extension uses Lemma~\ref{lem:p_nodes_seq_extend_test} and requires $\bigoh(1)$ time, since $S'$ has size at most three. Hence, computing all possible sequences $S$ for $G_{\mu}$ that extend a specific shape sequence $S'$ of the first step can be done in $\bigoh(|\mathcal{F}|^2)$ time. 

Checking whether the left- or right-turn-number of a sequence is equal to a given number takes $\bigoh(1)$ time. If both the left- and right-turn-numbers are matched by $S$, the shape descriptions that correspond to $S$ are computed with Lemma~\ref{lem:p_nodes_seq_valid_test} in $\bigoh(1)$ time (as $S$ has size at most five). Otherwise the algorithm proceeds with the last step. We consider two cases depending on whether one of $c_l$ and $c_r$ are not matched or both. Consider first the case where one of them is not matched. Assume that the first element of $S$ has left-turn-number different from $c_l$ and the right-turn-number equals $c_r$; the case where $S$ matches $c_l$ and not $c_r$ is similar. There are at most $k$ choices for selecting component $G_{\nu'_i}$. Computing the sequence $S^-$ takes constant time since its size is at most five. For at most one component, say $G_{\nu'_j}$, $S^-$ might not contain all elements of $S'$; for this component only, we reassign in $(k'-1)\bigoh(|\mathcal{F}|)$ time another component from $s_2$ to $s_3$. We check whether all elements of $S'$ have at least one node assigned to them in $\bigoh(1)$ time. If this is the case, then the new sequence $S^-$ contains all elements of $S'$. For each shape description $s'_i$ of the feasible set of $G_{\nu'_i}$ with left-turn-number equal to $c_l$  we compute all possible shape descriptions for $G_{\mu}$ that correspond to the sequence $[s'_i,S^-]$ in $\bigoh(1)$ time by Lemma~\ref{lem:p_nodes_seq_valid_test}. As the computed shape descriptions match the desired left- and right-turn-number, they are added to the feasible set $\mathcal{F}_{\mu}$ of $\mu$ in $\bigoh(1)$ time each. The process for a component $G_{\nu'_i}$ requires $\bigoh(k|\mathcal{F}|^2)$ time if the sequence $S^-$ does not contain all elements of $S'$, and $\bigoh(|\mathcal{F}|)$ time otherwise. Since this can occur for at most one component of $G'_{\mu}$, the third step runs in $\bigoh(k|\mathcal{F}|^2)$ time, for the case where one of $c_l$ or $c_r$ are not matched. 

In the special case where both the left- and right-turn-number of $S$ do not match $c_l$ and $c_r$ respectively, then the four sets $T_l^u$, $T_l^v$, $T_r^u$ and $T_r^v$ are computed, by checking whether $t_l^u$, $t_l^v$, $t_r^u$ or $t_r^v$ respectively belong to the feasible set of all components of $G'_{\mu}$. By Lemma~\ref{lem:matrix_feasible} each check takes $\bigoh(1)$ time, hence the sets are computed in $\bigoh(k)$ time. There are two possible sequences for $G_{\mu}$, namely $[t_l^u,S,t_r^v]$ and $[t_l^v,S,t_r^u]$. For each of them, we check if any of the corresponding sets is empty, or if they have both size one and contain the same element. If this is the case  $S$ can't be extended to match the given left- and right-turn-numbers, otherwise, by Lemma~\ref{lem:p_nodes_seq_valid_test} the corresponding shape descriptions for $\mathcal{F}_\mu$ can be computed in $\bigoh(1)$ time; each shape description  is added to $\mathcal{F}_\mu$ in $\bigoh(1)$ time. Hence, if $S$ matches none of $c_l$ and $c_r$ the last step of the algorithm requires $\bigoh(k)$ time. 

The overall time complexity is $5\tau(\bigoh(k|\mathcal{F}|)+8\cdot( \bigoh(k) + \bigoh(|\mathcal{F}|^2)\cdot (\bigoh(k|\mathcal{F}|^2)+\bigoh(k))))$, where $|\mathcal{F}|=\bigoh(1)$, that is $\bigoh(\tau\cdot k)$, as claimed.
	\end{proof}\fi

	\ifshort \begin{lemma} \label{lem:P_node_combined}
			Let $\mu$ be an P-node of $T$ with $k$ children. The feasible set $\mathcal{F}_\mu$ can be computed in $\bigoh(\tau\cdot k)$ time.
		\end{lemma}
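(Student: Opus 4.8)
The plan is to show that the three-step procedure is \emph{sound} (every shape description it places into $\mathcal{F}_\mu$ corresponds to a $uv$-external upward planar embedding of $G_\mu$ with left- and right-turn-numbers in $[\tau_{\min},\tau_{\max}]$) and \emph{complete} (every such realizable shape description is produced), and then to bound its running time by $\bigoh(\tau\cdot k)$. Soundness is the easy part: each shape description added to $\mathcal{F}_\mu$ is returned by a call of Lemma~\ref{lem:p_nodes_seq_valid_test}, applied either in Step~2 to a shape sequence $S$ of $G_\mu$, or in Step~3 to a sequence obtained from such an $S$ by replacing the shape description of a single component of the thin subgraph, or by prepending/appending one of the at most four replacement descriptions that have the required turn-numbers and exactly one $-1$-labelled pole. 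Since Step~3 commits such a change only when every element of the resulting reduced sequence still has a component assigned to it (so the sequence is realizable), and Lemma~\ref{lem:p_nodes_seq_valid_test} only returns descriptions of genuine embeddings, soundness follows.

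For completeness I would fix an arbitrary $uv$-external upward planar embedding $\mathcal{E}_\mu$ of $G_\mu$ with shape description $s_\mu$, left-turn-number $c_l$ and right-turn-number $c_r$, and track it through the iteration of the procedure for the pair $(c_l,c_r)$. Let $G'_{\mathcal{E}_\mu}$ be the thin subgraph of $G_\mu$ with respect to $\mathcal{E}_\mu$; by Lemma~\ref{lem:at_most_two_comps} at most two components of $G_\mu$ lie outside $G'_{\mathcal{E}_\mu}$, and by Lemma~\ref{lem:shapes-seq-necessity} the shape sequence of $G'_{\mathcal{E}_\mu}$ reduces to $[s_1,s_2,s_3]$ (with each entry allowed to repeat) for the three descriptions $s_1,s_2,s_3$ named there. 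Step~1's greedy assignment assigns a shape description to every child of $\mu$ whose feasible set meets $\{s_1,s_2,s_3\}$, hence to every component of $G'_{\mathcal{E}_\mu}$, so the maximal thin subgraph $G'_\mu$ it computes contains $G'_{\mathcal{E}_\mu}$; by Lemma~\ref{lem:parallel_reduce} the additional components of $G'_\mu$, all carrying descriptions that satisfy the thin conditions, are interchangeable with the matching runs, so the reduced shape sequence of $G'_\mu$ is exactly $S'=[s_1,s_2,s_3]$ (again with repetitions), and the reassignment subroutine makes each element of $S'$ non-empty precisely when such a $G'_\mu$ exists. Step~2 re-inserts the at most two components outside $G'_{\mathcal{E}_\mu}$ with the shape descriptions they carry in $\mathcal{E}_\mu$, in the positions they occupy there; since Lemma~\ref{lem:p_nodes_seq_extend_test} enumerates all valid placements, the reduced shape sequence $S_\mu$ of $\mathcal{E}_\mu$ is among the sequences $S$ produced. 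If $S$ already matches $(c_l,c_r)$, Lemma~\ref{lem:p_nodes_seq_valid_test} outputs $s_\mu$; otherwise the front and/or back element of $S_\mu$ uses, on a component of $G'_{\mathcal{E}_\mu}$, a description outside $\{s_1,s_2,s_3\}$, and the Step~3 case analysis --- which enumerates over the components of $G'_\mu$ the replacement descriptions with the needed turn-number, and in the doubly-mismatched case the four descriptions with one $-1$-labelled pole --- rebuilds $S_\mu$ exactly, so $s_\mu$ is output.

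For the running time, Lemma~\ref{lem:shape_desc_values} gives $c_r\in\{-c_l,\dots,-c_l+4\}$, so there are $\bigoh(\tau)$ admissible pairs $(c_l,c_r)$. For a fixed pair: by the matrix storage of feasible sets and Lemma~\ref{lem:matrix_feasible}, only $\bigoh(1)$ shape descriptions of each child are relevant and can be fetched in $\bigoh(1)$ time, so gathering them over all children costs $\bigoh(k)$; Step~1 tries the $\le 8$ reduced candidate sequences, each handled in $\bigoh(k)$ (one greedy pass, plus at most two reassignment passes); Step~2 tries the $\bigoh(1)$ relevant choices for the one or two unassigned children and extends a size-$\le 3$ sequence in $\bigoh(1)$ time per extension via Lemma~\ref{lem:p_nodes_seq_extend_test}, producing size-$\le 5$ sequences $S$; Step~3 treats each such $S$ either with a single $\bigoh(1)$ call of Lemma~\ref{lem:p_nodes_seq_valid_test} (when $(c_l,c_r)$ is matched) or otherwise with an $\bigoh(k)$ scan of the children's feasible sets for the needed replacements followed by $\bigoh(1)$ such calls (all sequences involved having size $\le 5$). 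Hence each pair is processed in $\bigoh(k)$ time and the total is $\bigoh(\tau\cdot k)$.

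The main obstacle, and where I would concentrate the write-up, is the completeness argument rather than the running-time estimate: one must check that the greedy assignment plus the constant-depth reassignment of Step~1 really recovers the correct reduced shape sequence of the thin subgraph --- this is exactly where Lemma~\ref{lem:parallel_reduce} is indispensable, since it makes any run of thin-condition descriptions interchangeable --- and that the finite Step~3 case analysis over the positions of the at most two ``large-angle'' or special components is genuinely exhaustive over every realizable pair $(c_l,c_r)$. Making this bookkeeping airtight, rather than the time bound, is the delicate part.
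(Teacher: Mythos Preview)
Your proposal follows the same three-part structure as the paper's proof (soundness via Lemma~\ref{lem:p_nodes_seq_valid_test}, completeness by tracking a fixed embedding $\mathcal{E}_\mu$ through the three steps, and the $\bigoh(\tau\cdot k)$ running-time bound via $\bigoh(\tau)$ pairs $(c_l,c_r)$ each processed in $\bigoh(k)$ time). The running-time analysis is essentially identical to the paper's and is fine.

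There is, however, a genuine gap in your completeness argument. You write that ``Step~2 re-inserts the at most two components outside $G'_{\mathcal{E}_\mu}$'' and conclude that ``the reduced shape sequence $S_\mu$ of $\mathcal{E}_\mu$ is among the sequences $S$ produced''. This conflates two different sets: Step~2 re-inserts the components \emph{unassigned in Step~1}, which is only a subset of the components outside $G'_{\mathcal{E}_\mu}$. A normal component $G_\nu$ that in $\mathcal{E}_\mu$ carries a non-thin description (say $\lambda(\mathcal{E}_\nu,v)=-1$) may perfectly well have one of $s_1,s_2,s_3$ in its feasible set and will then be swallowed by the greedy assignment in Step~1; Step~2 never sees it. In that situation the sequence $S$ produced after Step~2 is \emph{not} $S_\mu$ but only a proper subsequence of its reduced form. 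The paper's proof acknowledges exactly this: it shows only that ``all elements of $S$ appear in $S_\mu$ in the same order'' and then argues that Step~3, by reassigning a component of $G'_\mu$ (not of $G'_{\mathcal{E}_\mu}$) to a non-thin description at the front or back, recovers $s_\mu$. Your related sentence ``the front and/or back element of $S_\mu$ uses, on a component of $G'_{\mathcal{E}_\mu}$, a description outside $\{s_1,s_2,s_3\}$'' is therefore backwards: by definition every component of the thin subgraph $G'_{\mathcal{E}_\mu}$ carries a description in $\{s_1,s_2,s_3\}$; the extremal non-thin descriptions sit on components \emph{outside} $G'_{\mathcal{E}_\mu}$, and it is precisely because those components landed in $G'_\mu$ during Step~1 that Step~3 can find and reassign them.

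The fix is local: replace the claim that Step~2 reconstructs $S_\mu$ with the weaker statement that $S$ is a subsequence of $S_\mu$ sharing the same reduced thin part, and then argue (as the paper does) that Step~3's scan over components of $G'_\mu$ --- which includes any misplaced $G_\nu$ --- supplies the missing extremal element(s). The rest of your sketch, including the use of Lemma~\ref{lem:parallel_reduce} to justify interchangeability of thin runs and the constant-size bounds that make each call of Lemmata~\ref{lem:p_nodes_seq_valid_test} and~\ref{lem:p_nodes_seq_extend_test} take $\bigoh(1)$ time, is correct.
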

	\fi	

	\iflong\paragraph*{R-node}\fi
	\ifshort\subparagraph{R-node.}\fi 
	\iflong
	While for S-, P- and Q-nodes we were able to provide polynomial-time procedures to compute the feasible sets, the \NP-hardness of \UP\ prevents the existence of such a polynomial-time procedure for R-nodes. Indeed, obtaining a suitable algorithm that can compute the feasible sets for an R-node given the feasible sets for all of its children will be the task of our parameterized algorithms presented in Sections~\ref{sec:fpt_sources} and~\ref{sec:tw}. In order to complete the description of our framework and also present it in a modular fashion which can be reused by future work, here we formalize the notion of an \emph{R-node subprocedure} which, intuitively, is an algorithm that can compute a feasible set of an R-node.
	
	Formally, an \emph{R-node subprocedure} is an algorithm which takes as input an $R$-node $\mu$ of $T$ with skeleton $H$, a mapping $\mathcal{S}_\mu$ which assigns each child of $\mu$ to its feasible set, runs in time at most $\alpha(G_\mu,\mathcal{S}_\mu)$, and computes the feasible set $\mathcal{F}_\mu$. For a digraph $G$ with SPQR-tree $T$, let $\alpha(G)$ be $\sum\limits_{\text{R-node }\mu}\alpha(G_\mu,\mathcal{S}_\mu)$. We say that $\alpha(G)$ is the \emph{total time complexity} of the R-node subprocedure for graph $G$.
	\fi
	\ifshort
	The R-nodes will be handled differently in Sections~\ref{sec:fpt_sources} and~\ref{sec:tw} depending on the parameter we use. To complete the description of our framework we introduce the notion of an R-node subprocedure.
	Formally, an \emph{R-node subprocedure} is an algorithm which takes as input an $R$-node $\mu$ of $T$ and a mapping $\mathcal{S}_\mu$ which assigns each child of $\mu$ to its feasible set, and computes the feasible set $\mathcal{F}_\mu$ in at most $\alpha(G_\mu,\mathcal{S}_\mu)$ time. For a DAG $G$ with SPQR-tree $T$, $\alpha(G) =\sum\limits_{\text{R-node }\mu}\alpha(G_\mu,\mathcal{S}_\mu)$ is the \emph{total time complexity} of the R-node subprocedure for $G$.
	\fi

\iflong\paragraph*{Root node}\fi
\ifshort\subparagraph{Root node.}\fi
\iflong
We now focus on the root node of the SPQR-tree $T$.  The root node $r$ corresponds to an edge $e=(u,v)$ of $G$ that lies on its outer face. Node $r$ has only one child node $\mu$ with poles $u$ and $v$. Let $s$ be a shape description in the feasible set $\mathcal{F}_{\mu}$ of $\mu$. Then, there exists a $uv$-external upward planar embedding $\mathcal{E}_{\mu}$ of the pertinent graph $G_{\mu}$ of node $\mu$, such that $G_{\mu}$ has shape description $s$, and poles $u$ and $v$ are on the outer face of $G_{\mu}$. Note that $G_{\mu}=G\setminus e$. Hence, $r$ can be treated as a P-node with poles $u$ and $v$ and two children; one of them is $\mu$ and the other one is a Q-node, say $\mu'$ for the edge $(u,v)$. Then the feasible set of $\mu'$ contains only one shape description, namely $s'=\shapeDesc{0}{0}{1}{1}{{out}}{{out}}{{in}}{{in}}$. By Lemma~\ref{lem:P_node_general} we can compute in $\bigoh(\tau)$ time the feasible set of $r$. Since $r$ has only two children, any shape sequence $S$ of $r$ contains $s'$ as its first or last element. This implies that for any upward planar embedding $\mathcal{E}$ of $G_r$ with shape sequence $S$, either the left or the right outer path of $G_{\mu'}$ belongs to the outer face of $G_r$ with respect to $\mathcal{E}$. As $G_{\mu'}$ consists only of edge $(u,v)$, it follows that $(u,v)$ belongs to the outer face of $G$ in all possible upward planar embeddings.
\fi

\ifshort
The root node $r$ corresponds to an edge $e=(u,v)$ of $G$ that lies on its outer face and has only one child $\mu$ with poles $u$ and $v$. Since $G_{\mu}=G\setminus e$, node $r$ can be treated as a P-node with poles $u$ and $v$ and two children; one of them is $\mu$ and the other one is a Q-node for the edge $(u,v)$. 
By Lemma~\ref{lem:P_node_combined}, we can compute the feasible set of $r$ in $\bigoh(\tau)$ time. 
\fi

\begin{lemma} \label{lem:root_node_biconnected}
	The feasible set $\mathcal{F}_r$ of the root node $r$ of $T$  can be computed in $\bigoh(\tau)$ time.
\end{lemma}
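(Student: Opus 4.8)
The plan is to reduce the computation of $\mathcal{F}_r$ to a single call of the P-node procedure of Lemma~\ref{lem:P_node_general} on an auxiliary P-node with exactly two children, and then read off the time bound. Recall that $r$ is the node of $T$ associated with the edge $e=(u,v)$ on which $T$ is rooted, that $r$ has a unique child $\mu$ with poles $u$ and $v$, and that $G_\mu = G\setminus e$; in particular, $\mathcal{F}_\mu$ has already been produced by the bottom-up traversal, so it is available for free.

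First I would create an auxiliary Q-node $\mu'$ for the edge $e$: by Lemma~\ref{lem:Q_node_general} its feasible set $\mathcal{F}_{\mu'}$ is computable in $\bigoh(1)$ time and consists of the single shape description $s'$ of $e$ oriented appropriately. Next I would treat $r$ as a P-node with poles $u,v$ whose two children are $\mu$ and $\mu'$, and run the P-node algorithm of Lemma~\ref{lem:P_node_general} with input feasible sets $\mathcal{F}_\mu$ and $\mathcal{F}_{\mu'}$. Since $G_\mu$ together with the edge $e$ is exactly $G=(G\setminus e)\cup e$ and $G\setminus e$ is connected (because $G$ is biconnected), every $uv$-external upward planar embedding of $G$ is a parallel composition of a $uv$-external upward planar embedding of $G_\mu$ with the single-edge embedding of $e$, and conversely; hence, by the correctness of the P-node procedure (Lemma~\ref{lem:P_node_correctness}) together with the shape-sequence characterization of Lemma~\ref{lem:shapes-seq-necessity}, the feasible set it returns is precisely the set of shape descriptions of all upward planar embeddings of $G$ whose induced turn numbers lie in $[\tau_{\min},\tau_{\max}]$, i.e.\ $\mathcal{F}_r$. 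Moreover, since the auxiliary P-node has only two children, in every shape sequence it produces the element $s'$ appears first or last, so $e$ lies on the left or right outer path of $G$ and therefore on the outer face --- exactly the constraint imposed by rooting $T$ at $e$. For the running time, building $\mathcal{F}_{\mu'}$ costs $\bigoh(1)$, and Lemma~\ref{lem:P_node_general} applied to a P-node with $k$ children runs in $\bigoh(\tau\cdot k)$ time; here $k=2$, giving $\bigoh(\tau)$ overall.

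The step that I expect to demand the most care is verifying that this reduction is faithful: one must confirm that the parallel composition of $G_\mu$ with $e$ really is $G$ with $u,v$ as its poles, and then correctly invoke the P-node correctness and shape-sequence lemmas to conclude both that $\mathcal{F}_r$ is computed exactly and that the rooting edge $e$ is automatically placed on the outer face. Once the reduction is justified, the $\bigoh(\tau)$ bound is immediate from the $k=2$ instance of Lemma~\ref{lem:P_node_general}.
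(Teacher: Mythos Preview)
Your proposal is correct and follows essentially the same approach as the paper: treat the root as a P-node with two children (the unique child $\mu$ and an auxiliary Q-node for the edge $e$), apply the P-node procedure with $k=2$ to obtain the $\bigoh(\tau)$ bound, and observe that with only two children the edge $e$ necessarily lies on the outer face. The only minor difference is that the paper invokes Lemma~\ref{lem:P_node_general} directly without separately citing Lemma~\ref{lem:shapes-seq-necessity}, which is not actually needed here.
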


\iflong\paragraph*{Time complexity}\fi
\iflong From the discussion above and in particular from Lemmata~\ref{lem:Q_node_general},~\ref{lem:S_node_general},~\ref{lem:P_node_general} and~\ref{lem:root_node_biconnected}, we obtain the following lemma. \fi

\ifshort
Combining Lemmata~\ref{lem:Q_node_general},~\ref{lem:S_node_general},~\ref{lem:P_node_combined} and~\ref{lem:root_node_biconnected}, we obtain the following lemma. 
\fi

\begin{lemma}\label{lem:R_node_biconnected}
	Let $G$ be a biconnected DAG with $n$ vertices and let $T$ be an SPQR-tree of $G$ rooted at a $Q$-node corresponding to an edge $e = (u, v)$. Let $\tau_{\min}$ and $\tau_{\max}$ be two given integer values, and let $\tau=\tau_{\max}-\tau_{\min}+1$. Given an R-node subprocedure with total time complexity $\alpha(G)$, it is possible to compute in time $\bigoh(\alpha(G)+\tau^2\cdot n)$ the shape descriptions of every upward planar embedding with $e$ on the outer face, such that the left- and right-turn-numbers of the pertinent graph of every node of $T$ are in the range $[\tau_{\min},\tau_{\max}]$.
\end{lemma}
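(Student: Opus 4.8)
The plan is to realize Lemma~\ref{lem:R_node_biconnected} as the bottom-up dynamic program over the SPQR-tree $T$ whose per-node steps were developed above. First I would root $T$ at the $Q$-node corresponding to $e=(u,v)$; as noted in the preliminaries, this rooting encodes exactly the requirement that $e$ lie on the outer face of the sought upward planar embedding. I then process the nodes of $T$ in post-order, computing for each node $\mu$ its feasible set $\mathcal{F}_\mu$, stored in the matrix $M(\mu)$ of Lemma~\ref{lem:matrix_feasible}, where $\mathcal{F}_\mu$ is defined to consist of exactly those shape descriptions of $G_\mu$ realized by some $uv$-external upward planar embedding $\mathcal{E}_\mu$ of $G_\mu$ in which the left- and right-turn-numbers of the pertinent graph of every node of the subtree of $T$ rooted at $\mu$ lie in $[\tau_{\min},\tau_{\max}]$. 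Concretely: for a $Q$-node I invoke Lemma~\ref{lem:Q_node_general}; for an $S$-node (which has exactly two children by our standing assumption on $T$) I invoke Lemma~\ref{lem:S_node_general}; for a $P$-node I invoke Lemma~\ref{lem:P_node_general}; for an $R$-node I run the given $R$-node subprocedure on $\mu$ and on the mapping $\mathcal{S}_\mu$ assigning each child of $\mu$ to its already-computed feasible set; and at the root I invoke Lemma~\ref{lem:root_node_biconnected}. Immediately after computing $\mathcal{F}_\mu$ I discard every shape description whose left- or right-turn-number falls outside $[\tau_{\min},\tau_{\max}]$ --- the extra filtering on the right-turn-number is necessary because, by Lemma~\ref{lem:shape_desc_values}, $\tau_l\in[\tau_{\min},\tau_{\max}]$ does not force $\tau_r\in[\tau_{\min},\tau_{\max}]$. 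If some $\mathcal{F}_\mu$ becomes empty I abort and return the empty set; otherwise the output is $\mathcal{F}_r$.

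Correctness is shown by bottom-up induction on $T$ with the displayed characterization of $\mathcal{F}_\mu$ as the invariant. The base case ($Q$-nodes) is Lemma~\ref{lem:Q_node_general}. For the inductive step, each of Lemmata~\ref{lem:S_node_general}, \ref{lem:P_node_general} and \ref{lem:root_node_biconnected}, as well as the $R$-node subprocedure, correctly computes the feasible set of $\mu$ from those of its children; since the children's sets already satisfy the turn-number window and I additionally filter the freshly produced shape descriptions, $\mathcal{F}_\mu$ satisfies the invariant for $\mu$. At the root the invariant yields precisely the set of shape descriptions of upward planar embeddings of $G$ having $e$ on the outer face and all turn-numbers in $[\tau_{\min},\tau_{\max}]$, which is the desired output.

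For the running time, Lemma~\ref{lem:matrix_feasible} gives that each feasible set has size $\bigoh(\tau)$ (a constant number of shape descriptions per left-turn-number, over the $\bigoh(\tau)$ admissible left-turn-numbers) and that allocating all matrices $M(\mu)$ over the $\bigoh(n)$ nodes of $T$ costs $\bigoh(\tau\cdot n)$. The $\bigoh(n)$ $Q$-nodes cost $\bigoh(1)$ each. Each of the $\bigoh(n)$ $S$-nodes costs $\bigoh(\tau+|\mathcal{F}_{\nu_1}|\cdot|\mathcal{F}_{\nu_2}|)=\bigoh(\tau^2)$, totalling $\bigoh(\tau^2 n)$. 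Each $P$-node $\mu$ costs $\bigoh(\tau\cdot k_\mu)$ where $k_\mu$ is its number of children, and since $\sum_\mu k_\mu$ over all nodes of $T$ equals the number of edges of $T$, which is $\bigoh(n)$, the $P$-nodes cost $\bigoh(\tau n)$ in total. The $R$-nodes cost $\sum_\mu\alpha(G_\mu,\mathcal{S}_\mu)=\alpha(G)$ by definition of the $R$-node subprocedure. The root costs $\bigoh(\tau)$, and the filtering and abort bookkeeping is linear in the sizes of the sets processed, hence absorbed. Summing gives $\bigoh(\alpha(G)+\tau^2 n)$, as claimed.

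There is no substantial obstacle here, since the statement is essentially an assembly of the earlier per-node lemmata; the two points that require care are the amortization $\sum_\mu k_\mu=\bigoh(n)$, which is what turns the per-$P$-node bound into a global $\bigoh(\tau n)$, and keeping the turn-number window enforced on \emph{every} feasible set (on $\tau_r$ as well as on $\tau_l$), so that the per-node procedures --- whose correctness is stated relative to the feasible sets they are handed --- compose correctly all the way up to the root.
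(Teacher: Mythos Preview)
Your proposal is correct and follows exactly the paper's approach: the paper's own proof of this lemma is a single sentence stating that it follows from Lemmata~\ref{lem:Q_node_general}, \ref{lem:S_node_general}, \ref{lem:P_node_general} and~\ref{lem:root_node_biconnected}, and your write-up is simply a more explicit unpacking of that assembly, including the amortization $\sum_\mu k_\mu=\bigoh(n)$ for $P$-nodes and the observation that $S$-nodes dominate at $\bigoh(\tau^2)$ each.
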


\section{Extension to the Single-Connected Case}\label{se:singly-connected}

\iflong
In this section, we show that Lemma~\ref{lem:R_node_biconnected} can be used as a blackbox to extend our results from the biconnected case to arbitrary graphs.
Our goal is to prove Interface Lemma~\ref{lem:R_node_general}, which reduces the task of solving \textsc{Upward Planarity} to the one of obtaining an R-node subprocedure, for \emph{all} graphs including single-connected ones. 
First, in the next definition we encapsulate the property that the shapes occuring in SPQR-tree decompositions of a graph have a bounded range of left- and right-turn-numbers, as in Sections~\ref{sec:fpt_sources} and~\ref{sec:tw} we obtain efficient algorithms precisely in this regime.

\begin{definition}
    Let $G$ be a digraph and $\tau_{\min}$, $\tau_{\max}$ be two integers. We call $G$ \emph{$[\tau_{\min}, \tau_{\max}]$-turn-bounded} if the following holds for every upward planar embedding $\mathcal{E}$ of $G$. For any biconnected component $B$ of $G$ and every SPQR-tree decomposition $T$ of $B$ that is rooted at a Q-node corresponding to an edge that lies on the outer face of $B$ in the embedding induced by $\mathcal{E}$, left- and right-turn-numbers of the pertinent graph of every node of $T$ are in range $[\tau_{\min}, \tau_{\max}]$.
\end{definition}

We now give the precise statement of our Interface Lemma.
\else
In this section, we establish the Interface Lemma, which reduces the task of solving \textsc{Upward Planarity} to the one of obtaining an R-node subprocedure, for \emph{all} graphs including single-connected ones. 
To formalize the lemma, we call a digraph $G$ \emph{$[\tau_{\min}, \tau_{\max}]$-turn-bounded} if every upward planar embedding of $G$ has the following property: the pertinent graphs of any SPQR-tree of each biconnected component in $G$ have left- and right-turn numbers in the range  $[\tau_{\min}, \tau_{\max}]$.
\fi

\iflong \begin{lemma}[Interface Lemma] \fi \ifshort \begin{lemma}[Interface Lemma]\fi\label{lem:R_node_general}
        Let $G$ be an $n$-vertex digraph, and $\tau_{\min}$, $\tau_{\max}$ be integers such that $G$ is $[\tau_{\min}, \tau_{\max}]$-turn-bounded.
        Given an R-node subprocedure with total time complexity $\alpha(G)$, it is possible to determine whether $G$ admits an upward planar embedding in time $\bigoh(n (\alpha(G)+\tau^2\cdot n))$ where $\tau = \tau_{\max} - \tau_{\min} + 1$. 
\end{lemma}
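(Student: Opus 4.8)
The plan is to lift Lemma~\ref{lem:R_node_biconnected} from the biconnected setting to arbitrary connected digraphs by a dynamic program over the block-cut tree $\mathcal{B}$ of $G$; we assume throughout, as we may by the expansion preprocessing, that $G$ is connected and expanded, so that every cut vertex is a top or bottom vertex. If $G$ is itself biconnected there is nothing new to prove: since $G$ is $[\tau_{\min},\tau_{\max}]$-turn-bounded, Lemma~\ref{lem:R_node_biconnected} invoked with each of the $\bigoh(n)$ edges of $G$ as the root edge captures every upward planar embedding of $G$, and $G$ is upward planar iff some run returns a nonempty feasible set; this already meets the claimed bound $\bigoh(n(\alpha(G)+\tau^2 n))$. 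Otherwise we fix an arbitrary block as the root of $\mathcal{B}$ and process $\mathcal{B}$ bottom-up. For a block $B$ with parent cut vertex $c$, let $G_B$ be the union of $B$ with all blocks in the subtree of $\mathcal{B}$ below $B$; the DP maintains a constant-size \emph{interface set} $\mathcal{I}_B$ consisting of all triples $(\ell,\rho_l,\rho_r)$ for which $G_B$ has an upward planar embedding with $c$ on the outer face, the outer-face angle at $c$ labelled $\ell$, and the two outer edges at $c$ oriented according to $\rho_l,\rho_r$. By Theorem~\ref{th:upward-conditions} these are exactly the data about $G_B$ that determine how it can attach inside the face of its parent that contains $c$.

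To compute $\mathcal{I}_B$ from the interface sets of the children of $B$, the key step is to \emph{collapse} every child cut vertex $w$ of $B$ (a cut vertex of $G$ lying in $B$ with $w\neq c$). Each block hanging at $w$ already has a known interface set, and in an upward planar embedding it becomes a pendant inserted into some face of $B$ incident to $w$, splitting the angle of that face at $w$ into the pendant's outer label at $w$ plus two fresh ``gap'' angles; several pendants may be nested inside one face, in some linear order. I would prove that the admissible ways of distributing all pendants of $w$ among the faces of $B$ at $w$, together with the resulting net angle label of $w$ as seen from inside $B$, obey a dichotomy parallel to the P-node analysis of Section~\ref{sec:general_algo}: by the argument of Lemma~\ref{lem:at_most_two_comps}, at most two pendants of $w$ can carry an outer label at $w$ different from $-1$ (a large or flat angle at $w$ absorbed by a pendant is unique by \textbf{UP1}/\textbf{UP2}), while the remaining ``thin'' pendants can be handled in bulk exactly as in Lemma~\ref{lem:parallel_reduce}, and the shape sequence of the non-thin pendants around $w$ has the restricted form of Lemma~\ref{lem:shapes-seq-necessity}. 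Hence the pendants of $w$ collapse, in time linear in their number, into a bounded-size description of how $w$ attaches to the rest of $B$. Feeding these collapsed descriptions into the S- and P-node merges of Section~\ref{sec:general_algo}, delegating R-nodes to the given subprocedure, and reading off from the feasible set of the root node of $B$'s SPQR-tree those shape descriptions whose pole is $c$ --- in short, running the algorithm of Lemma~\ref{lem:R_node_biconnected} on $B$ augmented with the collapsed data --- yields $\mathcal{I}_B$.

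It remains to handle the unknown outer face of $G$. The graph $G$ is upward planar iff there exist a block $B^\ast$ and an edge $e^\ast\in B^\ast$ serving as the outermost block and outer edge: re-rooting $\mathcal{B}$ at $B^\ast$, folding in at each cut vertex of $B^\ast$ the interface set of the part of $G$ attached there outside $B^\ast$, and running Lemma~\ref{lem:R_node_biconnected} on the resulting augmented copy of $B^\ast$ with root edge $e^\ast$ must return a nonempty feasible set. Since the interface set of a block relative to an incident cut vertex does not depend on the rooting of $\mathcal{B}$, a single bottom-up pass together with a standard top-down rerooting pass produces the interface set of every (block, incident cut vertex) pair, after which each of the $\bigoh(n)$ candidate pairs $(B^\ast,e^\ast)$ is tested in time $\bigoh(\alpha_{B^\ast}+\tau^2 n_{B^\ast})$, where $\alpha_{B^\ast}$ is the part of $\alpha(G)$ contributed by the R-nodes inside $B^\ast$ and $n_{B^\ast}=|V(B^\ast)|$. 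Using that R-nodes and vertices of $G$ are distributed among its blocks (so $\sum_B\alpha_B=\alpha(G)$ and $\sum_B n_B=\bigoh(n)$), that $|E(B)|=\bigoh(n_B)$, and that a block is processed with at most $\deg_B(c)\le n_B$ SPQR-rootings per pass, all contributions telescope to $\bigoh(n(\alpha(G)+\tau^2 n))$, as claimed.

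The main obstacle is the cut-vertex collapse. A priori a cut vertex $w$ with $k$ pendant blocks admits exponentially many ways of assigning and ordering those blocks within the faces of $B$ at $w$, so one genuinely has to re-establish, in this single-vertex-pole setting, the structural facts proved for P-nodes in Section~\ref{sec:general_algo}: the ``at most two exceptional pendants'' phenomenon, the restricted shape sequence $[s_1^+,s_2^*,s_3^*]$ of the non-thin pendants around $w$, and the bulk-reduction of thin pendants; and one must carefully track how the global condition \textbf{UP1}/\textbf{UP2} at $w$ is partitioned between the interior of $B$ and the pendant subtrees. This is the part that goes beyond a black-box use of Lemma~\ref{lem:R_node_biconnected}; by comparison, the rerooting dynamic program and the amortized accounting that yields the single factor $n$ are routine.
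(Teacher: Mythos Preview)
Your approach diverges from the paper's in a way that creates a real gap.

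The paper does \emph{not} try to track how pendant blocks interact with the embedding of their parent block. Instead it proves a structural lemma (Lemma~\ref{lem:children_block}) showing that leaf blocks are essentially free: given \emph{any} upward planar embedding of $G^P$ (the part of $G$ on the root side of a cut vertex $v$), one can always extend it to include the leaf blocks $B_1,\dots,B_\ell$ at $v$, provided only that (i) each $B_i$ admits an upward planar embedding with $v$ on its outer face, and (ii) if $v$ is a non-switch vertex in the parent block, each $B_i$ admits such an embedding with the outer angle at $v$ large. These conditions are also necessary, so the algorithm simply checks them (via Lemma~\ref{lem:R_node_biconnected} on each $B_i$, rooted at each edge incident to $v$), removes the leaf blocks, and recurses upward. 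No shape information propagates between blocks; each block is solved entirely in isolation.

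Your approach instead computes an interface set $\mathcal{I}_B$ for the whole subtree $G_B$ and ``collapses'' the pendants at each child cut vertex $w$ into a bounded description to be fed into the biconnected algorithm for $B$. The gap is in that last step. A pendant at $w$ is a one-vertex attachment, not a virtual edge with two poles; it alters the angle budget at $w$ but is invisible to the SPQR tree of $B$. If $w$ is an internal vertex of an R-node skeleton (which it can be, since $w$ is not a cut vertex of the biconnected block $B$), there is no child feasible set into which the pendant constraint can be encoded, and the R-node subprocedure---a black box that receives only the skeleton and the children's feasible sets---has no way to learn that $w$'s large angle has been consumed by a pendant, or that certain faces at $w$ must absorb extra switch angles. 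Your sentence ``delegating R-nodes to the given subprocedure'' hides exactly this problem; the P-node machinery you invoke from Section~\ref{sec:general_algo} lives at a two-pole split pair and does not transfer to a single-vertex attachment inside a rigid skeleton.

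The paper's Lemma~\ref{lem:children_block} sidesteps the whole issue: because leaf blocks can be attached to \emph{any} embedding of the rest, there is no constraint to propagate into the R-node subprocedure at all, and Lemma~\ref{lem:R_node_biconnected} can be used on each block verbatim.
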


Note that, for a single-connected graph $G$, we define the total time complexity $\alpha(G)$ of an R-node subprocedure to be the sum of $\alpha(B)$ over all biconnected components $B$ of $G$.

\ifshort
To give an intuition of the proof, consider a fixed rooting of the block-cut tree of $G$. 
The core of our algorithm is a procedure that, given suitable embeddings of leaf components containing the same cut-vertex,
attaches these embeddings to an arbitrary upward planar embedding of the rest of the graph.  
This allows us to process the block-cut tree upwards: we iteratively verify that there exist desired embeddings for a group of leaf blocks via the biconnected algorithm (Lemma~\ref{lem:R_node_biconnected}), and reduce to a smaller tree by removing these blocks.
\fi

\iflong

To give an intuition of the proof, we start by guessing the root of the block-cut tree of $G$, which is assumed to see the outer face in the desired upward planar embedding of $G$. 
The core of the proof is the following lemma, which states that leaf components can always be added on top of an upward planar embedding, up to simple conditions on the cut-vertex.

\begin{lemma} 
    Consider a rooted block-cut tree of G, its cut vertex $v$ that is adjacent to leaf blocks $B_1$,...,$B_\ell$, and the parent block $P$. Denote by $G^P$ the corresponding component of $G - \{v\}$.
Any upward planar embedding of $G^P$ where the root block sees the outer face can be extended to an embedding of $G$ with the same property if the following holds:
\begin{enumerate}
    \item Each $B_i$ has an upward-planar embedding with $v$ on the outer face.
    \item If $v$ is a non-switch vertex in $P$, each $B_i$ has an upward-planar embedding with $v$ on the outer face where the large angle of $v$ is on the outer face.
\end{enumerate}
    \label{lem:children_block}
\end{lemma}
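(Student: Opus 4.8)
The plan is to obtain $\mathcal{E}$ from $\mathcal{E}^P$ by inserting all the leaf blocks $B_1,\dots,B_\ell$ into one face of $\mathcal{E}^P$ incident to $v$, and then to produce an angle assignment for $\mathcal{E}$ satisfying \textbf{UP0}--\textbf{UP3} of Theorem~\ref{th:upward-conditions}. First I would record that, since $P$ is the only non-leaf neighbour of $v$ in the block-cut tree, $v$ is incident in $G^P$ only to edges of $P$; hence $v$ is a switch vertex of $G^P$ if and only if it is a switch vertex of $P$, and the restriction of the angle assignment of $\mathcal{E}^P$ to $v$ has, by \textbf{UP1}/\textbf{UP2}, exactly one large angle and the rest small when $v$ is a switch vertex of $P$, and exactly two flat angles and the rest small otherwise. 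For each $i$ I would fix an upward planar embedding $\mathcal{E}^i$ of $B_i$ with $v$ on the outer face (condition~1); by condition~2, if $v$ is a non-switch vertex of $P$ and a switch vertex of $B_i$, I would choose $\mathcal{E}^i$ so that the unique large angle of $v$ in it lies on the outer face.

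I would then fix the attachment. Pick a face $f^*$ of $\mathcal{E}^P$ incident to $v$ whose angle $\alpha^*$ at $v$ is the large angle of $v$ when $v$ is a switch vertex of $P$, and one of the two flat angles of $v$ otherwise (such an $f^*$ exists by the above); let $e_L,e_R$ be the $P$-edges bounding $\alpha^*$. Insert $B_1,\dots,B_\ell$, carrying the embeddings $\mathcal{E}^1,\dots,\mathcal{E}^\ell$ up to a flip, into $f^*$, attached at $v$ in this cyclic order between $e_L$ and $e_R$; keep the rotation at every other vertex and the interior of every $\mathcal{E}^i$ unchanged, and inherit the choice of outer face from $\mathcal{E}^P$, so that the root block still sees the outer face of $\mathcal{E}$. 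The result is a planar embedding $\mathcal{E}$ of $G$; what remains is to give it a valid angle assignment.

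I would keep every label of $\mathcal{E}^P$ except that of $\alpha^*$ and every label of every internal angle of every $\mathcal{E}^i$: this already yields \textbf{UP0}--\textbf{UP3} at all vertices other than $v$ and at all faces not incident to $v$ that changed. What is left is (i)~to label the $\ell+1$ new transition angles of $v$ (between $e_L$ and the first boundary edge of $B_1$, between consecutive $B_i$, and between the last boundary edge of $B_\ell$ and $e_R$), where $0$ is forced on a transition whose two edges have opposite orientation at $v$ and a value in $\{-1,1\}$ is required otherwise; and (ii)~to check \textbf{UP1}/\textbf{UP2} at $v$ and \textbf{UP3} at the at most $\ell+2$ faces of $\mathcal{E}$ incident to $v$ that are genuinely new. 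I would choose the cyclic order and the flips of the $B_i$ so that their boundary ends that are in-edges of $v$ are consecutive and those that are out-edges of $v$ are consecutive, and so that in every $\mathcal{E}^i$ with $v$ a switch vertex the large angle of $v$ is on the side (the former outer face of $\mathcal{E}^i$) that gets merged away; destroying $\alpha^*$ and these former outer angles removes precisely the ``excess'' large/flat angles, so that labeling every flat transition $0$, every switch transition $-1$, and promoting exactly one switch angle to $1$ when $v$ is a switch vertex of $G$, makes $n_1(v),n_0(v),n_{-1}(v)$ meet \textbf{UP1} or \textbf{UP2}, and a similar count along the boundary walks of the $\leq \ell+2$ new faces gives \textbf{UP3}.

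The main obstacle is precisely this last verification: it requires a case analysis on whether $v$ is a switch vertex in $P$ and, independently, in each $B_i$, and one must argue in every case that the angles freed by deleting $\alpha^*$ and the former outer angles of the $\mathcal{E}^i$ can be redistributed among the $\ell+1$ transition angles consistently with \textbf{UP1}/\textbf{UP2} at $v$, and that one is never forced to keep a large angle of $v$ inside a surviving face. Conditions~1 and~2 are exactly what makes this possible: condition~1 gives the $B_i$ an embedding at all, and condition~2 handles the non-switch case, where \textbf{UP2} forbids any large angle at $v$ and hence the large angle of each $B_i$ (when $v$ is a switch vertex there) must be pushed onto the outer face so that it is destroyed in the merge.
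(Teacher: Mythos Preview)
Your overall plan---fix embeddings of the $B_i$, splice them in at $v$, and verify \textbf{UP0}--\textbf{UP3}---matches the paper's, but there is a genuine gap in how you perform the splicing.

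You insert all blocks side by side into a single face $f^*$ of $\mathcal{E}^P$. This works only if, for every $B_i$ in which $v$ is a switch vertex, the large angle of $v$ in $\mathcal{E}^i$ lies on the \emph{outer} face of $\mathcal{E}^i$: only then does the insertion destroy that large angle, so that your relabelling of the $\ell+1$ transition angles can meet \textbf{UP1}/\textbf{UP2}. You implicitly assume this when you write ``in every $\mathcal{E}^i$ with $v$ a switch vertex the large angle of $v$ is on the side (the former outer face of $\mathcal{E}^i$) that gets merged away''. But Condition~2 guarantees such an embedding only when $v$ is a non-switch vertex in $P$. When $v$ is a switch vertex in $P$ you have only Condition~1, and then several of the chosen $\mathcal{E}^i$ may have the large angle of $v$ in an \emph{interior} face. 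Those interior large angles survive your side-by-side insertion unchanged (flipping $\mathcal{E}^i$ or permuting the blocks does not move an interior large angle to the outer face), and with two or more of them present no labelling of the transition angles can bring $n_1(v)$ down to~$1$.

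The paper sidesteps this by inserting the blocks \emph{iteratively}: each block whose outer angle at $v$ is small (so its large angle is interior) is inserted precisely at the \emph{current} large angle of $v$. That large angle is consumed, while the newly inserted block contributes a fresh large angle in its interior, which is where the next such block is placed. The blocks therefore may end up \emph{nested} inside one another rather than side by side; this nesting is exactly what your all-in-one-face construction forbids, and it is what makes the switch-in-$P$ case go through without any extra hypothesis on the~$\mathcal{E}^i$.
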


As the conditions of Lemma~\ref{lem:children_block} turn out to be necessary as well, our algorithm simply proceeds upwards along the block-cut tree, checking every time the conditions of the lemma and removing the respective leaf components. If at some point the conditions are not satisfied, we conclude that $G$ is a no-instance; otherwise, any upward planar embedding of the root block can be iteratively extended to an embedding of the whole graph. Finally, finding upward planar embeddings of the blocks is performed via Lemma~\ref{lem:R_node_biconnected}; it can also be used to verify conditions of Lemma~\ref{lem:children_block}.
\fi

\iflong \begin{proof}[Proof of Lemma~\ref{lem:children_block}]
    Fix an upward planar embedding $\mathcal{E}^P$ of $G^P$, and upward plannar embeddings $\mathcal{E}^{B_1}$, \dots, $\mathcal{E}^{B_\ell}$ of $B_1$, \dots, $B_\ell$ subject to the conditions of the lemma.
    W.l.o.g. assume that the leaf blocks are indexed in such a way that for the first $p$ blocks, the angle of $v$ at the outer face is small or flat, for some $0 \le p \le \ell$, and for the remaining blocks the outer angle at $v$ is large. If $p > 0$, it follows from the conditions of the lemma that $v$ is a switch vertex in $P$ and in all the blocks $B_1$, \dots, $B_\ell$ except maybe one; if there is a block where $v$ is a non-switch vertex w.l.o.g. let that block be $B_p$.

    We first consequently deal with the blocks $B_1$, \dots, $B_p$, that is, we construct an upward planar embedding $\mathcal{E}_i$ of $G_i = G[V(G_p) \bigcup_{j = 1}^i V(B_j)]$ for each $i \in [p]$ from an upward planar embedding $\mathcal{E}_{i - 1}$ of $G_{i - 1}$, starting with the embedding $\mathcal{E}_0 = \mathcal{E}^P$ of $G_0 = G^P$. For each vertex of $G_i$ that is not $v$, we infer both the circular order of the incident edges and the angle assignment from $\mathcal{E}_{i - 1}$ or $\mathcal{E}^{B_i}$, depending on whether the vertex belongs to $B_i$ or not. To define a circular order on edges incident to $v$, split the ordering on neighbors of $v$ in $G_{i-1}$ at a large angle of $v$ (according to $\mathcal{E}_{i - 1}$), which necessarily exists as $v$ is a switch vertex of $G_{i-1}$, and join it with the ordering of neighbors of $v$ in $B_i$ split at the outer angle of $v$ in $\mathcal{E}^{B_i}$. This results in two new angles, for each of them, we assign a value of $-1$ if it is a switch angle, and $0$ if it is a flat angle. All the other angles around $v$ keep their value from either $\mathcal{E}_{i - 1}$ or $\mathcal{E}^{B_i}$. We now verify that the conditions \textbf{UP0}--\textbf{UP3} from Theorem~\ref{th:upward-conditions} hold for the vertex $v$ and the new face of the constructed embedding. \textbf{UP0}--\textbf{UP2} hold by construction,
next we verify \textbf{UP3} by considering two cases. In what follows, $f$ denotes the newly formed face. We assume that \textbf{UP3} is already verified for $\mathcal{E}_{i - 1}$.

    \textbf{Case 1.} The large angle at $v$ is inside the outer face of $\mathcal{E}_{i - 1}$. This necessarily implies $i = 1$, as each $B_i$ for $i \in [p]$ is embedded with a small angle at $v$ on the outer face. Since we embed $B_1$ in the outer angle of $v$, the newly formed face is the outer face. Observe that the set of angles at $f$ is precisely the union of the angles of $G_0$ and the angles of $B_1$ on the outer faces in the respective embeddings, except for the angles at $v$. The angle at $v$ on the outer face is necessarily $1$ in $\mathcal{E}_0$, and there are two cases for the value in $\mathcal{E}^{B_1}$. If the corresponding angle is a switch angle, then the value is $-1$, and the two newly formed angles at $v$ in $f$ by construction receive the value $-1$ (see Fig.~\ref{fig:single_connected_example_1}). If the angle is flat in $B_1$, its value is $0$ and one of the newly formed angles is also flat, receiving a value of $0$, while the other is a switch angle, receiving a value of $-1$ (see Fig.~\ref{fig:single_connected_example_2}). In either case, the sum of angle values on $f$ at $v$ is two less than the sum of two outer angles at $v$ in $G_0$ and $B_1$.
    Thus, summing up all angle values along $f$ we obtain two (sum of angles on the outer face of $G_0$) plus two (sum of angles on the outer face of $B_1$ minus two (difference in the angles at $v$), which satisfies \textbf{UP3} since $f$ is the outer face.

    \textbf{Case 2.} The large angle at $v$ is inside an inner face of $\mathcal{E}_{i - 1}$. In this case, $f$ is also an inner face in $\mathcal{E}_i$. Analogously to Case 1, summing up the angles on $f$ in $G_i$ is equivalent to summing angles on the corresponding faces in $G_{i - 1}$ and $B_i$ accounting for the change of angles at $v$. By the same arguments, the change at $v$ results in minus two, while the sum in $G_{i - 1}$ is minus two (inner face) and the sum in $B_i$ is two (outer face) by \textbf{UP3} in $\mathcal{E}_{i - 1}$ and $\mathcal{E}^{B_i}$, respectively. This sums up to minus two, fulfilling \textbf{UP3} for $f$ in $\mathcal{E}_i$.

\begin{figure}[htb]
		\centering
		\begin{subfigure}{.18\textwidth}
			\centering
			\includegraphics[width=\columnwidth, page=6]{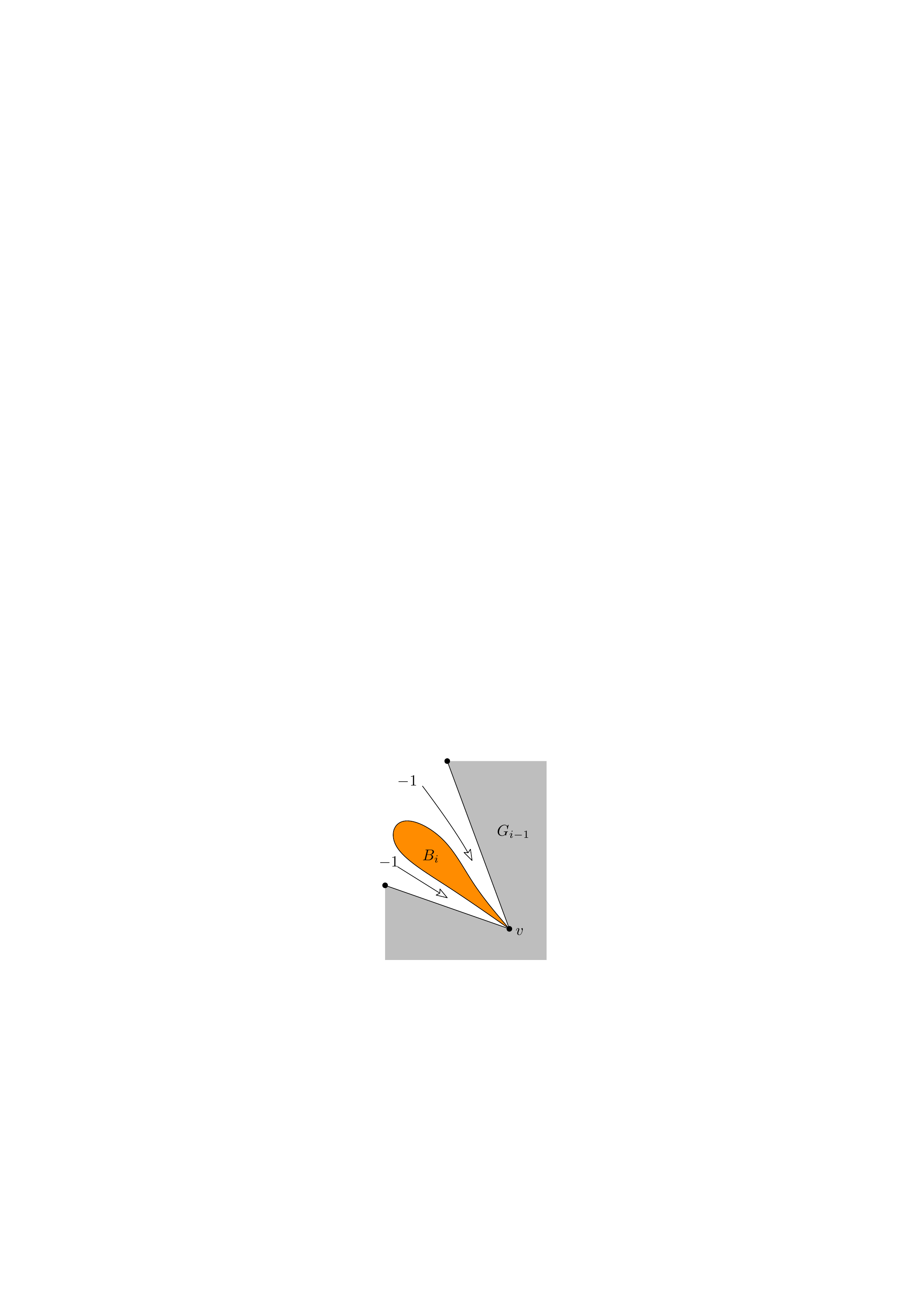}
			\subcaption{}
			\label{fig:single_connected_example_1}
		\end{subfigure}
		\hfil
		\begin{subfigure}{.18\textwidth}
			\centering
			\includegraphics[width=\columnwidth, page=5]{figures/single_connected}
			\subcaption{}
			\label{fig:single_connected_example_2}
		\end{subfigure}
		\hfil
		\begin{subfigure}{.18\textwidth}
			\centering
			\includegraphics[width=\columnwidth, page=2]{figures/single_connected}
			\subcaption{}
			\label{fig:single_connected_example_3}
		\end{subfigure}
		\hfil
		\begin{subfigure}{.18\textwidth}
			\centering
			\includegraphics[width=\columnwidth, page=4]{figures/single_connected}
			\subcaption{}
			\label{fig:single_connected_example_4}
		\end{subfigure}
		\hfil
		\begin{subfigure}{.18\textwidth}
			\centering
			\includegraphics[width=\columnwidth, page=3]{figures/single_connected}
			\subcaption{}
			\label{fig:single_connected_example_5}
		\end{subfigure}
		\caption{\label{fig:single_connected_example}Illustrations for the proof of Lemma~\ref{lem:children_block}.}
	\end{figure}
	
    It remains to extend the resulting upward planar embedding of $G_p$ by accounting for the block $B_{p + 1}$, \dots, $B_\ell$ for which there is an upward planar embedding with a large angle at $v$ on the outer face. Again, we consequently extend the embedding one block at a time, constructing an upward planar embedding $\mathcal{E}_i$ of $G_i = G[V(G_p) \bigcup_{j = 1}^i V(B_j)]$ for each $i \in [p+1..\ell]$. Fix $B_i$, $v$ is a switch vertex in $B_i$ since it has an adjacent large angle. We first define the embedding on every vertex except $v$, by repeating the respective circular order and angle assignment from $\mathcal{E}_{i - 1}$ or $\mathcal{E}^{B_i}$. Now, consider a flat or large angle $\alpha$ at $v$ in $G_{i - 1}$ as given by $\mathcal{E}_{i - 1}$, at least one necessarily exists by Theorem~\ref{th:upward-conditions}. Split the neighbor list of $v$ in $G_{i - 1}$ at $\alpha$, split the neighbor list of $v$ in $B_i$ at the large angle, then join these lists to obtain a circular order of neighbors of $v$ in $G_i$. Consider the two newly formed angles. If $\alpha$ is flat, one of them is flat and the other is a switch angle, assign $0$ to the former and $-1$ to the latter (see Fig.~\ref{fig:single_connected_example_3}). If $\alpha$ is a large angle, either the edges at $v$ are oriented differently in $G_{i - 1}$ and $B_i$, in which case both new angles are flat (see Fig.~\ref{fig:single_connected_example_4}), or $v$ is a switch vertex in $G_i$, then assign $1$ to one of the newly formed edges and $-1$ to the other (arbitrarily) (see Fig.~\ref{fig:single_connected_example_5}). By construction, the conditions \textbf{UP0}--\textbf{UP2} of Theorem~\ref{th:upward-conditions} are satisfied at $v$, it remains to verify \textbf{UP3} for the newly formed face $f$. Analogously to the case $i \le p$, the sum of angles on $f$ can be computed as the sum of three parts: the sum of angles in $G_{i - 1}$ and $B_i$ on the corresponding faces, and the change in the angle values at $v$. The first part is two or minus two, depending on whether the large angle at $v$ is on the outer face of $G_{i - 1}$ or not, but it matches exactly the target value for $f$. The second part is always two since $v$ is on the outer face of $B_i$. It can be easily seen that the third part is exactly minus two, in each of the three cases above. Thus, \textbf{UP3} is satisfied for $f$, and $\mathcal{E}_i$ is a required upward planar embedding. Setting $i = \ell$ finishes the proof.
\end{proof}\fi

\iflong
By proceeding upwards along the block-cut tree of $G$, Lemma~\ref{lem:R_node_general} follows shortly from Lemma~\ref{lem:children_block}.

\begin{proof}[Proof of Lemma~\ref{lem:R_node_general}]
    
    The algorithm proceeds as follows. First, compute the block-cut tree $T$ of $G$. Then, branch on the choice of the root block $B^R$ of $T$. The algorithm now verifies whether there is an upward planar embedding of $G$ where at least one edge of $B^R$ is incident to the outer face. To this end, the algorithm iteratively checks the conditions of Lemma~\ref{lem:children_block} in the bottom-up fashion. Specifically, the algorithm picks a cut vertex $v$ in $T$, such that all its adjacent blocks are leaves in $T$, except for the parent block $P$. For each of the leaf blocks $B_1$, \dots, $B_\ell$, the algorithm uses Lemma~\ref{lem:R_node_biconnected} to compute possible shape descriptions for every upward planar drawing where $v$ lies on the outer face. That is, for every edge $e$ incident to $v$ in $B_i$, the algorithm constructs an SPQR-tree decomposition of $B_i$ rooted at a Q-node corresponding to $e$, and runs Lemma~\ref{lem:R_node_biconnected} on this decomposition.

    If some of the blocks does not admit any feasible shape at this stage, the algorithm immediately concludes that $G$ is a no-instance. Additionally, if $v$ is a non-switch vertex in $P$, the algorithm checks whether for each $i \in [\ell]$ there exists a feasible shape description which assigns a large angle to the outer face at $v$. Since for each $e = (v, u) \in E(B_i)$, we have computed the shape descriptions of every upward planar embedding of $B_i$ with edge $e$ on the outer face via Lemma~\ref{lem:R_node_biconnected}, this information is readily available to the algorithm. 
If for some component the condition does not hold, the algorithm finishes with reporting that $G$ is a no-instance. Otherwise, the algorithm removes the blocks $B_1$, \dots, $B_\ell$ from the graph and proceeds. The algorithm continues in this fashion until only the root block $B^R$ is left, where it finally uses Lemma~\ref{lem:R_node_biconnected} to check whether $B^R$ admits an upward planar drawing.
That is, for every edge $e \in E(B^R)$, the algorithm constructs an SPQR-tree decomposition of $B^R$ rooted at $e$ and runs Lemma~\ref{lem:R_node_biconnected} on this decomposition. If for each $e$ this results in an empty set of feasible shapes, the algorithm concludes that $G$ is a no-instance. Otherwise, the algorithm returns that $G$ is a yes-instance.

\textbf{Correctness.} 
    First, assume that the algorithm concluded that $G$ admits an upward planar embedding.
    Then, there is at least one feasible shape description of $B^R$ returned by the algorithm of Lemma~\ref{lem:R_node_biconnected}, which means that there is an upward planar embedding of $B^R$. By going through the sequence of considered cut vertices $v$ in the reverse order, the embedding can be extended to all children blocks containing $v$ by Lemma~\ref{lem:children_block}. Specifically, fix a cut vertex $v$ that splits the graph into the leaf blocks $B_1$, \dots, $B_\ell$ and the subgraph $G^P$ that contains the parent block, and assume that the existence of an upward planar embedding for $G^P$ is already shown. Now, by Lemma~\ref{lem:children_block} this embedding extends to an upward planar embedding of $G[V(G^P) \cup V(B_1) \cup \dots \cup V(B_\ell)]$. When the whole block-cut tree $T$ is traversed back in this way, we obtain an upward planar embedding of $G$.
    
    In the other direction, we argue that if $G$ has an upward planar embedding $\mathcal{E}^*$, the algorithm necessarily returns one as well (not necessarily the same). In $\mathcal{E}^*$, there is an edge $e$ that is incident to the outer face, consider the block $B^R$ that contains $e$ and the branch of the algorithm where $B^R$ is set to be the root block of the block-cut tree $T$ of $G$. Since $G$ admits an upward planar embedding, each block $B$ in $T$ admits one as well. Here we assume that a particular upward planar drawing of $G$ is fixed as well, and an upward planar drawing for its subgraph is induced naturally by removing the respective part of the drawing of $G$. We now show that no step of the algorithm can result with reporting a no-instance, this immediately implies that the algorithm returns an upward planar embedding of $G$.

    Fix a graph $G'$ that is obtained from $G$ by removing a sequence of leaf blocks (as during the work of our algorithm), $\mathcal{E}^*$ induces an upward planar embedding of $G'$ as well. If $G'$ consists of a single biconnected block, our algorithm successfully decides whether there exists an upward planar embedding of this block by Lemma~\ref{lem:R_node_biconnected}. Otherwise, consider a cut-vertex $v$ with adjacent children leaf blocks $B_1$, \dots, $B_\ell$ and the parent block $P$, denote the component of $G' - v$ that contains $P$ by $G_P$. For each $B_1$, \dots, $B_\ell$, $G_P$ fix an upward planar embedding that is induced by $\mathcal{E}^*$. Since $G_P$ contains $B_R$, there is an edge $e$ in $G_P$ that is incident to the outer face. This immideately implies that for each $B_i$, $v$ is on the outer face in its embedding --- otherwise the whole $G_P$ together with $e$ lies in an inner face of an embedding of $B_i$, thus contradicting the choice of $\mathcal{E}^*$ and $e$. This verifies the first condition of Lemma~\ref{lem:children_block}, we now move to the second condition. Let $v$ be a non-switch vertex in $P$, then by the expansion procedure $v$ is a switch vertex in all $B_1$, \dots, $B_\ell$. Assume there is a block $B_i$ where $v$ has a small angle on the outer face, in its fixed embedding. Then by the condition \textbf{UP1} of Theorem~\ref{th:upward-conditions}, $v$ has a large angle in one of the inner faces of $B_i$. However, since by the discussion above $P$ is embedded in the outer face of $B_i$, the flat angles at $v$ will end up being inside the small angle on the outer face of $B_i$. This contradicts the fact that $\mathcal{E}^*$ is an upward planar embedding. Thus, at every intermediate step of the algorithm Lemma~\ref{lem:children_block} is invoked successfully.

    \textbf{Running time.}
    For a fixed rooting of the block-cut tree $T$, on each non-root block $B$ our algorithm runs the algorithm of Lemma~\ref{lem:R_node_biconnected} on $B$ for each edge incident to $v$, where $v$ is the parent cut-vertex of this block. Additionally, on the root block $B^R$ Lemma~\ref{lem:R_node_biconnected} is invoked once for every edge of $B^R$. Thus, over all possible rootings of $T$, Lemma~\ref{lem:R_node_biconnected} is run once on each block per edge of the block, and once for each edge incident to a cut vertex.
    The latter holds since if our algorithm queries Lemma~\ref{lem:R_node_biconnected} several times with the same block and the same root edge $e$, there is no need to recompute the set of admissible shape descriptions as it only depends on the block itself.
    Therefore, the running time is dominated by invocations of Lemma~\ref{lem:R_node_biconnected}, where the number of invocations on a block is proportional to the number of vertices of the block, and that takes time
    $$\sum_{B \in T} \bigoh(|B| (\alpha(B) + \tau^2 \cdot |B|)),$$
    Since the blocks of $T$ intersect only in cut-vertices, and the amount of intersections is linear in $n$, the running time can be upper-bounded by $n (\alpha(G) + \tau^2 \cdot n)$, as required (here we also use that by definition $\alpha(B) \le \alpha(G)$ for each block $B$ of $G$).
   Every other part of the algorithm (e.g. constructing the block-cut tree) takes linear time.
\end{proof}\fi

\section{An Algorithm Parameterized by the Number of Sources}\label{sec:fpt_sources}
\setboolean{longShapeDesc}{true}

\ifshort
Let $G$ be an acyclic digraph with $n$ vertices and $\sigma$ sources, whose underlying graph is planar. In order to obtain an algorithm for {\sc Upward Planarity} parameterized by $\sigma$, in view of Lemma~\ref{lem:R_node_general}, we devise an R-node subprocedure whose runtime depends on $\sigma$ and, polynomially, on $n$. 
We hence assume that $G$ is biconnected and that has been expanded. Let $e^*$ be any edge of $G$; we compute an SPQR-tree $T$ of $G$ rooted at the Q-node representing $e^*$ in $\bigoh(n)$ time \cite{dt-opl-96,gm-lti-00}. A key ingredient of our algorithm is the following.


\begin{lemma} \label{le:turn-set-range}
Let $\mu$ be a node of $T$, let $u$ and $v$ be the poles of $\mu$, let $\sigma_{\mu}$ be the number of sources of $G_{\mu}$ different from its poles, and let $\mathcal E_{\mu}$ be any $uv$-external upward planar embedding of $G_{\mu}$.  The left- and right-turn-numbers of $\mathcal E_{\mu}$ are in the interval $[-2\sigma_{\mu}-1,2\sigma_{\mu}+1]$. Furthermore, the size of the feasible set $\mathcal{F}_{\mu}$ of $\mu$ is at most $72\sigma_{\mu}+54$.
\end{lemma}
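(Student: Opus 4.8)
The plan is to first reduce the bound on $|\mathcal{F}_\mu|$ to the range bound on the turn-numbers, and then to establish the latter by a counting argument along the outer paths.

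\textbf{Reducing to the turn-number bound.} Every shape description in $\mathcal{F}_\mu$ is realized by some $uv$-external upward planar embedding $\mathcal{E}_\mu$ of $G_\mu$. Hence, once we know that the left-turn-number $\tau_l(\mathcal{E}_\mu,u,v)$ of every such embedding lies in $[-2\sigma_\mu-1,2\sigma_\mu+1]$, the left-turn-number appearing in a shape description of $\mathcal{F}_\mu$ takes at most $4\sigma_\mu+3$ distinct values. By Lemma~\ref{lem:matrix_feasible} there are at most $18$ shape descriptions sharing a given left-turn-number, so $|\mathcal{F}_\mu|\le 18(4\sigma_\mu+3)=72\sigma_\mu+54$. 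Thus the whole statement follows once we show that in every $uv$-external upward planar embedding $\mathcal{E}_\mu$ of $G_\mu$ both $\tau_l(\mathcal{E}_\mu,u,v)$ and $\tau_r(\mathcal{E}_\mu,u,v)$ lie in $[-2\sigma_\mu-1,2\sigma_\mu+1]$.

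\textbf{Upper bound on the turn-numbers.} I would first prove $\tau_l(\mathcal{E}_\mu,u,v)\le 2\sigma_\mu+1$; the bound on $\tau_r$ is then obtained by the identical argument applied to the right outer path $P_r$. Write $P_l=\langle v_0=u,v_1,\dots,v_k=v\rangle$, so that $\tau_l(\mathcal{E}_\mu,u,v)=\sum_{i=1}^{k-1}\lambda(\alpha_i)$ where $\alpha_i$ is the angle at $v_i$ in the outer face $f_0$. Since $G_\mu\cup\{uv\}$ is biconnected, adding the edge $uv$ inside $f_0$ turns $P_l$ into a facial cycle, so $P_l$ is a simple path and the sources of $G_\mu$ among $v_1,\dots,v_{k-1}$ are pairwise distinct; there are $a\le\sigma_\mu$ of them, and they split $P_l$ into $a+1$ subpaths (``pieces''), each piece being a maximal subpath of $P_l$ whose interior contains no source of $G_\mu$. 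If $v_i$ is a local minimum of $P_l$ interior to some piece, then both $P_l$-edges at $v_i$ are outgoing, and since $v_i$ is not a source it also has an incoming edge, so $v_i$ is a non-switch vertex and $\alpha_i$ is a switch angle; by property \textbf{UP2} of Theorem~\ref{th:upward-conditions} (which forces $n_1(v_i)=0$) we get $\lambda(\alpha_i)=-1$. On each piece the interior local maxima and minima of $P_l$ alternate, so the number of interior local maxima of a piece exceeds the number of its interior local minima by at most one; as $\lambda(\cdot)\le 1$ always, the total contribution of the interior vertices of one piece to $\tau_l(\mathcal{E}_\mu,u,v)$ is at most $1$. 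Adding the contributions $\lambda(\alpha_i)\le 1$ of the $a$ source cut-vertices of $P_l$ yields $\tau_l(\mathcal{E}_\mu,u,v)\le a+(a+1)=2a+1\le 2\sigma_\mu+1$.

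\textbf{Lower bound and wrap-up.} Rather than bounding $\tau_l$ from below directly (which would require controlling the sinks of $G_\mu$ on $P_l$), I would deduce the lower bounds from the linear dependence $\tau_l(\mathcal{E}_\mu,u,v)+\tau_r(\mathcal{E}_\mu,u,v)+\lambda(\mathcal{E}_\mu,u)+\lambda(\mathcal{E}_\mu,v)=2$ of Observation~\ref{obs:dependence-parameters} together with $\lambda(\mathcal{E}_\mu,u),\lambda(\mathcal{E}_\mu,v)\le 1$: this gives $\tau_l(\mathcal{E}_\mu,u,v)\ge -\tau_r(\mathcal{E}_\mu,u,v)\ge -(2\sigma_\mu+1)$, and symmetrically $\tau_r(\mathcal{E}_\mu,u,v)\ge -(2\sigma_\mu+1)$. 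Hence both turn-numbers lie in $[-2\sigma_\mu-1,2\sigma_\mu+1]$, which together with the first paragraph finishes the proof. The main obstacle is the upper-bound step: one has to find the right way to charge the angles labelled $-1$, and the crucial structural input is that a local minimum of $P_l$ which is not a source is necessarily a non-switch vertex, so \textbf{UP2} pins its angle to $-1$; the minor subtlety that $P_l$ might a priori be a closed-up facial walk when $G_\mu$ is not $2$-connected is neutralized by the $st$-biconnectibility of $G_\mu$, which forces $P_l$ to be a simple path.
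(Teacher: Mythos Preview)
Your proof is correct. The reduction of the feasible-set bound to the turn-number range (via Lemma~\ref{lem:matrix_feasible}) and the derivation of the lower bound from the identity $\tau_l+\tau_r+\lambda(\mathcal E_\mu,u)+\lambda(\mathcal E_\mu,v)=2$ match the paper exactly. The difference lies in how the upper bound $\tau_l\le 2\sigma_\mu+1$ is obtained.

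The paper proves this by induction on the prefix length $m$ of $P_l$, maintaining a two-state invariant that depends on whether the most recent large angle along the prefix sat at a source or at a sink; the inductive step splits into seven cases. Your argument instead cuts $P_l$ at its $a\le\sigma_\mu$ non-pole sources into $a+1$ pieces and bounds each piece's interior contribution by~$1$, using the structural fact that an interior local minimum of $P_l$ which is not a source of $G_\mu$ must be a non-switch vertex, hence its (switch) angle is forced to $-1$ by \textbf{UP2} directly. The paper obtains the same $-1$ for such vertices, but only indirectly, through the inductive hypothesis that no large angle occurs strictly after the index $p_m$. Your packaging avoids the case analysis and makes the role of \textbf{UP2} explicit; the paper's induction, on the other hand, never needs to argue that $P_l$ is simple (though it implicitly uses it when incrementing the source count), whereas you do need it to ensure the $a$ sources along $P_l$ are distinct and to count pieces---your appeal to $st$-biconnectibility handles this correctly.

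Two small points worth tightening in a final write-up: you should say explicitly that interior vertices that are neither local maxima nor local minima of $P_l$ have a flat angle and hence $\lambda=0$ by \textbf{UP0} (your phrase ``as $\lambda(\cdot)\le 1$ always'' alone would overcount them); and the alternation you invoke is alternation of local extrema along the whole path $P_l$, which restricts to any contiguous sub-range and therefore to the interior of each piece---this is what gives $\#\text{max}-\#\text{min}\le 1$ there.
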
	

Let $\mu$ be an R-node of $T$ with children  $\nu_1,\dots,\nu_k$. Let $u$ and $v$ be the poles of $\mu$,  $\sigma_{\mu}$ be the number of sources of $G_{\mu}$ different from its poles; for $i=1,\dots,k$, let $u_i$ and $v_i$ be the poles of $\nu_i$ and $e_i$ be the virtual edge representing $\nu_i$ in the skeleton $\textrm{sk}(\mu)$ of $\mu$. We give an algorithm that computes $\mathcal{F}_{\mu}$ from the feasible sets $\mathcal{F}_{\nu_1},\dots,\mathcal{F}_{\nu_k}$ in $\bigoh(\sigma 1.45^\sigma \cdot k\log^3 k)$ time.

We introduce two classifications of the components of $G_\mu$. A component $G_{\nu_i}$ is \emph{interesting} if it contains sources other than its poles, and \emph{boring} otherwise. Because $G$ has $\sigma$ sources, at most $\sigma$ components among $G_{\nu_1},\dots,G_{\nu_k}$ are interesting, while any number of components can be boring. Second, a component $G_{\nu_i}$ is \emph{extreme} if $e_i$ is incident to a pole of $\mu$ and is incident to the
face containing $u$ and $v$ of any planar embedding of $\textrm{sk}(\mu)$, and \emph{non-extreme} otherwise. Note that there are four
 extreme components among $G_{\nu_1},\dots,G_{\nu_k}$, because there are exactly two virtual edges incident to each of $u$ and $v$ in the considered face. We can order $G_{\nu_1},\dots,G_{\nu_k}$ in $O(k\log k)$ time so that all the extreme or interesting components come first. 

\begin{figure}[!t]
	\centering
	{\centering
		\begin{subfigure}{.19\textwidth}
			\includegraphics[page=1]{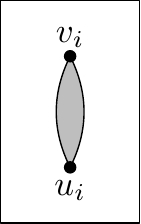}
			\subcaption{}
			\label{fig:boring_sausage-short}
		\end{subfigure}
	}		
	\hfil
	{\centering
		\begin{subfigure}{.19\textwidth}
			\includegraphics[page=2]{figures/boring_comps}
			\subcaption{}
			\label{fig:boring_right-short}
		\end{subfigure}
	}\hfil
	{\centering
		\begin{subfigure}{.19\textwidth}
			\includegraphics[page=3]{figures/boring_comps}
			\subcaption{}
			\label{fig:boring_left-short}
		\end{subfigure}
	}\hfil
	{\centering
		\begin{subfigure}{.19\textwidth}
			\includegraphics[page=5]{figures/boring_comps}
			\subcaption{}
			\label{fig:boring_hat-short}
		\end{subfigure}
	}\hfil
	{\centering
		\begin{subfigure}{.19\textwidth}
			\includegraphics[page=4]{figures/boring_comps}
			\subcaption{}
			\label{fig:boring_heart-short}
		\end{subfigure}
	}
	\caption{Shape descriptions of boring components.}\label{fig:boring_comps-short}
\end{figure}

Despite their name, boring components play an important role in our algorithm. A key feature is that a $u_iv_i$-external upward planar embedding of a boring component $G_{\nu_i}$ can only have one of $O(1)$ shape descriptions: the {\bf sausage} $\shapeDesc{0}{0}{1}{1}{\textrm{out}}{\textrm{out}}{\textrm{in}}{\textrm{in}}$, see Fig.~\ref{fig:boring_sausage-short}; the {\bf inverted-sausage} $\shapeDesc{0}{0}{1}{1}{\textrm{in}}{\textrm{in}}{\textrm{out}}{\textrm{out}}$, see Fig.~\ref{fig:boring_sausage-short} with $u_i$ and $v_i$ inverted; the {\bf right-wing} $\shapeDesc{0}{1}{1}{0}{\textrm{out}}{\textrm{out}}{\textrm{in}}{\textrm{out}}$, see Fig.~\ref{fig:boring_right-short}; the {\bf inverted-right-wing} $\shapeDesc{1}{0}{0}{1}{\textrm{out}}{\textrm{in}}{\textrm{out}}{\textrm{out}}$, see Fig.~\ref{fig:boring_right-short} with $u_i$ and $v_i$ inverted; the {\bf left-wing} $\shapeDesc{1}{0}{1}{0}{\textrm{out}}{\textrm{out}}{\textrm{out}}{\textrm{in}}$, see Fig.~\ref{fig:boring_left-short}; the {\bf inverted-left-wing} $\shapeDesc{0}{1}{0}{1}{\textrm{out}}{\textrm{in}}{\textrm{out}}{\textrm{out}}$, see Fig.~\ref{fig:boring_left-short} with $u_i$ and $v_i$ inverted; the {\bf hat} $\shapeDesc{-1}{1}{1}{1}{\textrm{out}}{\textrm{out}}{\textrm{out}}{\textrm{out}}$, see Fig.~\ref{fig:boring_hat-short}; the {\bf inverted-hat} $\shapeDesc{1}{-1}{1}{1}{\textrm{out}}{\textrm{out}}{\textrm{out}}{\textrm{out}}$, see Fig.~\ref{fig:boring_hat-short}  with $u_i$ and $v_i$ inverted; the {\bf heart} $\shapeDesc{1}{1}{1}{-1}{\textrm{out}}{\textrm{out}}{\textrm{out}}{\textrm{out}}$, see Fig.~\ref{fig:boring_heart-short}; and the {\bf inverted-heart} $\shapeDesc{1}{1}{-1}{1}{\textrm{out}}{\textrm{out}}{\textrm{out}}{\textrm{out}}$, see Fig.~\ref{fig:boring_heart-short}  with $u_i$ and $v_i$ inverted. Furthermore, we can prove that not all such shape descriptions can occur simultaneously in the feasible set  of a node $\nu_i$ and that some shape descriptions are ``better'' than others. This allows us to assume that the feasible set of a node $\nu_i$ contains: only the sausage, or only the inverted-sausage, or only the left-wing and the right-wing, or only the inverted-left-wing and the inverted-right-wing, or only the hat and the inverted-hat, or only the heart, or only the inverted-heart, or only the heart and the inverted-heart. 

We test independently whether each shape description $s=\shapeDesc{\tau_l}{\tau_r}{\lambda_u}{\lambda_v}{\rho_{l,u}}{\rho_{r,u}}{\rho_{l,v}}{\rho_{r,v}}$, where $\tau_l \in [-2\sigma_{\mu}-1,2\sigma_{\mu}+1]$, $\tau_r \in [-\tau_l,-\tau_l+4]$, $\lambda_u\in \{-1,0,1\}$, $\lambda_v\in \{-1,0,1\}$, $\rho_{l,u}\in \{\textrm{in},\textrm{out}\}$, $\rho_{r,u}\in \{\textrm{in},\textrm{out}\}$, $\rho_{l,v}\in \{\textrm{in},\textrm{out}\}$, and $\rho_{r,v}\in \{\textrm{in},\textrm{out}\}$ belongs to $\mathcal{F}_{\mu}$ or not. Note that $\tau_l \in [-2\sigma_{\mu}-1,2\sigma_{\mu}+1]$ and $\tau_r \in [-\tau_l,-\tau_l+4]$ can be assumed without loss of generality by Lemmata~\ref{le:turn-set-range} and~\ref{lem:shape_desc_values}, respectively, thus the number of shape descriptions to be tested is in $\bigoh(\sigma_{\mu})$. We select shape descriptions $s_1\in \mathcal{F}_{\nu_1},\dots, s_h\in \mathcal{F}_{\nu_h}$ for the extreme or interesting components $G_{\nu_1},\dots, G_{\nu_h}$ of $G_{\mu}$. Clearly, the number $\ell$ of ways this selection can be done is $\ell=\prod_{i=1}^h |\mathcal{F}_{\nu_i}|$; by exploiting the bound on $|\mathcal{F}_{\nu_i}|$ given by Lemma~\ref{le:turn-set-range}, we can prove that $\ell\in \bigoh(1.45^\sigma)$.
 We also fix $\mathcal S_{\mu}$ to be a planar embedding of the skeleton $\textrm{sk}(\mu)$ of $\mu$ in which $u$ and $v$ are incident to the outer face. Since $\mu$ is an R-node, there are two such planar embeddings, which are flips of each other. The goal now becomes the one of testing whether $G_{\mu}$ admits a $uv$-external upward planar embedding $\mathcal E_{\mu}$ such that: (P1) the shape description of $\mathcal E_{\mu}$ is $s$;  (P2) for $i=1,\dots,h$, the $u_iv_i$-external upward planar embedding $\mathcal E_{\nu_i}$ of $G_{\nu_i}$ in $\mathcal E_{\mu}$ has shape description $s_i$; and (P3) the planar embedding of $\textrm{sk}(\mu)$ induced by $\mathcal E_{\mu}$ is $\mathcal S_{\mu}$. Then we have that $s$ belongs to $\mathcal{F}_{\mu}$ if and only if this test is successful for at least one selection of the shape descriptions $s_1,\dots, s_h$ and of the planar embedding $\mathcal S_{\mu}$.

We now borrow ideas from an algorithm by Bertolazzi et al.~\cite{bdl-udtd-94} for testing the upward planarity of a digraph $D$ with a prescribed planar embedding $\mathcal E$. 
The algorithm in~\cite{bdl-udtd-94} constructs a bipartite flow network $\mathcal N(S,T,A)$, where each source $s_w\in S$ corresponds to a switch vertex $w$ of $D$, each sink $t_f\in T$ corresponds to a face $f$ of $\mathcal E$, and $A$ has an arc from $s_w$ to $t_f$ if $w$ is incident to $f$. A unit of flow passing from $s_w$ to $t_f$ corresponds to a large angle at $w$ in $f$. Each source  supplies $1$ unit of flow, each arc has capacity  $1$, and each sink $t_f$ demands  $n_f/2-1$ units of flow if $f$ is an internal face of $\mathcal E$  and $n_f/2+1$ if $f$ is the outer face of $\mathcal E$, where $n_f$ is the number of switch angles incident to $f$. Then $D$ has an upward planar embedding which respects $\mathcal E$ if and only if $\mathcal N$ has a flow in which each sink is supplied with a number of units of flow equal to its demand.

After some preliminary checks, which ensure that the values $s,s_1,\dots,s_h,\mathcal S_{\mu}$ are ``coherent'' with each other, we also construct a flow network $\mathcal N(S,T,A)$. Note that the skeleton $\textrm{sk}(\mu)$ of our R-node $\mu$ has a prescribed planar embedding $\mathcal S_{\mu}$. However, the edges of $\textrm{sk}(\mu)$ are not actual edges, but rather virtual edges that correspond to components of $G_{\mu}$. These components introduce new sources, sinks, and arcs in $\mathcal N$, and contribute to the demands of their incident faces. As we have already fixed the shape description $s_i$ of each extreme or interesting component $G_{\nu_i}$, we know the excess of large angles with respect to small angles ``on the sides'' of $G_{\nu_i}$, as these are the first two values of $s_i$. These values introduce sources (if they are positive) and contribute to the demands of the faces of $\mathcal S_{\mu}$ incident to $e_i$. Handling non-extreme boring components is more challenging. Each boring component has at most two shape descriptions in its feasible set, however the number of such components is not, in general, bounded by a function of $\sigma$ only, hence we cannot try all possible combinations for their shape descriptions. Rather, we plug the freedom of choosing a shape description for each non-extreme boring component directly into the flow network. For example, a component $G_{\nu_i}$ such that $\mathcal F_{\nu_i}$ contains the hat and the inverted-hat is modeled by a source with two incident arcs to the faces of $\mathcal S_{\mu}$ incident to $e_i$, reflecting the fact that each of the two shape descriptions provides a large angle in a different face incident to $e_i$. As another example, a component $G_{\nu_i}$ such that $\mathcal F_{\nu_i}$ contains the left-wing and the right-wing also provides a large angle in a different face incident to $e_i$ depending on the choice of the shape description, however in this case the choice might also affect whether a pole of the component creates a switch angle in a face of $\mathcal S_{\mu}$ or not, which affects the demand of the face. This is solved either by ``ignoring'' the component, or by transfering its effect to an adjacent non-switch vertex. 

\begin{figure}[htb]
	\centering
	\begin{subfigure}{.5\textwidth}
		\includegraphics[page=14]{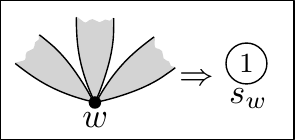}
		\subcaption{}
		\label{fig:network_example_1-short}
	\end{subfigure}
	\hfil
	\begin{subfigure}{.3\textwidth}
		\includegraphics[page=15]{figures/Flow-construction}
		\subcaption{}
		\label{fig:network_example_2-short}
	\end{subfigure}
	\\
	\begin{subfigure}{.3\textwidth}
		\includegraphics[page=16]{figures/Flow-construction}
		\subcaption{}
		\label{fig:network_example_3-short}
	\end{subfigure}
	\hfil
	\begin{subfigure}{.3\textwidth}
		\includegraphics[page=13]{figures/Flow-construction}
		\subcaption{}
		\label{fig:network_example_4-short}
	\end{subfigure}
	\caption{The construction of a flow network $\mathcal N$ that allows us to determine whether a shape description $s=\langle 1,0,-1,0,\textrm{in},\textrm{out},\textrm{in},\textrm{in}\rangle$ belongs to $\mathcal F_{\mu}$. (a) shows the input: a shape description $s_i$ for each extreme or interesting component $G_{\nu_i}$ of $G_{\mu}$ and the feasible set $\mathcal F_{\nu_i}$ for each non-extreme boring component $G_{\nu_i}$ of $G_{\mu}$. (b) shows $\mathcal N$; arc capacities are not shown (each of them is equal to the supply of the source of the arc). (c) shows a flow for $\mathcal N$ in which every sink receives an amount of flow equal to its demand; each shown arc is traversed by a flow equal to its capacity. (d) shows a $uv$-external upward planar embedding of $G_{\mu}$ with shape description $s$ corresponding to the flow.}\label{fig:network_example-short}
\end{figure}

Figure~\ref{fig:network_example-short} 
shows an example of the construction of $\mathcal N$. We have that $\mathcal N$ has $\bigoh(k)$ nodes and arcs. We test whether every sink has a non-negative demand and whether $\mathcal N$ admits a flow in which every sink receives an amount of flow equal to its demand. The latter can be done in $\bigoh(k\log^3 k)$ time by means of an algorithm by Borradaile et al.~\cite{bkm-msms-17}. We conclude that $G_{\mu}$ admits a $uv$-external upward planar embedding satisfying Properties~P1--P3 if and only if the tests are successful. This leads to the following.

\begin{lemma}  \label{lem:R_node_sources-short}
The feasible set $\mathcal F_{\mu}$ of an R-node $\mu$ of $T$ can be computed in $\bigoh(\sigma 1.45^\sigma \cdot k\log^3 k)$ time, where $k$ is the number of children of $\mu$ in $T$ and $\sigma$ is the number of sources of $G$. 
\end{lemma}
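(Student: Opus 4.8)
The plan is to give an \textbf{R-node subprocedure}: for a fixed R-node $\mu$ of $T$ with poles $u,v$ and children $\nu_1,\dots,\nu_k$, reconstruct $\mathcal F_\mu$ from $\mathcal F_{\nu_1},\dots,\mathcal F_{\nu_k}$ within the stated time. First I would fix one of the two planar embeddings $\mathcal S_\mu$ of the skeleton $\textrm{sk}(\mu)$ having $u,v$ on the outer face (there are exactly two, related by a flip), and then decide membership in $\mathcal F_\mu$ independently for each candidate shape description $s=\shapeDesc{\tau_l}{\tau_r}{\lambda_u}{\lambda_v}{\rho_{l,u}}{\rho_{r,u}}{\rho_{l,v}}{\rho_{r,v}}$ of $G_\mu$. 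By Lemma~\ref{le:turn-set-range} we may assume $\tau_l\in[-2\sigma_\mu-1,2\sigma_\mu+1]$, by Lemma~\ref{lem:shape_desc_values} the right-turn-number $\tau_r$ then ranges over only five values, and the remaining four entries take $\bigoh(1)$ values subject to the dependencies recalled in Section~\ref{sec:comp_shapes}; hence the number of candidate shapes to test is $\bigoh(\sigma_\mu)=\bigoh(\sigma)$.

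For a fixed pair $(s,\mathcal S_\mu)$, I would first sort the children in $\bigoh(k\log k)$ time so that the at most four \emph{extreme} components and the at most $\sigma$ \emph{interesting} ones (those containing a non-pole source of $G$) come first as $G_{\nu_1},\dots,G_{\nu_h}$, and then iterate over all tuples $(s_1,\dots,s_h)$ with $s_i\in\mathcal F_{\nu_i}$. Writing $\sigma_i$ for the number of non-pole sources of $G_{\nu_i}$, the bound $|\mathcal F_{\nu_i}|\le 72\sigma_i+54$ from Lemma~\ref{le:turn-set-range}, together with a pruning of the feasible sets of interesting components analogous to the ``better-than'' pruning used for boring components below, and with the facts that $\sigma_1+\dots+\sigma_h\le\sigma$ and $\sigma_i\ge1$ for interesting components, lets one bound the number $\ell$ of such tuples by $\bigoh(1.45^\sigma)$: after pruning one can take the per-component contribution to be linear in $\sigma_i$, so $\ell$ is governed by $\max\prod_i\sigma_i$ over integer partitions $\sum_i\sigma_i\le\sigma$, which is $\bigoh(3^{\sigma/3})$, and $3^{1/3}<1.45$. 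Pinning down exactly how much of each interesting component must be guessed so that this product stays below $1.45^\sigma$ is one of the two technical cores of the lemma.

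Given $s$, $\mathcal S_\mu$, and $(s_1,\dots,s_h)$, the goal becomes to test whether $G_\mu$ admits a $uv$-external upward planar embedding $\mathcal E_\mu$ realizing Properties P1--P3 of the surrounding text. After $\bigoh(k)$ local consistency checks verifying that $s$, the $s_i$, and $\mathcal S_\mu$ are mutually coherent (pole-angle orientations agreeing across adjacent virtual edges, and the prescribed turn-numbers being compatible with the turning around each skeleton face via \textbf{UP3}), I would build a planar flow network $\mathcal N$ in the spirit of Bertolazzi et al.~\cite{bdl-udtd-94}: a unit-supply source for each switch vertex of $G_\mu$ visible in $\textrm{sk}(\mu)$ that is not absorbed into a contracted component, a sink for each face of $\mathcal S_\mu$ whose demand counts switch angles on the face (corrected by $\pm1$ on the outer face), and unit-capacity arcs joining a source to a sink when the vertex lies on the face. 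The two key additions are: (i) each extreme or interesting component $G_{\nu_i}$, whose shape $s_i$ is now fixed, contributes fixed sources accounting for the excess of large over small angles on its two sides (read off from the first two entries of $s_i$) and fixed contributions to the demands of the two faces of $\mathcal S_\mu$ incident to its virtual edge; and (ii) each non-extreme \emph{boring} component $G_{\nu_i}$, for which $\mathcal F_{\nu_i}$ is first reduced to one of a constant number of canonical one- or two-element families (sausage, inverted-sausage, the left-wing/right-wing pair, its inverse, the hat/inverted-hat pair, heart, inverted-heart, the heart/inverted-heart pair), is replaced by a small gadget encoding precisely the freedom of choosing its shape --- a source with arcs to the two incident faces for hat/inverted-hat, the analogous gadget together with the options of ``ignoring'' the component or transferring its pole effect to an adjacent non-switch vertex for the wing families. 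I would then prove that $\mathcal N$ admits a flow saturating all sinks if and only if such an $\mathcal E_\mu$ exists; correctness of the boring-component gadgets and of the transfer operation is the second technical core.

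Since $\textrm{sk}(\mu)$ is planar and all gadgets are local, $\mathcal N$ has $\bigoh(k)$ nodes and arcs and is planar, so after checking that every demand is non-negative the existence of a saturating flow can be decided in $\bigoh(k\log^3 k)$ time by the planar multiple-source multiple-sink max-flow algorithm of Borradaile et al.~\cite{bkm-msms-17}. Declaring $s\in\mathcal F_\mu$ exactly when some triple $(s,(s_1,\dots,s_h),\mathcal S_\mu)$ produces such a flow, and inserting $s$ into $M(\mu)$ in $\bigoh(1)$ time, yields all of $\mathcal F_\mu$; the running time is $\bigoh(\sigma)$ candidate shapes times $\bigoh(1.45^\sigma)$ tuples times $\bigoh(1)$ skeleton embeddings times $\bigoh(k\log^3 k)$ per network (the single sorting step and the per-instance $\bigoh(k)$ checks being dominated), i.e.\ $\bigoh(\sigma\cdot 1.45^\sigma\cdot k\log^3 k)$. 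The parts I expect to be genuinely delicate are the correctness of $\mathcal N$ --- showing that the boring-component gadgets and the pole-transfer rule capture all and only the upward planar embeddings compatible with the fixed shapes and skeleton embedding --- and, dually, the combinatorial argument giving $\ell=\bigoh(1.45^\sigma)$.
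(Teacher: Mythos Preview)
Your proposal follows the paper's approach closely: for each of the $\bigoh(\sigma)$ candidate shape descriptions $s$ of $G_\mu$ and each of the two planar embeddings of $\textrm{sk}(\mu)$, branch over all tuples $(s_1,\dots,s_h)$ of shapes for the extreme and interesting components, perform local coherence and angle checks, encode the remaining freedom (large-angle placements at skeleton switch vertices together with the choice among the at most two preferred shapes for each non-extreme boring component) as a planar bipartite flow network, and decide feasibility in $\bigoh(k\log^3 k)$ time via Borradaile et al.\ planar max-flow. Your sketch of the boring-component gadgets---a source with two outgoing arcs for the hat pair, source pairs for hearts, and the transfer-to-non-switch-pole / ignore mechanism for wings---matches the paper's construction.

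The one place you deviate is in the argument for $\ell\in\bigoh(1.45^\sigma)$. The paper does \emph{not} prune the feasible sets of interesting components; it bounds $\ell$ directly from $|\mathcal F_{\nu_i}|\le 72\sigma_i+54$, the constraint $\sum_i\sigma_i\le\sigma$ with $\sigma_i\ge 1$ for each interesting component, and the calculus fact that $\prod_i\sigma_i$ is maximised (over reals) at $e^{\sigma/e}<1.45^\sigma$ (your integer variant $3^{\sigma/3}$ yields the same base). Your proposed ``pruning analogous to the boring case'' is not in the paper and it is unclear how it could be made to work: interesting components have no fixed finite catalogue of shapes from which to define dominated ones. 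You are right, though, to flag that getting from per-component factors of order $\sigma_i$ (with a multiplicative constant) to a product bounded by $1.45^\sigma$ is exactly where the argument is delicate.
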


Lemmata~\ref{lem:R_node_general} and~\ref{lem:R_node_sources-short} imply the following main result.

  \begin{theorem}  \label{th:sources-short}
 	{\sc Upward Planarity} can be solved in $\bigoh(\sigma 1.45^\sigma \cdot n^2\log^3 n)$ time for a digraph with $n$ vertices and $\sigma$ sources. 
 \end{theorem}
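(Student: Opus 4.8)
The plan is to combine the Interface Lemma (Lemma~\ref{lem:R_node_general}) with the R-node subprocedure furnished by Lemma~\ref{lem:R_node_sources-short}. First I would verify that $G$ is $[\tau_{\min},\tau_{\max}]$-turn-bounded for $\tau_{\min} = -2\sigma - 1$ and $\tau_{\max} = 2\sigma + 1$. This follows from Lemma~\ref{le:turn-set-range}: in any SPQR-tree decomposition $T$ of any biconnected component $B$ of $G$, for every node $\mu$ with poles $u,v$ and pertinent graph $G_\mu$, the left- and right-turn-numbers of any $uv$-external upward planar embedding of $G_\mu$ lie in $[-2\sigma_\mu - 1, 2\sigma_\mu + 1] \subseteq [-2\sigma - 1, 2\sigma + 1]$, since the number $\sigma_\mu$ of sources of $G_\mu$ other than its poles is at most the total number of sources $\sigma$ of $G$ (each source of $G_\mu$ that is not a pole of $\mu$ must be a source of $G$, because expansion does not create new sources and the poles carry away at most the boundary vertices). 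Thus $\tau = \tau_{\max} - \tau_{\min} + 1 = 4\sigma + 3 \in \bigoh(\sigma)$.

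Next I would instantiate the R-node subprocedure. By Lemma~\ref{lem:R_node_sources-short}, the feasible set $\mathcal{F}_\mu$ of an R-node $\mu$ with $k$ children can be computed in $\bigoh(\sigma\, 1.45^\sigma \cdot k \log^3 k)$ time, given the feasible sets of its children. So the per-R-node running time is $\alpha(G_\mu, \mathcal{S}_\mu) = \bigoh(\sigma\, 1.45^\sigma \cdot k_\mu \log^3 k_\mu)$, where $k_\mu$ is the number of children of $\mu$. Summing over all R-nodes of all SPQR-trees arising in the algorithm, and using that the total number of children over all nodes of an SPQR-tree of an $m$-vertex biconnected graph is $\bigoh(m)$ (and the skeleton sizes sum to $\bigoh(m)$), the total time complexity of the R-node subprocedure for a biconnected graph $B$ is $\alpha(B) = \bigoh(\sigma\, 1.45^\sigma \cdot |B| \log^3 |B|)$. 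For a general (single-connected) graph, $\alpha(G) = \sum_{B} \alpha(B) = \bigoh(\sigma\, 1.45^\sigma \cdot n \log^3 n)$, since the biconnected components overlap only in cut vertices and hence $\sum_B |B| = \bigoh(n)$.

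Finally I would plug these bounds into Lemma~\ref{lem:R_node_general}: {\sc Upward Planarity} can be decided in time $\bigoh\bigl(n(\alpha(G) + \tau^2 \cdot n)\bigr)$. Substituting $\alpha(G) = \bigoh(\sigma\, 1.45^\sigma \cdot n \log^3 n)$ and $\tau = \bigoh(\sigma)$ gives $\bigoh\bigl(n(\sigma\, 1.45^\sigma \cdot n \log^3 n + \sigma^2 n)\bigr) = \bigoh(\sigma\, 1.45^\sigma \cdot n^2 \log^3 n)$, where the $\sigma^2 n^2$ term is absorbed since $1.45^\sigma \log^3 n$ dominates $\sigma$ for the relevant parameter regime (and in any case $\sigma \le n$, so $\sigma^2 n^2 \le \sigma\,1.45^\sigma n^2 \log^3 n$ trivially once $1.45^\sigma \ge \sigma$, which holds for all $\sigma \ge 1$; the case $\sigma = 0$ is handled by the single-source results or directly). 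This establishes the claimed bound. I do not anticipate a genuine obstacle here — the theorem is a bookkeeping assembly of the preceding lemmata — but the one point requiring a little care is confirming that $G$ being expanded and the bound $\sigma_\mu \le \sigma$ together legitimately make $G$ turn-bounded with the stated $\tau_{\min}, \tau_{\max}$, and that the arithmetic simplification of the running time correctly absorbs the lower-order $\tau^2 n^2$ term into the main term.
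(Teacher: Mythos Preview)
Your proposal is correct and follows essentially the same approach as the paper: establish turn-boundedness via Lemma~\ref{le:turn-set-range}, bound the total R-node subprocedure cost by summing Lemma~\ref{lem:R_node_sources-short} over all R-nodes (using that skeleton sizes sum to $\bigoh(n)$), and plug into the Interface Lemma~\ref{lem:R_node_general}. The paper's proof additionally makes the expansion preprocessing step explicit and invokes an observation that each biconnected component has at most $\sigma$ sources, which underlies your claim $\sigma_\mu \le \sigma$; you gesture at this but it is worth stating cleanly, since a non-pole source of $G_\mu$ is a source of the ambient biconnected component (all its incident edges lie in $G_\mu$), and that component in turn has at most $\sigma$ sources.
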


 

\fi


\iflong
Let $G$ be an $n$-vertex digraph whose underlying graph is planar and let $\sigma$ be the number of sources of $G$. First, we expand $G$. As discussed in Section~\ref{sec:prelim}, this operation does not alter whether $G$ is upward planar, it preserves the number of sources and at most doubles the number of vertices of $G$.
In view of Lemma~\ref{lem:R_node_general}, in order to obtain an algorithm for {\sc Upward Planarity} parameterized by $\sigma$, it suffices to devise an R-node subprocedure whose runtime depends on $\sigma$ and, polynomially, on $n$. We hence consider a single biconnected component $B$ of $G$ and observe the following.

\begin{observation} \label{obs:maintains-sources}
The digraph $B$ is expanded and it has at most $\sigma$ sources.	
\end{observation}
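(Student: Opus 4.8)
The plan is to verify the two assertions separately. That $B$ is expanded follows by inheritance from $G$: being expanded means that every vertex has at most one incoming edge or at most one outgoing edge, and since $B$ is a subgraph of $G$, the edges of $B$ incident to a vertex $v$ form a subset of the edges of $G$ incident to $v$, so the bound that holds for $v$ in $G$ still holds in $B$. Hence every vertex of $B$ is a top or a bottom vertex, i.e., $B$ is expanded; this part does not even use how $G$ was produced, only the defining property of expanded graphs.

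The interesting assertion is that $B$ has at most $\sigma$ sources, and the approach I would take is to construct an injection $\phi$ from the set of sources of $B$ into the set of sources of $G$. First I would take the block--cut tree $\mathcal T$ of $G$, which is a genuine tree since $G$ is connected, and root it at the block-node of $B$. Let $s$ be a source of $B$. If $s$ is also a source of $G$, set $\phi(s)=s$. Otherwise $s$ has an incoming edge in $G$ but no incoming edge in $B$, so $s$ lies in at least two blocks of $G$ and is therefore a cut vertex; since $s\in V(B)$, its cut-vertex-node is a child of $B$ in $\mathcal T$. Write $T_s$ for the subtree of $\mathcal T$ rooted at that child, and $V(T_s)$ for the union of the vertex sets of the blocks in $T_s$. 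Every incoming edge of $s$ lies in a block that contains $s$ and differs from $B$, hence in a block of $T_s$; following incoming edges backwards from $s$, the walk stays inside $V(T_s)$, and since $G$ is acyclic and finite it terminates at a vertex $r_s\in V(T_s)$ with no incoming edge, i.e., a source of $G$. Set $\phi(s)=r_s$.

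To prove $\phi$ injective I would isolate two structural facts about the rooted tree $\mathcal T$: (i) $V(B)\cap V(T_s)=\{s\}$ for every child $s$ of $B$; and (ii) $V(T_{s_1})\cap V(T_{s_2})=\emptyset$ for distinct children $s_1\ne s_2$ of $B$ (a common vertex would be a cut vertex adjacent to blocks in both subtrees, forcing a cycle in $\mathcal T$). Then, for distinct sources $s_1\ne s_2$ of $B$, I would check the three cases: if both are sources of $G$ they are mapped to themselves and so to distinct vertices; if exactly one, say $s_1$, is a source of $G$, then $\phi(s_1)=s_1\in V(B)$ while $\phi(s_2)=r_{s_2}\in V(T_{s_2})$, and $r_{s_2}\ne s_2$ since $s_2$ is not a source of $G$, so by (i) $\phi(s_2)\notin V(B)\ni\phi(s_1)$; if neither is a source of $G$, then $\phi(s_1)\in V(T_{s_1})$ and $\phi(s_2)\in V(T_{s_2})$ are in disjoint sets by (ii). Hence the number of sources of $B$ is at most the number of sources of $G$, namely $\sigma$.

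The only point requiring genuine care is confirming that the backward walk from a cut-vertex source $s$ never escapes $V(T_s)$ — that each visited vertex other than $s$ stays strictly below $s$ in $\mathcal T$, so all its blocks remain in $T_s$ — together with the disjointness statement (ii); both are routine consequences of the tree structure of the block--cut tree, with acyclicity of $G$ used to rule out the walk revisiting a vertex. I expect this bookkeeping to be the main, albeit minor, obstacle, and everything else to be immediate.
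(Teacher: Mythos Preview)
Your proof is correct. The first part (that $B$ is expanded) matches the paper verbatim. For the second part, your approach is a genuinely different route from the paper's.

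The paper argues by induction on the number of biconnected components of $G$: it repeatedly removes a leaf block $B'\neq B$ (one with a single cut vertex $c$), and shows that the resulting graph $G'$ still has at most $\sigma$ sources. The only non-trivial case is when $c$ becomes a new source in $G'$; then $c$ was not a source of $B'$, so acyclicity of $B'$ forces $B'$ to contain some other source $s$ of $G$, which is deleted along with $B'$ --- so the count does not increase.

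Your argument instead builds an explicit injection from the sources of $B$ into the sources of $G$, using the block--cut tree rooted at $B$ and a backward walk into the pendant subtree below each cut-vertex source. Both proofs exploit the same underlying phenomenon (a ``new'' source at a cut vertex is compensated by a genuine source hidden in the removed part), but your version is a one-shot structural argument while the paper's is an iterative peeling argument. Your approach makes the correspondence between sources completely explicit and avoids induction; the paper's approach is shorter to write and sidesteps the bookkeeping about subtrees $T_s$ being pairwise disjoint and meeting $V(B)$ only in $s$. Either is perfectly adequate for this elementary observation.
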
 

\begin{proof}
First, $B$ is expanded since $G$ is expanded and since each vertex of $B$ has a set of incident edges which is a subset of the set of edges incident to the same vertex in $G$. 

We prove that $B$ has at most $\sigma$ sources by induction on the number $b_G$ of biconnected components of $G$. Indeed, if $b_G=1$, then $B=G$ and the statement is trivial. If $b_G>1$, remove from $G$ a biconnected component $B'\neq B$ that has a single cut-vertex $c$; let $G'$ be the resulting graph. 
We prove that $G'$ has at most $\sigma$ sources. This is obvious if $c$ is a source of $G$ or if $c$ is not a source of $G'$. Otherwise, suppose that $c$ is not a source of $G$ and it is a source of $G'$. The two things imply that $c$ is not a source of $B'$; since $G$ is acyclic,  it follows that $B'$ contains a source $s\neq c$ of $B'$; note that $s$ is also a source of $G$. Thus, although $G'$ contains the ``new'' source $c$, it does not contain the source $s$, hence $G'$ has at most $\sigma$ sources. Since $G'$ has one less biconnected component than $G$, the inductive hypothesis applies and we can conclude that $B$ has $\sigma$ sources.
\end{proof}

For the sake of familiarity of notation, in the following we denote by $G$ the considered biconnected digraph, by $n$ the number of its vertices, and by $\sigma$ the number of its sources. Let $e^*$ be any edge of $G$; we compute an SPQR-tree $T$ of $G$ rooted at the Q-node representing $e^*$ in $\bigoh(n)$ time \cite{dt-opl-96,gm-lti-00}. We will show how to compute, for each R-node $\mu$ of $T$, the feasible set $\mathcal F_{\mu}$, assuming that the feasible sets for the children of $\mu$ have been already computed. However, before doing that, we need to establish some combinatorial lemmata. 


\subsection{Turn-Numbers and Size of Feasible Sets}

We start by proving that a bounded number of sources implies bounded left- and right-turn-numbers. Let $\mu$ be any node of $T$, let $u$ and $v$ be the poles of $\mu$, let $\sigma_{\mu}$ be the number of sources of $G_{\mu}$ different from its poles, and let $\mathcal E_{\mu}$ be any $uv$-external upward planar embedding of $G_{\mu}$.  We have the following.

\begin{lemma}\label{lem:spirality_range}
The left- and right-turn-numbers of $\mathcal E_{\mu}$ are in the interval $[-2\sigma_{\mu}-1,2\sigma_{\mu}+1]$.
\end{lemma}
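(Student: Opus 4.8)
The plan is to first reduce the two-sided estimate to a one-sided one, and then prove the upper bound $\tau_l(\mathcal{E}_\mu,u,v)\le 2\sigma_\mu+1$ (and symmetrically $\tau_r(\mathcal{E}_\mu,u,v)\le 2\sigma_\mu+1$) by a charging argument along an outer path. For the reduction, Corollary~\ref{cor:dependence-parameters} gives $\tau_l(\mathcal{E}_\mu,u,v)+\tau_r(\mathcal{E}_\mu,u,v)=2-\lambda(\mathcal{E}_\mu,u)-\lambda(\mathcal{E}_\mu,v)\ge 0$, since both pole labels lie in $\{-1,0,1\}$. Hence $\tau_l(\mathcal{E}_\mu,u,v)\ge -\tau_r(\mathcal{E}_\mu,u,v)$, so the upper bound on $\tau_r$ yields $\tau_l(\mathcal{E}_\mu,u,v)\ge -2\sigma_\mu-1$, and symmetrically; thus it suffices to prove the upper bound for the left outer path, since the same proof applies verbatim to $P_r$.

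For the upper bound, write $P_l=\langle v_0=u,v_1,\dots,v_k=v\rangle$ and record, for $i=1,\dots,k$, whether the edge $v_{i-1}v_i$ is oriented towards $v_i$ (``up'') or towards $v_{i-1}$ (``down''). Call an interior vertex $v_i$ a \emph{turn} if the $P_l$-edge before it and the $P_l$-edge after it have opposite orientations; the turns, scanned along $P_l$, alternate between \emph{peaks} (both incident $P_l$-edges incoming at $v_i$) and \emph{valleys} (both incident $P_l$-edges outgoing at $v_i$). At a non-turn vertex the two $P_l$-edges are one incoming and one outgoing, so $\alpha_i$ is a flat angle and $\lambda(\alpha_i)=0$ by \textbf{UP0}; at a turn vertex $\alpha_i$ is a switch angle, so $\lambda(\alpha_i)\in\{-1,1\}$. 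Consequently $\tau_l(\mathcal{E}_\mu,u,v)$ equals the sum of $\lambda(\alpha_i)$ over the turn vertices only. Moreover, if $\lambda(\alpha_i)=1$ at a turn vertex $v_i$ then by \textbf{UP2} the vertex $v_i$ is a switch vertex; being a valley (resp.\ peak) it is then a source (resp.\ sink) of $G_\mu$, and as an interior vertex of $P_l$ it is distinct from the poles $u$ and $v$.

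Now list the turn vertices in path order as $y_1,\dots,y_t$, pair them as $(y_1,y_2),(y_3,y_4),\dots$, and leave $y_t$ unpaired if $t$ is odd. Each pair contains exactly one peak and one valley, so the two $\lambda$-values of a pair sum to at most $2$, with equality only if both equal $1$; in that case the valley of the pair is, by the observation above, a source of $G_\mu$ distinct from its poles. Distinct pairs are disjoint, so they charge distinct such sources; hence at most $\sigma_\mu$ pairs sum to $2$, every other pair sums to at most $0$, and a possibly unpaired terminal turn contributes at most $1$. Summing over all pairs gives $\tau_l(\mathcal{E}_\mu,u,v)\le 2\sigma_\mu+1$. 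Combining this with the reduction of the first paragraph establishes the lemma.

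The one delicate point is realizing that a sink lying on the outer path need \emph{not} be charged: although it may contribute $+1$ to $\tau_l$, in the pairing it is always matched with an adjacent valley, and only valleys contributing $+1$ (which are forced to be sources) must be paid for. Making the bookkeeping line up --- in particular the additive $1$, which comes precisely from the possibly unpaired terminal turn --- is the step that needs attention; the rest is a routine unwinding of \textbf{UP0}--\textbf{UP2}, Corollary~\ref{cor:dependence-parameters}, and the definitions of the turn-numbers. Note also that the argument uses neither biconnectivity nor expandedness of $G_\mu$, so it applies uniformly to every node $\mu$ of $T$.
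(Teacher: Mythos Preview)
Your proof is correct. The reduction to the one-sided upper bound is the same as the paper's (both use the outer-face relation $\tau_l+\tau_r+\lambda(\mathcal E_\mu,u)+\lambda(\mathcal E_\mu,v)=2$). For the upper bound itself, however, the paper proceeds by induction along $P_l$: it maintains the partial sum $\sum_{i\le m}\lambda(\alpha_i)$ together with the type (source or sink) of the vertex carrying the most recent large angle, and runs a seven-case analysis to propagate a two-pronged invariant ($\le 2\sigma_m$ vs.\ $\le 2\sigma_m+1$). Your pairing argument is a genuinely different and more economical route: grouping the alternating peaks and valleys into consecutive pairs, you observe that a pair contributes $+2$ only when its valley has a large outer angle and hence is a non-pole source of $G_\mu$, so at most $\sigma_\mu$ pairs can contribute $+2$; the residual $+1$ is exactly the possibly unpaired terminal turn. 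This charges the same objects as the paper (non-pole sources on $P_l$) but dispenses with the induction and the case analysis. The paper's invariant does carry slightly finer information (distinguishing the $2\sigma_m$ and $2\sigma_m+1$ regimes along the way), but that extra precision is not needed for the stated bound.
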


\begin{proof}
Let $\langle v_0=u, v_1, v_2, \dots, v_k=v \rangle$ and $\langle w_0=u,w_1,w_2,\dots,w_h=v\rangle$ be the left and right outer paths of $\mathcal E_{\mu}$, respectively. For $i=0,1,\dots,k$, denote by $\alpha_i$ the angle at $v_i$ inside the outer face of $\mathcal E_{\mu}$ and, for $i=0,1,\dots,h$, by $\beta_i$ the angle at $w_i$ inside the outer face of $\mathcal E_{\mu}$.

We first prove that the left-turn-number of $\mathcal E_{\mu}$ is at most $2\sigma_{\mu}+1$. This is done by proving, for every $m=1,\dots,k-1$, the following claim. Let $\sigma_m$ be the number of sources of $G_{\mu}$ among $v_1,\dots,v_m$ and let $p_m\leq m$ be the largest index such that $\alpha_{p_m}$ is a large angle, where $p_m$ is undefined if each angle among $\alpha_1,\dots,\alpha_m$ is small or flat. Note that, if $p_m$ is defined, then $v_{p_m}$ is either a source or a sink of $G_{\mu}$, by properties \textbf{UP1}-\textbf{UP2} of Theorem~\ref{th:upward-conditions}. Then the claim is that:

\begin{itemize}
	\item $\sum_{i=1}^{m}\lambda(\alpha_i)$ is smaller than or equal to $2\sigma_m+1$, if the index $p_m$ is defined and $v_{p_m}$ is a sink; and
	\item $\sum_{i=1}^{m}\lambda(\alpha_i)$ is smaller than or equal to $2\sigma_m$, if the index ${p_m}$ is undefined or $v_{p_m}$ is a source.
\end{itemize} 
Observe that the claim, applied with $m=k-1$, implies that the left-turn-number of $\mathcal E_{\mu}$ is at most $2\sigma_{\mu}+1$, since $\sigma_{k-1}\leq \sigma_{\mu}$.  

The proof is by induction on $m$. In the base case $m=1$. We distinguish some cases, according to the type of $\alpha_1$.

\begin{itemize}
	\item If $\alpha_1$ is a non-large angle, then $\lambda(\alpha_1)\leq 0=2\sigma_1$.
	\item If $\alpha_1$ is a large angle and $v_1$ is a source of $G_{\mu}$, then $\lambda(\alpha_1)=1<2=2\sigma_1$.
	\item If $\alpha_1$ is a large angle and $v_1$ is a sink of $G_{\mu}$, then $\lambda(\alpha_1)=1=2\sigma_1+1$.
\end{itemize}

For the inductive case, suppose that the claim is true for some $m \in \{1,\dots,k-2\}$, we prove that the claim is true for $m+1$ as well. We distinguish some cases, depending on the type of $\alpha_{m+1}$ and on whether $p_m$ is defined or not. 

\begin{itemize}
	\item First, suppose that $\alpha_{m+1}$ is a non-large angle and that $p_m$ is undefined; then $p_{m+1}$ is undefined, $\sigma_{m+1}=\sigma_m$, and $\lambda(\alpha_{m+1})\leq 0$. It follows that $\sum_{i=1}^{m+1}\lambda(\alpha_i)\leq \sum_{i=1}^m\lambda(\alpha_i)\leq 2\sigma_m=2\sigma_{m+1}$, where the second inequality uses the inductive hypothesis.
	\item Second, suppose that $\alpha_{m+1}$ is a non-large angle, that $p_m$ is defined, and that $v_{p_m}$ is a source; then $v_{p_{m+1}}=v_{p_m}$ is a source, $\sigma_{m+1}=\sigma_m$, and $\lambda(\alpha_{m+1})\leq 0$. As in the previous case, it follows that $\sum_{i=1}^{m+1}\lambda(\alpha_i)\leq \sum_{i=1}^m\lambda(\alpha_i)\leq 2\sigma_m=2\sigma_{m+1}$.
	\item Third, suppose that $\alpha_{m+1}$ is a non-large angle, that $p_m$ is defined, and that $v_{p_m}$ is a sink; then $v_{p_{m+1}}=v_{p_m}$ is a sink, $\sigma_{m+1}=\sigma_m$, and $\lambda(\alpha_{m+1})\leq 0$. It follows that $\sum_{i=1}^{m+1}\lambda(\alpha_i)=\sum_{i=1}^m\lambda(\alpha_i)\leq 2\sigma_m+1=2\sigma_{m+1}+1$.
	\item Fourth, suppose that $\alpha_{m+1}$ is a large angle and that $v_{m+1}$ is a source; then $\sigma_{m+1}=\sigma_m+1$ and $\lambda(\alpha_{m+1})=1$. It follows that $\sum_{i=1}^{m+1}\lambda(\alpha_i)=1+\sum_{i=1}^m\lambda(\alpha_i)\leq 1+(2\sigma_m+1)=2\sigma_{m+1}$. 
	\item Fifth, suppose that $\alpha_{m+1}$ is a large angle, that $v_{m+1}$ is a sink, and that $p_m$ is undefined; then $\lambda(\alpha_{m+1})=1$. Further, each angle among $\alpha_1,\dots,\alpha_m$ is small or flat, hence $\sum_{i=1}^m\lambda(\alpha_i)\leq 0$. It follows that $\sum_{i=1}^{m+1}\lambda(\alpha_i)\leq 1\leq 2\sigma_{m+1}+1$.
	\item Sixth, suppose that $\alpha_{m+1}$ is a large angle, that $v_{m+1}$ is a sink, that $p_m$ is defined, and that $v_{p_{m}}$ is a source; then $\sigma_{m+1}=\sigma_m$ and $\lambda(\alpha_{m+1})=1$. It follows that $\sum_{i=1}^{m+1}\lambda(\alpha_i)=1+\sum_{i=1}^m\lambda(\alpha_i)\leq 1+2\sigma_m=2\sigma_{m+1}+1$. 
	\item Finally, suppose that $\alpha_{m+1}$ is a large angle, that $v_{m+1}$ is a sink, that $p_m$ is defined, and that $v_{p_{m}}$ is a sink; then $\sigma_{m+1}=\sigma_m$ and $\lambda(\alpha_{m+1})=1$. 
	
	With respect to the previous cases, here we need to prove the following observation: there exists (at least) one angle among $\alpha_{p_m+1},\alpha_{p_m+2},\dots,\alpha_{m}$ that is small. Indeed, by definition of $p_m$, each angle among $\alpha_{p_m+1},\alpha_{p_m+2},\dots,\alpha_{m}$ is either small or flat. Further, since $v_{p_{m}}$ is a sink, the edge between $v_{p_{m}}$ and $v_{p_{m}+1}$ is incoming $v_{p_{m}}$. Now consider the smallest index $i\geq 1$ such that the edge between $v_{p_{m}+i}$ and $v_{p_{m}+i+1}$ is outgoing $v_{p_{m}+i}$; such an index exists, as otherwise $v_{m+1}$ would not be a sink. Then the edges connecting $v_{p_{m}+i}$ to $v_{p_{m}+i-1}$ and $v_{p_{m}+i+1}$ are both outgoing $v_{p_{m}+i}$; since $\alpha_{p_m+i}$ is not large, it is small, which proves the claimed observation. 
	
	It follows that $\sum_{i=1}^{m+1}\lambda(\alpha_i)=\sum_{i=1}^{p_m}\lambda(\alpha_i)+\sum_{i=p_{m+1}}^{m}\lambda(\alpha_i)+\lambda(\alpha_{m+1})\leq (2\sigma_{m}+1) -1+1=2\sigma_{m+1}+1$. 
\end{itemize}

This completes the induction, hence proves the claim and that the left-turn-number of $\mathcal E_{\mu}$ is at most $2\sigma_{\mu}+1$. An analogous proof shows that the right-turn-number of $\mathcal E_{\mu}$ is also at most $2\sigma_{\mu}+1$. From this, it follows that $\sum_{i=0}^{k}\lambda(\alpha_i)\leq 2\sigma_{\mu}+3$ and $\sum_{i=0}^{h}\lambda(\beta_i)\leq 2\sigma_{\mu}+3$ (the two sums are the left- and right-turn-number of $\mathcal E_{\mu}$, respectively, increased by the angles at $u$ and $v$ inside the outer face of $\mathcal E_{\mu}$).  By property~\textbf{UP3} of Theorem~\ref{th:upward-conditions}, we have that $\sum_{i=1}^{k-1}\lambda(\alpha_i)+\sum_{i=0}^{h}\lambda(\beta_i)=2$, hence $\sum_{i=1}^{k-1}\lambda(\alpha_i)=2-\sum_{i=0}^{h}\lambda(\beta_i)\geq 2-(2\sigma_{\mu}+3)=-2\sigma_{\mu}-1$, hence the left-turn-number of $\mathcal E_{\mu}$ is at least $-2\sigma_{\mu}-1$. Analogously, the right-turn-number of $\mathcal E_{\mu}$ is at least $-2\sigma_{\mu}-1$, which concludes the proof of the lemma.
\end{proof} 

A direct consequence of Lemma~\ref{lem:spirality_range} is that we can bound the size of a feasible set $\mathcal{F}_{\mu}$ of $\mu$.

\begin{lemma}\label{lem:feasible_size}
	The size of the feasible set $\mathcal{F}_{\mu}$ of $\mu$ is at most $72\sigma_{\mu}+54$.
\end{lemma}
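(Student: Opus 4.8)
The plan is to obtain the bound by a direct count that combines Lemma~\ref{lem:spirality_range} (which constrains the turn-numbers) with the structural restrictions on shape descriptions recorded earlier in the paper. Recall that a shape description is a tuple $\langle \tau_l, \tau_r, \lambda_u, \lambda_v, \rho_{l,u}, \rho_{r,u}, \rho_{l,v}, \rho_{r,v}\rangle$ whose entries are far from independent: as observed in the discussion preceding (and as summarized in) Lemma~\ref{lem:matrix_feasible}, fixing the left-turn-number $\tau_l(\mathcal{E}_{\mu},u,v)$ together with $\lambda(\mathcal{E}_{\mu},u)$, $\lambda(\mathcal{E}_{\mu},v)$ and $\rho_l(\mathcal{E}_{\mu},u)$ already forces every remaining entry of the tuple. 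Since there are $3\cdot 3\cdot 2 = 18$ ways to choose these three ``free'' coordinates, at most $18$ distinct shape descriptions can share any one value of $\tau_l$.

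First I would invoke Lemma~\ref{lem:spirality_range}: in every $uv$-external upward planar embedding $\mathcal{E}_{\mu}$ of $G_{\mu}$, the left-turn-number lies in the interval $[-2\sigma_{\mu}-1,\, 2\sigma_{\mu}+1]$, which contains exactly $4\sigma_{\mu}+3$ integers. Consequently, the first coordinate $\tau_l$ of any shape description in $\mathcal{F}_{\mu}$ ranges over a set of size at most $4\sigma_{\mu}+3$.

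Next I would partition $\mathcal{F}_{\mu}$ according to the value of $\tau_l$ and apply the per-turn-number bound to each class, which yields $|\mathcal{F}_{\mu}| \le 18\cdot(4\sigma_{\mu}+3) = 72\sigma_{\mu}+54$, exactly the claimed inequality.

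I do not anticipate any genuine obstacle here, as the statement is an immediate counting corollary of two results already in hand. The only point deserving a word of care is that the ``$18$ per turn-number'' estimate must be applied with the \emph{left}-turn-number, since that is the coordinate bounded by Lemma~\ref{lem:spirality_range}; because Lemma~\ref{lem:matrix_feasible} is symmetric in the two turn-numbers, this causes no difficulty.
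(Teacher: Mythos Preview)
Your proposal is correct and follows essentially the same approach as the paper: bound the left-turn-number via Lemma~\ref{lem:spirality_range} to $4\sigma_{\mu}+3$ values, then use Lemma~\ref{lem:matrix_feasible} to get at most $18$ shape descriptions per left-turn-number, and multiply. The paper's proof is identical in structure and arithmetic.
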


\begin{proof}
	Note that $G_{\mu}$ contains at most $\sigma_{\mu}$ sources other than $u$ and $v$. By Lemma~\ref{lem:spirality_range}, the shape description of $\mathcal E_{\mu}$ has the value $\tau_l(\mathcal E_{\mu},u,v)$ in $[-2\sigma_{\mu}-1,2\sigma_{\mu}+1]$, hence $\tau_l(\mathcal E_{\mu},u,v)$ can have one of $4\sigma_{\mu}+3$ possible values. 
	By Lemma~\ref{lem:matrix_feasible} there are at most 18 shape descriptions with given $\tau_l(\mathcal E_{\mu},u,v)$. Hence, $\mathcal F_{\mu}$ has at most $(4\sigma_{\mu}+3)\cdot 18=72\sigma_{\mu}+54$ possible shape descriptions.
\end{proof}

\subsection{An R-node Subprocedure} \label{subse:sources_R}

%

In the following, we show an algorithm running in $\bigoh(\sigma 1.45^\sigma \cdot k\log^3 k))$ time that computes the feasible set $\mathcal{F}_{\mu}$ of an R-node $\mu$ of $T$ with $k$ children. Let $u$ and $v$ be the poles of $\mu$. Let $\sigma_{\mu}$ be the number of sources of $G_{\mu}$ different from its poles. Furthermore, let $\nu_1,\dots,\nu_k$ be the children of $\mu$ and, for $i=1,\dots,k$, let $u_i$ and $v_i$ be the poles of $\nu_i$ and $e_i$ be the virtual edge representing $\nu_i$ in the skeleton $\textrm{sk}(\mu)$ of $\mu$. We assume that we have the feasible set $\mathcal{F}_{\nu_i}$ for each child $\nu_i$ of $\mu$.

We first introduce two classifications of the components of $G_\mu$ (that is, $G_{\nu_1},\dots,G_{\nu_k}$). First, a component $G_{\nu_i}$ is \emph{boring} if it does not contain any sources except, possibly, for its poles, while it is \emph{interesting} if it contains at least one source different from its poles. Because $G_{\mu}$ has $\sigma_{\mu}$ sources, there are at most $\sigma_{\mu}$ interesting components among $G_{\nu_1},\dots,G_{\nu_k}$, while there can be any number of boring components. Second, a component $G_{\nu_i}$ is \emph{extreme} if $e_i$ is incident to a pole of $\mu$ and is incident to the face of the (unique) planar embedding of $\textrm{sk}(\mu)$ containing $u$ and $v$, it is \emph{non-extreme} otherwise. Note that there are four extreme components among $G_{\nu_1},\dots,G_{\nu_k}$. Clearly, an $\bigoh(n)$-time pre-processing of $G$ allows us to equip every virtual edge $e$ in the skeleton of every R-node $\tau$ of $T$ with a label, indicating whether the component of $G_{\tau}$ corresponding to $e$ is boring, interesting, extreme, or non-extreme. With such a labeling, we can produce in $\bigoh(k)$ time (and thus in $\bigoh(n)$ time over all the R-nodes of $T$) an ordering of the components of $G_{\mu}$ such that all the extreme or interesting components precede all the non-extreme boring components. We assume, w.l.o.g., that $G_{\nu_1},\dots,G_{\nu_k}$ is such an order, where $G_{\nu_1},\dots,G_{\nu_h}$ are the extreme or interesting components. 

The outline of our algorithm to compute the feasible set $\mathcal{F}_{\mu}$ of $\mu$ is as follows. For each shape description $s=\shapeDesc{\tau_l}{\tau_r}{\lambda^u}{\lambda^v}{\rho^u_l}{\rho^u_r}{\rho^v_l}{\rho^v_r}$, where $\tau_l \in [-2\sigma_{\mu}-1,2\sigma_{\mu}+1]$, $\tau_r \in [-\tau_l,-\tau_l+4]$, $\lambda^u\in \{-1,0,1\}$, $\lambda^v\in \{-1,0,1\}$, $\rho^u_l\in \{\textrm{in},\textrm{out}\}$, $\rho^u_r\in \{\textrm{in},\textrm{out}\}$, $\rho^v_l\in \{\textrm{in},\textrm{out}\}$, and $\rho^v_r\in \{\textrm{in},\textrm{out}\}$, we independently test whether $G_{\mu}$ admits a $uv$-external upward planar embedding with shape description $s$; then $\mathcal{F}_{\mu}$ contains the shape descriptions for which the test was successful. In order to test whether $G_{\mu}$ admits a $uv$-external upward planar embedding with shape description $s$, we consider all possible combinations of shape descriptions for the extreme and interesting components of $G_{\mu}$. For each of these combinations, we create a flow network that admits a flow of a certain size if and only if $G_{\mu}$ admits a $uv$-external upward planar embedding whose shape description is $s$ and in which the shape descriptions of the extreme and interesting components are those in the considered combination.

\subsubsection{Boring components} \label{subsub:boring_components}

Despite their name, boring components play an important role in our algorithm. In the following we discuss their features. 

\begin{lemma} \label{lem:shape_desc_boring}
		If $G_{\nu_i}$ is a boring component, then the shape description of any $u_iv_i$-external upward planar embedding of $G_{\nu_i}$ is one of the following:
		\begin{enumerate}
			\item $\shapeDesc{0}{0}{1}{1}{\textrm{out}}{\textrm{out}}{\textrm{in}}{\textrm{in}}$ (see Fig.~\ref{fig:boring_sausage});
			\item $\shapeDesc{0}{0}{1}{1}{\textrm{in}}{\textrm{in}}{\textrm{out}}{\textrm{out}}$ (see Fig.~\ref{fig:boring_sausage} with $u_i$ and $v_i$ inverted);
			\item $\shapeDesc{0}{1}{1}{0}{\textrm{out}}{\textrm{out}}{\textrm{in}}{\textrm{out}}$ (see Fig.~\ref{fig:boring_right});
			\item $\shapeDesc{1}{0}{0}{1}{\textrm{out}}{\textrm{in}}{\textrm{out}}{\textrm{out}}$ (see Fig.~\ref{fig:boring_right} with $u_i$ and $v_i$ inverted);
			\item $\shapeDesc{1}{0}{1}{0}{\textrm{out}}{\textrm{out}}{\textrm{out}}{\textrm{in}}$ (see Fig.~\ref{fig:boring_left});
			\item $\shapeDesc{0}{1}{0}{1}{\textrm{out}}{\textrm{in}}{\textrm{out}}{\textrm{out}}$ (see Fig.~\ref{fig:boring_left} with $u_i$ and $v_i$ inverted);
			\item $\shapeDesc{-1}{1}{1}{1}{\textrm{out}}{\textrm{out}}{\textrm{out}}{\textrm{out}}$ (see Fig.~\ref{fig:boring_hat}); \item $\shapeDesc{1}{-1}{1}{1}{\textrm{out}}{\textrm{out}}{\textrm{out}}{\textrm{out}}$ (see Fig.~\ref{fig:boring_hat}  with $u_i$ and $v_i$ inverted);
			\item $\shapeDesc{1}{1}{1}{-1}{\textrm{out}}{\textrm{out}}{\textrm{out}}{\textrm{out}}$ (see Fig.~\ref{fig:boring_heart}); and
			\item $\shapeDesc{1}{1}{-1}{1}{\textrm{out}}{\textrm{out}}{\textrm{out}}{\textrm{out}}$ (see Fig.~\ref{fig:boring_heart}  with $u_i$ and $v_i$ inverted);
		\end{enumerate}
	\end{lemma}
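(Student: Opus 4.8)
The plan is to characterize the shape descriptions of a boring component $G_{\nu_i}$ by exploiting the fact that it contains no source in its interior, together with the structural constraints imposed by Theorem~\ref{th:upward-conditions}. Since $G$ is expanded, every vertex is a top vertex (at most one incoming edge) or a bottom vertex (at most one outgoing edge); since $G_{\nu_i}$ contains no internal source, it cannot contain an internal vertex that is a bottom vertex with no incoming edge in $G_{\nu_i}$ — more precisely, a boring component is "almost" a single-source digraph, where the only possible source is a pole. The key observation I would make first is that in any $u_iv_i$-external upward planar embedding $\mathcal{E}_{\nu_i}$, every internal face $f$ has exactly one source of $G_{\nu_i}$ incident to it that "sees" $f$ with a large angle, and since there are no internal sources, all large angles in all internal faces must be at poles. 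This severely restricts how much "turning" the outer paths can do.

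The concrete steps I would carry out: (1) First establish that at most one of $u_i,v_i$ can have a large angle in the outer face, and combine this with Corollary~\ref{cor:dependence-parameters} ($\tau_l+\tau_r+\lambda(\mathcal{E}_{\nu_i},u_i)+\lambda(\mathcal{E}_{\nu_i},v_i)=2$) to limit $\lambda(\mathcal{E}_{\nu_i},u_i)$ and $\lambda(\mathcal{E}_{\nu_i},v_i)$; note that $\lambda\in\{-1,0,1\}$ and flat values of $\pm 1$ at a pole force it to be a switch there, while $0$ means it is a non-switch pole carrying a special edge. (2) Bound the left- and right-turn numbers: I would use Lemma~\ref{lem:spirality_range} specialized to $\sigma_{\mu_i}=0$, giving $\tau_l,\tau_r\in[-1,1]$, and then intersect this with Lemma~\ref{lem:shape_desc_values} which says $\tau_r=-\tau_l+h$, $h\in\{0,\dots,4\}$; so $(\tau_l,\tau_r)$ can only be one of finitely many pairs, namely those with both coordinates in $\{-1,0,1\}$ and $\tau_l+\tau_r\in\{0,1,2\}$ (the last from $h\le 2$ being forced once $\tau_l,\tau_r\ge -1$ and... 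I need to check: $h=\tau_l+\tau_r$, and $h\ge 0$, $\tau_l\ge -1$, $\tau_r\ge -1$ gives $h=\tau_l+\tau_r\in\{-2,\dots,2\}$, intersect with $h\ge 0$ to get $h\in\{0,1,2\}$). (3) For each feasible pair $(\tau_l,\tau_r)$, use Corollary~\ref{cor:dependence-parameters} to pin down $\lambda(\mathcal{E}_{\nu_i},u_i)+\lambda(\mathcal{E}_{\nu_i},v_i)=2-\tau_l-\tau_r \in\{0,1,2\}$, enumerate the possibilities for the pair $(\lambda_u,\lambda_v)$, and then use the parity/orientation dependencies recalled in the paragraph after Observation~\ref{obs:dependence-parameters} (the values of $\tau_l$ and $\rho_l(\mathcal{E}_{\nu_i},u_i)$ imply $\rho_l(\mathcal{E}_{\nu_i},v_i)$; $\rho_l$ and $\rho_r$ at a pole agree iff $\lambda$ there is $\pm1$; etc.) to determine the $\rho$ values. (4) Finally, discard the combinations that are not realizable by a boring component — in particular I would argue that if a pole has $\lambda=0$ (non-switch pole with special edge) then consistency with "no internal source" forces a very restricted set, and that the remaining surviving tuples are exactly the ten listed (five up to swapping $u_i\leftrightarrow v_i$).

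The main obstacle I anticipate is step (4): ruling out the spurious tuples. Merely knowing the turn numbers and the pole angle labels does not by itself guarantee that a boring component realizing that shape description exists or, conversely, that no boring component realizes a "forbidden" tuple; I will need a genuine argument that uses the absence of internal sources. The cleanest route is probably to prove a helper claim of the form: in a $u_iv_i$-external upward planar embedding of a boring component, walking along the left outer path from $u_i$ to $v_i$, the sequence of angle labels $\lambda(\alpha_j)$ has at most one $+1$, and that $+1$ (if present) occurs at a vertex that must be a sink; symmetrically for the right path. This is essentially the same mechanism as in the proof of Lemma~\ref{lem:spirality_range} but pushed to extract not just a bound but the exact attainable configurations, and combined with the fact that each internal face needs its unique large angle at a pole. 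Once that claim is in hand, the enumeration collapses to the ten cases, and I would close by exhibiting, for each of the five representatives (Figs.~\ref{fig:boring_sausage}, \ref{fig:boring_right}, \ref{fig:boring_left}, \ref{fig:boring_hat}, \ref{fig:boring_heart}), a concrete boring component achieving it, so that the list is both necessary and (witnessed as) attainable.
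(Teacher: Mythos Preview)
Your overall route---bound $\tau_l,\tau_r$ via Lemma~\ref{lem:spirality_range} with $\sigma_{\mu_i}=0$, then use Corollary~\ref{cor:dependence-parameters} and the dependencies between shape-description entries to enumerate---is exactly what the paper does. Two remarks, one small and one substantive.

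First, your step~(1) opens with a false claim: ``at most one of $u_i,v_i$ can have a large angle in the outer face.'' Items~1, 2, 7, 8 in the statement all have $\lambda(\mathcal E_{\nu_i},u_i)=\lambda(\mathcal E_{\nu_i},v_i)=1$. You presumably meant the opposite direction (at most one pole has $\lambda=-1$, which does follow from $\tau_l+\tau_r\le 2$ and Corollary~\ref{cor:dependence-parameters}), but as written the sentence is wrong and should be fixed.

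Second, the obstacle you anticipate in step~(4)---pruning spurious tuples via a helper claim about the sequence of $+1$'s along the outer paths---is not needed, and the paper avoids it entirely by reorganising the case analysis. The key observation is simply that $G_{\nu_i}$ is acyclic with no internal source, so \emph{at least one pole is a source of $G_{\nu_i}$}. The paper then splits on which poles are sources. A source pole has all incident edges outgoing, which immediately pins $\rho_l=\rho_r=\textrm{out}$ there and forces $\lambda\in\{-1,1\}$; it also fixes the parity of the turn numbers along each outer path. With this structural input in hand, the enumeration via Corollary~\ref{cor:dependence-parameters} collapses directly to the ten tuples, and no separate ``discard'' step or helper claim about the pattern of large angles is required. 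Your proposed helper claim (at most one $+1$ along each outer path) is actually not obviously true as stated---nothing forbids several sinks on the outer path from receiving large angles there, as long as they are interleaved with $-1$'s---so you would be trading a simple structural observation for a harder auxiliary lemma.

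Finally, the lemma only asserts that any shape description of a boring component lies in the list; it does not claim all ten are realised. Your closing step (exhibit a witness for each) is therefore unnecessary, and the paper does not do it.
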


	\begin{figure}[!t]
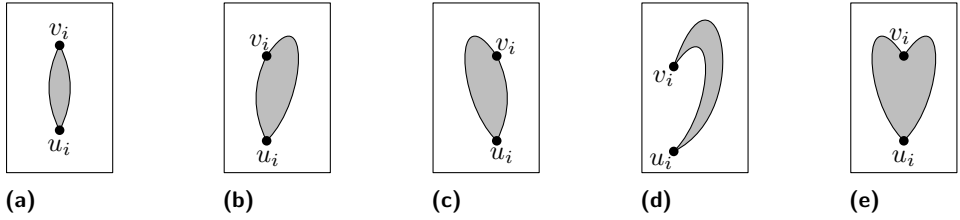

		\centering
		{\centering
			\begin{subfigure}{.19\textwidth}
				\includegraphics[page=1]{figures/boring_comps}
				\subcaption{}
				\label{fig:boring_sausage}
			\end{subfigure}
		}		
		\hfil
		{\centering
			\begin{subfigure}{.19\textwidth}
				\includegraphics[page=2]{figures/boring_comps}
				\subcaption{}
				\label{fig:boring_right}
			\end{subfigure}
		}\hfil
		{\centering
			\begin{subfigure}{.19\textwidth}
				\includegraphics[page=3]{figures/boring_comps}
				\subcaption{}
				\label{fig:boring_left}
			\end{subfigure}
		}\hfil
		{\centering
			\begin{subfigure}{.19\textwidth}
				\includegraphics[page=5]{figures/boring_comps}
				\subcaption{}
				\label{fig:boring_hat}
			\end{subfigure}
		}\hfil
		{\centering
			\begin{subfigure}{.19\textwidth}
				\includegraphics[page=4]{figures/boring_comps}
				\subcaption{}
				\label{fig:boring_heart}
			\end{subfigure}
		}
		\caption{Shape descriptions of boring components.}\label{fig:boring_comps}
	\end{figure}

  \begin{proof}
Consider any $uv$-external upward planar embedding $\mathcal E_{\nu_i}$ of $G_{\nu_i}$. Let $s_{\nu_i}$ be the shape description of $\mathcal E_{\nu_i}$. Since $G_{\nu_i}$ is a boring component, only $u_i$ and $v_i$ can be sources of $G_{\nu_i}$. Then, by Lemma~\ref{lem:spirality_range}, the left- and right-turn-numbers $\tau_l(\mathcal E_{\nu_i},u_i,v_i)$ and $\tau_r(\mathcal E_{\nu_i},u_i,v_i)$ of $\mathcal E_{\nu_i}$ can only take the values $-1$, $0$, and $1$. 

\begin{itemize}
	\item Assume first that both $u_i$ and $v_i$ are sources of $G_{\nu_i}$. It follows that $\{\lambda(\mathcal E_{\nu_i},u_i),\lambda(\mathcal E_{\nu_i},v_i)\}\subseteq\{-1,1\}$, that the last four values of $s_{\nu_i}$ are all $\textrm{out}$, and that $\tau_l(\mathcal E_{\nu_i},u_i,v_i)$ and $\tau_r(\mathcal E_{\nu_i},u_i,v_i)$ are both odd. By Corollary~\ref{cor:dependence-parameters}, we have that $\tau_l(\mathcal E_{\nu_i},u_i,v_i)+\tau_r(\mathcal E_{\nu_i},u_i,v_i)+\lambda(\mathcal E_{\nu_i},u_i)+\lambda(\mathcal E_{\nu_i},v_i)=2$. Hence:
	
	\begin{itemize}
		\item If $\lambda(\mathcal E_{\nu_i},u_i)=\lambda(\mathcal E_{\nu_i},v_i)=1$, then one between $\tau_l(\mathcal E_{\nu_i},u_i,v_i)$ and $\tau_r(\mathcal E_{\nu_i},u_i,v_i)$ is $-1$ and the other one is $1$ (see Fig.~\ref{fig:boring_hat}, possibly with $u_i$ and $v_i$ inverted).
		\item If $\{\lambda(\mathcal E_{\nu_i},u_i),\lambda(\mathcal E_{\nu_i},v_i)\}=\{-1,+1\}$, then $\tau_l(\mathcal E_{\nu_i},u_i,v_i)$ and $\tau_r(\mathcal E_{\nu_i},u_i,v_i)$ are both $1$ (see Fig.~\ref{fig:boring_heart}, possibly with $u_i$ and $v_i$ inverted).
		\item Note that $\lambda(\mathcal E_{\nu_i},u_i)$ and $\lambda(\mathcal E_{\nu_i},v_i)$ cannot be both $-1$, as otherwise $\tau_l(\mathcal E_{\nu_i},u_i,v_i)+\tau_r(\mathcal E_{\nu_i},u_i,v_i)+\lambda(\mathcal E_{\nu_i},u_i)+\lambda(\mathcal E_{\nu_i},v_i)=2$ could not be satisfied, as $\tau_l(\mathcal E_{\nu_i},u_i,v_i)+\tau_r(\mathcal E_{\nu_i},u_i,v_i)\leq 2$.
	\end{itemize}  
	\item Assume next that $u_i$ is a source of $G_{\nu_i}$ and $v_i$ is not. It follows that the fifth and sixth values of $s_{\nu_i}$ are both $\textrm{out}$.
	\begin{itemize}
		\item If $v_i$ is a sink of $G_{\nu_i}$, then the seventh and eighth values of $s_{\nu_i}$ are both $\textrm{in}$. Hence, $\tau_l(\mathcal E_{\nu_i},u_i,v_i)$ and $\tau_r(\mathcal E_{\nu_i},u_i,v_i)$ are even and thus they are both $0$. By Corollary~\ref{cor:dependence-parameters}, we have that $\tau_l(\mathcal E_{\nu_i},u_i,v_i)+\tau_r(\mathcal E_{\nu_i},u_i,v_i)+\lambda(\mathcal E_{\nu_i},u_i)+\lambda(\mathcal E_{\nu_i},v_i)=2$, hence $\lambda(\mathcal E_{\nu_i},u_i)=\lambda(\mathcal E_{\nu_i},v_i)=1$ (see Fig.~\ref{fig:boring_sausage}). 
		\item If $v_i$ is a non-switch vertex of $G_{\nu_i}$, then $\lambda(\mathcal E_{\nu_i},v_i)=0$; further, either the seventh value of $s_{\nu_i}$ is $\textrm{in}$ and the eight value is $\textrm{out}$, or vice versa. In the former case (see Fig.~\ref{fig:boring_right}), $\tau_l(\mathcal E_{\nu_i},u_i,v_i)$ is even (and thus it is $0$) and $\tau_r(\mathcal E_{\nu_i},u_i,v_i)$ is odd (and thus it is $1$, as if it were $-1$, then $\tau_l(\mathcal E_{\nu_i},u_i,v_i)+\tau_r\mathcal E_{\nu_i},u_i,v_i)+\lambda(\mathcal E_{\nu_i},u_i)+\lambda(\mathcal E_{\nu_i},v_i)=2$ could not be satisfied), thus $\lambda(\mathcal E_{\nu_i},u_i)=1$. In the latter case (see Fig.~\ref{fig:boring_left}), $\tau_r(\mathcal E_{\nu_i},u_i,v_i)$ is even (and thus it is $0$) and $\tau_l(\mathcal E_{\nu_i},u_i,v_i)$ is odd (and thus it is $1$, as if it were $-1$, then $\tau_l(\mathcal E_{\nu_i},u_i,v_i)+\tau_r(\mathcal E_{\nu_i},u_i,v_i)+\lambda(\mathcal E_{\nu_i},u_i)+\lambda(\mathcal E_{\nu_i},v_i)=2$ could not be satisfied), thus $\lambda(\mathcal E_{\nu_i},u_i)=1$.
	\end{itemize}
\end{itemize}
The case in which $v_i$ is a source of $G_{\nu_i}$ and $u_i$ is not is symmetric to the previous one (see Figs.~\ref{fig:boring_sausage}--\ref{fig:boring_left} with $u_i$ and $v_i$ inverted). Finally, one of $u_i$ and $v_i$ is a source, given that $G_{\nu_i}$ is acyclic.
\end{proof} 

In the following, we call (inverted-) \emph{sausage} the shape description in item 1 (resp.\ item 2) of Lemma~\ref{lem:shape_desc_boring}, (inverted-) \emph{right-wing} the shape descriptions in item 3 (resp.\ item 4) of Lemma~\ref{lem:shape_desc_boring}, (inverted-) \emph{left-wing} the shape description in item 5 (resp.\ item 6) of Lemma~\ref{lem:shape_desc_boring}, (inverted-) \emph{hat} the shape descriptions in item 7 (resp.\ item 8) of Lemma~\ref{lem:shape_desc_boring}, and (inverted-) \emph{heart} the shape descriptions in item 9 (resp.\ item 10) of Lemma~\ref{lem:shape_desc_boring}. 
The next lemma describes the combinations of shape descriptions that can appear in the feasible set of a non-extreme boring component and which shape descriptions are ``better'' than others.


  \begin{lemma}  \label{lem:boring-relationship}
	For every $i=1,\dots,h$, the following statements hold true:
	\begin{description}
		\item[S1] If $\mathcal F_{\nu_i}$ contains the sausage or the inverted-sausage, then $|\mathcal F_{\nu_i}|=1$, i.e., $\mathcal F_{\nu_i}$ contains no other shape description. 
		\item[S2] If $\mathcal F_{\nu_i}$ contains the right-wing (the left-wing), then it contains the left-wing (resp.\ the right-wing), it might contain the heart, and it does not contain any other shape description.
		\item[S3] If $\mathcal F_{\nu_i}$ contains the inverted-right-wing (the inverted-left-wing), then it contains the inverted-left-wing (the inverted-right-wing), it might contain the inverted-heart, and it does not contain any other shape description.
		\item[S4] If $\mathcal F_{\nu_i}$ contains the hat (the inverted-hat), then it contains the inverted-hat (resp.\ the hat), it might contain the heart, it might contain the inverted-heart, and it does not contain any other shape description.
		\item[S5] Suppose that $G_{\mu}$ admits a $uv$-external upward planar embedding $\mathcal E_{\mu}$ with shape description $s$, in which the shape description of the $u_iv_i$-external upward planar embedding $\mathcal E_{\nu_i}$ of $G_{\nu_i}$ is the heart. Suppose also that $G_{\nu_i}$ admits a $u_iv_i$-external upward planar embedding $\mathcal E'_{\nu_i}$ whose shape description is the right-wing, the left-wing, the hat, or the inverted-hat. Then replacing $\mathcal E_{\nu_i}$ with $\mathcal E'_{\nu_i}$ in $\mathcal E_{\mu}$ results in a $uv$-external upward planar embedding $\mathcal E'_{\mu}$ of $G_{\mu}$ with shape description $s$.
		\item[S6] Suppose that $G_{\mu}$ admits a $uv$-external upward planar embedding $\mathcal E_{\mu}$ with shape description $s$, in which the shape description of the $u_iv_i$-external upward planar embedding $\mathcal E_{\nu_i}$ of $G_{\nu_i}$ is the inverted-heart. Suppose also that $G_{\nu_i}$ admits a $u_iv_i$-external upward planar embedding $\mathcal E'_{\nu_i}$ whose shape description is the inverted-right-wing, the inverted-left-wing, the hat, or the inverted-hat. Then replacing $\mathcal E_{\nu_i}$ with $\mathcal E'_{\nu_i}$ in $\mathcal E_{\mu}$ results in a $uv$-external upward planar embedding $\mathcal E'_{\mu}$ of $G_{\mu}$ with shape description $s$.
	\end{description}
\end{lemma}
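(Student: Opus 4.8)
The plan is to dispatch \textbf{S1}--\textbf{S4} by a finite combinatorial case analysis and \textbf{S5}--\textbf{S6} by an explicit local surgery on the embedding. For \textbf{S1}--\textbf{S4}, the starting point is Lemma~\ref{lem:shape_desc_boring}, which guarantees that every shape description in $\mathcal F_{\nu_i}$ is one of the ten listed. What I would add is the observation that the \emph{type} of each pole inside $G_{\nu_i}$ --- whether it is a source, a sink, or a non-switch of $G_{\nu_i}$ --- is an invariant of $G_{\nu_i}$: it depends only on the orientations of the edges of $G_{\nu_i}$, not on the embedding. Refining the case distinction in the proof of Lemma~\ref{lem:shape_desc_boring} yields, for each of the ten shapes, exactly the pole-type it is compatible with: the (inverted-)sausage forces one pole to be a source and the other a sink of $G_{\nu_i}$; the (inverted-)right/left-wings force one pole to be a source and the other a non-switch of $G_{\nu_i}$; the (inverted-)hat/heart force both poles to be sources of $G_{\nu_i}$ --- except that the heart (resp.\ inverted-heart), whose $\lambda$-value at $v_i$ (resp.\ $u_i$) is $-1$, is additionally compatible with $v_i$ (resp.\ $u_i$) being a non-switch of $G_{\nu_i}$ of degree at least three, with its special edge drawn in the interior. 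Since the pole-types of $G_{\nu_i}$ are fixed, $\mathcal F_{\nu_i}$ can contain only shapes compatible with one pole-type, which already gives the "does not contain any other shape" clauses of \textbf{S1}--\textbf{S4}. The remaining claims --- that the right-wing forces the left-wing into $\mathcal F_{\nu_i}$, and that the hat forces the inverted-hat --- follow from closure of $\mathcal F_{\nu_i}$ under flipping: the flip of a $u_iv_i$-external upward planar embedding is again one (it is realized by a reflection across a vertical line, which preserves upwardness), and flipping a shape description swaps $\tau_l$ with $\tau_r$, swaps $\rho_l$ with $\rho_r$ at each pole, and leaves the $\lambda$-values unchanged, turning the right-wing into the left-wing and the hat into the inverted-hat (while fixing the sausage, the heart, and the inverted-heart).

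For \textbf{S5}, I would construct $\mathcal E'_\mu$ from $\mathcal E_\mu$ by keeping the rotation system and all angle labels of $\mathcal E_\mu$ outside $G_{\nu_i}$, replacing the embedding of $G_{\nu_i}$ by $\mathcal E'_{\nu_i}$, and re-labelling only the at most four angles at $u_i$ and $v_i$ lying in the at most two faces $f_L,f_R$ of $G_\mu$ bordered by the outer paths of $G_{\nu_i}$. The verification of Theorem~\ref{th:upward-conditions} proceeds pole by pole. At $u_i$ nothing changes: the heart and all four candidate shapes have $\rho_l^{u_i}=\rho_r^{u_i}=\textrm{out}$ and $\lambda_{u_i}=1$, so the two boundary angles at $u_i$ keep both their nature and their labels and $G_{\nu_i}$ contributes no large angle at $u_i$ from its interior; hence \textbf{UP0}--\textbf{UP2} at $u_i$ carry over verbatim. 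At $v_i$ I split into cases: if the candidate is the hat or inverted-hat it requires $v_i$ to be a source of $G_{\nu_i}$, and then validity of $\mathcal E_\mu$ together with the heart's $\lambda_{v_i}=-1$ forces $v_i$ to be a source of $G_\mu$ whose unique large angle currently sits inside $G_{\nu_i}$, so I move that large angle onto the $f_L$- (resp.\ $f_R$-)boundary angle at $v_i$, relabelling it from $-1$ to $1$; if the candidate is the right- or left-wing it requires $v_i$ to be a non-switch of $G_{\nu_i}$, hence a non-switch of $G_\mu$, and then \textbf{UP2} at $v_i$ in $\mathcal E_\mu$ forces both flat angles of $v_i$ to lie strictly inside $G_{\nu_i}$ and the two $G_{\nu_i}$-boundary angles at $v_i$ to be small switch angles, and I relabel the one whose incident outer-path edge flipped orientation into a flat $0$. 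Finally \textbf{UP3} is preserved on $f_L$ and $f_R$ because each candidate shape differs from the heart precisely by lowering $\tau_l$ (resp.\ $\tau_r$) by the same amount by which the corresponding boundary angle at $v_i$ is raised, so $n_1-n_{-1}$ of each affected face is unchanged; \textbf{UP3} on every other face of $G_\mu$ and on the new internal faces of $G_{\nu_i}$ holds since those faces inherit their labels unchanged from $\mathcal E_\mu$ or from $\mathcal E'_{\nu_i}$. Because $\nu_i$ is a non-extreme component and all five shapes agree at $u_i$ while the changes at $v_i$ stay inside internal faces of $G_\mu$, the left and right outer paths of $G_\mu$ and the angle labels at its poles $u,v$ are untouched, so $\mathcal E'_\mu$ has shape description $s$. \textbf{S6} is proved identically after exchanging the roles of $u_i$ and $v_i$ throughout (so that the candidates become the inverted-right/left-wings and the hat/inverted-hat, all agreeing with the inverted-heart at $v_i$).

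The main obstacle is the detailed surgery in \textbf{S5}--\textbf{S6}: one must argue that in a \emph{valid} $\mathcal E_\mu$ using the heart the two boundary angles at $v_i$ necessarily have exactly the nature the relabelling needs, which is where the fixed pole-type of $v_i$ and properties \textbf{UP1}/\textbf{UP2} at $v_i$ must be combined carefully, and one must check that when $u_i$ or $v_i$ coincides with a pole of $\mu$ (which can happen even for a non-extreme component whose virtual edge is incident to a pole but drawn off the outer face) the shape description of $G_\mu$ is still preserved --- this works because all shapes involved agree on the $\lambda$-value at the relevant pole of $G_{\nu_i}$. By contrast, the \textbf{S1}--\textbf{S4} part is a routine, if lengthy, enumeration once the pole-type invariant and the flip-closure are in place.
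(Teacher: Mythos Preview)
Your proposal is correct and follows essentially the same approach as the paper. For \textbf{S1}--\textbf{S4}, the paper argues directly from edge orientations at the poles (e.g., ``the sausage forces $v_i$ to have no outgoing edge, every other shape needs one''), which is exactly your pole-type invariant phrased concretely; the flip-closure argument is identical. For \textbf{S5}--\textbf{S6}, the paper performs the same local surgery and verifies \textbf{UP0}--\textbf{UP3} case by case, detailing only the heart$\to$right-wing case and declaring the others analogous, whereas you give a unified treatment of all four candidates via the single observation that the drop in $\tau_l$ (or $\tau_r$) exactly matches the rise in the boundary-angle label at $v_i$.

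One small imprecision: your sentence ``the changes at $v_i$ stay inside internal faces of $G_\mu$'' is not always true. A non-extreme boring component can have its virtual edge $e_i$ on the outer boundary of $\textrm{sk}(\mu)$ (non-extremity only excludes edges that are \emph{both} incident to a pole of $\mu$ \emph{and} to the outer face). When that happens, one of $f_L,f_R$ is the outer face of $G_\mu$, and your relabelling at $v_i$ does touch the outer face. This is not fatal: your own \textbf{UP3} bookkeeping---that the total contribution $\tau_\bullet+\lambda(\text{angle at }v_i)$ to each adjacent face is preserved---applies verbatim to the outer face and shows that $\tau_l(G_\mu)$ and $\tau_r(G_\mu)$ are unchanged. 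The paper makes this step explicit; you should too, rather than asserting the outer face is untouched. The last six entries of $s$ are indeed preserved because non-extremity guarantees that the four boundary edges at $u,v$ and the two outer angles at $u,v$ lie in extreme components, hence outside $G_{\nu_i}$.
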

		
  \begin{proof}
	We prove statement S1. Suppose that $\mathcal F_{\nu_i}$ contains the sausage; the case in which $\mathcal F_{\nu_i}$ contains the inverted-sausage is analogous. Let $\mathcal E_{\nu_i}$ be a $u_iv_i$-external upward planar embedding of $G_{\nu_i}$ whose shape description is the sausage. Since both the edges incident to $v_i$ and to the outer face of $\mathcal E_{\nu_i}$ are incoming $v_i$ (as the seventh and eighth values in the sausage are both in) and since the angle at $v_i$ incident to the outer face of $\mathcal E_{\nu_i}$ is large (as the fourth value in the sausage is $1$), it follows that $v_i$ contains no outgoing edge. In all other shape descriptions of a boring component, the seventh and eighth values contain at least one out (see items 2--10 in Lemma~\ref{lem:shape_desc_boring}), hence they require at least one outgoing edge for $v_i$. Hence, the sausage is the only shape description in the feasible set of $\mu$. 
	
	We prove statement S2. Suppose that $\mathcal F_{\nu_i}$ contains the right-wing; the case in which $\mathcal F_{\nu_i}$ contains the left-wing is analogous. Let $\mathcal E_{\nu_i}$ be a $u_iv_i$-external upward planar embedding of $G_{\nu_i}$ whose shape description is the right-wing. Then a $u_iv_i$-external upward planar embedding of $G_{\nu_i}$ whose shape description is the left-wing can be simply obtained by flipping $\mathcal E_{\nu_i}$. By statement S1, we have that $\mathcal F_{\nu_i}$ does not contain the (possibly inverted-) sausage. Further, $v_i$ contains both incoming and outgoing edges (as the seventh and eighth values in the right-wing are one in and one out). On the other hand, each shape description among the inverted-right-wing, the inverted-left-wing, the hat, the inverted-hat, and the inverted-heart requires $v_i$ to only have outgoing edges (as, in each of such shape descriptions, the seventh and eighth values are both out and the fourth value is $1$, see items 4, 6, 7, 8, and 10 in Lemma~\ref{lem:shape_desc_boring}). Hence, $\mathcal F_{\nu_i}$ does not contain any of these shape descriptions. 
	
	The proof of statement S3 is analogous to the one of statement S2.
	
	We prove statement S4. Suppose that $\mathcal F_{\nu_i}$ contains the hat; the case in which $\mathcal F_{\nu_i}$ contains the inverted-hat is analogous. Let $\mathcal E_{\nu_i}$ be a $u_iv_i$-external upward planar embedding of $G_{\nu_i}$ whose shape description is the hat. Then a $u_iv_i$-external upward planar embedding of $G_{\nu_i}$ whose shape description is the inverted-hat can be simply obtained by flipping $\mathcal E_{\nu_i}$. By statements S1--S3, we have that $\mathcal F_{\nu_i}$ contains neither the (possibly inverted-) sausage, nor the (possibly inverted-) right-wing, nor the (possibly inverted-) left-wing. 
	
	We finally prove statement S5; the proof of statement S6 is analogous. Suppose that the shape description of $\mathcal E_{\nu_i}$ is the heart and that the one of $\mathcal E'_{\nu_i}$ is the right-wing; the other cases are analogous. Let $f$ ($g$) be the face of $\mathcal E_{\mu}$ that is to the left (resp.\ to the right) of $\mathcal E_{\nu_i}$ when traversing the left (resp.\ right) outer path of $\mathcal E_{\nu_i}$ from $u_i$ to $v_i$. Similarly, let $f'$ ($g'$) be the face of $\mathcal E'_{\mu}$ that is to the left (resp.\ to the right) of $\mathcal E'_{\nu_i}$ when traversing the left (resp.\ right) outer path of $\mathcal E'_{\nu_i}$ from $u_i$ to $v_i$. 
	
	\begin{figure}[!t]
		\centering
		{\centering
			\begin{subfigure}{.19\textwidth}
				\includegraphics[page=1,scale=0.85]{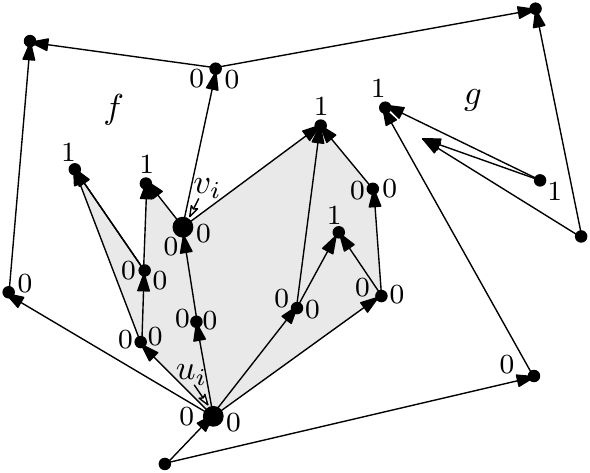}
				\subcaption{}
				\label{fig:boring_better_1}
			\end{subfigure}
		}		
		\hfil
		{\centering
			\begin{subfigure}{.19\textwidth}
				\includegraphics[page=2,scale=0.85]{figures/boring_better}
				\subcaption{}
				\label{fig:boring_better_2}
			\end{subfigure}
		}
		\caption{(a) The upward planar embedding $\mathcal E_{\mu}$, where only $\mathcal E_{\nu_i}$ and the faces $f$ and $g$ are shown. The internal faces of $\mathcal E_{\nu_i}$ are gray. The angles whose values are different from $-1$ are shown. (b) The upward planar embedding $\mathcal E'_{\mu}$, where only $\mathcal E'_{\nu_i}$ and the faces $f'$ and $g'$ are shown. The internal faces of $\mathcal E'_{\nu_i}$ are gray. The angles whose values are different from $-1$ are shown.}\label{fig:boring_better}
	\end{figure}

	Let us specify how the replacement of $\mathcal E_{\nu_i}$ with $\mathcal E'_{\nu_i}$ in $\mathcal E_{\mu}$ defines the angle assignment of $\mathcal E'_{\mu}$; refer to Fig.~\ref{fig:boring_better}. First, every face $h'$ of $\mathcal E'_{\mu}$ different from $f'$ and $g'$ is also a face of $\mathcal E_{\mu}$ or of $\mathcal E'_{\nu_i}$, hence the value in $\{-1,0,1\}$ assigned to a vertex incident to $h'$ in $\mathcal E'_{\mu}$ coincides with the one in $\mathcal E_{\mu}$. Second, for each vertex $w\notin \{u_i,v_i\}$ incident to $f'$ (resp.\ to $g'$), the value in $\{-1,0,1\}$ assigned to $w$ in $f'$ (resp.\ in $g'$) coincides with the value assigned to $w$ in the outer face of $\mathcal E'_{\nu_i}$. Third, the value in $\{-1,0,1\}$ assigned to $u_i$ in $f'$ (resp.\ in $g'$) coincides with the one assigned to $u_i$ in $f$ (resp.\ in $g$). Finally, the values assigned to $v_i$ in $f'$ and in $g'$ are $0$ and $-1$, respectively.   
	
	In order to prove that $\mathcal E'_{\mu}$ is a $uv$-external upward planar embedding of $G_{\mu}$, we need to prove that Properties \textbf{UP0}--\textbf{UP3} of Theorem~\ref{th:upward-conditions} are satisfied by $\mathcal E'_{\mu}$. 
	
	\begin{description}
		\item[UP0] Consider the angle $\alpha=\langle e'_1,w,e'_2\rangle$ at a vertex $w$ in a face $h'$ of $\mathcal E'_{\mu}$. 
		
		If $h'\notin \{f',g'\}$ or if $w\notin \{u_i,v_i\}$, then $\alpha$ is also an angle of $\mathcal E_{\mu}$ or of $\mathcal E'_{\nu_i}$ and, by construction, the value assigned to $\alpha$ in $\mathcal E'_{\mu}$ is the same as the one assigned to $\alpha$ in $\mathcal E_{\mu}$ or of $\mathcal E'_{\nu_i}$. Hence, from the fact that $\mathcal E_{\mu}$ and $\mathcal E'_{\nu_i}$ are upward planar embeddings, it follows that the value assigned to $\alpha$ is in $\{-1,1\}$ if $\alpha$ is a switch angle, and $0$ if $\alpha$ is a flat angle.
		
		If $h'\in \{f',g'\}$, say $h'=f'$, and $w=u_i$, then $e'_1$ is the edge incident to $u_i$ in the left outer path of $\mathcal E'_{\nu_i}$ and $e'_2$ is the edge incident to $u_i$ and different from $e'_1$ in the boundary of $f'$. Let $\beta=\langle e_1,w,e_2\rangle$ be the angle at $w$ in $f$, where $e'_1$ is the edge incident to $u_i$ in the left outer path of $\mathcal E_{\nu_i}$ and $e_2$ is the edge incident to $u_i$ and different from $e_1$ in the boundary of $f$. Note that $e'_2=e_2$. Further, since the shape descriptions of $\mathcal E_{\nu_i}$ and $\mathcal E'_{\nu_i}$ both have the fifth value equal to out (see items 9 and 3 in Lemma~\ref{lem:shape_desc_boring}, respectively), $e_1$ and $e'_1$ are both outgoing $u_i$. Hence, $\alpha$ is a switch angle or a flat angle if and only if $\beta$ is a switch angle or a flat angle, respectively. By construction, the value assigned to $\alpha$ in $\mathcal E'_{\mu}$ coincides with the value assigned to $\beta$ in $\mathcal E_{\mu}$. Since $\mathcal E_{\mu}$ is an upward planar embedding, the value assigned to $\alpha$ is in $\{-1,1\}$ if $\alpha$ is a switch angle, and $0$ if $\alpha$ is a flat angle.
		
		If $h'=f'$ and $w=v_i$, then $e'_1$ is the edge incident to $v_i$ in the left outer path of $\mathcal E'_{\nu_i}$ and $e'_2$ is the edge incident to $v_i$ and different from $e'_1$ in the boundary of $f'$. Since the shape description of $\mathcal E'_{\nu_i}$ has the seventh value equal to in (see item 3 in Lemma~\ref{lem:shape_desc_boring}), the edge $e'_1$ is incoming $v_i$. Further, we have that $e'_2$ is outgoing $v_i$. Indeed, since the shape description of $\mathcal E_{\nu_i}$ has the fourth value equal to $-1$ (see item 9 in Lemma~\ref{lem:shape_desc_boring}), all the edges that are incident to $v_i$ in $G_{\mu}$ and that do not belong to $G_{\nu_i}$ are outgoing $v_i$. It follows that $\alpha$ is a flat angle; by construction, $\alpha$ is assigned the value $0$. 
		
		If $h'=f'$ and $w=v_i$, then $e'_1$ is the edge incident to $v_i$ in the right outer path of $\mathcal E'_{\nu_i}$ and $e'_2$ is the edge incident to $v_i$ and different from $e'_1$ in the boundary of $g'$. Since the shape description of $\mathcal E'_{\nu_i}$ has the eighth value equal to out (see item 3 in Lemma~\ref{lem:shape_desc_boring}), the edge $e'_1$ is outgoing $v_i$. Further, we have that $e'_2$ is outgoing $v_i$, as all the edges that are incident to $v_i$ in $G_{\mu}$ and that do not belong to $G_{\nu_i}$ are outgoing $v_i$. It follows that $\alpha$ is a small angle; by construction, $\alpha$ is assigned the value $-1$.
		
		\item[UP1] Consider a switch vertex $w$ of $G_{\mu}$ and let $n_1(w)$, $n_0(w)$, and $n_{-1}(w)$ be the number of angles labeled $1$, $0$, and $-1$ in $\mathcal E'_{\mu}$, respectively. 
		
		First, suppose that $w\notin \{u_i,v_i\}$ of $G_{\mu}$ and suppose that $w$ belongs to $G_{\nu_i}$ (to $G_{\mu}\setminus G_{\nu_i}$). Then we have $n_1(w)=1$, $n_0(w)=0$, and $n_{-1}(w)=\textrm{deg}(w)-1$, because the same equalities hold true in $\mathcal E'_{\nu_i}$ (resp.\ in $\mathcal E_{\mu}$) and because the number of angles incident to $w$ assigned with the value $1$, $0$, and $-1$ in $\mathcal E'_{\mu}$ coincides with the number of angles incident to $w$ assigned with the value $1$, $0$, and $-1$, respectively, in $\mathcal E'_{\nu_i}$ (resp.\ in $\mathcal E_{\mu}$). 
		
		Second, suppose that $w=u_i$. Since the shape descriptions of $\mathcal E_{\nu_i}$ and $\mathcal E'_{\nu_i}$ have the third value equal to $1$ (see items 9 and 3 in Lemma~\ref{lem:shape_desc_boring}, respectively), all the angles incident to $u_i$ in the internal faces of $\mathcal E_{\nu_i}$ and $\mathcal E'_{\nu_i}$ are assigned the value $-1$. Furthermore, the angle incident to $u_i$ in $f'$ (in $g'$) concides with the angle incident to $u_i$ in $f$ (resp.\ in $g$); finally, every face $h'$ of $\mathcal E'_{\mu}$ that is incident to $u_i$ and that is not incident to edges of $\mathcal E'_{\nu_i}$ is also a face of $\mathcal E_{\mu}$, hence the value assigned to the angle at $u_i$ in $h'$ coincides with the one in $\mathcal E_{\mu}$. Thus, we have $n_1(w)=1$, $n_0(w)=0$, and $n_{-1}(w)=\textrm{deg}(w)-1$ in $\mathcal E'_{\mu}$ as the same equalities hold true in $\mathcal E_{\mu}$.
		
		Finally, we have that $w\neq v_i$. Indeed, the shape description of $\mathcal E'_{\nu_i}$ has the seventh and eighth values equal to in and out, respectively (see item 3 in Lemma~\ref{lem:shape_desc_boring}), hence $v_i$ has at least one outgoing and one incoming edge, thus it is not a switch vertex.
		
		\item[UP2] Consider a non-switch vertex $w$ of $G_{\mu}$ and let $n_1(w)$, $n_0(w)$, and $n_{-1}(w)$ be the number of angles labeled $1$, $0$, and $-1$ in $\mathcal E'_{\mu}$, respectively. If $w\neq v_i$, then the proof that $n_1(w)=0$, $n_0(w)=2$, and $n_{-1}(w)=\textrm{deg}(w)-2$ is analogous to the proof that $n_1(w)=1$, $n_0(w)=0$, and $n_{-1}(w)=\textrm{deg}(w)-1$ in the case in which $w$ is a switch vertex. We prove that $n_1(v_i)=0$, $n_0(v_i)=2$, and $n_{-1}(v_i)=\textrm{deg}(v_i)-2$. First, by construction, the angle at $v_i$ in $f'$ has value $0$, while the one at $v_i$ in $g'$ has value $-1$. Second, since $v_i$ is a non-switch vertex of $\mathcal E'_{\nu_i}$, there are two angles labeled $0$ and incident to $v_i$ in $\mathcal E'_{\nu_i}$; one of them is the angle in the outer face of $\mathcal E'_{\nu_i}$ (which is not a face of $\mathcal E'_{\mu}$) and the other one is an angle in an internal face of $\mathcal E'_{\nu_i}$ (this is also an angle labeled $0$ in $\mathcal E'_{\mu}$). Every other angle incident to $v_i$ in an internal face of $\mathcal E'_{\nu_i}$ has label $-1$. Finally, since the shape description of $\mathcal E_{\nu_i}$ has the fourth value equal to $-1$ (see item 9 in Lemma~\ref{lem:shape_desc_boring}), all the angles incident to $v_i$ inside a face of $\mathcal E_{\mu}$ that lies outside $\mathcal E_{\nu_i}$ and that is different from $f'$ and $g'$ have label $-1$.

		\item[UP3] Consider any face  $h'$ of $\mathcal E'_{\mu}$. Denote by $n_1(h')$ and $n_{-1}(h')$ the number of angles of $h'$ labeled $1$ and $-1$, respectively. If $h'\notin \{f',g'\}$, then $h'$ is also a face of $\mathcal E_{\mu}$ or of $\mathcal E'_{\nu_i}$; further, by construction, the value assigned to each angle of $h'$ in $\mathcal E'_{\mu}$ is the same as the one assigned to it in $\mathcal E_{\mu}$ or in $\mathcal E'_{\nu_i}$. Hence, we have $n_1(h')=n_{-1}(h')-2$ if $h'$ is an internal face of $\mathcal E'_{\mu}$ or $n_1(h')=n_{-1}(h')+2$ if $h'$ is the outer face of $\mathcal E'_{\mu}$, because the same is true in $\mathcal E_{\mu}$ or in $\mathcal E'_{\nu_i}$.

		We now prove that $n_1(f')=n_{-1}(f')-2$ if $f'$ is an internal face of $\mathcal E'_{\mu}$ or $n_1(f')=n_{-1}(f')+2$ if $f'$ is the outer face of $\mathcal E'_{\mu}$; the proof that $n_1(g')=n_{-1}(g')-2$ if $g'$ is an internal face of $\mathcal E'_{\mu}$ or $n_1(g')=n_{-1}(g')+2$ if $g'$ is the outer face of $\mathcal E'_{\mu}$ is analogous. Let $(u_i=w_1,w_2,\dots,w_x=v_i)$ and $(u_i=w'_1,w'_2,\dots,w'_{x'}=v_i)$ be the left outer paths of $\mathcal E_{\nu_i}$ and $\mathcal E'_{\nu_i}$, respectively. Further, let $z_1,\dots,z_{y}$ be the vertices of $G_{\mu}$ incident to $f$ (and to $f'$) and not in $G_{\nu_i}$. For $i=1,\dots,x$, let $\alpha(w_i)$ be angle incident to $w_i$ inside $f$. Furthermore, for $i=1,\dots,x'$, let $\beta(w'_i)$ be angle incident to $w'_i$ inside $f'$. Similarly, for $i=1,\dots,y$, let $\alpha(z_i)$ the angle incident to $z_i$ inside $f$ and $f'$. 
		
		Denote by $\lambda$ the angle assignment of $\mathcal E_{\mu}$. Since $\mathcal E_{\mu}$ is an upward planar embedding, we have $\sum_{i=2}^{x-1} \lambda(\alpha(w_i)) + \sum_{i=1}^{y} \lambda(\alpha(z_i)) + \lambda(\alpha(u_i))+ \lambda(\alpha(v_i))=-2$ if $f$ is an internal face of $\mathcal E_{\mu}$, or $\sum_{i=2}^{x-1} \lambda(\alpha(w_i)) + \sum_{i=1}^{y} \lambda(\alpha(z_i)) + \lambda(\alpha(u_i))+ \lambda(\alpha(v_i))=2$ if $f$ is the outer face of $\mathcal E_{\mu}$. The sum $\sum_{i=2}^{x-1} \lambda(\alpha(w_i))$ is the left-turn-number of $\mathcal E_{\nu_i}$, hence it is equal to $1$ (see the first value in the shape description at item 9 of Lemma~\ref{lem:shape_desc_boring}). Further, the value $\lambda(\alpha(v_i))$ is equal to $-1$, given that the shape description of $\mathcal E_{\nu_i}$ has the third value equal to $-1$  (see again item 9 of Lemma~\ref{lem:shape_desc_boring}). It follows that $\sum_{i=1}^{y} \lambda(\alpha(z_i)) + \lambda(\alpha(u_i))=-2$ if $f$ is an internal face of $\mathcal E_{\mu}$ or $\sum_{i=1}^{y} \lambda(\alpha(z_i)) + \lambda(\alpha(u_i))=2$ if $f$ is the outer face of $\mathcal E_{\mu}$.
		
		Denote by $\lambda'$ the angle assignment of $\mathcal E'_{\mu}$. We need to prove that $\sum_{i=2}^{x'-1} \lambda'(\beta(w'_i)) + \sum_{i=1}^{y} \lambda'(\alpha(z_i)) + \lambda'(\beta(u_i))+ \lambda'(\beta(v_i))=-2$ if $f'$ is an internal face of $\mathcal E'_{\mu}$, or $\sum_{i=2}^{x'-1} \lambda(\beta(w'_i)) + \sum_{i=1}^{y} \lambda'(\alpha(z_i)) + \lambda'(\beta(u_i))+ \lambda'(\beta(v_i))=2$ if $f'$ is the outer face of $\mathcal E'_{\mu}$. First, note that $f$ is the outer face of $\mathcal E_{\mu}$	if and only if $f'$ is the outer face of $\mathcal E'_{\mu}$. Second, by construction we have $\lambda'(\beta(u_i))=\lambda(\alpha(u_i))$, we have $\lambda'(\beta(v_i))=0$, and, for $i=1,\dots,y$, we have $\lambda'(\alpha(z_i))=\lambda(\alpha(z_i))$. Third, the sum $\sum_{i=2}^{x'-1} \lambda'(\beta(w'_i))$ is the left-turn-number of $\mathcal E'_{\nu_i}$, hence it is equal to $0$ (see the first value in the shape description at item 3 of Lemma~\ref{lem:shape_desc_boring}). It follows that the sum $\sum_{i=2}^{x'-1} \lambda'(\beta(w'_i)) + \sum_{i=1}^{y} \lambda'(\alpha(z_i)) + \lambda'(\beta(u_i))+ \lambda'(\beta(v_i))$ is equal to $\sum_{i=1}^{y} \lambda(\alpha(z_i)) + \lambda(\alpha(u_i))$, hence it is equal to $-2$ if $f'$ is an internal face of $\mathcal E'_{\mu}$ or to $2$ if $f'$ is the outer face of $\mathcal E'_{\mu}$, as required.
	\end{description}	
	This concludes the proof that $\mathcal E'_{\mu}$ is a $uv$-external upward planar embedding of $G_{\mu}$. 
	
	It remains to prove that the shape description of $\mathcal E'_{\mu}$ is $s$. Since $G_{\nu_i}$ is a non-extreme component, the replacement of $\mathcal E_{\nu_i}$ with $\mathcal E'_{\nu_i}$ does not alter the last six values of the shape description of the $uv$-external upward planar embedding of $G_{\mu}$. We prove that the left-turn-number of $\mathcal E'_{\mu}$ is equal to the one of $\mathcal E_{\mu}$; the proof that the right-turn-number of $\mathcal E'_{\mu}$ is equal to the one of $\mathcal E_{\mu}$. First, if the virtual edge $e_i$ of the skeleton of $\mu$ corresponding to $G_{\nu_i}$ does not belong to the left outer path of the planar embedding of the skeleton of $\mu$ in $\mathcal E_{\mu}$, the replacement of $\mathcal E_{\nu_i}$ with $\mathcal E'_{\nu_i}$ does not alter the left-turn-number of the $uv$-external upward planar embedding of $G_{\mu}$. Otherwise, the contribution of $\mathcal E_{\nu_i}$ to the left-turn-number of $\mathcal E_{\mu}$ is given by the label of $u_i$ in the outer face of $\mathcal E_{\mu}$, plus the left-turn-number of $\mathcal E_{\nu_i}$, plus the label of $v_i$ in the outer face of $\mathcal E_{\mu}$. Similarly, the contribution of $\mathcal E'_{\nu_i}$ to the left-turn-number of $\mathcal E'_{\mu}$ is given by the label of $u_i$ in the outer face of $\mathcal E'_{\mu}$, plus the left-turn-number of $\mathcal E'_{\nu_i}$, plus the label of $v_i$ in the outer face of $\mathcal E'_{\mu}$. The label of $u_i$ in the outer face of $\mathcal E_{\mu}$ coincides with the label of $u_i$ in the outer face of $\mathcal E'_{\mu}$. Further, the left-turn-number of $\mathcal E_{\nu_i}$ is $1$, while the left-turn-number of $\mathcal E'_{\nu_i}$ is $0$. Finally, the label of $v_i$ in the outer face of $\mathcal E_{\mu}$ is $-1$, while the label of $v_i$ in the outer face of $\mathcal E'_{\mu}$ is $0$. Hence, the contribution of $\mathcal E_{\nu_i}$ to the left-turn-number of $\mathcal E_{\mu}$ is equal to the contribution of $\mathcal E'_{\nu_i}$ to the left-turn-number of $\mathcal E'_{\mu}$. It follows that the left-turn-number of $\mathcal E_{\mu}$ is equal to the one of $\mathcal E'_{\mu}$, hence the shape description of $\mathcal E'_{\mu}$ is $s$.	
\end{proof}

Based on Lemma~\ref{lem:boring-relationship}, we can associate to each boring component $\nu_i$ a \emph{preferred set} $\mathcal P_{\nu_i}$, which is a subset of its feasible set $\mathcal F_{\nu_i}$ defined as follows. 

\begin{itemize}
	\item If $\mathcal F_{\nu_i}$ contains the sausage or the inverted-sausage, then $\mathcal P_{\nu_i}$ coincides with $\mathcal F_{\nu_i}$; see statement S1 of Lemma~\ref{lem:boring-relationship}.
	\item If $\mathcal F_{\nu_i}$ contains the left-wing, then $\mathcal P_{\nu_i}$ contains the left-wing and the right-wing; see statement S2 of Lemma~\ref{lem:boring-relationship}.
	\item If $\mathcal F_{\nu_i}$ contains the inverted-left-wing, then $\mathcal P_{\nu_i}$ contains the inverted-left-wing and the inverted-right-wing; see statement S3 of Lemma~\ref{lem:boring-relationship}.
	\item If $\mathcal F_{\nu_i}$ contains the hat, then $\mathcal P_{\nu_i}$ contains the hat and the inverted-hat; see statement S4 of Lemma~\ref{lem:boring-relationship}.
	\item If $\mathcal F_{\nu_i}$ does not satisfy any of the previous conditions, then, by statements S1--S4 of Lemma~\ref{lem:boring-relationship}, it contains only the heart, or the inverted-heart, or both. Then we let $\mathcal P_{\nu_i}$ coincide with $\mathcal F_{\nu_i}$.
\end{itemize}

By Lemma~\ref{lem:boring-relationship}, if there exists a $uv$-external upward planar embedding of $G_{\mu}$ with some shape desription $s$, then there exists a $uv$-external upward planar embedding of $G_{\mu}$ with some shape description $s$ in which each boring component $G_{\nu_i}$ has a shape description in its preferred set.

\subsubsection{The Algorithm} \label{subsub:algorithm}

We now present an algorithm to construct the feasible set $\mathcal{F}_{\mu}$ of $\mu$. 

Consider a shape description $s=\shapeDesc{\tau_l}{\tau_r}{\lambda^u}{\lambda^v}{\rho^u_l}{\rho^u_r}{\rho^v_l}{\rho^v_r}$, where $\tau_l \in [-2\sigma_{\mu}-1,2\sigma_{\mu}+1]$, $\tau_r \in [-\tau_l,-\tau_l+4]$, $\lambda^u\in \{-1,0,1\}$, $\lambda^v\in \{-1,0,1\}$, $\rho^u_l\in \{\textrm{in},\textrm{out}\}$, $\rho^u_r\in \{\textrm{in},\textrm{out}\}$, $\rho^v_l\in \{\textrm{in},\textrm{out}\}$, and $\rho^v_r\in \{\textrm{in},\textrm{out}\}$; we show how to test whether $s\in \mathcal{F}_{\mu}$. 

\paragraph{Checks} 
In order to test whether $s\in \mathcal{F}_{\mu}$, we start by performing some preliminary checks, which might let us conclude that $s\notin \mathcal{F}_{\mu}$.

We start with a \emph{coherence check}. Recall that only four out of the eight values in $s$ are independent. Thus, we check:
\begin{itemize}
	\item whether $\rho^u_l$ and $\rho^u_r$ have the same value if $\lambda^u\in \{-1,1\}$;
	\item whether $\rho^u_l$ and $\rho^u_r$ have different values if $\lambda^u=0$;
	\item whether $\rho^v_l$ and $\rho^v_r$ have the same value if $\lambda^v\in \{-1,1\}$;
	\item whether $\rho^v_l$ and $\rho^v_r$ have different values if $\lambda^v=0$;
	\item whether $\rho^u_l$ and $\rho^v_l$ have the same value if $\tau_l$ is odd;
	\item whether $\rho^u_l$ and $\rho^v_l$ have different values if $\tau_l$ is even; and
	\item whether $\tau_l+\tau_r+\lambda^u+\lambda^v=2$ (see Corollary~\ref{cor:dependence-parameters}). 
\end{itemize}   

If any of these checks fails, we conclude that $\mathcal{F}_{\mu}$ does not contain $s$, otherwise we proceed.  

Second, for each extreme or interesting component $G_{\nu_i}$ of $G_{\mu}$, we select a shape description $s_i\in \mathcal{F}_{\nu_i}$. Clearly, the number $\ell$ of ways this selection can be done is the product of the cardinalities of the sets $\mathcal{F}_{\nu_i}$ of the extreme and interesting components of $G_{\mu}$; that is, $\ell=\prod_{i=1}^h |\mathcal{F}_{\nu_i}|$ (recall that $G_{\nu_1},G_{\nu_2},\dots, G_{\nu_h}$ are the components of $G_{\mu}$ that are extreme or interesting). We will argue later that $\ell$ can be bounded by a simply-exponential function of $\sigma$ (and, in particular, is independent of the number of vertices of $G_{\mu}$). For now, assume that shape descriptions $s_1\in \mathcal{F}_{\nu_1},\dots, s_h\in \mathcal{F}_{\nu_h}$ have been fixed. We also fix $\mathcal S_{\mu}$ to be a planar embedding of the skeleton $\textrm{sk}(\mu)$ of $\mu$ in which $u$ and $v$ are incident to the outer face. Since $\mu$ is an R-node, there are two such planar embeddings, which are one the flip of the other one. The goal now becomes the one of testing whether $G_{\mu}$ admits a $uv$-external upward planar embedding $\mathcal E_{\mu}$ such that: 
\begin{description}
	\item[P1] the shape description of $\mathcal E_{\mu}$ is $s$;  
	\item[P2] for $i=1,\dots,h$, the $u_iv_i$-external upward planar embedding $\mathcal E_{\nu_i}$ of $G_{\nu_i}$ in $\mathcal E_{\mu}$ has shape description $s_i$; and
	\item[P3] the planar embedding of $\textrm{sk}(\mu)$ induced by $\mathcal E_{\mu}$ is $\mathcal S_{\mu}$.
\end{description}
Then we have that $s$ belongs to $\mathcal{F}_{\mu}$ if and only if this test is successful for at least one selection of the shape descriptions $s_1,\dots, s_h$ and of the planar embedding $\mathcal S_{\mu}$.

We start by performing an \emph{extreme-edge check}. The rationale here is that, since the shape descriptions of the extreme components of $G_{\mu}$ have been already selected, the direction of the edges incident to the outer face of any $uv$-external upward planar embedding $\mathcal E_{\mu}$ of $G_{\mu}$ satisfying Property~P2 is already determined. These directions have to be coherent with the last four values in $s$ in order to satisfy Property~P1. Let $e_i$ be the virtual edge of $\textrm{sk}(\mu)$ that is incident to $u$ and that belongs to the left outer path of $\mathcal S_{\mu}$; assume, w.l.o.g., that $u_i=u$. Then we check whether the fifth value of $s_i$ coincides with the fifth value of $s$. Three more checks are performed for the other edges incident to $u$ and $v$ and incident to the outer face of $\mathcal E_{\mu}$. If any of these checks fails, then there is no $uv$-external upward planar embedding of $G_{\mu}$ satisfying Properties~P1--P3. Otherwise, we proceed (while being sure that, if a $uv$-external upward planar embedding $\mathcal E_{\mu}$ of $G_{\mu}$ is found satisfying Property~P2, then the last four values of the shape description of $\mathcal E_{\mu}$ are those specified in $s$).

We next perform an \emph{angle check}. This verifies whether the shape descriptions $s_1,\dots,s_h$ of the extreme or interesting components and the feasible sets $\mathcal{F}_{\nu_{h+1}},\dots,\mathcal{F}_{\nu_k}$ of the non-extreme boring components allow for an angle assignment that satisfies Properties \textbf{UP0}-\textbf{UP2} of Theorem~\ref{th:upward-conditions}. Consider each vertex $w$ of $\textrm{sk}(\mu)$. 

We first deal with the case in which $w$ is a switch vertex of $G_{\mu}$. Let $l_w$ be a counter, initially set to $0$. The counter $l_w$ is going to represent, in any $uv$-external upward planar embedding $\mathcal E_{\mu}$ of $G_{\mu}$ satisfying Property P2, the number of angles incident to $w$ that are required to be large inside some component $G_{\nu_i}$ of $G_{\mu}$. 

\begin{itemize}
	\item For each extreme or interesting component $G_{\nu_i}$ such that $w=u_i$ (or $w=v_i$) and such that the third (resp.\ the fourth) value of $s_i$ is $-1$, increase $l_w$ by $1$. Namely, if the third (resp.\ the fourth) value of $s_i$ is equal to $-1$, there is a large angle at $w$ in an internal face of every $u_iv_i$-external upward planar embedding of $G_{\nu_i}$ whose shape description is $s_i$.
	\item For each non-extreme boring component $G_{\nu_i}$ such that $w=v_i$ (or $w=u_i$), if $\mathcal{F}_{\nu_i}$ contains only the heart (resp.\ the inverted-heart), then increase $l_w$ by $1$. Namely, if $\mathcal{F}_{\nu_i}$ contains only the heart (resp.\ the inverted-heart), there is a large angle at $w$ in an internal face of every $u_iv_i$-external upward planar embedding of $G_{\nu_i}$ whose shape description is the heart (resp.\ the inverted-heart).
\end{itemize}


We now check whether $l_w\leq 1$. If the check fails, we conclude that there is no $uv$-external upward planar embedding of $G_{\mu}$ satisfying Properties~P1--P3. Indeed, in order to satisfy Property~P2, there should be at least two large angles incident to $w$, which is not possible by Property~\textbf{UP1}. If $l_w=1$, then we label $w$ as \emph{unavailable} (meaning that $w$ is already incident to a large angle inside some component of $G_{\mu}$, hence it cannot be incident to another large angle),  whereas if $l_w=0$, then we label $w$ as \emph{available} (meaning that $w$ still has to be assigned a large angle). 

We next deal with the case in which $w$ is a non-switch vertex of $G_{\mu}$. Recall that $G$ is expanded, hence $G_{\mu}$ contains a unique edge incoming $w$ or a unique edge outgoing $w$. Let $G_{\nu_i}$ be the component of $G_{\mu}$ containing such an edge. Then we check whether the angles at $w$ in the internal faces of every other component are all small. Thus, for each component $G_{\nu_j}$ of $G_{\mu}$ with $j\neq i$ such that $w=u_j$ (or $w=v_j$), we check whether the third value of $s_j$ (resp.\ the fourth value of $s_j$) is equal to $1$. If any of these checks fails, we conclude that there is no $uv$-external upward planar embedding of $G_{\mu}$ satisfying Properties~P1--P3. 

There is a final angle check that needs to be performed on the poles of $\mu$. Namely, if $\lambda^u=1$ ($\lambda^v=1$) and either $u$ (resp.\ $v$) is not a switch vertex or it is an unavailable switch vertex, then we conclude that there is no $uv$-external upward planar embedding of $G_{\mu}$ satisfying Properties~P1--P3. 

\paragraph{Flow network} 

We are now going to borrow ideas from the algorithm presented by Bertolazzi et al.\ in~\cite{bdl-udtd-94} for testing the upward planarity of a plane digraph. As described in Section~\ref{subse:upward-definitions}, Bertolazzi et al.\ reduce the problem of testing whether a digraph with a fixed planar embedding $\mathcal E$ admits an upward planar embedding to the one of testing whether the vertex-face incidence network $\mathcal N$ of $\mathcal E$ (in which the capacities of the arcs of $\mathcal N$ and the supplies and demands of the nodes of $\mathcal N$ are suitably set) admits a flow with a certain value. The skeleton $\textrm{sk}(\mu)$ of our R-node $\mu$ indeed has a fixed planar embedding $\mathcal S_{\mu}$. However, the edges of $\textrm{sk}(\mu)$ are not actual edges, but rather virtual edges that correspond to components of $G_{\mu}$. This is not really problematic for the extreme or interesting components, as we already fixed their shape description. In fact, as the shape description of a component $G_{\nu_i}$ contains all the relevant information that describes how $G_{\nu_i}$ interacts with the rest of $G_{\mu}$, we do not even need to represent $G_{\nu_i}$ in the flow network, but rather we just need to take into account the constant amount of information in its shape description. Handling non-extreme boring components is more challenging. According to Lemma~\ref{lem:shape_desc_boring}, each boring component has $\bigoh(1)$ shape descriptions in its feasible set, however the number of such components is not, in general, bounded by a function of $\sigma$ only, hence we cannot try all possible combinations for their shape descriptions, as we are doing for the interesting components. Lemma~\ref{lem:boring-relationship} comes here to the rescue in order to restrict the possible shape descriptions of each boring component to at most two, namely those in its preferred set; the choice between these two shape descriptions can be plugged into the flow network. 

We now define a bipartite network $\mathcal N(S,T,A)$, where $S$ is a set of sources, $T$ is a set of sinks, and $A$ is a set of arcs from the sources to the sinks such that $G_{\mu}$ admits a $uv$-external upward planar embedding satisfying Properties~P1--P3 if and only if $\mathcal N$ admits a flow of a certain size. Intuitively, and following the approach in~\cite{bdl-udtd-94}, each unit of flow going from a source $s$ to a sink $t$ represents the choice that a vertex of $G_{\mu}$ (represented by $s$) has a large angle in a face (represented by $t$) of the planar embedding of $G_{\mu}$ we construct. The number of such large angles has to be equal to $\frac{n_t}{2}-1$, where $n_t$ is the number of switch angles in $t$.

The network $\mathcal N$ is formally defined as follows. We start by defining the set $S$ of sources.

\begin{figure}[!t]
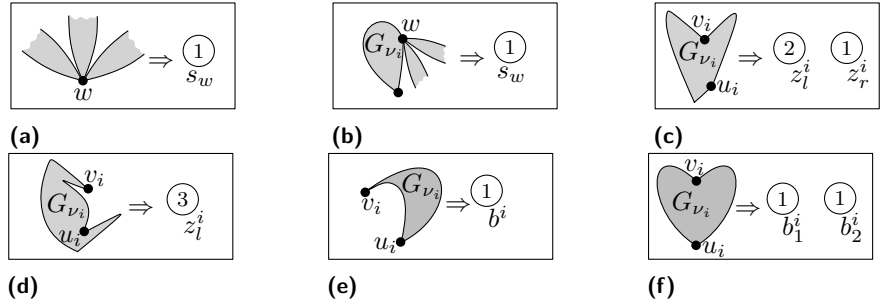

	\centering
	{\centering
		\begin{subfigure}{.19\textwidth}
			\includegraphics[page=1]{figures/Flow-construction}
			\subcaption{}
			\label{fig:network_sources_1}
		\end{subfigure}
	}		
	\hfil
	{\centering
		\begin{subfigure}{.19\textwidth}
			\includegraphics[page=2]{figures/Flow-construction}
			\subcaption{}
			\label{fig:network_sources_2}
		\end{subfigure}
	}
	\hfil
	{\centering
		\begin{subfigure}{.19\textwidth}
			\includegraphics[page=3]{figures/Flow-construction}
			\subcaption{}
			\label{fig:network_sources_3}
		\end{subfigure}
	}\\
	{\centering
		\begin{subfigure}{.19\textwidth}
			\includegraphics[page=4]{figures/Flow-construction}
			\subcaption{}
			\label{fig:network_sources_4}
		\end{subfigure}
	}
	\hfil
	{\centering
		\begin{subfigure}{.19\textwidth}
			\includegraphics[page=5]{figures/Flow-construction}
			\subcaption{}
			\label{fig:network_sources_5}
		\end{subfigure}
	}
	\hfil
	{\centering
		\begin{subfigure}{.19\textwidth}
			\includegraphics[page=6]{figures/Flow-construction}
			\subcaption{}
			\label{fig:network_sources_6}
		\end{subfigure}
	}
	\caption{(a) Sources for switch vertices. (b) Sources for non-switch vertices. (c)--(d) Sources for extreme or interesting components. (e) Sources for boring hats. (f) Sources for boring hearts.}\label{fig:network_sources}
\end{figure}

\begin{description}
	\item[Switch vertices.] For each available switch vertex $w$ of $G_{\mu}$ that is a vertex of $\textrm{sk}(\mu)$, there is a source $s_w$ in $S$ which can supply a single unit of flow; see Figure~\ref{fig:network_sources_1}. 
	\item[Non-switch vertices.] For each non-switch vertex $w$ of $G_{\mu}$ that is a vertex of $\textrm{sk}(\mu)$, there is a source $s_w$ in $S$ which can supply a single unit of flow if and only if the following three conditions are satisfied (see Figure~\ref{fig:network_sources_2}):
	\begin{enumerate}
		\item The special component for $w$ is a non-extreme boring component $G_{\nu_i}$ of $G_{\mu}$. 
		\item Either $w=v_i$ and the preferred set $\mathcal P_{\nu_i}$ of $\nu_i$ consists of the left-wing and the right-wing, or $w=u_i$ and $\mathcal P_{\nu_i}$ consists of the inverted-left-wing and the inverted-right-wing.
		\item The special edge for $w$ is outgoing $w$. 
	\end{enumerate} 
	\item[Extreme or interesting components.] Consider each extreme or interesting component $G_{\nu_i}$. If the shape description $s_i$ has the first value (i.e., the left-turn-number) $\tau_l^i$ greater than $0$, then there is a source $z_l^i$ in $S$ which can supply $\tau_l^i$ units of flow. Further, if $s_i$ has the second value (i.e., the right-turn-number) $\tau_r^i$ greater than $0$, then there is a source $z_r^i$ in $S$ which can supply $\tau_r^i$ units of flow; see Figures~\ref{fig:network_sources_3} and~\ref{fig:network_sources_4}. 
	\item[Non-extreme boring hats.] For each non-extreme boring component $G_{\nu_i}$ whose preferred set $\mathcal P_{\nu_i}$ consists of the hat and of the inverted-hat, there is a source $b^i$ which can supply a single unit of flow; see Figure~\ref{fig:network_sources_5}.
	\item [Non-extreme boring hearts.] For each non-extreme boring component $G_{\nu_i}$ whose preferred set $\mathcal P_{\nu_i}$ consists of the heart, or of the inverted-heart, or of both, there are two sources $b^i_1$ and $b^i_2$, each of which can supply a single unit of flow; see Figure~\ref{fig:network_sources_6}. 	
\end{description}

The set $T$ contains the following sinks. For a virtual edge $e_i$ of $\textrm{sk}(\mu)$, denote by $f_l(e_i)$ and $f_r(e_i)$ the faces of $\mathcal S_{\mu}$ which are respectively to the left and to the right of $e_i$ when traversing such an edge from $u_i$ to $v_i$. 

\begin{figure}[htb]
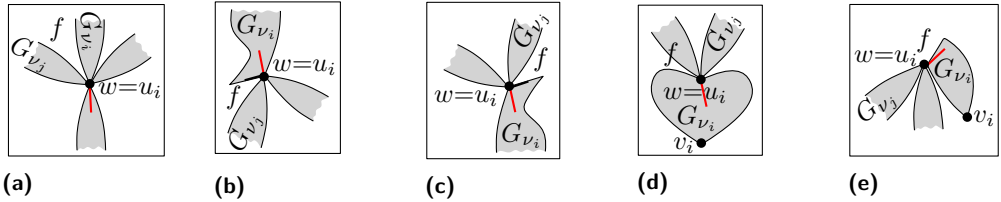

	\centering
	{\centering
		\begin{subfigure}{.18\textwidth}
			\includegraphics[page=8]{figures/Flow-construction}
			\subcaption{}
			\label{fig:network_sinks_1}
		\end{subfigure}
	}		
	\hfil
	{\centering
		\begin{subfigure}{.18\textwidth}
			\includegraphics[page=9]{figures/Flow-construction}
			\subcaption{}
			\label{fig:network_sinks_2}
		\end{subfigure}
	}
	\hfil
	{\centering
		\begin{subfigure}{.18\textwidth}
			\includegraphics[page=10]{figures/Flow-construction}
			\subcaption{}
			\label{fig:network_sinks_3}
		\end{subfigure}
	}
	\hfil
	{\centering
		\begin{subfigure}{.18\textwidth}
			\includegraphics[page=11]{figures/Flow-construction}
			\subcaption{}
			\label{fig:network_sinks_4}
		\end{subfigure}
	}
	\hfil
	{\centering
		\begin{subfigure}{.18\textwidth}
			\includegraphics[page=12]{figures/Flow-construction}
			\subcaption{}
			\label{fig:network_sinks_5}
		\end{subfigure}
	}
	\caption{Non-switch vertices $w$ that count for the value of $n_f$. The special edge for $w$ is red. (a) $G_{\nu_i}$ is not the special component for $w$. In (b)--(e)  $G_{\nu_i}$ is the special component for $w$, in (b)--(c) it is an extreme or interesting component, while in (d)--(e) it is non-extreme and boring. (b) The fifth value of $s_i$ is in and the special edge for $w$ is outgoing $w$. (c) The fifth value of $s_i$ is out and the special edge for $w$ is incoming $w$. (d) $\mathcal P_{\nu_i}$ consists of the inverted-heart. (e) $\mathcal P_{\nu_i}$ consists of the inverted-right-wing and the special edge is outgoing $w$.}\label{fig:network_sinks}
\end{figure}

\begin{description}
	\item[Internal faces.] For each internal face $f$ of $\mathcal S_{\mu}$, there is a sink $t_f$ in $T$ that demands a number $d_f$ of units of flow equal to $\frac{n_f}{2}-1$, where $n_f$ is equal to the sum of the following numbers (roughly speaking, $n_f$ represents the number of switch angles incident to $f$ which are not yet ``paired'' with large angles incident to $f$; what follows allows us to count the number of such switch angles although a planar embedding of $G_{\mu}$ is not specified):
	\begin{itemize}
		\item For each virtual edge $e_i$ incident to $f$ such that $G_{\nu_i}$ is an extreme or interesting component and such that $f=f_l(e_i)$ ($f=f_r(e_i)$), the absolute value of $\tau_l^i$ (resp.\ of $\tau_r^i$).
		\item The number of virtual edges $e_i$ incident to $f$ such that $G_{\nu_i}$ is a non-extreme boring component whose preferred set consists of the hat and of the inverted-hat, or contains the heart and/or the inverted-heart.  
		\item The number of switch vertices of $\textrm{sk}(\mu)$ incident to $f$.
		\item The number of non-switch vertices $w$ of $\textrm{sk}(\mu)$ incident to $f$ that satisfy one of the following three conditions. Let $e_i$ and $e_j$ be the virtual edges incident to $w$ and $f$. Assume, w.l.o.g.\ up to relabeling of $e_i$ with $e_j$, that $G_{\nu_j}$ is not the special component for $w$. Further, assume that $w=u_i$; the case $w=v_i$ is analogous. Finally, assume that $f=f_l(e_i)$; the case $f=f_r(e_i)$ is analogous. The conditions are as follows. 
		\begin{enumerate}
			\item $G_{\nu_i}$ is not the special component for $w$ (see Figure~\ref{fig:network_sinks_1}); or
			\item $G_{\nu_i}$ is the special component for $w$, it is an extreme or interesting component, and the fifth value of $s_i$ is in if and only if the special edge for $w$ is outgoing $w$ (see Figures~\ref{fig:network_sinks_2} and~\ref{fig:network_sinks_3}); note that fifth value of $s_i$ determines the direction of the edge of $G_{\nu_i}$ that is incident to $u_i$ and to $f$ in any $u_iv_i$-external upward planar embedding of $G_{\nu_i}$ whose shape description is $s_i$; or
			\item $G_{\nu_i}$ is the special component for $w$, it is a non-extreme boring component, and either (a) the preferred set $\mathcal P_{\nu_i}$ of $G_{\nu_i}$ consists of the inverted-heart (see Figure~\ref{fig:network_sinks_4}), or (b) $\mathcal P_{\nu_i}$ consists of the inverted-right-wing and the special edge is outgoing $w$ (see Figure~\ref{fig:network_sinks_5}). 			
		\end{enumerate}
	\end{itemize}   
	\item[Left- and right-turn-number.] There are two sinks $t_l$ and $t_r$ which demand $\frac{\tau_l+n^l_f}{2}$ and $\frac{\tau_r+n^r_f}{2}$ units of flow, respectively, where $\tau_l$ and $\tau_r$ are the left- and right-turn-numbers of the shape description $s$ whose membership in $\mathcal F_{\mu}$ we are establishing, and where $n^l_f$ ($n^r_f$) is defined as $n_f$ in the previous item, however the face $f$ is the outer face, and the virtual edges and the vertices to be considered in the count for $n^l_f$ (resp.\ $n^r_f$) are only those in the left (resp.\ right) outer path of $\mathcal S_{\mu}$ and different from $u$ and $v$. 
	\item[Heart or inverted-heart.] For each non-extreme boring component $G_{\nu_i}$ whose preferred set consists of the heart and of the inverted-heart, there is a sink $t_i$ which demands a single unit of flow; see Figure~\ref{fig:network_sinks_3}. Note that either an angle at $u_i$ or an angle at $v_i$ inside the upward planar embedding of $G_{\nu_i}$ has to be large.
	\item[Angle at $u$.] If $\lambda^u=1$, there is a sink $t^u$ which demands a single unit of flow.
	\item[Angle at $v$.] If $\lambda^v=1$, there is a sink $t^v$ which demands a single unit of flow.
\end{description}

Finally, the set $A$ contains the following arcs. 


\begin{description}
	\item[Switch vertices to faces.] For each available switch vertex $w\notin \{u,v\}$ in $\textrm{sk}(\mu)$, there is an arc with capacity $1$ from the corresponding source $s_w\in S$ to each sink $t_f$ corresponding to a face $f$ of $\mathcal S_{\mu}$ incident to $w$ (among these sinks there is $t_l$ or $t_r$, if $w$ belongs to the left or to the right outer path of $\mathcal S_{\mu}$, respectively).
	\item[Non-switch vertices to faces.] For each non-switch vertex $w$ of $G_{\mu}$ that corresponds to a source $s_w$ in $S$, let $e_i$ be the virtual edge of $\textrm{sk}(\mu)$ corresponding to the special component for $w$. Then there is an arc with capacity $1$ from $s_w$ to each of the two sinks corresponding to the two faces $f_l(e_i)$ and $f_r(e_i)$ of $\mathcal S_{\mu}$ incident to $e_i$ (one of these sinks is $t_l$ or $t_r$ if $e_i$ belongs to the left or to the right outer path of $\mathcal S_{\mu}$, respectively).  
	\item[Extreme or interesting components to faces on the left.] For each extreme or interesting component $G_{\nu_i}$ such that the first value $\tau_l^i$ of $s_i$ is greater than $0$, there is an arc with capacity $\tau_l^i$ from $z_l^i$ to the sink $t_f$ corresponding to the face $f=f_l(e_i)$ of $\mathcal S_{\mu}$; note that $t_f=t_l$ or $t_f=t_r$ if $e_i$ belongs to the left or to the right outer path of $\mathcal S_{\mu}$, respectively, and $f$ is the outer face of $\mathcal S_{\mu}$. 
	\item[Extreme or interesting components to faces on the right.] For each extreme or interesting component $G_{\nu_i}$ such that the  second value $\tau_r^i$ of $s_i$ is greater than $0$, there is an arc with capacity $\tau_r^i$ from $z_r^i$ to the sink $t_f$ corresponding to the face $f=f_r(e_i)$ of $\mathcal S_{\mu}$; note that $t_f=t_l$ or $t_f=t_r$ if $e_i$ belongs to the left or to the right outer path of $\mathcal S_{\mu}$, respectively, and $f$ is the outer face of $\mathcal S_{\mu}$.
	\item[Non-extreme boring hats to faces.] For each non-extreme boring component $\mathcal G_{\nu_i}$ whose preferred set $\mathcal P_{\nu_i}$ consists of the hat and of the inverted-hat, there is an arc with capacity $1$ from $b^i$ to each of the two sinks corresponding to the faces $f_l(e_i)$ and $f_r(e_i)$ of $\mathcal S_{\mu}$ incident to $e_i$ (one of these sinks is $t_l$ or $t_r$ if $e_i$ belongs to the left or to the right outer path of $\mathcal S_{\mu}$, respectively). 
	\item[Non-extreme boring hearts to faces.] For each non-extreme boring component $\mathcal G_{\nu_i}$ whose preferred set $\mathcal P_{\nu_i}$ consists of the heart, or of the inverted-heart, or of both, there is an arc with capacity $1$ from $b^i_1$ to the sink corresponding to the face $f_l(e_i)$ of $\mathcal S_{\mu}$ and there is an arc with capacity $1$ from $b^i_2$ to the sink corresponding to the face $f_r(e_i)$ of $\mathcal S_{\mu}$ (one of these sinks is $t_l$ or $t_r$ if $e_i$ belongs to the left or right outer path of $\mathcal S_{\mu}$, respectively).
	\item[Switch vertices to non-extreme boring hearts and inverted-hearts.] For each non-extreme boring component $G_{\nu_i}$ whose preferred set contains the heart and the inverted-heart, there is an arc with capacity $1$ from each pole of $G_{\nu_i}$ that is an available switch vertex to the sink $t_i$. 
	\item[First pole.] If $\lambda^u=1$, then there is an arc with capacity $1$ from the source $s_u$ to the sink $t^u$. If $u$ is a switch vertex and $\lambda^u\in \{-1,0\}$, there is an arc with capacity $1$ from $s_u$ to each sink $t_f$ corresponding to an internal face $f$ of $\mathcal S_{\mu}$ incident to $u$. 
	\item[Second pole.] If $\lambda^v=1$, then there is an arc with capacity $1$ from the source $s_v$ to the sink $t^v$. If $v$ is a switch vertex and $\lambda^v\in \{-1,0\}$, there is an arc with capacity $1$ from $s_v$ to each sink $t_f$ corresponding to an internal face $f$ of $\mathcal S_{\mu}$ incident to $v$.
\end{description}

\begin{figure}[htb]
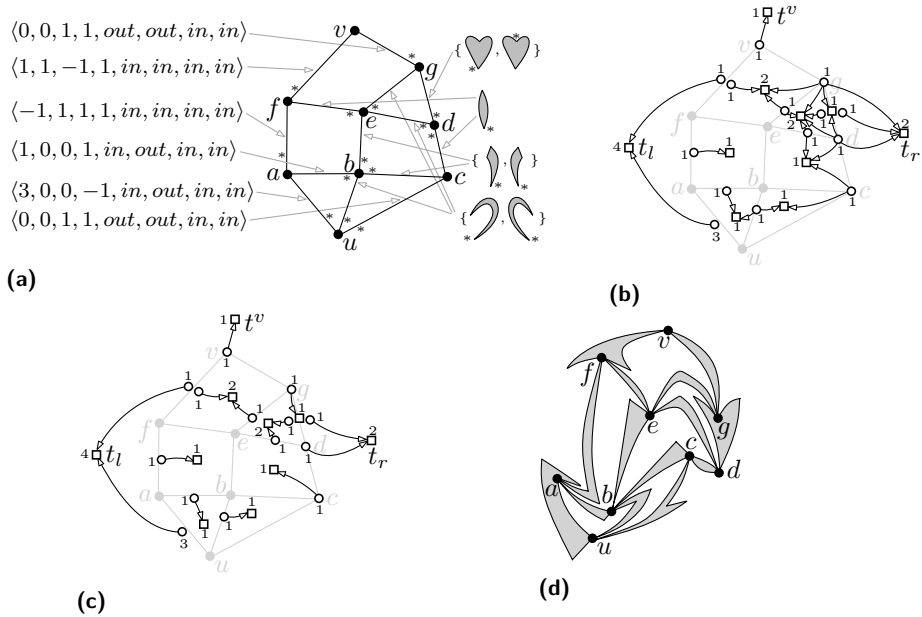

	\centering
	\begin{subfigure}{.5\textwidth}
		\includegraphics[page=14]{figures/Flow-construction}
		\subcaption{}
		\label{fig:network_example_1}
	\end{subfigure}
	\hfil
	\begin{subfigure}{.3\textwidth}
		\includegraphics[page=15]{figures/Flow-construction}
		\subcaption{}
		\label{fig:network_example_2}
	\end{subfigure}
	\\
	\begin{subfigure}{.3\textwidth}
		\includegraphics[page=16]{figures/Flow-construction}
		\subcaption{}
		\label{fig:network_example_3}
	\end{subfigure}
	\hfil
	\begin{subfigure}{.3\textwidth}
		\includegraphics[page=13]{figures/Flow-construction}
		\subcaption{}
		\label{fig:network_example_4}
	\end{subfigure}
	\caption{Illustration for the construction of a flow network $\mathcal N$ that allows us to determine whether a shape description $s=\langle 1,0,-1,0,\textrm{in},\textrm{out},\textrm{in},\textrm{in}\rangle$ belongs to $\mathcal F_{\mu}$. (a) shows the input, which contains of a shape description $s_i$ for each extreme or interesting component $G_{\nu_i}$ of $G_{\mu}$ and the preferred set $\mathcal P_{\nu_i}$ for each non-extreme boring component $G_{\nu_i}$ of $G_{\mu}$. (b) shows the corresponding flow network; arc capacities are not shown (each of them is equal to the supply of the source of the arc). (c) shows a flow for $\mathcal N$ whose value coincides with the total number of units of flow that are demanded by the sinks of $\mathcal N$; each shown arc is traversed by a flow equal to its capacity. (d) shows a $uv$-external upward planar embedding of $G_{\mu}$ with shape description $s$ corresponding to the flow.}\label{fig:network_example}
\end{figure}

This concludes the construction of the network $\mathcal N$. Note that $\mathcal N$ has $\bigoh(k)$ nodes and arcs, as $\textrm{sk}(\mu)$ has $\bigoh(k)$ vertices and virtual edges. We define $F$ as the total number of units of flow that are demanded by the sinks in $T$. We test whether every sink has a non-negative demand and whether $\mathcal N$ admits a flow with value $F$. The latter can be done in $\bigoh(k\log^3 k)$ time by means of an algorithm by Borradaile et al.~\cite{bkm-msms-17}. We conclude that $G_{\mu}$ admits a $uv$-external upward planar embedding satisfying Properties~P1--P3 if and only if the tests are successful. 

Figure~\ref{fig:network_example} for an example of the described algorithm to construct $\mathcal N$. In this example, the demand of each sink $x$ is the following.

\begin{itemize}
	\item For the sink corresponding to the internal face $x$ incident to $u$, $a$, and $b$, the demand is $1$, as $n_x=4$: One unit comes from the switch vertex $a$, one from the non-switch vertex $u$, one from the boring component corresponding to the virtual edge $(u,b)$, and one from the interesting component corresponding to the virtual edge $(b,a)$.
	\item For the sink corresponding to the internal face $x$ incident to $u$, $b$, and $c$, the demand is $1$, as $n_x=4$: Three units come from the non-switch vertices $u$, $b$, and $c$, and one from the boring component corresponding to the virtual edge $(u,b)$.
	\item For the sink corresponding to the internal face $x$ incident to $a$, $b$, $e$, and $f$, the demand is $1$, as $n_x=4$: Two units come from the switch vertices $a$ and $f$, one from the non-switch vertex $b$, and one from the interesting component corresponding to the virtual edge $(a,f)$.
	\item For the sink corresponding to the internal face $x$ incident to $b$, $c$, $d$, and $e$, the demand is $1$, as $n_x=4$: One unit comes from the switch vertex $d$, two from the non-switch vertices $b$ and $c$, and one from the boring component corresponding to the virtual edge $(d,e)$.
	\item For the sink corresponding to the internal face $x$ incident to $d$, $e$, and $g$, the demand is $2$, as $n_x=6$: Two units come from the switch vertices $d$ and $g$, one from the non-switch vertex $e$, and three from the boring components corresponding to the virtual edges $(d,e)$, $(d,g)$, and $(e,g)$.
	\item For the sink corresponding to the internal face $x$ incident to $e$, $f$, $g$, and $v$, the demand is $2$, as $n_x=6$: Three units come from the switch vertices $f$, $g$, and $v$, one from the non-switch vertex $e$, one from the boring component corresponding to the virtual edge $(e,g)$, and one from the interesting component corresponding to the virtual edge $(f,v)$.
	\item For the sink $t_l$, the demand is $4$, as $\tau_l$ is $1$ and $n^l_x=7$, where $x$ is the outer face: Two units come from the switch vertices $a$ and $f$, three units from the interesting component corresponding to the virtual edge $(u,a)$, two units from the interesting components corresponding to the virtual edges $(a,f)$ and $(f,v)$. 
	\item For the sink $t_r$, the demand is $2$, as $\tau_r$ is $0$ and $n^r_x=4$, where $x$ is the outer face: Two units come from the switch vertices $d$ and $g$, one from the non-switch vertex $c$, and one from the boring component corresponding to the virtual edge $(d,g)$. 
	\item For the sink corresponding to the boring component whose preferred set consists of the heart and the inverted-heart, the demand is $1$. 
	\item For the sink $t^v$, the demand is $1$, as $\lambda^v=1$. 
\end{itemize}

\paragraph{Correctness and Running Time} 

In Lemma~\ref{lem:R_node_sources} below, we argue about the correctness and the running time of our algorithm. For the latter, we need the following auxiliary lemma.

  \begin{lemma}  \label{le:number_combinations}
		We have $\ell\in \bigoh(1.45^\sigma)$. 
	\end{lemma}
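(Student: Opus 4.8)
The plan is to bound $\ell = \prod_{i=1}^h |\mathcal{F}_{\nu_i}|$ by carefully accounting for how the sources of $G_\mu$ are distributed among the extreme or interesting components $G_{\nu_1}, \dots, G_{\nu_h}$. First I would observe that there are at most four extreme components, and for each such component $G_{\nu_i}$ the bound $|\mathcal F_{\nu_i}| \le 72\sigma_{\nu_i} + 54$ from Lemma~\ref{lem:feasible_size} still holds, where $\sigma_{\nu_i}$ is the number of sources in $G_{\nu_i}$ other than its poles; since $\sum_i \sigma_{\nu_i} \le \sigma$, these four factors alone contribute at most $(72\sigma + 54)^4$, which is polynomial in $\sigma$ and hence absorbed into the $\bigoh(\cdot)$ without affecting the base of the exponential.

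The main point is the interesting (non-extreme) components. Each such component $G_{\nu_i}$ contains $\sigma_{\nu_i} \ge 1$ sources other than its poles, and by Lemma~\ref{lem:feasible_size} contributes a factor $|\mathcal F_{\nu_i}| \le 72\sigma_{\nu_i} + 54 \le 126\sigma_{\nu_i}$ (using $\sigma_{\nu_i}\ge 1$). The key constraint is $\sum_i \sigma_{\nu_i} \le \sigma$ over all interesting components, so I would bound $\prod_i (126 \sigma_{\nu_i})$ subject to this. The step here is: for a fixed partition $\sigma = \sum_{j=1}^m a_j$ with each $a_j \ge 1$, we have $\prod_j (126 a_j) \le 126^m \prod_j a_j$. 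Since $\prod a_j$ is maximized, among partitions into parts of the same count, by making the parts as equal as possible, and since $126 a_j \le c^{a_j}$ holds for a suitable constant $c$ once $a_j$ is large enough while small parts $a_j \in \{1,2,\dots\}$ can be checked directly, one concludes that $\prod_j (126 a_j) \le \gamma^\sigma$ for some constant $\gamma$. The remaining work is to pin down the constant: the worst case is when all parts equal the integer $a$ maximizing $(126 a)^{1/a}$, and one checks $a=7$ gives $(882)^{1/7} \approx 2.6$ — too large, so one must instead observe that $126 a_j$ is an overcount and use the sharper bound $72\sigma_{\nu_i}+54$ together with the fact that the factors coming from distinct interesting components also ``spend'' distinct sources, so effectively we want $\max_a (72a+54)^{1/a}$; checking small values of $a$, the maximum of $(72a+54)^{1/a}$ is attained and turns out to be below $1.45$ — I would verify this by direct computation at $a = 8, 9, \dots$ and note it is decreasing for large $a$ and below $1.45$ at the relevant integer points. (More precisely one likely needs $a$ not too small; for $a=1$ it is $126$, for $a=2$ it is $\sqrt{198}\approx 14$, so the bound as stated must be combining the contributions more cleverly — probably the correct reading is that the total over interesting components is $\prod(72\sigma_{\nu_i}+54)$ and one shows this is $\bigoh(1.45^\sigma)$ by absorbing a polynomial factor: $\prod(72\sigma_{\nu_i}+54) \le (\text{poly}(\sigma)) \cdot \prod 2^{\sigma_{\nu_i}} \cdot (\text{something} < 1.45)$ — the actual argument must isolate that the number $m$ of interesting components is at most $\sigma$ and that $\prod 126\sigma_{\nu_i}$ with $\sum\sigma_{\nu_i}\le\sigma$ and each part $\ge1$ is bounded by the value at the all-equal partition, which for the optimal part size gives a base under $1.45$.)

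Concretely, the clean way to carry this out: let $m$ be the number of interesting components and $a_1,\dots,a_m \ge 1$ their source counts with $\sum a_j = \sigma' \le \sigma$. Then $\ell \le (72\sigma+54)^4 \cdot \prod_{j=1}^m (72 a_j + 54)$. I would prove $\prod_{j=1}^m (72 a_j + 54) \le 1.45^\sigma$ (absorbing the polynomial prefactor into $\bigoh$). Taking logarithms, this is $\sum_j \log(72 a_j + 54) \le \sigma' \log 1.45$, i.e., $\sum_j \bigl(\log(72 a_j + 54) - a_j \log 1.45\bigr) \le 0$. The function $g(a) = \log(72 a + 54) - a\log 1.45$ is concave-ish and I would show $g(a) \le 0$ for all integers $a \ge a_0$ for an appropriate threshold $a_0$, while handling $a < a_0$ separately — but since for $a=1$, $g(1) = \log 126 - \log 1.45 > 0$, this cannot work termwise, confirming that the intended argument is different and must exploit that the total number of interesting components is bounded and hence the polynomial slack is distributed. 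The honest plan is therefore: bound $m \le \sigma$, write $\prod(72a_j+54) \le \prod 126 a_j = 126^m \prod a_j \le 126^\sigma \cdot (\sigma/m)^m$ by AM-GM on $\prod a_j$ given $\sum a_j \le \sigma$; then optimize $126^{m}(\sigma/m)^m = (126\sigma/m)^m$ over $m\le\sigma$ — wait, this is increasing in a complicated way, so the correct final move (which I expect is what the authors do) is to combine this single-exponential bound with the fact that $\prod a_j \le 2^\sigma$ when all $a_j \ge 2$ and to peel off the $a_j = 1$ terms, of which there are at most $\sigma$, each contributing a factor $\le 126$ but only $\sigma/126^{\,?}$ — I would, in the writeup, simply track constants carefully: the $a_j = 1$ components contribute $54 + 72 = 126$ each but there can be arbitrarily many of them only if $\sigma$ is large, and a cleaner bound $\prod (72 a_j + 54) \le \prod (126^{a_j})^{1/?}$ will not hold, so the real claim must be $\ell = \bigoh(1.45^\sigma)$ via $72a+54 \le 1.45^{2a} \cdot O(1)$ which holds for $a$ large and is patched for small $a$ — since $\sum a_j \le \sigma$ and we only lose a multiplicative $\bigoh(1)$ per component, and the number of components with $a_j$ below the threshold is bounded because each still uses at least one source. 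I expect the main obstacle to be exactly this bookkeeping: showing that the polynomial and small-$a$ contributions genuinely get absorbed so that the exponential base is $1.45$ rather than something larger, which requires the sharp observation that each of the at most $\sigma$ interesting components is charged against a \emph{distinct} set of sources, so the product telescopes against the constraint $\sum_j a_j \le \sigma$ rather than against $m$ independently.
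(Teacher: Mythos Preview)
You correctly set up the product $\ell=\prod_{i=1}^h|\mathcal F_{\nu_i}|$, correctly pull off the at most four extreme components as a polynomial factor, and correctly identify the central obstruction: for the interesting components each factor is $72\sigma_i+54$ with $\sigma_i\ge 1$ and $\sum_i\sigma_i\le\sigma$, and if all $\sigma_i=1$ the product is $126^\sigma$, nowhere near $1.45^\sigma$. You then circle this point several times, trying termwise inequalities of the form $72a+54\le 1.45^{ca}$, and never land---because no such termwise bound holds for small $a$.

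The paper takes a different route: it strips the linear coefficients and bounds the \emph{bare} product $\prod_i\sigma_i$. The key observation you are missing is the classical optimization: with $\sum_i\sigma_i=\bar\sigma\le\sigma$ and $x$ terms, $\prod_i\sigma_i\le(\bar\sigma/x)^x$ by AM--GM, and $(\bar\sigma/x)^x$ is maximized over real $x$ at $x=\bar\sigma/e$, giving $e^{\bar\sigma/e}<1.45^{\bar\sigma}$. This is where the base $1.45\approx e^{1/e}$ comes from, and it is the one concrete idea your writeup lacks.

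That said, your instinct about the constants is sound. The paper's written step $\prod_{i}(72\sigma_i+54)\le 73\prod_i\sigma_i$ is not correct (take $\sigma_i=1$: it asserts $126\le 73$). Using the valid bound $72\sigma_i+54\le 126\sigma_i$ for $\sigma_i\ge 1$, one gets $\ell\le 54^4\cdot 126^{\,h-m}\cdot e^{\sigma/e}$, and since the number $h-m$ of interesting components can be as large as $\sigma$, this is $\bigoh(126^\sigma)$, not $\bigoh(1.45^\sigma)$. So the paper's argument supplies the $e^{\sigma/e}$ optimization you were missing, but as written it does not actually deliver the claimed base $1.45$ either; a single-exponential bound in $\sigma$ (which is all that is needed for the FPT result) does follow.
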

	
	  \begin{proof}
		For $i=1,\dots,h$, let $\sigma_i$ be the number of sources of $G_{\nu_i}$ different from $u_i$ and $v_i$ and note that $\sum_{i=1}^h \sigma_i=:\overline{\sigma}\leq \sigma$. By Lemma~\ref{lem:feasible_size}, the size of $\mathcal F_{\nu_i}$ is at most $72\sigma_i+54$. Hence, we have that $\ell \leq \prod_{i=1}^h (72\sigma_i+54)$. 
		
		Recall that each component among $G_{\nu_1},\dots,G_{\nu_{h}}$ is extreme or interesting, and that at most four components are extreme; each extreme component is either boring or interesting. Let $0\leq m\leq 4$ be the number of boring components among $G_{\nu_1},\dots,G_{\nu_{h}}$ and assume, w.l.o.g., that such components come first in the order $G_{\nu_1},\dots,G_{\nu_{h}}$. Hence, we have $\sigma_{i}=0$ for $i=1,\dots,m$, and $\sigma_i>0$ for $i=m+1,\dots,h$. Then we have $\ell \leq \left(\prod_{i=1}^{m} 54 \right)\cdot \left(\prod_{i=m+1}^{h} (72\sigma_i+54)\right)\leq 54^4 \cdot 73 \prod_{i=m+1}^{h} \sigma_i$. We prove that $\prod_{i=m+1}^{h} \sigma_i \leq e^{\sigma/e}$. 
		
		First, we have $\prod_{i=m+1}^{h} \sigma_i\leq (\overline{\sigma}/x)^x$, where $x:=h-m$; that is, $\prod_{i=m+1}^{h} \sigma_i$ is maximum when all the values $\sigma_i$ are equal. Indeed, if there is a value $\sigma_i>\overline{\sigma}/x$ and a value $\sigma_j<\overline{\sigma}/x$, then $\sigma_i\cdot \sigma_j<(\overline{\sigma}/x)\cdot (\sigma_i+\sigma_j-\overline{\sigma}/x)$, as can be proved by substituting $\sigma_i= \overline{\sigma}/x + y$ and $\sigma_j= \overline{\sigma}/x - y'$. Hence, the values $\sigma_i$ can be modified into $\overline{\sigma}/x$ one by one without altering the sum $\sum_{i=m+1}^h \sigma_i=\overline{\sigma}$ and while increasing the product $\prod_{i=m+1}^{h} \sigma_i$.
		
		Second, we have that $f(x):=(\overline{\sigma}/x)^x$ has its maximum when $x=\overline{\sigma}/e$, where $e$ is the base of the natural logarithm. Indeed, we have $\frac{\partial f(x)}{\partial x}=(\overline{\sigma}/x)^x(\ln(\overline{\sigma}/x)-1)$, which is positive for $x\in (0,\overline{\sigma}/e)$, null for $x=\overline{\sigma}/e$, and negative for $x\in (\overline{\sigma}/e,\overline{\sigma}]$. The value of $f(x)$ at its maximum is hence $e^{\overline{\sigma}/e}<1.45^{\overline{\sigma}}$. It follows that $\ell \leq 54^4 \cdot 73 \prod_{i=m+1}^{h} \sigma_i \in \bigoh(1.45^\sigma)$. 
	\end{proof} 
	
	We are now ready to prove the following.
			
  \begin{lemma}  \label{lem:R_node_sources}
		Let $\mu$ be an R-node of $T$. The feasible set $\mathcal F_{\mu}$ of $\mu$ can be computed in $\bigoh(\sigma 1.45^\sigma \cdot k\log^3 k))$ time, where $k$ is the number of children of $\mu$ in $T$ and $\sigma$ is the number of sources of $G$. 
	\end{lemma}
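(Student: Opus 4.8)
The plan is to prove Lemma~\ref{lem:R_node_sources} by establishing the correctness of the flow-network reduction described above and then bounding its running time. The proof naturally splits into two halves: a \emph{correctness} part showing that $G_\mu$ admits a $uv$-external upward planar embedding satisfying Properties~P1--P3 if and only if the network $\mathcal N$ (built from the fixed choices $s_1,\dots,s_h,\mathcal S_\mu$) admits a flow of value $F$; and a \emph{complexity} part summing up the cost over all choices.

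For the correctness part, I would argue both directions in the spirit of Bertolazzi et al.~\cite{bdl-udtd-94}, using Theorem~\ref{th:upward-conditions} as the certificate of upward planarity. In the forward direction, given a $uv$-external upward planar embedding $\mathcal E_\mu$ satisfying P1--P3, I would read off, for every vertex $w$ of $\textrm{sk}(\mu)$, the face of $\mathcal S_\mu$ in which the (at most one) large angle at $w$ lies, and push one unit of flow along the corresponding arc $s_w \to t_f$; for every extreme/interesting component I would route the $\tau_l^i$ (resp.\ $\tau_r^i$) units from $z_l^i$ (resp.\ $z_r^i$) into the face to the left (resp.\ right) of $e_i$; for every non-extreme boring hat/heart I would choose the flow according to which of the (at most two) preferred shape descriptions $\mathcal E_\mu$ realizes; and for the poles I would handle $\lambda^u,\lambda^v$ via $t^u,t^v$. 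One then checks, using \textbf{UP1}--\textbf{UP3} and the definition of $n_f$, that each sink receives exactly its demand — the key identity being that $n_f$ was constructed precisely to count the switch angles incident to $f$ after accounting for the contributions already fixed inside the components, so the number of large angles ``needed'' in $f$ is $n_f/2-1$ (internal) or $n_f/2+1$ (outer). For the backward direction, given a feasible flow, I would reverse this: assign a large angle at $w$ in face $f$ whenever the arc $s_w\to t_f$ carries a unit; for each non-extreme boring component pick the shape description from its preferred set dictated by which of its two outgoing arcs is saturated (with the ``transfer to an adjacent non-switch vertex'' trick of the non-switch-vertex sources handling the left-wing/right-wing case); then combine these with the already-fixed embeddings of the extreme/interesting components and invoke Theorem~\ref{th:upward-conditions} to conclude that the resulting angle assignment is an upward planar embedding, whose shape description is $s$ by the extreme-edge check and the turn-number sinks $t_l,t_r$. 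The invariance under swapping heart/inverted-heart for a wing or hat is exactly the content of Lemma~\ref{lem:boring-relationship} (statements S5--S6), which justifies restricting each boring component to its preferred set $\mathcal P_{\nu_i}$.

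For the running time, I would argue as follows. Fixing $s$ costs nothing; by Lemma~\ref{le:turn-set-range} and Lemma~\ref{lem:shape_desc_values} there are $\bigoh(\sigma)$ choices of $s$ to test. For each $s$, the coherence and extreme-edge checks take $\bigoh(1)$ time, and the angle checks take $\bigoh(k)$ time (one pass over the vertices and virtual edges of $\textrm{sk}(\mu)$, using the $\bigoh(1)$-size preferred sets). The number of joint choices of $s_1,\dots,s_h,\mathcal S_\mu$ is $2\ell$, and by Lemma~\ref{le:number_combinations} we have $\ell\in\bigoh(1.45^\sigma)$. For each such choice, building $\mathcal N$ takes $\bigoh(k)$ time since $\mathcal N$ has $\bigoh(k)$ nodes and arcs, and testing whether $\mathcal N$ admits a flow of value $F$ takes $\bigoh(k\log^3 k)$ time by the planar multiple-source multiple-sink max-flow algorithm of Borradaile et al.~\cite{bkm-msms-17} (here I would also note that $\mathcal N$, being a vertex--face incidence structure of a plane graph augmented with $\bigoh(k)$ extra degree-$\bigoh(1)$ nodes attached along faces, is planar). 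Multiplying the $\bigoh(\sigma)$ choices of $s$, the $\bigoh(1.45^\sigma)$ choices of $(s_1,\dots,s_h,\mathcal S_\mu)$, and the $\bigoh(k\log^3 k)$ per-network cost yields the claimed bound $\bigoh(\sigma\, 1.45^\sigma \cdot k\log^3 k)$; storing the resulting shape descriptions into $\mathcal F_\mu$ adds only $\bigoh(1)$ per success by Lemma~\ref{lem:matrix_feasible}.

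The main obstacle I anticipate is the backward direction of correctness: showing that a feasible flow can always be ``decoded'' into a genuine upward planar embedding. The subtlety is that the flow only records which face each large angle goes to and which of two options each boring component takes, but one must verify that these local choices are globally consistent — i.e., that when all the fixed component embeddings are glued into the fixed skeleton embedding $\mathcal S_\mu$ and the angle labels are filled in as prescribed, every face of the resulting embedding of $G_\mu$ satisfies \textbf{UP3}. This requires carefully tracking how each component's left/right-turn-number and pole labels contribute to $n_f$, and handling the wing components whose choice affects whether a pole is a switch vertex in a face (the case the ``transfer to an adjacent non-switch vertex'' mechanism and the non-switch-vertex sources were designed for); getting the bookkeeping exactly right so that the demand $n_f/2-1$ is met with equality (not just inequality) is where the care is needed. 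A secondary technical point is verifying that the various preliminary checks are not merely necessary but, together with feasibility of $\mathcal N$, sufficient — in particular that the angle check correctly rules out exactly the infeasible configurations at switch and non-switch vertices of the skeleton.
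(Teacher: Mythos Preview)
Your proposal is correct and follows essentially the same approach as the paper: the paper's proof likewise splits into a running-time argument (using Lemmata~\ref{lem:feasible_size} and~\ref{le:number_combinations} for the $\bigoh(\sigma)$ and $\bigoh(1.45^\sigma)$ factors, and Borradaile et al.\ for the $\bigoh(k\log^3 k)$ per-network cost) and a two-direction correctness argument that constructs a flow from an embedding and vice versa, restricting boring components to their preferred sets via Lemma~\ref{lem:boring-relationship}. The paper handles exactly the bookkeeping obstacle you flag by proving three explicit identities (relating the flow into $t_f$, the value $n_f$, and the UP3 sum over $f$, each corrected by a term $g$ counting ``ignored'' wing components) whose combination forces the demand $n_f/2-1$ to be met with equality; your plan would need precisely these identities to close the backward direction.
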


	\begin{proof}
		First, we argue about the running time of the algorithm we presented. The number of shape descriptions that are checked for membership in $\mathcal F_{\mu}$ is in $\bigoh(\sigma)$, by Lemma~\ref{lem:feasible_size}; this determines the multiplicative factor of $\sigma$ in the running time. For each shape description $s$, we consider $\ell$ possible choices for the shape descriptions of the extreme and interesting components of $G_{\mu}$ and two possible choices for the planar embedding of $\textrm{sk}(\mu)$. By Lemma~\ref{le:number_combinations}, we have $\ell\in \bigoh(1.45^\sigma)$; this determines the multiplicative factor of $1.45^\sigma$ in the running time. Finally, for each choice for the shape descriptions of the extreme and interesting components of $G_{\mu}$ and for each choice for the planar embedding of $\textrm{sk}(\mu)$, we construct in $\bigoh(k)$ time the planar bipartite network $\mathcal N$ and determine in $\bigoh(k \log^3 k)$ time its maximum flow, by means of the algorithm by Borradaile et al.~\cite{bkm-msms-17}. 
		
		We now prove the correctness of the algorithm we presented for computing $\mathcal F_{\mu}$. Namely, we prove that, for any shape description $s$, the graph $G_{\mu}$ admits a $uv$-external upward planar embedding $\mathcal E_{\mu}$ if and only if our algorithm determines that $s\in \mathcal F_{\mu}$.
		
		($\Longrightarrow$) Suppose first that $G_{\mu}$ admits a $uv$-external upward planar embedding $\mathcal E_{\mu}$ with shape description $s$. We prove that our algorithm determines that $s\in \mathcal F_{\mu}$. First, for each non-extreme boring component $G_{\nu_i}$ of $G_{\mu}$, we replace the $u_iv_i$-external upward planar embedding of $G_{\nu_i}$ contained in $\mathcal E_{\mu}$ with a $u_iv_i$-external upward planar embedding of $G_{\nu_i}$ whose shape description is one in the preferred set of $G_{\nu_i}$. By Lemma~\ref{lem:boring-relationship}, this modification preserves the property that $\mathcal E_{\mu}$ is a $uv$-external upward planar embedding with shape description $s$.
		
		By Lemma~\ref{lem:spirality_range}, the first value $\tau_l$ of $s$ (i.e., the left-turn-number of $\mathcal E_{\mu}$) is an integer in the interval $[-2\sigma-1,2\sigma+1]$. By Lemma~\ref{lem:shape_desc_values}, the second value $\tau_r$ of $s$ (i.e., the right-turn-number of $\mathcal E_{\mu}$) is a value in  $[-\tau_l,-\tau_l+4]$. Further, by definition, the third and fourth values $\lambda^u$ and $\lambda^v$ of $s$ are in the set $\{-1,0,1\}$ and the fifth, sixth, seventh, and eighth values $\rho_l^u$, $\rho_r^u$, $\rho_l^v$, and $\rho_r^v$ of $s$ are in the set 	 $\{\textrm{in},\textrm{out}\}$. It follows that our algorithm considers $s$ as a possible shape description which is to be tested for membership in $\mathcal F_{\mu}$.
		
		As discussed after the definition of shape description, we have that $\rho^u_l$ and $\rho^u_r$ have the same value if and only if $\lambda^u\in \{-1,1\}$, that $\rho^v_l$ and $\rho^v_r$ have the same value if and only if $\lambda^v\in \{-1,1\}$, that $\rho^u_l$ and $\rho^v_l$ have the same value if and only if $\tau_l$ is odd, and that $\tau_l+\tau_r+\lambda^u+\lambda^v=2$. Hence, $s$ does not fail the coherence check. 
		
		Let $\mathcal S_{\mu}$ be the planar embedding of $\textrm{sk}(\mu)$ in $\mathcal E_{\mu}$. Further, for each component $G_{\nu_i}$ of $G_{\mu}$, let $\mathcal E_{\nu_i}$ be the  $u_iv_i$-external upward planar embedding of $G_{\nu_i}$ contained in $\mathcal E_{\mu}$, where $u_i$ and $v_i$ are the poles of $\nu_i$, and let $s_i$ be the shape description of $\mathcal E_{\nu_i}$. 
		
		Recall that $G_{\nu_1},\dots,G_{\nu_h}$ are the extreme or interesting components of $G_{\mu}$. For $i=1,\dots,h$, since a $u_iv_i$-external upward planar embedding of $G_{\nu_i}$ with shape description $s_i$ exists, we have that $s_i\in \mathcal F_{\nu_i}$. Hence, our algorithm tests whether $G_{\mu}$ admits a $uv$-external upward planar embedding whose shape description is $s$, whose restriction to $G_{\nu_i}$ is a $u_iv_i$-external upward planar embedding with shape description $s_i$, for $i=1,\dots,h$, and such that the corresponding planar embedding of $\textrm{sk}(\mu)$ is $\mathcal S_{\mu}$. We now prove that, when considering the shape descriptions $s_1,\dots,s_h$ for $G_{\nu_1},\dots,G_{\nu_h}$, respectively, and the planar embedding $\mathcal S_{\mu}$ for $\textrm{sk}(\mu)$, our algorithm indeed concludes that $s\in \mathcal F_{\mu}$.
		
		The extreme-edge check does not fail, as the shape description $s_i$ of each extreme component $G_{\nu_i}$ of $G_{\mu}$ is the one of a $u_iv_i$-external upward planar embedding contained in a $uv$-external upward planar embedding of $G_{\mu}$ with shape description $s$. 
		
		The angle check does not fail either. Indeed, consider any vertex $w$ of $\textrm{sk}(\mu)$. Suppose first that $w$ is a switch vertex of $G_{\mu}$. Since $\mathcal E_{\mu}$ is an upward planar embedding, by Property~\textbf{UP1} of Theorem~\ref{th:upward-conditions}, there is one angle incident to $w$ in $\mathcal E_{\mu}$ which is labeled $1$. Hence, at most one component $G_{\nu_i}$ of $G_{\mu}$ is such that the angle at $w$ in one of the internal faces of $\mathcal E_{\nu_i}$ is large; hence, the counter $l_w$ is at most $1$. Suppose next that $w$ is a non-switch vertex of $G_{\mu}$. Since $\mathcal E_{\mu}$ is an upward planar embedding, by Property~\textbf{UP2} of Theorem~\ref{th:upward-conditions}, no angle incident to $w$ in $\mathcal E_{\mu}$ is labeled $1$. Hence, for every non-special component $G_{\nu_i}$ of $G_{\mu}$, the angle at $w$ in every internal face of $\mathcal E_{\nu_i}$ is small, hence the angle check does not fail.
		
		The final check at $u$ and $v$ also does not fail. Namely, if $\lambda^u=1$ ($\lambda^v=1$), then by Properties~\textbf{UP1} and~\textbf{UP2} of Theorem~\ref{th:upward-conditions}, we have that $u$ (resp.\ $v$) is a switch vertex; further, since the angle incident to $u$ (resp.\ to $v$) in the outer face of $\mathcal E_{\mu}$ is large, it follows that every angle at $u$ (resp.\ at $v$) inside every internal face of $\mathcal E_{\mu}$, and consequently inside every internal face of $\mathcal E_{\nu_i}$, is small, hence $u$ (resp.\ $v$) is available.
		
		It remains to prove that the network $\mathcal N(S,T,A)$ admits a flow whose value is equal to the sum of the demands of the sinks in $T$. We compute such a flow as follows. 
		
		\begin{description}
			\item[Switch vertices to faces.] For each available switch vertex $w$ in $\textrm{sk}(\mu)$, if the large angle incident to $w$ in $\mathcal E_{\mu}$ lies in the interior of a face of $\mathcal E_{\mu}$ corresponding to a face $f$ of $\mathcal S_{\mu}$, we let one unit of flow pass from $s_w$ to the sink $t_f$ corresponding to $f$ (this sink is $t_l$ or $t_r$, if $w$ belongs to the left or to the right outer path of $\mathcal S_{\mu}$ and the large angle incident to $w$ lies in the outer face of $\mathcal E_{\mu}$).
			\item[Non-switch vertices to faces.] For each non-switch vertex $w$ of $G_{\mu}$ that corresponds to a source $s_w$ in $S$, let $e_i$ be the virtual edge of $\textrm{sk}(\mu)$ corresponding to the special component for $w$. Then we let one unit of flow pass from $s_w$ to the sink corresponding to $f_l(e_i)$ if $w=v_i$ and the shape description of $\mathcal E_{\nu_i}$ is the left-wing or if $w=u_i$ and the shape description of $\mathcal E_{\nu_i}$ is the inverted-right-wing, or we let one unit of flow pass from $s_w$ to the sink corresponding to $f_r(e_i)$ if $w=v_i$ and the shape description of $\mathcal E_{\nu_i}$ is the right-wing or if $w=u_i$ and the shape description of $\mathcal E_{\nu_i}$ is the inverted-left-wing.
			\item[Extreme or interesting components to faces on the left.] For each extreme or interesting component $G_{\nu_i}$ such that the first value $\tau_l^i$ of $s_i$ is greater than $0$, we let $\tau_l^i$ units of flow pass from $z_l^i$ to the sink $t_f$ corresponding to the face $f=f_l(e_i)$ of $\mathcal S_{\mu}$. 
			\item[Extreme or interesting components to faces on the right.] For each extreme or interesting component $G_{\nu_i}$ such that the  second value $\tau_r^i$ of $s_i$ is greater than $0$, we let $\tau_r^i$ units of flow pass from $z_r^i$ to the sink $t_f$ corresponding to the face $f=f_r(e_i)$ of $\mathcal S_{\mu}$. 
			\item[Non-extreme boring hats to faces.] For each non-extreme boring component $\mathcal G_{\nu_i}$ whose preferred set $\mathcal P_{\nu_i}$ consists of the hat and of the inverted-hat, we let one unit of flow pass from $b^i$ to the sink corresponding to the face $f_r(e_i)$ or to the face $f_l(e_i)$ of $\mathcal S_{\mu}$, depending on whether the shape description of $\mathcal E_{\nu_i}$ is the hat or the inverted-hat, respectively. 
			\item[Non-extreme boring hearts to faces.] For each non-extreme boring component $\mathcal G_{\nu_i}$ whose preferred set $\mathcal P_{\nu_i}$ consists of the heart, or of the inverted-heart, or of both, we let one unit of flow pass from $b^i_1$ to the sink corresponding to the face $f_l(e_i)$ of $\mathcal S_{\mu}$ and unit of flow pass from $b^i_2$ to the sink corresponding to the face $f_r(e_i)$ of $\mathcal S_{\mu}$.
			\item[Switch vertices to non-extreme boring hearts and inverted-hearts.] For each non-extreme boring component $G_{\nu_i}$ whose preferred set contains the heart and the inverted-heart, we let one unit of flow pass from $v_i$ to the sink $t_i$ or from $u_i$ to the sink $t_i$, depending on whether the shape description of $\mathcal E_{\nu_i}$ is the heart or the inverted-heart, respectively (note that this satisfies the demand of the sink $t_i$). 
			\item[First pole.] If $\lambda^u=1$, then we let one unit of flow pass from the source $s_u$ to the sink $t^u$ (note that this satisfies the demand of the sink $t^u$). 
			\item[Second pole.] If $\lambda^v=1$, then we let one unit of flow pass from the source $t_u$ to the sink $t^v$ (note that this satisfies the demand of the sink $t^v$). 
		\end{description}
		
		It is easy to verify that each source in $S$ sends a number of units of flow equal to the maximum number of units of flow it can supply. The only interesting sources to consider are: \begin{itemize}
			\item The source $s_u$ (similar arguments deal with $s_v$); note that $s_u$ exists only if $\lambda^u=1$. If $\lambda^u=1$, then $s_u$ sends one unit of flow to $t^u$ and it does not send flow to any sink corresponding to an internal face of $\mathcal S_{\mu}$, as the angle at $u$ in the outer face of $\mathcal E_{\mu}$ is large, hence the angle at $u$ in any internal face of $\mathcal E_{\mu}$ is small.
			\item Each source $s_w$ that sends one unit of flow to a sink $t_i$ corresponding to a component $G_{\nu_i}$ of $G_{\mu}$ whose preferred set contains the heart and the inverted-heart. By the definition of the flow, the angle at the vertex $w$ corresponding to $s_w$ in the outer face of $\mathcal E_{\nu_i}$ is small (as either $w=v_i$ and the shape of $\mathcal E_{\nu_i}$ is the heart or $w=u_i$ and the shape of $\mathcal E_{\nu_i}$ is the inverted-heart). It follows that the angle at $w$ in any internal face of $\mathcal E_{\mu}$ that is not an internal face of $\mathcal E_{\nu_i}$ is small, hence $s_w$ does not send further units of flow.
		\end{itemize}
		
		We now prove that each sink receives a number of units of flow equal to its demand. This was already observed during the construction of the flow for the sinks $t^u$ and $t^v$, as well as for each sink corresponding to a non-extreme boring component of $G_{\mu}$ whose preferred set contains the heart and the inverted-heart. We now consider the sinks corresponding to internal faces of $\mathcal S_{\mu}$ and for the sinks $t_l$ and $t_r$.
		
		{\bf Consider any sink $t_f$ corresponding to an internal face $f$ of $\mathcal S_{\mu}$.} First, recall that the demand of $t_f$ is equal to $\frac{n_f}{2}-1$, by construction.
		
		For each component $G_{\nu_i}$ of $G_{\mu}$, let $s_i$ be the shape description of $\mathcal E_{\nu_i}$, and let $\tau_l^i$ and $\tau_r^i$ be the left- and right-turn-number of $\mathcal E_{\nu_i}$, respectively. Let $E_1$ ($E_2$) be the set that contains every virtual edge $e_i$ of $\textrm{sk}(\mu)$ such that $f=f_l(e_i)$ (resp.\ $f=f_r(e_i)$). Let $E'_1$ ($E'_2$) be the subset of $E_1$ (resp.\ $E_2$) that only contains the virtual edges $e_i$ such that $\tau_l^i>0$ (resp.\ $\tau_r^i>0$). Let $f_{\mathcal E}$ be the face of $\mathcal E_{\mu}$ corresponding to $f$. Furthermore, let $n'_1(f_{\mathcal E})$ and $n'_{-1}(f_{\mathcal E})$ be the number of large and small angles of $f_{\mathcal E}$, respectively, incident to vertices of $\textrm{sk}(\mu)$. Finally, let $g$ be the number of \emph{ignored} virtual edges; a virtual edge $e_i$ is ignored if: (a) $e_i\in E'_1$, $s_i$ is the left-wing, and the special edge for $v_i$ is incoming $v_i$; or (b) $e_i\in E'_2$, $s_i$ is the right-wing, and the special edge for $v_i$ is incoming $v_i$;  or (c) $e_i\in E'_1$, $s_i$ is the inverted-right-wing, and the special edge for $u_i$ is incoming $u_i$; or or (d) $e_i\in E'_2$, $s_i$ is the inverted-left-wing, and the special edge for $u_i$ is incoming $u_i$.
		
		We now prove a sequence of statements.
		
		\begin{claim} \label{cl:sent-flow}
			The amount of flow sent to $t_f$ is equal to $\sum_{e_i\in E'_1} \tau_l^i + \sum_{e_i\in E'_2} \tau_r^i + n'_1(f_{\mathcal E})-g$.
		\end{claim}
		
		\begin{proof}	
			The number of units of flow that are sent to $t_f$ from the sources in $S$ corresponding to available switch vertices of $\textrm{sk}(\mu)$ is equal to $n'_1(f_{\mathcal E})$. Indeed, if a vertex of $\textrm{sk}(\mu)$ is incident to a large angle in $f_{\mathcal E}$, it is an available switch vertex, by Property~\textbf{UP1} of Theorem~\ref{th:upward-conditions}.
			
			
			
			We now deal with the flow sent to $t_f$ from the sources in $S$ corresponding to components of $G_{\mu}$. We need to prove that the total amount of such flow is $\sum_{e_i\in E'_1} \tau_l^i + \sum_{e_i\in E'_2} \tau_r^i-g$.
			
			\begin{itemize}
				\item Each extreme or interesting component $G_{\nu_i}$ such that $e_i\in E'_1$(such that $e_i\in E'_2$) sends to $t_f$ a number of units of flow equal to $\tau_l^i$ (resp.\ to $\tau_r^i$).
				\item Each non-extreme boring component $G_{\nu_i}$ sends one unit of flow to $t_f$ if $e_i\in E'_1$ and $s_i$ is the left-hat, the heart, or the inverted-heart (in each case we have $\tau_l^i=1$), or if $e_i\in E'_2$ and $s_i$ is the right-hat, the heart, or the inverted-heart (in each case we have $\tau_r^i=1$).
				\item A non-extreme boring component $G_{\nu_i}$ such that $e_i\in E'_1$ ($e_i\in E'_2$), such that $e_i$ is not ignored, and such that $s_i$ is the left-wing or the inverted-right-wing (resp.\ the right-wing or the inverted-left-wing) does not send any unit of flow to $t_f$, while we have $\tau_l^i=1$ (resp.\ $\tau_r^i=1$). However, this ``missing'' unit of flow is ``compensated'' by the non-switch vertex $v_i$ or $u_i$, for which $G_{\nu_i}$ is the special component, which sends one unit of flow to $t_f$. 
				\item  Finally, a non-extreme boring component $G_{\nu_i}$ such that $e_i\in E'_1$ ($e_i\in E'_2$), such that $e_i$ is ignored, and such that $s_i$ is the left-wing or the inverted-right-wing (resp.\ the right-wing or the inverted-left-wing) does not send any unit of flow to $t_f$, while we have $\tau_l^i=1$ (resp.\ $\tau_r^i=1$). This ``missing'' unit of flow is not ``compensated'' by any non-switch vertex, which results in the $g$ term subtracted from the amount of flow sent to $t_f$.
			\end{itemize}  
			
			This concludes the proof of the claim.
		\end{proof}	
		
		\begin{claim} \label{cl:nf}
			We have $n_f=\sum_{e_i\in E_1} |\tau_l^i| + \sum_{e_i\in E_2} |\tau_r^i| + n'_1(f_{\mathcal E})+ n'_{-1}(f_{\mathcal E})-2g$.
		\end{claim}
		
		\begin{proof}
			By Properties~\textbf{UP1} and~\textbf{UP2} of Theorem~\ref{th:upward-conditions}, every large angle of $f_{\mathcal E}$ is incident to a switch vertex. Denote by $p'_{-1}(f_{\mathcal E})$ and $q'_{-1}(f_{\mathcal E})$ the number of small angles of $f_{\mathcal E}$ incident to switch and non-switch vertices of $\textrm{sk}(\mu)$, respectively. Then we have $n'_{-1}(f_{\mathcal E})=p'_{-1}(f_{\mathcal E})+q'_{-1}(f_{\mathcal E})$. By construction, each switch vertex of $\textrm{sk}(\mu)$ incident to $f$ contributes $1$ to $n_f$. Since a switch vertex does not have incident flat angles, the total contribution of the switch vertices to $n_f$ is $n'_1(f_{\mathcal E})+p'_{-1}(f_{\mathcal E})$. 
			
			Each non-switch vertex incident to a small angle in $f_{\mathcal E}$ contributes $1$ to $n_f$, except for the non-switch vertices $w$ such that the special component $G_{\nu_i}$ for $w$ corresponds to an ignored virtual edge $e_i$. By definition, the number of such non-switch vertices is $g$, hence the total contribution of the non-switch vertices to $n_f$ is $q'_{-1}(f_{\mathcal E})-g$. 
			
			We now deal with the contributions to $n_f$ stemming from the components of $G_{\mu}$.
			
			\begin{itemize}
				\item Each extreme or interesting component $G_{\nu_i}$ such that $e_i\in E_1$ ($e_i\in E_2$) contributes to $n_f$ by an amount of $|\tau_l^i|$ (resp.\ $|\tau_r^i|$). 
				\item Similarly, each non-extreme boring component $G_{\nu_i}$ such that $e_i\in E_1$ ($e_i\in E_2$) and such that $s_i$ is the hat, the inverted-hat, the heart, or the inverted-heart contributes to $n_f$ by an amount of $|\tau_l^i|=1$ (resp.\ $|\tau_r^i|=1$). 
				\item Further, each non-extreme boring component $G_{\nu_i}$ such that $e_i\in E_1$ and such that $s_i$ is the sausage, the inverted-sausage, the right-wing, or the inverted-left-wing contributes to $n_f$ by an amount of $|\tau_l^i|=0$. 
				\item Similarly, each non-extreme boring component $G_{\nu_i}$ such that $e_i\in E_2$ and such that $s_i$ is the sausage, the inverted-sausage, the left-wing, or the inverted-right-wing contributes to $n_f$ by an amount of $|\tau_r^i|=0$.
				\item Each non-extreme boring component $G_{\nu_i}$ such that $e_i\in E_1$ ($e_i\in E_2$), such that $e_i$ is not ignored, and such that $s_i$ is the left-wing or the inverted-right-wing (resp.\ the right-wing or the inverted-left-wing) contributes by an amount of $0$ to $n_f$, while we have $\tau_l^i=1$ (resp.\ $\tau_r^i=1$). However, this ``missing'' contribution to $n_f$ is ``compensated'' by the non-switch vertex $v_i$ or $u_i$, for which $G_{\nu_i}$ is the special component, which contributes $1$ to $n_f$ while being incident to a flat angle in $f_{\mathcal E}$. 
				\item Finally, each non-extreme boring component $G_{\nu_i}$ such that $e_i\in E'_1$ ($e_i\in E'_2$), such that $e_i$ is ignored, and such that $s_i$ is the left-wing or the inverted-right-wing (resp.\ the right-wing or the inverted-left-wing) contributes by an amount of $0$ to $n_f$, while we have $\tau_l^i=1$ (resp.\ $\tau_r^i=1$). This ``missing'' contribution to $n_f$ is not ``compensated'' by any non-switch vertex, which results in a second $g$ term subtracted from the value of $n_f$. 
			\end{itemize}
			
			This concludes the proof of the claim.
		\end{proof}
		
		\begin{claim} \label{cl:mixing}
			We have $\sum_{e_i\in E'_1} \tau_l^i - \sum_{e_i\in E_1\setminus E'_1} |\tau_l^i| + \sum_{e_i\in E'_2} \tau_r^i - \sum_{e_i\in E_2\setminus E'_2} |\tau_r^i| + n'_1(f_{\mathcal E}) - n'_{-1}(f_{\mathcal E})=-2$.
		\end{claim}
		
		\begin{proof}
			By definition, for each component $G_{\nu_i}$ of $G_{\mu}$ such that $e_i\in E_1$ ($e_i\in E_2$), the value $\tau_l^i$ (resp.\ $\tau_r^i$) is equal to the number of non-pole vertices of $G_{\nu_i}$ incident to large angles in $f_{\mathcal E}$ minus  the number of non-pole vertices of $G_{\nu_i}$ incident to small angles in $f_{\mathcal E}$. Hence, the sum $\sum_{e_i\in E_1} \tau_l^i + \sum_{e_i\in E_2} \tau_r^i + n'_1(f_{\mathcal E}) - n'_{-1}(f_{\mathcal E}) = \sum_{e_i\in E'_1} \tau_l^i - \sum_{e_i\in E_1\setminus E'_1} |\tau_l^i| + \sum_{e_i\in E'_2} \tau_r^i - \sum_{e_i\in E_2\setminus E'_2} |\tau_r^i| + n'_1(f_{\mathcal E}) - n'_{-1}(f_{\mathcal E})$ is equal to the number of vertices of $G_{\mu}$ incident to large angles in $f_{\mathcal E}$ minus the number of vertices of $G_{\mu}$ incident to small angles in $f_{\mathcal E}$. By Property~\textbf{UP3} of Theorem~\ref{th:upward-conditions}, this difference is equal to $-2$.
		\end{proof}
		
		By Claim~\ref{cl:mixing}, we have $\sum_{e_i\in E_1\setminus E'_1} |\tau_l^i| + \sum_{e_i\in E_2\setminus E'_2} |\tau_r^i|= \sum_{e_i\in E'_1} \tau_l^i + \sum_{e_i\in E'_2} \tau_r^i + n'_1(f_{\mathcal E})-n'_{-1}(f_{\mathcal E}) +2$. This can be replaced into the equality $n_f=\sum_{e_i\in E'_1} \tau_l^i+\sum_{e_i\in E_1\setminus E'_1} |\tau_l^i| + \sum_{e_i\in E'_2} \tau_r^i+\sum_{e_i\in E_2\setminus E'_2} |\tau_r^i| + n'_1(f_{\mathcal E})+ n'_{-1}(f_{\mathcal E})-2g$ given by Claim~\ref{cl:nf}, in order to get that $n_f=2\sum_{e_i\in E'_1}\tau_l^i+2\sum_{e_i\in E'_2}\tau_r^i+2n'_1(f_{\mathcal E})-2g+2$. Hence, the demand of $t_f$ is equal to $\frac{n_f}{2}-1= \sum_{e_i\in E'_1}\tau_l^i+\sum_{e_i\in E'_2}\tau_r^i+n'_1(f_{\mathcal E})-g$, which by Claim~\ref{cl:sent-flow} is equal to the flow sent to $t_f$. 
		

		{\bf Consider now the sink $t_l$} (the arguments for the sink $t_r$ are analogous). First, recall that the demand of $t_l$ is equal to $\frac{\tau_l+n_f^l}{2}$. We need to prove that the amount of flow which is sent to $t_l$ is equal to the same amount. The proof is very similar to the one for the sink $t_f$ corresponding to an internal face $f$ of $\mathcal S_{\mu}$, and is hence only sketched here. 
		
		For each component $G_{\nu_i}$ of $G_{\mu}$, let $\tau_l^i$ and $\tau_r^i$ be defined as before. Let $E_1$ ($E_2$) be the set that contains every virtual edge $e_i$ of $\textrm{sk}(\mu)$ that belongs to the left outer path of $\mathcal S_{\mu}$ and such that the outer face of $\mathcal S_{\mu}$ is $f_l(e_i)$ (resp.\ $f_r(e_i)$). Let $E'_1$ ($E'_2$) be the subset of $E_1$ (resp.\ $E_2$) that only contains the virtual edges $e_i$ such that $\tau_l^i>0$ (resp.\ $\tau_r^i>0$). Let $f_{\mathcal E}$ be the outer face of $\mathcal E_{\mu}$. Further, let $n'_1(f_{\mathcal E})$ and $n'_{-1}(f_{\mathcal E})$ be the number of large and small angles, respectively, incident to $f_{\mathcal E}$ and to non-pole vertices of the left outer path of $\mathcal S_{\mu}$. Finally, let $g$ be defined as before.
		
		As in the proof for the sink $t_f$ corresponding to an internal face $f$ of $\mathcal S_{\mu}$, we need three statements. The first one is the analogue of Claim~\ref{cl:sent-flow}: The amount of flow which is sent to $t_l$ is equal to $\sum_{e_i\in E'_1} \tau_l^i + \sum_{e_i\in E'_2} \tau_r^i + n'_1(f_{\mathcal E})-g$; the proof is identical to the one of Claim~\ref{cl:sent-flow}. The second statement is the analogue of Claim~\ref{cl:nf}: $n_f^l=\sum_{e_i\in E_1} |\tau_l^i| + \sum_{e_i\in E_2} |\tau_r^i| + n'_1(f_{\mathcal E})+ n'_{-1}(f_{\mathcal E})-2g$; again, the proof is identical to the one of Claim~\ref{cl:nf}. Finally, the third statement is similar to Claim~\ref{cl:mixing}: $\sum_{e_i\in E'_1} \tau_l^i - \sum_{e_i\in E_1\setminus E'_1} |\tau_l^i| + \sum_{e_i\in E'_2} \tau_r^i - \sum_{e_i\in E_2\setminus E'_2} |\tau_r^i| + n'_1(f_{\mathcal E}) - n'_{-1}(f_{\mathcal E})=\tau_l$; again, its proof follows the same lines of the proof of Claim~\ref{cl:mixing}, while using that the left-turn-number of $\mathcal E_{\mu}$ is $\tau_l$ in place of Property~\textbf{UP3} of Theorem~\ref{th:upward-conditions}. 
		
		The three statements imply that the amount of flow which is sent to $t_l$ is equal to $\frac{\tau_l+n_f^l}{2}$, as requested. Indeed, by the third statement we have $\sum_{e_i\in E_1\setminus E'_1} |\tau_l^i| + \sum_{e_i\in E_2\setminus E'_2} |\tau_r^i|= \sum_{e_i\in E'_1} \tau_l^i + \sum_{e_i\in E'_2} \tau_r^i + n'_1(f_{\mathcal E})-n'_{-1}(f_{\mathcal E}) -\tau_l$. This can be replaced into the equality $n_f^l=\sum_{e_i\in E'_1} \tau_l^i+\sum_{e_i\in E_1\setminus E'_1} |\tau_l^i| + \sum_{e_i\in E'_2} \tau_r^i+\sum_{e_i\in E_2\setminus E'_2} |\tau_r^i| + n'_1(f_{\mathcal E})+ n'_{-1}(f_{\mathcal E})-2g$ given by the second statement, in order to get that $n_f^l=2\sum_{e_i\in E'_1}\tau_l^i+2\sum_{e_i\in E'_2}\tau_r^i+2n'_1(f_{\mathcal E})-2g-\tau_l$. Hence, the demand of $t_l$ is equal to $\frac{n_f+t_l}{2}= \sum_{e_i\in E'_1}\tau_l^i+\sum_{e_i\in E'_2}\tau_r^i+n'_1(f_{\mathcal E})-g$, which by the first statement is equal to the flow sent to $t_l$.  
		
		
		($\Longleftarrow$) Suppose next that our algorithm determines that $s\in \mathcal F_{\mu}$. We prove that $G_{\mu}$ admits a $uv$-external upward planar embedding $\mathcal E_{\mu}$ with shape description $s$.
		
		Our algorithm determines that $s\in \mathcal F_{\mu}$ when examining some shape descriptions $s_1,\dots,s_h$ for the extreme and interesting components $G_{\nu_1},\dots,G_{\nu_h}$ of $G_{\mu}$, together with a planar embedding $\mathcal S_{\mu}$ for the skeleton $\textrm{sk}(\mu)$ of $\mu$. If the corresponding flow network $\mathcal N$ admits a flow in which each sink is supplied with a number of units of flow equal to its demand, then our algorithm indeed concludes that $s\in \mathcal F_{\mu}$. 
		
		The $uv$-external upward planar embedding $\mathcal E_{\mu}$ is constructed starting from $\mathcal S_{\mu}$, by selecting a $u_iv_i$-external upward planar embedding $\mathcal E_{\nu_i}$ for each component $G_{\nu_i}$ of $G_{\mu}$, and by assigning a label to each vertex of $\textrm{sk}(\mu)$ inside the faces of $\mathcal S_{\mu}$. Selecting a $u_iv_i$-external upward planar embedding $\mathcal E_{\nu_i}$ for each component $G_{\nu_i}$ of $G_{\mu}$ implies that the label of each angle at a vertex $w\notin \{u_i,v_i\}$ of $G_{\nu_i}$ in $\mathcal E_{\mu}$ is the same as in $\mathcal E_{\nu_i}$; further, the label of each angle in an internal face of $\mathcal E_{\nu_i}$ at $u_i$ or $v_i$ in $\mathcal E_{\mu}$ is the same as in $\mathcal E_{\nu_i}$. Hence, after this selection, the only missing labels for an angle assignment are those for the vertices of $\textrm{sk}(\mu)$ inside the faces of $\mathcal S_{\mu}$. 
		
		For $i=1,\dots,h$, we select $\mathcal E_{\nu_i}$ to be any $u_iv_i$-external upward planar embedding of $G_{\nu_i}$ with shape description $s_i$. This embedding exists, as our algorithm only consider shape descriptions in $\mathcal F_{\nu_i}$ for the component $G_{\nu_i}$. It remains to select a $u_iv_i$-external upward planar embedding $\mathcal E_{\nu_i}$ for each non-extreme boring component $G_{\nu_i}$ of $G_{\mu}$ and a label for each vertex of $\textrm{sk}(\mu)$ inside each face of $\mathcal S_{\mu}$. This is done based on the solution for the flow network $\mathcal N$. 
		
		Let $\gamma$ be a function that assigns to each arc $a\in A$ a value $\gamma(a)$ equal to the amount of flow traversing $a$ in the solution for $\mathcal N$. By the integral flow theorem, we can assume that $\gamma(a)$ is a non-negative integer, for each arc $a\in A$. Similarly to~\cite{bdl-udtd-94}, it can be proved that the total demand of the sinks in $T$ is equal to the total amount of units of flow that can be supplied from the sources in $S$. This implies that each source sends out the maximum number of units of flow it can supply. Since the flow is integral and the only sources that can supply more than one unit of flow have a single incident arc, it follows that each source sends its flow on a single incident arc. This allows us to complete the construction of $\mathcal E_{\mu}$ as follows. Consider each non-extreme boring component $G_{\nu_i}$ of $G_{\mu}$ and its preferred set $\mathcal P_{\nu_i}$.
		
		\begin{itemize}
			\item If $\mathcal P_{\nu_i}$ contains a single shape description, then we select $\mathcal E_{\nu_i}$ to be any $u_iv_i$-external upward planar embedding of $G_{\nu_i}$ with that shape description.
			\item If $\mathcal P_{\nu_i}$ contains the hat and the inverted-hat, we select $\mathcal E_{\nu_i}$ to be any $u_iv_i$-external upward planar embedding of $G_{\nu_i}$ with the former or the latter shape description, depending on whether the source $b^i$ sends its unit of flow to the sink corresponding to the face $f_r(e_i)$ or to the sink corresponding to the face $f_l(e_i)$, respectively. 
			\item If $\mathcal P_{\nu_i}$ contains both the heart and the inverted-heart, we select $\mathcal E_{\nu_i}$ to be any $u_iv_i$-external upward planar embedding of $G_{\nu_i}$ with the former or the latter shape description, depending on whether the unit of flow demanded by the sink $t_i$ is supplied by a source corresponding to $v_i$ or by a source corresponding to $u_i$, respectively. 
			\item If $\mathcal P_{\nu_i}$ contains both the left-wing and the right-wing (the case in which it contains both the inverted-left-wing and the inverted-right-wing is analogous), then we further distinguish two cases. 
			\begin{description}
				\item[$v_i$ is a bottom vertex:] then we select $\mathcal E_{\nu_i}$ to be any $u_iv_i$-external upward planar embedding of $G_{\nu_i}$ whose shape description is the left-wing or the right-wing (the choice on whether to use one or the other shape description is arbitrary);
				\item[$v_i$ is a top vertex:] then we select $\mathcal E_{\nu_i}$ to be any $u_iv_i$-external upward planar embedding of $G_{\nu_i}$ whose shape description is the left-wing or the right-wing, depending on whether the source corresponding to the non-switch vertex $v_i$ sends its unit of flow to the sink corresponding to $f_l(e_i)$ or to the sink corresponding to $f_r(e_i)$, respectively. 
			\end{description}
		\end{itemize} 
	
		We now have a planar embedding (which is not yet an upward planar embedding) of $G_{\mu}$. For each vertex $w$ of $\textrm{sk}(\mu)$ that is a non-switch vertex of $G_{\mu}$, we label each angle at $w$ in a face of $\mathcal S_{\mu}$ incident to precisely one incoming edge of $w$ with the value $0$, and we label every other angle at $w$ in a face of $\mathcal S_{\mu}$ with the value $-1$. For each vertex $w$ of $\textrm{sk}(\mu)$ that is an unavailable switch vertex of $G_{\mu}$, we label each angle at $w$ in a face of $\mathcal S_{\mu}$ with the value $-1$. For each vertex $w$ of $\textrm{sk}(\mu)$ that is an available switch vertex of $G_{\mu}$, let $f$ be the unique face of $\mathcal S_{\mu}$ such that the source corresponding to $w$ sends its unit of flow to the sink corresponding to $f$; then we label the angle at $w$ in $f$ with the value $1$ and we label every other angle at $w$ in a face of $\mathcal S_{\mu}$ with the value $-1$.
				
		This completes the construction of the upward planar embedding $\mathcal E_{\mu}$ of $G_{\mu}$. Since $u$ and $v$ are incident to the outer face of $\mathcal S_{\mu}$, by construction, it follows that $\mathcal E_{\mu}$ is $uv$-external. We now prove that $\mathcal E_{\mu}$ is actually an upward planar embedding of $G_{\mu}$, i.e., it satisfies Properties~\textbf{UP0}--\textbf{UP3} of Theorem~\ref{th:upward-conditions}. Let $\lambda$ be the angle assignment in $\mathcal E_{\mu}$; for each vertex $w$ of $G_{\mu}$ and each face $f$ of $\mathcal E_{\mu}$, we define $n_{-1}(w)$, $n_{0}(w)$, $n_{1}(w)$, $n_{-1}(f)$, and $n_{1}(f)$ as before Theorem~\ref{th:upward-conditions}. 
	
		Property~\textbf{UP0} (if $\alpha$ is a switch angle of $\mathcal E_{\mu}$, then $\lambda(\alpha)\in\{-1,1\}$, and if $\alpha$ is a flat angle, then $\lambda(\alpha)=0$) is true for each angle $\alpha$ internal to $\mathcal E_{\nu_i}$ and for each angle $\alpha$ at a non-pole vertex of $G_{\nu_i}$, for any $i\in \{1,\dots,k\}$, since $\mathcal E_{\nu_i}$ is an upward planar embedding. Further, it is true for each angle $\alpha$ at a vertex of $\textrm{sk}(\mu)$ inside a face of $\mathcal S_{\mu}$, by construction. 
		
		 Property~\textbf{UP1} (if $w$ is a switch vertex of $G_{\mu}$, then $n_1(w)=1$, $n_{-1}(w)=\deg(w)-1$, $n_0(w)=0$) is true for each non-pole switch vertex of $G_{\nu_i}$, for any $i\in \{1,\dots,k\}$, since $\mathcal E_{\nu_i}$ is an upward planar embedding. Further, Property~\textbf{UP1} is true for each available switch vertex $w$ of $\textrm{sk}(\mu)$, as the only angle at $w$ which is labeled $1$ is the one in the face $f$ of $\mathcal S_{\mu}$ such that the source in $S$ corresponding to $w$ sends its unit of flow to the sink in $T$ corresponding to $f$; indeed, every other angle at $w$ in a face of $\mathcal S_{\mu}$ is labeled $-1$, by construction, and every other angle at $w$ in an internal face of an embedding $\mathcal E_{\nu_i}$ is labeled $-1$ since $w$ is available. Moreover, Property~\textbf{UP1} is true for each unavailable switch vertex $w$ of $\textrm{sk}(\mu)$, as every angle at $w$ in a face of $\mathcal S_{\mu}$ is labeled $-1$, by construction, and there is exactly one angle at $w$ in an internal face of an embedding $\mathcal E_{\nu_i}$ that is labeled $1$, while all the others are labeled $-1$, since $w$ is unavailable. Finally, observe that a switch vertex is either available or unavailable, since the angle check does not fail.
		 
		 Property~\textbf{UP2} (if $w$ is a non-switch vertex of $G_{\mu}$, then $n_1(w)=0$, $n_{-1}(w)=deg(w)-2$, $n_0(w)=2$) is true for each non-switch vertex $w$ of $G_{\mu}$. Indeed, by Property~\textbf{UP1} and since $G$ is expanded, there are two angles at $w$ labeled $0$. Moreover, every other angle $\alpha$ at $w$ is labeled $-1$. This is by construction if $\alpha$ is in a face of $\mathcal S_{\mu}$; further, it is since $\mathcal E_{\nu_i}$ satisfies Property~\textbf{UP2} and the angle check did not fail if $\alpha$ is in an internal face of $\mathcal E_{\nu_i}$.
		 
		 We prove Property~\textbf{UP3} (if $f_{\mathcal E}$ is an internal face of $\mathcal E_{\mu}$, then $n_1(f_{\mathcal E})=n_{-1}(f_{\mathcal E})-2$, and if $f_{\mathcal E}$ is the outer face of $\mathcal E_{\mu}$, then $n_{1}(f_{\mathcal E})=n_{-1}(f_{\mathcal E})+2$). Consider an internal face $f_{\mathcal E}$ of $\mathcal E_{\mu}$; the outer face of $\mathcal E_{\mu}$ can be dealt with similarly. If $f_{\mathcal E}$ is also an internal face of $\mathcal E_{\nu_i}$, for some $i\in \{1,\dots,k\}$, then we have $n_1(f_{\mathcal E})=n_{-1}(f_{\mathcal E})-2$, given that $\mathcal E_{\nu_i}$ is an upward planar embedding. Otherwise, $f_{\mathcal E}$ corresponds to an internal face $f$ of $\mathcal S_{\mu}$. Let $E_1$, $E_2$, $E'_1$, $E'_2$, $n'_1(f_{\mathcal E})$, $n'_{-1}(f_{\mathcal E})$, $\tau_l^i$, $\tau_r^i$, and $g$ be defined as in the proof of necessity. 
		 
		 A key observation is the following.
		 
		 \begin{claim} \label{cl:turn-angles}
		 If the number of small angles in $f_{\mathcal E}$ at non-pole vertices of the components $G_{\nu_i}$ of $G_{\mu}$ is $\chi$, for some $\chi\geq 0$, then the number of large angles in $f_{\mathcal E}$ at non-pole vertices of the components $G_{\nu_i}$ of $G_{\mu}$ is equal to $\chi+\sum_{e_i\in E_1} \tau_l^i +\sum_{e_i\in E_2} \tau_r^i$.
		 \end{claim}
	 
	 	\begin{proof}		 
		For every edge $e_i\in E_1$, we have that, if the number of small angles in $f_{\mathcal E}$ at non-pole vertices of $G_{\nu_i}$ is equal to $x_i$, for some integer $x_i\geq 0$, then the number of large angles in $f_{\mathcal E}$ at non-pole vertices of $G_{\nu_i}$ is equal to $\tau_l^i+x_i$; this just comes from the definition of the left-turn-number of $\mathcal E_{\nu_i}$. A similar relationship (with $\tau_r^i$ in place of $\tau_l^i$) holds true for every edge $e_i\in E_2$. Then the claim follows with $\chi= \sum_{e_i\in E_1\cup E_2}x_i$.
		\end{proof}
		
		By Claim~\ref{cl:turn-angles}, the number of large angles in $f_{\mathcal E}$ is equal to $\sum_{e_i\in E_1} \tau_l^i +\sum_{e_i\in E_2} \tau_r^i+\chi+n'_1(f_{\mathcal E})$, while the number of small angles in $f_{\mathcal E}$ is equal to $\chi+n'_{-1}(f_{\mathcal E})$. Hence, we want to prove that $\sum_{e_i\in E_1} \tau_l^i +\sum_{e_i\in E_2} \tau_r^i+n'_1(f_{\mathcal E})-n'_{-1}(f_{\mathcal E})=-2$.
		
		The rest of the proof proceeds similarly to the proof of sufficiency. First, we have that 	the amount of flow sent to the sink $t_f$ corresponding to $f$ is equal to $\sum_{e_i\in E'_1} \tau_l^i + \sum_{e_i\in E'_2} \tau_r^i + n'_1(f_{\mathcal E})-g$ (as in Claim~\ref{cl:sent-flow}). This is because: 
		
		\begin{description}
			\item[(a)] we make the angle at a switch vertex of $\textrm{sk}(\mu)$ large if and only if the source corresponding to the switch vertex sends one unit of flow to $t_f$;
			\item[(b)] $t_f$ receives $\tau_l^i$ ($\tau_r^i$) units of flow from an extreme or interesting component $G_{\nu_i}$ such that $e_i \in E_1$ (resp.\ $e_i \in E_2$) if and only if $\tau_l^i>0$ (resp.\ $\tau_r^i>0$);
			\item[(c)] for each edge $e_i \in E_1$ such that $G_{\nu_i}$ is non-extreme, boring, and non-ignored, we choose $\mathcal E_{\nu_i}$ in such a way that $\tau_l^i=1$ if and only if $t_f$ receives one unit of flow from a source corresponding to $G_{\nu_i}$ or corresponding to a non-switch vertex for which $G_{\nu_i}$ is the special component;
			\item[(d)] for each edge $e_i \in E_2$ such that $G_{\nu_i}$ is non-extreme, boring, and non-ignored, we choose $\mathcal E_{\nu_i}$ in such a way that $\tau_r^i=1$ if and only if $t_f$ receives one unit of flow from a source corresponding to $G_{\nu_i}$ or corresponding to a non-switch vertex for which $G_{\nu_i}$ is the special component; and
			\item[(e)] for each edge $e_i \in E_1$ ($e_i \in E_2$) such that $G_{\nu_i}$ is non-extreme, boring, and ignored, we have $\tau_l^i=1$ (resp.\ $\tau_r^i=1$), although this does not correspond to any unit of flow received by $t_f$; this determines the $-g$ term in the amount of flow received by $t_f$. 
		\end{description}

		Similar considerations on the value of $n_f$ establish that $n_f=\sum_{e_i\in E_1} |\tau_l^i| + \sum_{e_i\in E_2} |\tau_r^i| + n'_1(f_{\mathcal E})+ n'_{-1}(f_{\mathcal E})-2g$ (as in Claim~\ref{cl:nf}). Since the demand of $t_f$ is equal to $\frac{n_f}{2}-1$ and since the amount of flow received by $t_f$ is equal to its demand, we get $\sum_{e_i\in E'_1} \tau_l^i + \sum_{e_i\in E'_2} \tau_r^i + n'_1(f_{\mathcal E})-g=\frac{\sum_{e_i\in E_1} |\tau_l^i| + \sum_{e_i\in E_2} |\tau_r^i| + n'_1(f_{\mathcal E})+ n'_{-1}(f_{\mathcal E})-2g}{2}-1$, which is $2\sum_{e_i\in E'_1} \tau_l^i + 2\sum_{e_i\in E'_2} \tau_r^i + 2n'_1(f_{\mathcal E})=\sum_{e_i\in E'_1} \tau_l^i - \sum_{e_i\in E_1\setminus E'_1} \tau_l^i + \sum_{e_i\in E'_2} \tau_r^i - \sum_{e_i\in E_2\setminus E'_2} \tau_r^i + n'_1(f_{\mathcal E})+ n'_{-1}(f_{\mathcal E})-2$. This gives us $\sum_{e_i\in E_1} \tau_l^i +\sum_{e_i\in E_2} \tau_r^i+n'_1(f_{\mathcal E})-n'_{-1}(f_{\mathcal E})=-2$, as desired.


	Finally, we need to prove that the shape description of $\mathcal E_{\mu}$ is $s$. The proof that the left-turn-number of $\mathcal E_{\mu}$ coincides with the first value of $s$, say $\tau_l$, is very similar to the one we just presented that $n_1(f_{\mathcal E})=n_{-1}(f_{\mathcal E})-2$ (the main difference is that the demand of the sink $t_l$ is equal to $\frac{n_f+\tau_l}{2}$ and this value is used in place of $\frac{n_f}{2}-1$); this proof is hence omitted here. Likewise for the proof that the right-turn-number of $\mathcal E_{\mu}$ coincides with the second value of $s$. That the label of $u$ in the outer face of $\mathcal E_{\mu}$ coincides with the third value of $s$, say $\lambda^u$, comes from the requirement that the source $s_u$ corresponding to $u$ must send one unit of flow to the sink $t^u$, if $\lambda^u=1$, comes from the requirement that $s_u$ must send one unit of flow to an internal face of $\mathcal S_{\mu}$ incident to $u$, if $\lambda^u=-1$, and comes from the extreme-edge check otherwise. A similar argument shows that the label of $v$ in the outer face of $\mathcal E_{\mu}$ coincides with the fourth value of $s$. Finally, that the last four values in the shape description of $\mathcal E_{\mu}$ coincide with   the last four values of $s$ directly comes from the extreme-edge check. 	
	\end{proof} 
	
	We are finally ready to state the main theorem of this section.
	
	  \begin{theorem}  \label{th:sources}
			It is possible to solve {\sc Upward Planarity} in $\bigoh(\sigma 1.45^\sigma \cdot n^2\log^3 n))$ time for a digraph with $n$ vertices and $\sigma$ sources.
		\end{theorem}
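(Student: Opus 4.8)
The plan is to assemble Theorem~\ref{th:sources} directly from the machinery developed in the preceding sections. The overall strategy is: reduce to the single-connected framework via the Interface Lemma (Lemma~\ref{lem:R_node_general}), supply the needed R-node subprocedure via Lemma~\ref{lem:R_node_sources}, and bound all parameters in terms of $\sigma$ using the turn-number bound of Lemma~\ref{lem:spirality_range}.

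First I would handle the preprocessing: given an $n$-vertex digraph $G$ with $\sigma$ sources whose underlying graph is planar (if it is not planar, $G$ is trivially not upward planar, which can be checked in linear time), apply the expansion operation from Section~\ref{sec:prelim}. As noted there, expansion preserves upward planarity, preserves the number of sources, and at most doubles the number of vertices; so for the running-time analysis we may continue to write $n$ and $\sigma$ for the expanded graph. Next I would invoke Observation~\ref{obs:maintains-sources} to argue that every biconnected component $B$ of $G$ is itself expanded and has at most $\sigma$ sources, so that all the structural lemmata of this section apply uniformly to each block.

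The heart of the argument is a parameter-counting step. By Lemma~\ref{lem:spirality_range}, for every node $\mu$ of the SPQR-tree of a block $B$ with $\sigma_\mu \le \sigma$ sources different from its poles, the left- and right-turn-numbers of any $uv$-external upward planar embedding of $G_\mu$ lie in $[-2\sigma_\mu-1, 2\sigma_\mu+1] \subseteq [-2\sigma-1, 2\sigma+1]$. Hence $G$ is $[\tau_{\min},\tau_{\max}]$-turn-bounded with $\tau_{\min} = -2\sigma-1$ and $\tau_{\max} = 2\sigma+1$, so $\tau = \tau_{\max}-\tau_{\min}+1 = 4\sigma+3 \in \bigoh(\sigma)$. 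Then I would take the R-node subprocedure furnished by Lemma~\ref{lem:R_node_sources}, whose time complexity on an R-node $\mu$ with $k$ children is $\bigoh(\sigma \cdot 1.45^\sigma \cdot k\log^3 k)$; summing over all R-nodes of a block $B$, and using that $\sum_\mu k_\mu \in \bigoh(|B|)$ (the total skeleton size is linear, see Section~\ref{subsec:spqr}), the total time complexity is $\alpha(B) \in \bigoh(\sigma \cdot 1.45^\sigma \cdot |B|\log^3 |B|)$, and $\alpha(G) = \sum_B \alpha(B) \in \bigoh(\sigma \cdot 1.45^\sigma \cdot n\log^3 n)$. Plugging these bounds into Lemma~\ref{lem:R_node_general} gives a running time of
\[
\bigoh\!\big(n(\alpha(G) + \tau^2 \cdot n)\big) = \bigoh\!\big(n(\sigma\, 1.45^\sigma n\log^3 n + \sigma^2 n)\big) = \bigoh(\sigma\, 1.45^\sigma \cdot n^2\log^3 n),
\]
since $\sigma^2 = \bigoh(\sigma\,1.45^\sigma)$. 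Correctness is immediate: Lemma~\ref{lem:R_node_general} determines whether $G$ admits an upward planar embedding, which by definition is exactly the \UP\ question.

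The main obstacle is not in this concluding theorem itself — which is essentially a bookkeeping assembly — but rather lies upstream, in establishing that the R-node subprocedure of Lemma~\ref{lem:R_node_sources} is both correct and achieves the claimed $\bigoh(\sigma\,1.45^\sigma \cdot k\log^3 k)$ bound. The delicate points there are: (i) the classification of boring components and the preferred-set reductions (Lemmata~\ref{lem:shape_desc_boring} and~\ref{lem:boring-relationship}), which allow the potentially unbounded number of boring components to be folded into the flow network rather than enumerated; (ii) the correctness of the flow-network construction, i.e.\ that a feasible flow corresponds bijectively (up to angle assignment) to a $uv$-external upward planar embedding with the prescribed shape description, which hinges on the careful accounting of switch angles $n_f$ across virtual edges; and (iii) the combinatorial bound $\ell \in \bigoh(1.45^\sigma)$ on the number of shape-description choices for extreme and interesting components, proved via the AM–GM-type optimization $(\overline\sigma/x)^x \le e^{\overline\sigma/e}$ in Lemma~\ref{le:number_combinations}. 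Once those are in hand, the proof of Theorem~\ref{th:sources} is a short composition of Lemma~\ref{lem:R_node_general} with the turn-number and subprocedure bounds, as sketched above.
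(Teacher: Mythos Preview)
Your proposal is correct and follows essentially the same route as the paper's proof: expand, note that blocks inherit the source bound (Observation~\ref{obs:maintains-sources}), use Lemma~\ref{lem:spirality_range} to establish $[\tau_{\min},\tau_{\max}]$-turn-boundedness with $\tau\in\bigoh(\sigma)$, supply the R-node subprocedure of Lemma~\ref{lem:R_node_sources} with total cost $\bigoh(\sigma\,1.45^\sigma\cdot n\log^3 n)$, and plug into the Interface Lemma~\ref{lem:R_node_general} to get the final bound. Your additional remarks on where the real technical work lies (Lemmata~\ref{lem:shape_desc_boring}, \ref{lem:boring-relationship}, \ref{le:number_combinations}, and the flow-network correctness) are accurate and appropriately flag that this theorem is just the assembly step.
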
	
		
		\begin{proof}
			Let $G$ be an $n$-vertex digraph whose underlying graph is planar and let $\sigma$ be the number of sources of $G$.  First, we expand $G$, which is done in $\bigoh(n)$ time. As discussed in Section~\ref{sec:prelim}, an expansion does not alter whether $G$ is upward planar, it preserves the number of sources, and at most doubles the number of vertices of $G$. By Observation~\ref{obs:maintains-sources}, each biconnected component $B$ of $G$ is expanded and has at most $\sigma$ sources. For any edge $e^*$ of $B$, the SPQR-tree $T$ of $B$ can be computed in $\bigoh(n)$ time. By Lemma~\ref{lem:spirality_range}, for any R-node $\mu$ of $T$, the left- and right-turn-numbers of any $uv$-external upward planar embedding of the pertinent graph of $\mu$ are in $\bigoh(\sigma)$. Further, by Lemma~\ref{lem:R_node_sources},  there exists an R-node subprocedure whose total time complexity is $\bigoh(\sigma 1.45^\sigma \cdot n\log^3 n)$. Hence, by Lemma~\ref{lem:R_node_general}, {\sc Upward Planarity} can be solved in  $\bigoh(n((\sigma 1.45^\sigma \cdot n\log^3 n)+(\sigma^2\cdot n)))\subseteq \bigoh(\sigma 1.45^\sigma \cdot n^2\log^3 n)$ time.
		\end{proof}

\fi

\section{An Algorithm Parameterized by Treewidth}
\label{sec:tw}
The aim of this section is to provide an R-node subprocedure which yields parameterized algorithms for \UP\ when parameterized by treewidth and treedepth. The idea behind this is to obtain a combinatorization of the task that is asked in the subprocedure. This will be done by extending the skeleton of the R-node with additional information, notably via a so-called \emph{embedding graph}\ifshort\footnote{Note that this notion differs from the embedding graphs used in recent drawing extension problems~\cite{EibenGHKN20,GanianHKPV21}; unlike in those problems, here it seems impossible to use Courcelle's Theorem~\cite{Courcelle90}.}\fi. The R-node subprocedure is then obtained by performing dynamic programming over the embedding graph. However, to obtain the desired runtime, we will first have to show that the embedding graph has bounded treewidth.

\iflong
It is worth noting that the notion of embedding graph used here shares similarities with the embedding graphs that have been recently used to solve several drawing extension problems~\cite{EibenGHKN20,GanianHKPV21}. However, the notions are not the same due to problem-specific differences, and the use of the combinatorizations differ as well: while in the two aforementioned papers the combinatorizations were used in conjunction with Courcelle's Theorem~\cite{Courcelle90}, this approach cannot be used here due to the fact that we will need to make the turn numbers add up to $- 2$ for each non-outer face. 
\fi

\iflong
	\subsection{A Combinatorial Representation of the Skeleton}
	\label{sub:embeddinggraph}
	\fi
\ifshort
	\subparagraph{A Combinatorial Representation of the Skeleton.}
\fi	
	
		\newcommand{\Vf}{V^{\texttt{f}}}
\newcommand{\Ve}{V^{\texttt{e}}}
\newcommand{\Vv}{V^{\texttt{v}}}

	Let $G$ be a connected graph with a planar embedding $\cG$, and let $F$ be the set of faces of $\cG$. Let $G^-$ be the graph obtained from $G$ by subdividing each edge $e$ once, creating the vertex $v_e$. We define the \emph{embedding graph} \signG\ of $G$ as the graph obtained from $G^-$ by adding a vertex $f$ for each face in $F$, and connecting $f$ to each vertex in $G^-$ incident to $f$. \iflong Formally, $V(\signG)=V(G)\cup F \cup \{v_e~|~e\in E(G)\}=\Vv\cup \Vf\cup \Ve$, and $E(\signG)=E(G^-)\cup  \{fv~|~v\in V(G) \wedge v\text{ is incident to f }\} \cup \{fv_e~|~e\in E(G) \wedge \text{ both endpoints of $e$ are incident to f }\}$.  \fi Observe that \signG\ is tripartite, and we call the three sets of vertices that occur in the definition of $V(\signG)$ the \emph{true vertices}, \emph{face-vertices} and \emph{edge-vertices} of \signG, respectively. An illustration is shown in Fig.~\ref{fig:embeddingGraph}.
	
	\begin{figure}[htb]
		\centering
		\begin{subfigure}{.28\textwidth}
			\centering
			\includegraphics[width=\columnwidth, page=1]{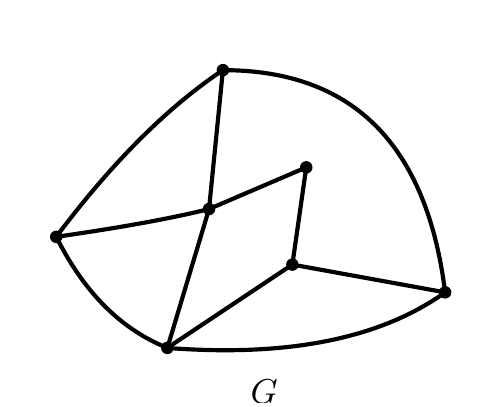}
			\subcaption{}
			\label{fig:embeddingGraph_1}
		\end{subfigure}
		\hfil
		\begin{subfigure}{.28\textwidth}
			\centering
			\includegraphics[width=\columnwidth, page=2]{figures/embeddingGraph}
			\subcaption{}
			\label{fig:embeddingGraph_2}
		\end{subfigure}
		\caption{
			(a) A planar graph $G$. (b) The embedding graph $\tilde{G}$ of $G$. True-, face-, and edge-vertices are shown in black, green, and orange, respectively.
			\label{fig:embeddingGraph}
		}
	\end{figure}
	
	\iflong Furthermore, note that each edge-vertex is adjacent to at most two face-vertices.
	
	 \fi	
\ifshort
Our aim in this section is to show that $\tw(\signG)$ is linearly bounded by $\tw(G)$.
To do so, we identify the faces that are, in some sense, ``relevant'' for a bag in a tree decomposition of $G^-$, and prove that (1) the number of such relevant faces is linearly bounded by the width of that decomposition and (2) adding these faces to the decomposition of $G^-$ results in a tree-decomposition of \signG. We can then prove:
\fi
\iflong
It is well known (and easy to show) that if the treewidth of a graph $G$ is at most $k$, then the treewidth of its subdivision is at most $k$~\cite{CyganFKLMPPS15,DowneyF13}. Our aim in this section, however, will be to show that if the treewidth of an embedded graph is bounded, then the treewidth of its embedding graph is also bounded. 
We begin by identifying a small number of faces that are, in some sense, ``relevant'' for a bag in a tree decomposition of $G^-$. We will then show that it is possible to construct a bounded-width decomposition of the embedding graph by adding the face-vertices of these relevant faces to the appropriate bags.

Given a tree-decomposition $(T,\chi)$ of $G^-$, a face $f\in F$ is called \emph{crossing} for a node $t\in T$ if $f$ is incident to at least one vertex in $\fut(t)$ and to at least one vertex in $\past(t)$. Our first task is to bound the number of crossing faces in a bag.

\newcommand{\cfbound}[1]{2#1}

\iflong \begin{lemma} \fi \ifshort \begin{lemma}\fi
\label{lem:crossfacebound}
If $G$ has treewidth $k$, then there exists a tree-decomposition $(T,\chi)$ of width $k$ such that there are at most $\cfbound{k}$ crossing faces for each node $t$ of $T$.
\end{lemma}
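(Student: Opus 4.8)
The idea is to first replace the given decomposition by one that is well-behaved with respect to the planar embedding $\cG$, and then to bound the crossing faces at each node by a planar separator-counting argument. Concretely: (1) since $G$, hence $G^-$, is connected, take a tree-decomposition of $G^-$ of width $k$ in which every bag induces a connected subgraph of $G^-$ — this is standard (if it costs an additive constant in the width, this is harmless for the final linear bound on $\tw(\signG)$, but with care the width can be kept at $k$). In such a decomposition every adhesion $\chi(t)\cap\chi(t')$ is nonempty (again because $G^-$ is connected), so for any subtree $T'$ of $T$ the set $\chi(T')$ induces a connected subgraph of $G^-$: it is a union of connected vertex sets whose overlap pattern along $T'$ is itself the tree $T'$. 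Applying this to $T_t$ and to the complementary subtree $T\setminus T_t$, we obtain, for every node $t$, writing $S=\chi(t)$, $A=\past(t)$, $B=\fut(t)$: the sets $A,S,B$ partition $V(G^-)$, there is no edge of $G^-$ between $A$ and $B$ (the usual separation property of tree-decompositions), and both $G^-[A\cup S]$ and $G^-[B\cup S]$ are connected.

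\textbf{Counting the crossing faces.} Fix a node $t$; if $A=\emptyset$ or $B=\emptyset$ there is nothing to do, so assume both are nonempty. A crossing face is a face of $G^-$ (equivalently of $\cG$) incident to a vertex of $A$ and to a vertex of $B$. Contract each connected component of $G^-[A]$ and each connected component of $G^-[B]$ to a single vertex; contracting a connected subgraph keeps the graph plane, preserves the number of faces, and induces a bijection on faces under which incidence with $A$ (resp.\ $B$) becomes incidence with the corresponding contracted vertex. In the resulting plane graph $H'$ the vertex set consists of ``$A$-vertices'', ``$B$-vertices'' and $S$, there is no edge between an $A$-vertex and a $B$-vertex, and by Step (1) the $A$-vertices together with $S$ induce a connected subgraph, as do the $B$-vertices together with $S$; moreover the number of crossing faces of $G^-$ equals the number of faces of $H'$ incident to an $A$-vertex and a $B$-vertex. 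Delete all loops (this does not change that count: a loop at an $A$-vertex bounds a disk which, since each $B$-component is connected, lies wholly inside or outside it, so deleting it merely merges the two flanking faces), and then handle parallel edges similarly, reducing to the following purely combinatorial claim: in a simple connected plane graph whose vertex set is partitioned into an $A$-part, a set $S$, and a $B$-part, with no $A$-part–$B$-part edge, the number of faces incident to both an $A$-part and a $B$-part vertex is at most $2|S|$. This is proved by a boundary-walk analysis — on the boundary walk of such a face, the $A$-part and $B$-part occurrences alternate and are separated by subwalks through $S$, giving at least two ``$S$-transitions'', which (using that the $A$-part and $B$-part are each connected through $S$) can be charged, with multiplicity at most $2$, to the vertices and edges of the plane graph $H'[S]$, of which there are at most $3|S|-6$. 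Since $|S|=|\chi(t)|\le k+1$, this gives $\bigoh(k)$, and ruling out the (easily described) extremal configurations sharpens the bound to $\cfbound{k}$.

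\textbf{Main obstacle.} The crux is the combinatorial claim above: the total number of faces of $H'$ is unbounded (parallel edges between a contracted vertex and $S$ create arbitrarily many ``trivial'' faces), so one must argue that only $\bigoh(|S|)$ faces can simultaneously see an $A$-vertex and a $B$-vertex, and this genuinely uses the connectivity established in Step (1) — it is exactly what rules out configurations such as many nested triangles or interleaved fans that could otherwise produce many crossing faces. Making the charging of $S$-transitions to $H'[S]$ injective up to a constant factor, and dealing with the degenerate cases (pendant contracted vertices, leftover multi-edges, a disconnected $H'[S]$, and the boundary between the internal and the outer face), is where the real work lies. The secondary point requiring care is Step (1): obtaining a tree-decomposition of $G^-$ with all bags connected while keeping the width equal to $k$, rather than $k+\bigoh(1)$.
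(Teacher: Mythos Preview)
Your approach is genuinely different from the paper's, and considerably harder. The paper's proof is essentially a citation: it invokes Dorn's \emph{geometric tree-decompositions} (sphere-cut decompositions), which guarantee a width-$k$ tree-decomposition of the plane graph in which, for every node $t$, the separation $(\past(t),\chi(t),\fut(t))$ is realised by at most two closed Jordan curves, each meeting the drawing only in (a subset of) $\chi(t)$. A Jordan curve through $m$ vertices and no edges alternates between vertices and face interiors and hence touches at most $m$ faces, so the two curves together meet at most $2k$ faces, and every crossing face is one of them. That is the entire proof.

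Your plan---connected bags, contract $A$ and $B$, then a boundary-walk charging argument---is, if completed, essentially a from-scratch derivation of the planar-topological content that Dorn's theorem packages up. Two concrete issues:

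\begin{itemize}
\item \textbf{The charging claim is not established.} You state that the ``$S$-transitions'' on crossing-face boundaries can be charged with multiplicity at most $2$ to vertices and edges of $H'[S]$, but you do not actually define the charge or prove the multiplicity bound; you yourself flag this as ``where the real work lies''. Without this, the argument gives only $\bigoh(k)$ and even that is not rigorously obtained. Note also that your bound goes through $3|S|-6$ edges of $H'[S]$, which already exceeds $2k$; sharpening to $2k$ would require a further argument you have not outlined.
\item \textbf{Connected bags at width exactly $k$.} Requiring every bag of an optimal-width tree-decomposition to induce a connected subgraph is not always possible; the standard techniques to enforce connectivity can increase the width. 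You acknowledge this, and for the downstream linear bound on $\tw(\signG)$ an additive constant is indeed harmless, but then the lemma as stated (width $k$, exactly $2k$ crossing faces) is not what you prove. Relatedly, your claim that $G^-[B\cup S]$ is connected does not follow directly from ``every bag connected'': the bags of $T\setminus T_t$ need not cover all of $S$, only the part of $S$ that also appears in the parent of $t$.
\end{itemize}

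In short, the paper sidesteps all of this by citing the right structural tool. If you want a self-contained argument, the cleanest route is still to prove (or cite) that a planar separation of a plane graph can be realised by a bounded number of Jordan curves through the separator---this is exactly Dorn's lemma---rather than the contraction-and-charging detour.
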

%
%

\iflong \begin{proof}
The lemma follows from previous work on so-called \emph{geometric tree-decompositions}~\cite{Dorn10}. In particular, Dorn~\cite[Theorem 1]{Dorn10} showed that if a plane graph $G$ has treewidth $k$, there exists a tree-decomposition $(T,\chi)$ of $G$ of width at most $k$ with the additional property that for each node $t\in V(T)$, the bag $\chi(t)$ contains at most two vertex-subsets $S_1,S_2\subseteq \chi(t)$ and there is a partitioning of $\past(t)$ into $\past_1(t),\past_2(t)$ with the following properties:
\begin{itemize}
\item $S_1, S_2$ are minimal separators in $G$ that separate $\fut(t)$ from $\past_1(t)$ and $\past_2(t)$, respectively, and
\item there are Jordan curves $C_1, C_2$ which intersect the drawing of $G$ precisely in the vertices of $S_1, S_2$, respectively, and separate $\fut(t)$ from $\past_1(t)$ and $\past_2(t)$, respectively.
\end{itemize}

Given the above, consider such a geometric tree-decomposition $(T,\chi)$ of an arbitrary drawing that realizes the embedding of $G^-$. Each crossing face must, by definition, be intersected by at least one of the Jordan curves $C_1$, $C_2$. Since each of these curves intersects at most $k$ vertices and no edges, the total number of faces intersecting at least one of these curves is upper-bounded by $2k$.
%
%
\end{proof}\fi

%
%

Let us now proceed with a tree-decomposition $(T,\chi)$ satisfying Lemma~\ref{lem:crossfacebound}, and let $F(t)$ be the set of crossing faces for $t$. We say that a face is \emph{inside} a bag $t$ if every vertex on its boundary lies in $t$, and denote the set of faces inside $t$ as $I(t)$. 

\newcommand{\inbound}[1]{2#1-4}
\iflong \begin{observation} \fi \ifshort \begin{observation}\fi
\label{obs:inbound}
$|I(t)|\leq \inbound{k}$.
\end{observation}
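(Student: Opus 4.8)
The plan is to bound $|I(t)|$ by the number of faces of the subgraph of $G^-$ induced by the bag $\chi(t)$. Fix the tree-decomposition $(T,\chi)$ from Lemma~\ref{lem:crossfacebound}, which has width $k$, so $|\chi(t)|\le k+1$ for every node $t$; in our application $G$ is the (simple, triconnected) skeleton of an $R$-node, so $G^-$ is a simple plane graph and $k=\tw(G)\ge 2$. Write $H=G^-[\chi(t)]$, equipped with the plane embedding inherited from $\cG$; then $H$ is a simple plane graph on at most $k+1$ vertices, and I want to show $|I(t)|\le |F(H)|$ where $F(H)$ is the face set of $H$.

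The first step is this inequality. If $f\in I(t)$, then by definition every vertex of $G^-$ on the boundary walk of $f$ lies in $\chi(t)$; since each boundary edge of $f$ joins two such vertices and $H$ is an induced subgraph, the entire boundary walk of $f$ lies in $H$. The region $R_f$ cut off by $f$ contains no vertex or edge of $G^-$, hence none of $H$, so $R_f$ is a face of $H$; distinct faces of $G^-$ are distinct regions of the plane and so give distinct faces of $H$. Hence $|I(t)|\le |F(H)|$.

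The second step sharpens the generic Euler bound $|F(H)|\le 2|V(H)|-4$ by removing the edge-vertices. Every edge-vertex has degree $2$ in $G^-$, hence degree at most $2$ in $H$. Let $H'$ be obtained from $H$ by deleting the isolated edge-vertices and the degree-$1$ edge-vertices (with their incident edge), and by suppressing each degree-$2$ edge-vertex $v_e$ (replacing the path $x\,v_e\,y$ by the edge $xy$). None of these operations changes the number of faces, so $|F(H)|=|F(H')|$. Moreover $V(H')=\chi(t)\cap\Vv$, and $H'$ is simple: $G^-$ has no edge between two true vertices, so every edge of $H'$ is a suppressed one, and two suppressed edges $xy$ could coincide, or a single one could be a loop, only if $G$ had parallel edges or a loop. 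Writing $a=|\chi(t)\cap\Vv|$, it follows that $|I(t)|\le 2a-4$ if $a\ge 3$, and $|I(t)|\le 1$ if $a\le 2$.

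The last step shows $a\le k$ when $I(t)\neq\emptyset$: picking $f\in I(t)$, its boundary walk in $G^-$ contains at least one edge-vertex (as $G$ is connected with an edge, every face of $G^-$ is bounded by an edge), which lies in $\chi(t)$, so $a\le |\chi(t)|-1\le k$; and since $G$ is simple with at least three vertices, no face of $G^-$ has fewer than three true vertices on its boundary, so $a\ge 3$. Combining the three steps, $|I(t)|\le 2a-4\le 2k-4$; if $I(t)=\emptyset$ the bound holds since $k\ge 2$. I expect the only delicate point to be the face-preservation claim in step two together with verifying that $H'$ stays simple; the rest is a routine application of Euler's formula and the planarity bound $|E|\le 3|V|-6$.
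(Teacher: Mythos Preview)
Your proof is correct and follows essentially the same route as the paper: bound $|I(t)|$ by the number of faces of the plane subgraph induced by the bag, then invoke Euler's formula. The paper's two-line argument simply takes $H$ to be the subgraph of $G$ induced on $\chi(t)$ and asserts that $I(t)$ injects into the faces of $H$, which has at most $2k-4$ faces. Your detour through $G^-[\chi(t)]$ followed by suppressing the edge-vertices produces (a subgraph of) the same $H$, and in addition supplies the missing justification for the exact constant: the observation that whenever $I(t)\neq\emptyset$ the bag must contain at least one edge-vertex, so the number $a$ of true vertices in $\chi(t)$ is at most $k$ rather than $k+1$, which is what is needed to get $2a-4\le 2k-4$ instead of only $2k-2$.
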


\iflong \begin{proof}
Let $H$ be the subgraph of $G$ induced on $\chi(t)$. Since $H$ is planar, by Euler's formula we obtain that it contains at most $2k-4$ faces. Moreover, $I(t)$ is a subset of the faces of~$H$. 
\end{proof}\fi

Finally, in order to obtain a valid tree-decomposition for \signG, we will need to add the set $F(t)$ of crossing faces also to all nodes adjacent to $t$ (this will be important in cases where $t$ is a forget node or when $t$'s parent is an introduce node). Formally, let the \emph{propagation faces} $P(t)$ for $t\in T$ simply be defined as $\bigcup_{\{p~|~pt\in E(T)\}}F(p)$. Since a nice tree-decomposition has degree at most $3$, it follows that $|P(t)|\leq 3\cdot \cfbound{k}$.


Let $\psi(t)=\chi(t)\cup F(t)\cup I(t)\cup P(t)$. While Lemma~\ref{lem:crossfacebound} together with Observation~\ref{obs:inbound} means that adding the face-vertices for the crossing, inside and propagation faces to the relevant bags (i.e., replacing $\chi(t)$ with $\psi(t)$) would preserve a bound on the bag size, we need to show that this construction results in a valid tree-decomposition for \signG. 

\iflong \begin{lemma} \fi \ifshort \begin{lemma}\fi
\label{lem:correcttw}
$(T,\psi(t))$ is a tree-decomposition of \signG.
\end{lemma}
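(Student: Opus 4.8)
The plan is to verify the three conditions required for $(T,\psi)$ to be a tree-decomposition of \signG: every vertex of \signG\ appears in some bag; every edge of \signG\ is contained in some bag; and for every vertex of \signG\ the bags containing it form a connected subtree. The true vertices and edge-vertices of \signG\ are exactly the vertices of $G^-$, and for these three conditions we already know $(T,\chi)$ works, so we only need to check that the augmentation with face-vertices does not damage the subtree-connectedness for those vertices — which it cannot, since we only add face-vertices to bags. So the whole burden falls on the face-vertices $f\in F$ and the edges of \signG\ incident to them.

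First I would handle the ``covered'' condition: each face-vertex $f$ must lie in some bag $\psi(t)$. Fix a face $f$ with boundary vertex set $B_f$ (a subset of $V(G)$, together with their incident edge-vertices, but it suffices to reason about the true vertices). In any tree-decomposition there is a node $t$ whose bag contains some $v\in B_f$, and I would look at the subtree $T_f=\bigcup_{v\in B_f}\{t: v\in\chi(t)\}$; since $G^-$ contains the cycle bounding $f$ (in the subdivided graph), and tree-decompositions of a graph containing a cycle have the property that the subtrees of the cycle's vertices pairwise intersect, $T_f$ is connected and is in fact a subtree. Consider a node $t$ in this subtree; I claim $f\in\psi(t)$ for a suitable choice of $t$. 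If all of $B_f$ lies in $\chi(t)$ then $f\in I(t)$ and we are done. Otherwise, take a leaf-most node $t$ of $T_f$, or more carefully a node $t\in T_f$ for which $B_f\not\subseteq\chi(t)$; then $B_f$ meets both $\chi(T_t)$ and $V\setminus\chi(T_t)$ in an appropriate way, so $f$ is crossing for some node on the path, giving $f\in F(t)$. The cleanest formulation: because $B_f$ is not contained in a single bag but its subtree $T_f$ is connected, there is an edge of $T_f$, say with endpoints $t,t'$ where $t'$ is the child, such that $B_f$ meets $\past(t')$ and also meets $\fut(t')\cup(\chi(t')\setminus B_f)$... — I would phrase this so that $f$ is crossing for $t'$ and hence $f\in F(t')\subseteq\psi(t')$. (And if $B_f$ is entirely in one bag we already covered it via $I$.)

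Next the ``edges covered'' condition. Every edge of \signG\ incident to a face-vertex $f$ goes to a true vertex $v$ or edge-vertex $v_e$ on the boundary of $f$. I need a single bag containing both $f$ and $v$ (resp. $f$ and $v_e$). I would argue: among the bags $\psi(t)$ with $f\in\psi(t)$ — which come in three flavours, $f\in I(t)$, $f\in F(t)$, or $f\in P(t)$ — the union over all $t$ with $f\in I(t)$ already covers all of $B_f$ only in the easy case; in general I would show that the set $S_f=\{t: f\in\psi(t)\}$ forms a connected subtree of $T$ that contains, for each $v\in B_f$, at least one node $t$ with $v\in\chi(t)$, and in fact that $S_f \supseteq T_f$. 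The inclusion $T_f\subseteq S_f$ is the crux: for $t\in T_f$ either $B_f\subseteq\chi(t)$ (so $f\in I(t)$) or $f$ is crossing for $t$ or for a neighbour of $t$ (so $f\in F(t)$ or $f\in P(t)$). Once $T_f\subseteq S_f$ is established, each edge $fv$ is covered because $v$ lies in some bag of $T_f\subseteq S_f$, and that bag contains $f$ as well; similarly the edge-vertex $v_e$ of an edge $e$ on $\partial f$ lies in $T_f$ by the subtree-intersection argument applied in $G^-$, so $fv_e$ is covered too.

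Finally, subtree-connectedness for face-vertices: I must show $S_f$ is connected. Given $T_f\subseteq S_f$ and the propagation trick ($P(t)$ pulls $f$ into every neighbour of a node where $f$ is crossing), the only way $S_f$ could be disconnected is if there were a node $t\notin S_f$ with two neighbours in $S_f$; I would rule this out by a case analysis on why $f\in\psi$ at those neighbours (inside / crossing / propagation) and the fact that $I(t)$-membership and $F(t)$-membership of $f$ are ``monotone along $T_f$'' in a way forced by the connectedness of $T_f$ and the definition of crossing. The main obstacle I anticipate is precisely making this last bookkeeping airtight — pinning down exactly at which nodes $f$ lands in $I$, in $F$, and in $P$, and checking no gap appears — rather than any conceptual difficulty; the propagation faces $P(t)$ were introduced precisely to plug the gaps at forget nodes and below introduce nodes, so the proof should go through by carefully walking along the subtree $T_f$ and invoking the structure of a nice tree-decomposition. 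I would also remark that the bag-size bound (needed for the $\XP$/treedepth runtime, not for this lemma) follows from Lemma~\ref{lem:crossfacebound}, Observation~\ref{obs:inbound}, and $|P(t)|\le 3\cdot\cfbound{k}$, so $|\psi(t)|=O(k)$.
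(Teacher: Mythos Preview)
Your central claim $T_f\subseteq S_f$ is false, and both your edge-coverage and connectedness arguments rest on it. Take a leaf $t$ of the nice tree-decomposition with $\chi(t)=\{a\}$ where $a$ lies on the boundary of a face $f$ having at least one other boundary vertex. Then $t\in T_f$, but $\past(t)=\emptyset$ so $f\notin F(t)$; obviously $f\notin I(t)$; and the sole neighbour of $t$ is its parent $p$, an introduce node with $\past(p)=\emptyset$ as well, so $f\notin F(p)$ and hence $f\notin P(t)$. Thus $f\notin\psi(t)$. The same phenomenon occurs symmetrically near the root-side boundary of $T_f$ (where $\fut(t)\cap B_f=\emptyset$). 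The propagation set $P(t)$ buys you one step, not an arbitrary walk down to a leaf of $T_f$.

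The paper therefore does \emph{not} argue via $T_f\subseteq S_f$. For edge coverage it fixes the edge $vf$, goes to the forget node $t$ of $v$ (so $v\in\past(t)$ is guaranteed), picks any node $u$ with $f\in F(u)$, and performs an ancestor/descendant case analysis between $t$ and $u$ to locate a specific node whose bag contains both $v$ and $f$ (typically the child of the forget node, via $P$). For connectedness it takes two nodes $t,u$ with $f\in\psi(t)\cap\psi(u)$ and argues directly, again by an ancestor/descendant case split, that every node on the $t$--$u$ path has $f$ crossing or in $P$. Your plan can be repaired by abandoning $T_f\subseteq S_f$ and adopting this forget-node/path-based reasoning; the propagation set $P$ is exactly what makes the child of each forget node work, which is the correct reading of the hint you quoted.
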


\iflong \begin{proof}
To prove the lemma, it suffices to verify that (1) every edge incident to $F$ occurs in some bag, and (2) for each face-vertex $f\in F$, the set of nodes of $T$ satisfying $f\in \chi(t)$ forms a nonempty subtree of~$T$. But first, we'll establish the following subclaim: for each face-vertex $f$, there exists a node $t$ such that $f\in \psi(t)$. 

Indeed, if all vertices adjacent to $f$ in $\signG$ occur together in some bag, say $\psi(t)$, then $f\in \psi(t)$ and the subclaim holds. Otherwise, let us consider a forget node $t$ with a child $p$ such that $\chi(p)\setminus \chi(t)=\{z\}$ where $zf\in E(\signG)$ with the additional property that $t$ achieves maximum distance from the root $r$ (i.e., $t$ is ``as far away'' from the root as possible). In particular, this implies that no other forget node which forgets a vertex adjacent to $f$ occurs in the subtree rooted at $t$. Since $\chi(p)$ does not contain all the neighbors of $f$ and no neighbor of $f$ was forgotten in the subtree rooted at $t$, there must exist some neighbor of $f$, say $y$, in $\fut(t)$. Since $z\in \past(t)$, we have that $f\in F(t)$ and hence $f\in \psi(t)$, and the subclaim holds.

Let us now turn to establishing property (1). Again, if all vertices adjacent to $f$ in $\signG$ occur in some bag, then the property trivially holds. Otherwise, let $v$ be an arbitrary vertex such that $vf\in E(\signG)$; our aim is to show that $\{v,f\}$ occur together in some bag of $(T,\psi)$. Recall that there must exist some node in $T$ where $f$ occurs as a crossing vertex; let us pick one such node, say $u$. Consider the node $t$ where $v$ was forgotten. We distinguish three cases.

\begin{itemize}
\item $u$ is a descendant of $t$: There must exist some vertex $w\in \past(u)$ adjacent to $f$. Consider the unique $u$-$t$ path in $T$, and let $s'$ be the unique node on this path such that its parent $s'$ introduces $v$. Since $v\in \fut(s)$ and $w\in \past(s)$, it follows that $f\in F(s)$. But then $\{f,v\}\subseteq P(s)$, as desired.

\item $u$ is an ancestor of $t$: There must exist some vertex $w\in \fut(u)$ adjacent to $f$. But then $w\in \fut(t)$ as well, and since $v\in \past(t)$ we obtain that $f\in F(t)$. Then $\{f,v\}\subseteq P(t')$ where $t'$ is the unique child of $t$, as desired.

\item $u$ is neither an ancestor nor a descendant of $t$: There must exist some vertex $w\in \past(u)$ adjacent to $f$. But then $w\in \fut(t)$, and since $v\in \past(t)$ we obtain that $f\in F(t)$. Then $\{f,v\}\subseteq P(t')$ where $t'$ is the unique child of $t$, as desired.
\end{itemize}

Finally, let us establish property (2). In particular, let $t,u$ be nodes of $t$ such that $f\in \psi(t)\cap \psi(u)$, and it suffices to show that the bag of every node on the $t$-$u$ path in $T$ contains $f$. Let us first consider the case where $f\in I(t)$. Since the intersection of connected subtrees on $T$ is itself a connected subtree and each vertex $v$ adjacent to $f$ occurs on a connected subtree of $T$, it follows that the set of bags where $f\in I(t)$ also forms a connected subtree. Moreover, this implies that in this case there can be no node $u$ such that $f\in F(u)$, and so the claim holds.

On the other hand, assume that $f\in F(t)\cup P(t)$. If $f\in P(t)$, then there exists a neighbor $t'$ of $t$ such that $f\in F(t')$ and the $t'$-$u$ path contains the $t$-$u$ path; hence it suffices to establish the case where $f\in F(t)$. Moreover, the previous paragraph established that in this case $f\in F(u)\cup P(u)$, and hence by the same argument as in the previous sentence it also suffices to consider the case where $f\in F(u)$. We now distinguish three cases.

\begin{itemize}
\item $u$ is a descendant of $t$: By the definition of crossing faces, there must exist a vertex $a\in \past(u)$ and a vertex $b\in \fut(t)$ where $a\neq b$. But then each bag $p$ on the $u$-$t$ path in $T$ satisfies that $a\in \past(p)$ and $b\in \fut(p)$, and in particular $f\in F(p)$.

\item $t$ is a descendant of $u$: This case is completely symmetrical to the previous one.

\item $u$ is neither an ancestor nor a descendant of $t$: By the definition of crossing faces, there must exist a vertex $a\in \past(u)$ and a vertex $b\in \past(t)$ where $a\neq b$. Let $s$ by the least common ancestor of $t$ and $u$. Each node $p$ on the unique $t$-$s$ path except for $s$ satisfies that $b\in \past(p)$ and $a\in \fut(p)$, meaning that $f\in F(p)$. The same claim then holds symmetrically for the unique $u$-$s$ path. Finally, since we showed that $s$ has a neighbor $p'$ satisfying $f\in F(p')$, we obtain $f\in P(s)$ as well. In particular, the bag of every node on the $t$-$u$ path contains $f$.\qedhere
\end{itemize}
%
%
%
%
\end{proof}\fi

Lemma~\ref{lem:crossfacebound}, Observation~\ref{obs:inbound} and Lemma~\ref{lem:correcttw}	together with the known fact that subdividing edges does not increase the treewidth~\cite{CyganFKLMPPS15,DowneyF13} immediately imply the following result, which we believe may be of general interest.
\fi

\iflong \begin{theorem} \fi \ifshort \begin{theorem}\fi
\label{thm:embeddingtw}
Let $G$ be a graph with a planar embedding of treewidth $k$ where $k\geq 1$. Then the embedding graph $\signG$ has treewidth at most $11k-4\in \bigoh(k)$.
\end{theorem}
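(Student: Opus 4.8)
The plan is to assemble the proof from the three components established in the excerpt. First I would fix a graph $G$ with a planar embedding $\cG$ of treewidth $k \geq 1$, and form the subdivided graph $G^-$, whose treewidth is also at most $k$ by the standard fact that subdividing edges does not increase treewidth. The goal is to extend a suitable tree-decomposition of $G^-$ into a tree-decomposition of $\signG$ by adding face-vertices to bags, and then bound the resulting width.

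Next I would invoke Lemma~\ref{lem:crossfacebound} to obtain a tree-decomposition $(T,\chi)$ of $G^-$ (equivalently of $G$, which has the same treewidth and the same embedding up to subdivision) of width at most $k$ with at most $\cfbound{k} = 2k$ crossing faces for each node. I would then define the enlarged bags $\psi(t) = \chi(t) \cup F(t) \cup I(t) \cup P(t)$ as in the excerpt, where $F(t)$ is the set of (face-vertices of) crossing faces, $I(t)$ the faces inside $t$, and $P(t)$ the propagation faces from neighbors. By Lemma~\ref{lem:correcttw}, $(T,\psi)$ is a valid tree-decomposition of $\signG$; this is the substantive structural ingredient, and I would simply cite it rather than reprove it.

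The remaining work is the width count. By Lemma~\ref{lem:crossfacebound}, $|F(t)| \leq 2k$. By Observation~\ref{obs:inbound}, $|I(t)| \leq 2k-4$. Since a nice tree-decomposition has maximum degree $3$ and each neighbor contributes at most $2k$ crossing faces, $|P(t)| \leq 3 \cdot 2k = 6k$. Finally $|\chi(t)| \leq k+1$. Hence $|\psi(t)| \leq (k+1) + 2k + (2k-4) + 6k = 11k - 3$, so the width of $(T,\psi)$ is at most $11k - 4$. Since $\tw(\signG)$ is the minimum width over all tree-decompositions, we conclude $\tw(\signG) \leq 11k-4 \in \bigoh(k)$, as claimed. (One should double-check: if $G$ is not nice, one first converts $(T,\chi)$ to a nice tree-decomposition of the same width in linear time, which preserves the crossing/inside/propagation structure — or, more cleanly, note that Lemma~\ref{lem:crossfacebound} and Lemma~\ref{lem:correcttw} are stated for the decomposition as produced, and the degree-$3$ bound on $P(t)$ follows from niceness of the decomposition used there.)

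The main obstacle in writing this cleanly is bookkeeping rather than ideas: one must make sure the same tree-decomposition $(T,\chi)$ is simultaneously nice (so $|P(t)|$ is controlled), geometric in the sense of Dorn so Lemma~\ref{lem:crossfacebound} applies, and of width $k$. The excerpt's Lemma~\ref{lem:crossfacebound} already packages the geometric + width guarantees; making it nice is routine but needs a remark that niceness can be imposed without destroying the crossing-face bound (splitting a bag into introduce/forget/join nodes only restricts bags, hence cannot increase the number of crossing faces, and join nodes duplicate a bag so inherit its crossing faces). With that caveat noted, the arithmetic above goes through verbatim and the theorem follows.
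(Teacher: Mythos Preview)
Your proposal is correct and follows essentially the same approach as the paper: combine the subdivision fact, Lemma~\ref{lem:crossfacebound}, Observation~\ref{obs:inbound}, the degree-$3$ bound on propagation faces, and Lemma~\ref{lem:correcttw}, then add up $(k+1)+2k+(2k-4)+6k=11k-3$ to get width $11k-4$. Your remark about reconciling niceness with the geometric decomposition of Lemma~\ref{lem:crossfacebound} is a valid bookkeeping point that the paper itself glosses over.
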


\newcommand{\pos}{\emph{pos}}	
\iflong
\subsection{Problem Reformulation}
\label{sub:problemreformulation}
\fi
\ifshort
\subparagraph{Problem Reformulation.}
\fi

\ifshort
Our second task is to formulate the problem we have to solve on a given embedding graph.
First of all, the R-node subprocedure required by Lemma~\ref{lem:R_node_general} can be straightforwardly reduced to the task of checking whether a specific shape description $\psi$ can be achieved at the R-node. This reduction takes at most $\bigoh(\tau)$ time by Lemma~\ref{lem:matrix_feasible}. At this point, the input consists of (1) an R-node $\mu$ of $T$ with skeleton $H$, (2) a mapping $\mathcal{S}_\mu$ which assigns each virtual edge in $H$ to its feasible set, (3) a bound $\kappa$ on the treewidth of the embedding graph $\signH$ obtained from $H$, and (4) a target shape description $\psi$.

The combinatorial reformulation we obtain can be stated as follows:
\fi
\iflong
While the embedding graph defined in Subsection~\ref{sub:embeddinggraph} is well-suited for our purposes, we still need to formulate the problem we will be solving on these embedding graphs. First of all, the R-node subprocedure required by Lemma~\ref{lem:R_node_general} can be straightforwardly reduced to the task of checking whether a specific shape description $\psi$ can be achieved at the R-node. This reduction takes time at most $\bigoh(\tau)$ by Lemma~\ref{lem:matrix_feasible}. At this point, the input consists of (1) an R-node $\mu$ of $T$ with skeleton $H$, (2) a mapping $\mathcal{S}_\mu$ which assigns each virtual edge in $H$ to its feasible set, (3) a bound $\kappa$ on the treewidth of the embedding graph $\signH$ obtained from $H$, and (4) a target shape description $\psi$.

The main difficulty here---and the task of this subsection---will be to express the question of whether $\psi$ is in the feasible set of $\mu$ as a combinatorial problem over the embedding graph $\signH$ of $H$:
\fi
\ifshort
Determine if there exists an \emph{angle mapping} $\alpha$ and \emph{shape selector} $\beta$ which is \emph{valid}, where

\begin{itemize}
\item an angle mapping $\alpha$ maps each switch vertex $v\in V(\signH)$ to a vertex in $N_{\signH}(v)$; intuitively, this describes where  the large angle at $v$ is  in the upward planar embedding of the pertinent graph (this may be in a face between two virtual edges--and $\alpha$ maps $v$ to the corresponding face vertex---or in a virtual edge--and $\alpha$ maps $v$ to that virtual edge), 
\item a shape selector $\beta$ maps each edge-vertex $v_e$ obtained from the virtual edge $e$ of $H$ to a shape description that occurs in a feasible set in the range of $\mathcal{S}_\mu(e)$, and
\item intuitively, a pair $(\alpha,\beta)$ is valid if it satisfies three Validity Conditions: (1) all face-vertices receive the correct number of small and large angles from $\alpha$ and $\beta$, (2) for each true-vertex $v$ and adjacent edge-vertex $w$, $\alpha(v)$ is consistent with the requirements of the shape selected by $\beta(w)$, and (3) the shape of the outer face is consistent with $\psi$. 
\end{itemize}
\fi
\iflong
Determine if there exists a \emph{valid} angle mapping $\alpha$ and a shape selector $\beta$.
We will need some additional terminology to to elaborate on the meaning of ``valid'', ``angle mapping'' and ``shape selector''. Denote the poles of $\mu$ by $\tilde{u}$, $\tilde{v}$. Fix a $\tilde{u}\tilde{v}$-external planar embedding $\mathcal{E}^H_\mu$ of $H$. Let $\signH$ be the corresponding embedding graph, where $V(\signH) = \Vv \cup \Ve \cup \Vf$. Let $F(H)$ denote the set of faces defined by $\mathcal{E}^H_\mu$, we may assume that the outer face $f_H \in F(H)$ is fixed. For each edge $e \in E(H)$, denote by $\nu^e$ the respective child of the node $\mu$ in the used SPQR decomposition, and recall that $G_{\nu^e}$ is then the corresponding pertinent graph.
Let the set $A_\mu$ of active switch vertices be defined as the set of all switch vertices in $G_\mu$ which are contained in $V(H)$. Then:


\begin{itemize}
\item an angle mapping $\alpha$ maps each active switch vertex $v \in A_\mu$ to a vertex in $N_{\signH}(v)$; intuitively, this describes the placement of the large angle at $v$ in the upward planar embedding of the pertinent graph (which could either be in a face between two virtual edges---in which case $\alpha$ maps $v$ to the corresponding face vertex---or in a virtual edges---in which case $\alpha$ maps $v$ to that virtual edge). 
\item a shape selector $\beta$ maps each vertex $v_e$ in $\Ve$ obtained from virtual edge $e$ of $H$ to a shape description that occurs in a feasible set in the range of $\mathcal{S}_\mu(e)$.
\end{itemize}

\newcommand{\turnt}{\texttt{turn}}

It remains to formalize the meaning of ``valid''. For a true vertex $v \in \Vv$ and an edge-vertex $e \in \Ve$ adjacent to $v$, denote by $\lambda_\beta(v, v_e)$ the value assigned to the outer angle of $v$ in $G_{\nu^e}$ by the shape description $\beta(v_e)$. For a true vertex $v \in \Vv$ and an adjacent face vertex $v_f \in \Vf$, denote by $\lambda_\alpha(v, v_f)$ the value that $\alpha$ induces for the angle at $v$ inside $v_f$. That is, if $v \in A_\mu$, $\lambda_\alpha(v, v_f)$ is equal to $1$ if $\alpha(v) = v_f$, and $-1$ otherwise. If $v \notin A_\mu$, $\lambda_\alpha(v, v_f)$ is equal to $0$ if the corresponding angle is flat angle, and $-1$ otherwise, independently of $\alpha$. For a face vertex $v_f \in \Vf$ and an adjacent edge-vertex $v_e \in \Ve$, denote by $\turnt_\beta(v_f, v_e)$ the turn number that $\beta(v_e)$ induces on the respective part of $f$.
Specifically, if $v_e$ corresponds to the component $(G_{\nu^e}, u, v)$, then $\turnt_\beta(v_f, v_e)$ is the left-turn-number $\tau_l(G_{\nu^e}, u, v)$ if $u$ appears right before $v$ in the cyclic order induced by $\mathcal{E}^H_\mu$ on the vertices of $H$ adjacent to $v_f$, and the right-turn-number $\tau_r(G_{\nu^e}, u, v)$ otherwise, where both $\tau_l(G_{\nu^e}, u, v)$ and $\tau_r(G_{\nu^e}, u, v)$ are defined  by the shape $\beta(v_e)$. Analogously, for $v$ adjacent to $v_e$ define $\rho_\beta(v_f, v_e, v)$ as the value $\rho_l(G_{\nu^e}, v)$ or $\rho_r(G_{\nu^e}, v)$.

The pair $(\alpha,\beta)$ is valid if the following three Validity Conditions hold:
\begin{enumerate}
    \item For each face vertex $f\in \Vf$, we have that
        $$\sum_{v_e\in \Ve\cap N_{\signH}(v_f)}\turnt_\beta(v_f, v_e) + \sum_{v\in \Vv\cap N_{\signH}(v_f)}\lambda_\alpha(v, v_f) = \begin{cases}
            2, & \text{ if } f = f_H\\
            -2, & \text{ otherwise.}
    \end{cases}$$
\item For each true vertex $v \in \Vv$, the following conditions hold on the pole angles in the shapes selected by $\beta$ for the incident edges:
    \begin{enumerate}
        \item If $v$ is a switch vertex in $G_\mu$, for each $v_e \in \Ve \cap N_{\signH}(v) \setminus \{\alpha(v)\}$, $\lambda_\beta(v, v_e) = 1$, and $\lambda_\beta(v, v_e) = -1$ if $v_e = \alpha(v)$. That is, in accordance with $\alpha$, at most one of the virtual edges ``takes'' a large angle inside, meaning that the outside angle of this component at $v$ is small.
        \item If $v$ is a non-switch vertex in $G_\mu$, for each $v_e \in \Ve \cap N_{\signH}(v)$, we distinguish two cases. If the interior of $G_{\nu^e}$ does not contain a flat angle at $v$, then $\lambda_\beta(v, v_e) = 1$. Note that otherwise (i.e., if there is both an ingoing edge and an outgoing edge at $v$ in $G_{\nu^e}$), $\lambda_\beta(v, v_e) \in \{-1, 0\}$.
    \end{enumerate}
\item $(\alpha, \beta)$ agrees with the target shape $\psi$, which denotes the tuple $(\tau_l(G_\mu, \tilde{u}, \tilde{v}), \tau_r(G_\mu, \tilde{u}, \tilde{v}), \lambda(G_\mu, \tilde{u}), \lambda(G_\mu, \tilde{v}), \rho_l(G_\mu, \tilde{u}), \rho_r(G_\mu, \tilde{u}),  \rho_l(G_\mu, \tilde{v}), \rho_r(G_\mu, \tilde{v}))$. The circular order on the vertices and edges incident to $f_H$ induces a right path $v_0' = \tilde{u}$, $e_1'$, $v_1'$, \ldots, $e_h'$, $v_h' = \tilde{v}$ and a left path $v_0 = \tilde{u}$, $e_1$, $v_1$, \ldots, $e_\ell$, $v_\ell = \tilde{v}$ from $\tilde{u}$ to $\tilde{v}$, where the left path is defined as the reverse of the corresponding path from $\tilde{v}$ to $\tilde{u}$ in the circular order. Let $L_e$ and $L_v$ contain all intermediate edge-vertices and true vertices on the left path, respectively. That is, $L_e = \{e_1, e_2, \dots, e_\ell\}$, $L_v = \{v_1, v_2, \dots, v_{\ell - 1}\}$. For the right path, define analogously $R_e$ and $R_v$.
The following holds: 
    \begin{enumerate}
        \item Left-turn and right-turn numbers match, that is,
            \begin{gather*}
            \sum_{v_e\in L_e}\turnt_\beta(f_H, v_e) + \sum_{v_v\in L_v}\lambda_\alpha(v_v, f_H) = \tau_l(G_\mu, \tilde{u}, \tilde{v}),\\
            \sum_{v_e\in R_e}\turnt_\beta(f_H, v_e) + \sum_{v_v\in R_v}\lambda_\alpha(v_v, f_H) = \tau_r(G_\mu, \tilde{u}, \tilde{v}).
            \end{gather*}
        \item Angles at the poles match, 
            \begin{gather*}
                \lambda_\alpha(\tilde{u}, f_H) = \lambda(G_\mu, \tilde{u}),\\
                \lambda_\alpha(\tilde{v}, f_H) = \lambda(G_\mu, \tilde{v}).
            \end{gather*}
        \item Boundary edge directions at the poles match,
            \begin{gather*}
                \rho_\beta(f_H, e_1, \tilde{u}) = \rho_l(G_\mu, \tilde{u}),\\
                \rho_\beta(f_H, e_1', \tilde{u}) = \rho_r(G_\mu, \tilde{u}),\\
                \rho_\beta(f_H, e_\ell, \tilde{v}) = \rho_l(G_\mu, \tilde{v}),\\
                \rho_\beta(f_H, e_h', \tilde{v}) = \rho_r(G_\mu, \tilde{v}).
            \end{gather*}
    \end{enumerate}
\end{enumerate}

We are now ready to prove that the problem of determining whether there exists a valid pair $(\alpha,\beta)$ is equivalent to checking whether $\psi$ is in the feasible set of $\mu$.
\fi

\iflong \begin{lemma} \fi \ifshort \begin{lemma}\fi
    There is an upward planar embedding of $G_\mu$ with the shape description $\psi$ if and only if there is a valid pair $(\alpha, \beta)$.
    \label{lemma:valid_pair}
\end{lemma}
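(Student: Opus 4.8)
The plan is to prove both directions by translating between upward planar embeddings of $G_\mu$ and valid pairs $(\alpha,\beta)$, using Theorem~\ref{th:upward-conditions} as the bridge. First I would fix the reference planar embedding $\mathcal{E}^H_\mu$ of the skeleton $H$ used to build $\signH$; recall that an $R$-node has exactly two such embeddings (flips of each other), so one should really run the argument for both, but since the construction is symmetric it suffices to argue for one fixed choice. The key observation is that any $\tilde{u}\tilde{v}$-external upward planar embedding of $G_\mu$ induces a planar embedding of $H$ (obtained by contracting each pertinent graph $G_{\nu^e}$ to its virtual edge $e$), and conversely a planar embedding of $H$ together with a choice of upward planar embedding for each $G_{\nu^e}$ (consistent at the shared poles) glues back into a planar embedding of $G_\mu$.

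For the forward direction, suppose $\mathcal{E}_\mu$ is an upward planar embedding of $G_\mu$ with shape description $\psi$. Restricting $\mathcal{E}_\mu$ to each pertinent graph $G_{\nu^e}$ gives a $u v$-external upward planar embedding whose shape description lies in $\mathcal{S}_\mu(e)$; define $\beta(v_e)$ to be that shape description. For each active switch vertex $v\in A_\mu$, the angle assignment of $\mathcal{E}_\mu$ places exactly one large angle at $v$ (Property \textbf{UP1}); that large angle lies either inside some pertinent graph $G_{\nu^e}$ with $v$ a pole — in which case set $\alpha(v)=v_e$ — or in a face of $\mathcal{E}^H_\mu$ (a region of $G_\mu$ ``between'' two virtual edges), in which case set $\alpha(v)$ to the corresponding face-vertex. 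One then checks the three Validity Conditions. Condition~1 is exactly Property \textbf{UP3} applied to each face $f$ of $\mathcal{E}^H_\mu$: the sum of turn numbers of the bordering pertinent graphs plus the $\lambda_\alpha$-values at the true vertices on $f$ equals $n_1(f)-n_{-1}(f)$, which is $-2$ for internal faces and $+2$ for $f_H$; here I would invoke the bookkeeping already developed around Observation~\ref{obs:dependence-parameters} and Lemma~\ref{lem:shape_face} (i.e.\ that the turn number of a component correctly accounts for the number of large minus small angles it contributes on each side). Condition~2 follows from Properties \textbf{UP1}/\textbf{UP2} at each true vertex $v$ of $H$: the outer pole angle of $G_{\nu^e}$ at $v$ is forced to be small ($\lambda_\beta=-1$) exactly on the component where the large angle sits (if $v$ is a switch), and otherwise large ($\lambda_\beta=1$); for a non-switch $v$, the two flat angles go to the special component (using that $G_\mu$ is expanded), so $\lambda_\beta\in\{-1,0\}$ there and $=1$ elsewhere. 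Condition~3 is just the definition of the shape description of $\mathcal{E}_\mu$ restricted to the outer face, read off along the left and right outer paths — the left/right-turn numbers, the pole angle labels, and the boundary edge orientations all match $\psi$ by hypothesis.

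For the backward direction, given a valid pair $(\alpha,\beta)$ I would construct $\mathcal{E}_\mu$ as follows. For each virtual edge $e$, pick a $uv$-external upward planar embedding $\mathcal{E}_{\nu^e}$ of $G_{\nu^e}$ realizing the shape description $\beta(v_e)$ (one exists since $\beta(v_e)\in\mathcal{S}_\mu(e)$). Glue these into the fixed planar embedding $\mathcal{E}^H_\mu$ of $H$, obtaining a planar embedding of $G_\mu$; Condition~2 guarantees the angle labels at the shared poles are mutually consistent (the pole angles of the component embeddings agree with what $\alpha$ prescribes). Then define the angle assignment $\lambda$ on $G_\mu$: inside each $G_{\nu^e}$ use the assignment from $\mathcal{E}_{\nu^e}$; at a true vertex $v$ of $H$ inside a face $f$ of $\mathcal{E}^H_\mu$, set the angle to $1$ if $\alpha(v)=v_f$, to $0$ if it is a flat angle (non-switch case), and to $-1$ otherwise. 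Now verify Properties \textbf{UP0}--\textbf{UP3}. \textbf{UP0} is immediate from the definition (switch angles get $\pm1$, flat angles get $0$). \textbf{UP1} and \textbf{UP2} at a true vertex $v$ of $H$ follow from Condition~2 together with the fact that all the component embeddings already satisfy \textbf{UP1}/\textbf{UP2} internally: exactly one large angle at a switch $v$ (either $\alpha(v)$ is a face, giving the unique large angle there with all component-internal and other face angles small by Condition~2(a), or $\alpha(v)=v_e$, giving the large angle inside that component), and exactly two flat angles at a non-switch $v$ by Condition~2(b) and expandedness. At vertices not in $V(H)$, \textbf{UP1}/\textbf{UP2} hold because they hold in the relevant $\mathcal{E}_{\nu^e}$. \textbf{UP3} for an internal face $f$ of $\mathcal{E}^H_\mu$ is exactly Condition~1 (translating $\turnt_\beta$ and $\lambda_\alpha$ back into counts of large/small angles along the boundary, using Lemma~\ref{lem:shape_face}-style accounting and that each $\mathcal{E}_{\nu^e}$ already satisfies \textbf{UP3} on its own internal faces so those contribute $-2$ each and can be ignored); for $f_H$ it is Condition~3(a) combined with the pole angle values from Condition~3(b), which together give the $+2$ required by \textbf{UP3} for the outer face via Observation~\ref{obs:dependence-parameters}. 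Finally, by Theorem~\ref{th:upward-conditions}, $(\mathcal{E}^H_\mu$ with the glued components$,\lambda)$ is an upward planar embedding of $G_\mu$, and Conditions~3(a)--3(c) say precisely that its shape description along the outer face is $\psi$.

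\textbf{Main obstacle.} The delicate part is the ``gluing'' bookkeeping in both directions: making sure that when we contract pertinent graphs to virtual edges, the turn number of a component exactly captures its net contribution of large-minus-small angles along each bordering face, and that the pole angles are counted consistently (not double-counted, and with the correct handling of non-switch poles and special edges in an expanded graph). This is where one must carefully line up the definitions of $\turnt_\beta$, $\lambda_\alpha$, $\lambda_\beta$, and $\rho_\beta$ with the face-counting in Property \textbf{UP3} and with the shape-composition lemmas (Lemma~\ref{lem:shape_face}, Lemma~\ref{lem:shape_desc_values}, Observation~\ref{obs:dependence-parameters}); the left/right path decomposition of the outer face in Condition~3 must be matched exactly against the definition of the shape description from Section~\ref{sec:comp_shapes}. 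None of these steps is conceptually hard, but writing the verification of \textbf{UP1}--\textbf{UP3} cleanly requires care to avoid off-by-one and orientation errors, particularly at the poles $\tilde{u},\tilde{v}$.
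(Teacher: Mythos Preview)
Your proposal is correct and follows essentially the same approach as the paper: extract $(\alpha,\beta)$ from an upward planar embedding by reading off the induced shape descriptions and the location of each large angle, and conversely glue together component embeddings realizing $\beta$ into the fixed skeleton embedding and define the angle assignment via $\alpha$, then verify \textbf{UP0}--\textbf{UP3} against the Validity Conditions. One small simplification: \textbf{UP3} for the outer face $f_H$ follows directly from Validity Condition~1 (which already includes the case $f=f_H$ with value $+2$), so you do not need to re-derive it from Conditions~3(a) and~3(b) via Observation~\ref{obs:dependence-parameters}; Condition~3 is only needed to certify that the resulting embedding has shape description exactly $\psi$.
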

\iflong \begin{proof}
    In the forward direction, take the upward planar embedding $\mathcal{E}_\mu$ of $G_\mu$ that has shape description $\psi$ and produce the corresponding $(\alpha, \beta)$.
    For each children node $\nu$ of $\mu$, consider the upward planar embedding $\mathcal{E}_\nu$ induced by $\mathcal{E}_\mu$ on $G_\nu$ which is a subgraph of $G_\mu$. Let $e$ be the edge of $H$ that corresponds to $\nu$, assign $\beta(e)$ to be the shape in $\mathcal{S}_\mu(e)$ that is defined by the embedding $\mathcal{E}_\nu$. For each active switch vertex $v \in A_\mu$, define $\alpha(v)$ as follows. By Theorem~\ref{th:upward-conditions}, there is precisely one large angle at $v$ in $\mathcal{E}_\mu$. If this angle falls inside an inner face of $G_{\nu^e}$ for some virtual edge $e$, $\alpha(v)$ is set to be $v_e$. Otherwise, this angle belongs to one of the faces induced by a face $f$ of $H$; let $\alpha(v)$ be the corresponding vertex $v_f$. It remains to verify that $\alpha$ and $\beta$ defined in this way form a valid pair.
    \begin{enumerate}
        \item By the condition \textbf{UP3} of Theorem~\ref{th:upward-conditions}, in $\mathcal{E}_\mu$ the angles belonging to a face sum up to two if the face is the outer face, and to $-2$ otherwise. In particular, this holds for any face that is induced by a face $f$ of $H$. The sum of angle values on $f$ is precisely the left part of the Validity Condition 1, since for each $v \in V(H)$, $\lambda_\alpha(v, v_f)$ is by construction equal to the value of this angle in $\mathcal{E}_\mu$, and for each $e \in E(H)$ incident to $v$, $\turnt_\beta(v_f, v_e)$ is precisely the sum of angle values along the boundary walk of $G_{\nu^e}$ lying on $f$, by definition of a shape description.
        \item Validity Condition 2 follows immediately from conditions \textbf{UP0}--\textbf{UP2} of Theorem~\ref{th:upward-conditions} applied to $\mathcal{E}_\mu$.
        \item Validity Condition 3 boils down to verifying that the constructed pair ($\alpha$, $\beta$) is set in accordance with the embedding $\mathcal{E}_\mu$ of $G_\mu$ that has shape description $\psi$: 3a is shown analogously to Validity Condition 1, 3b is immediate by the construction of $\alpha$, in 3c the values of $\rho$ on both sides are taken from the same embedding and thus necessarily match.
    \end{enumerate}

    In the other direction, consider a valid pair $(\alpha, \beta)$. First, fix a planar embedding $\mathcal{E}_\mu$ of $G_\mu$ based on $(\alpha, \beta)$ together with $\mathcal{E}^H_\mu$: For each children component $(G_{\nu^e}, u, v)$ fix a $uv$-external upward planar embedding that has the shape description given by $\beta(v_e)$ on the corresponding virtual edge $v_e$. In particular that fixes a planar embedding $\mathcal{E}_{\nu^e}$ of $G_{\nu^e}$, and an angle assignment $\lambda_{\nu^e}$. It remains to fix the circular order of the edges incident to the vertices of the skeleton $H$ and the walk defining the boundary of the outer face. For the former, the embedding $\mathcal{E}^H_\mu$ defines a circular order on the set of virtual edges adjacent to each vertex $v$ of $H$. On the other hand, the order of the edges incident to $v$ inside each corresponding children component $G_{\nu^e}$ is given by the fixed planar embedding of $G_{\nu^e}$. Together, this defines a circular order on all the edges of $G_\mu$ incident to $v$.
    Finally, the boundary walk of the outer face is also defined unambiguouly from the outer face in $\mathcal{E}^H_\mu$ and the embeddings of the children components. Specifically, consider a sequence of virtual edges $v_{e_1}$,\ldots, $v_{e_\ell}$ forming the boundary of the outer face in $\mathcal{E}^H_\mu$. Obtain a walk in $G_\mu$ by replacing each $v_{e_i}$ in the above sequence by the left- or right-walk from $u_i$ to $v_i$ around the outer face in the fixed embeddding $\mathcal{E}_{\nu^{e_i}}$ of the component $(G_{\nu^{e_i}}, u_i, v_i)$, where the left-walk is taken if $u_i$ appears right before $v_i$ in the clockwise boundary walk of the outer face in $\mathcal{E}_{\nu^{e_i}}$, and the right-walk is taken otherwise. Clearly, $\mathcal{E}_\mu$ defined in this way is a valid planar embedding of $G_\mu$.
    
    Next, from $(\alpha, \beta)$ we construct an angle assignment $\lambda$ for $G_\mu$ with the embedding $\mathcal{E}_\mu$ that satisfies Theorem~\ref{th:upward-conditions}, while simultaneously achieving the shape description $\psi$ for $(G_\mu, \tilde{u}, \tilde{v})$.
    In what follows we specify how $\lambda$ is defined for each pair of a vertex $v$ and an incident face $f$. We distinguish the cases where $v \in V(H)$ and $v \notin V(H)$, and also the cases where $f$ lies inside $G_{\nu^e}$ for some $e \in E(H)$, or $f$ is induced by a face of $H$.
    \begin{description}
        \item[Case 1:] If $f$ lies inside $G_{\nu^e}$ for some $e \in E(H)$, $\lambda(v, f) = \lambda_{\nu^e}(v, f)$, that is, the value of the angle assignment is copied from the corresponding angle assignment on $G_{\nu^e}$ that yields shape description $\beta(v_e)$.
        \item[Case 2:] If $f$ is induced by a face $f'$ of $H$, and $v$ is a vertex of $G_{\nu^e}$ that does not belong to $V(H)$, we put $\lambda(v, f) = \lambda_{\nu^e}(v, f_e)$, where $f_e$ is the external face in the embedding $\mathcal{E}_{\nu^e}$ of $G_{\nu^e}$.
        \item[Case 3:] If $f$ is induced by a face $f'$ of $H$ and $v$ is a vertex of $H$, we consider two cases. If $v$ is an active vertex, we put $\lambda(v, f) = 1$ if $\alpha(v) = v_{f'}$ and $\lambda(v, f) = -1$ otherwise. If $v$ is not an active vertex, $\lambda(v, f) = 0$ if the corresponding angle is flat, and $\lambda(v, f) = -1$ otherwise.
    \end{description}

    Next, we verify that $\lambda$ defined in this way satisfies the conditions of Theorem~\ref{th:upward-conditions}.
	    If $v \in V(G_{\nu^e})$, but not in $V(H)$, $\lambda(v, f)$ is either $\lambda_{\nu^e}(v, f)$ or $\lambda_{\nu^e}(v, f_e)$, and $\lambda_{\nu^e}(v, \cdot)$ satisfies \textbf{UP0} since $(\mathcal{E}_{\nu^e}, \lambda_{\nu^e})$ is an upward planar embedding of $G_{\nu^e}$.
	    Moreover, \textbf{UP1} and \textbf{UP2} are also automatically satisfied  for $v$ since the angle assignment around $v$ is identical to that of $\lambda_{\nu^e}$.
        Analogously, if $f$ is a face inside $G_{\nu^e}$, the condition \textbf{UP3} holds since the angle assignment for the angles of $f$ is identical to that of $\lambda_{\nu^e}$, which is a part of an upward planar embedding and thus satisfies \textbf{UP3}.

        It remains to verify \textbf{UP0--2} for $v \in V(H)$ and \textbf{UP3} for each face $f$ induced by a face of $H$.
	    \textbf{UP0} follows immediately from the definition of $\lambda$, see \textbf{Case 3}.
	    If $v$ is a switch vertex of $G_\mu$, $v$ is an active vertex, and the mapping $\alpha$ picks a certain value for $v$. If $\alpha(v) = v_{f'}$ for some face $f'$ of $H$, then $\lambda(v, f) = 1$ for the corresponding face $f$ of $G_\mu$, and $\lambda(v, f) = -1$ for all other faces of $G_\mu$ that are induced by the faces of $H$.  Moreover, by Validity Condition 2a,  for each virtual edge $e$ incident to $v$ in $H$, the outer angle of $v$ in $G_{\nu^e}$ is large, thus every angle of $v$ inside $G_{\nu^e}$ is small, meaning that every angle at $v$ in $G_{\mu}$ is small, except for the angle at the face $f$. If $\alpha(v) = v_{e'}$ for some edge $e'$ of $H$, by \textbf{Case 3} all the angles of form $(v, f)$, where $f$ is a face induced by $H$, are small.
        Analogously, for each virtual edge $e$ incident to $v$ in $H$, the outer angle of $v$ in $G_{\nu^e}$ is large, meaning that all angles at $v$ inside $G_{\nu^e}$ are small. Finally, for the angles at $v$ inside $G_{\nu^{e'}}$, exactly one of them is assigned to large, as the angle assignment follows that of $\lambda_{\nu^e}$, which satisfies \textbf{UP1}. Thus, \textbf{UP1} holds for $v$.

        If $v$ is in $H$ and not a switch vertex of $G_\mu$, by \textbf{Case 3} among the angles induced by faces of $H$, every flat angle is assigned $0$, and every switch angle is assigned $-1$. Consider a virtual edge $e$ incident to $v$ in $H$. If the corresponding component contains only edges of the same direction from $v$, by Validity condition 2b the outer angle of $G_{\nu^e}$ at $v$ is large, and thus each angle at $v$ inside $G_{\nu^e}$ is small. If, on the other hand, $G_{\nu^e}$ contains a flat angle at $v$, then either both flat angles are inside $G_{\nu^e}$, the outer angle of $G_{\nu^e}$ at $v$ is large, and all the other angles at $v$ in $G_{\nu^e}$ (and thus in $G_\mu$) are small. Or, the second flat angle at $v$ occurs in a face adjacent to $G_{\nu^e}$, the outer angle of $G_{\nu^e}$ is flat, and and all the other angles are small. In either case, the Validity Condition 2b together with the fact that the angles assigned in $G_{\nu^e}$ by $\lambda_{\nu^e}$ satisfy \textbf{UP0--2}, ensure the fact that the condition \textbf{UP2} at $v$ is satisfied.

        For each face $f$ of $G_\mu$ induced by a face $f'$ of $H$, Validity Condition 1 of a valid pair immediately implies that \textbf{UP3} is satisfied. Specifically, for a vertex $v$ in $H$, the angle of $v$ is set exactly to $\lambda_\alpha(v, v_f)$. For a vertex $v$ that is not in $H$, but inside a children component $G_{\nu^e}$, the value of its angle is a part of the respective left- or right-turn number of $G_{\nu^e}$, $\turnt_\beta(v_{f'}, v_e)$. Thus, the sum of angle values along $f$ is precisely encoded in the left part of Validity Condition 1.

        Finally, it remains to see that the $\tilde{u}\tilde{v}$-external upward planar embedding $(\mathcal{E}_\mu, \lambda)$ of $G_\mu$ 
        has shape description $\psi$. Analogously to verifying \textbf{UP3}, the left parts in Validity Condition 3a can be seen to encode the sum of angle values along the left and right boundary walk of $G_\mu$, thus the turn numbers of $G_\mu$ are as required by $\psi$. Validity Condition 2b explicitly ensures that the outer pole angles of $G_\mu$ are in accordance wit $\psi$. Finally, Validity Condition 3c verifies that the edges incident to the poles on the left and tight outer walk have the direction as specified by $\psi$.
   \end{proof}\fi

\iflong
\subsection{Finding Valid Pairs Using Treewidth}
Subsections~\ref{sub:embeddinggraph} and~\ref{sub:problemreformulation} together result in a combinatorial problem that we can solve in order to provide an R-node subprocedure. To provide a more careful analysis of the running time, we introduce a new measure $\zeta$ of the graph which, intuitively, bounds the maximum amount of spiraling that can occur for any face of the pertinent graph. Formally, $\zeta$ is the maximum over $n_1(f)$ and $n_{-1}(f)$ (see Theorem~\ref{th:upward-conditions}), over all faces $f$ of all possible planar embeddings of the pertinent graph $G_\mu$ of $\mu$. Recalling Proposition~\ref{pro:tdfacts}, we obtain:
\fi
\ifshort
\subparagraph{Finding Valid Pairs Using Treewidth.}
At this point, what is left to do is solve this combinatorial problem. For the runtime analysis of the algorithm we will develop, we let $\zeta$ be the maximum over $n_1(f)$ and $n_{-1}(f)$ (see Theorem~\ref{th:upward-conditions}), over all faces $f$ of all possible planar embeddings of the pertinent graph $G_\mu$ of $\mu$. Recalling that no path in $G$ can have length greater than $2^{\td(G_\mu)}$~\cite{sparsity}, we obtain:
\fi
%
%

\begin{observation}
\label{obs:zeta}
$\zeta\leq V(G_\mu)$, and moreover $\zeta\leq 2^{\td(G_\mu)}$.
\end{observation}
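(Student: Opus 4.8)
\textbf{Proof plan for Observation~\ref{obs:zeta}.}

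The plan is to establish the two claimed bounds by purely combinatorial reasoning about faces of planar embeddings of $G_\mu$, using Theorem~\ref{th:upward-conditions} (in particular the fact that $n_1$ and $n_{-1}$ count switch angles) together with elementary structural facts about how many switch angles a face can contain. For the first bound, $\zeta \le |V(G_\mu)|$, I would argue as follows. Fix any planar embedding of $G_\mu$ and any face $f$; the quantities $n_1(f)$ and $n_{-1}(f)$ count, respectively, the number of large and small angles incident to $f$, and every such angle is a switch angle, hence occurs at a switch vertex of $G_\mu$ incident to $f$. By Property~\textbf{UP1} of Theorem~\ref{th:upward-conditions} each switch vertex carries exactly one large angle in total (over all faces), so it can contribute at most one large angle to $f$; therefore $n_1(f)$ is at most the number of switch vertices incident to $f$, which is at most $|V(G_\mu)|$. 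For $n_{-1}(f)$, a single switch vertex $v$ may contribute several small angles to one face, but each small angle at $v$ in $f$ corresponds to a distinct pair of consecutive edges of $f$ around $v$, and hence to a distinct incidence of $v$ with $f$ along the boundary walk; more simply, one can just combine $n_1(f)+n_{-1}(f) = n_1(f)+n_{-1}(f)$ with \textbf{UP3} which gives $n_{-1}(f) = n_1(f)\pm 2$, so a bound on $n_1(f)$ immediately bounds $n_{-1}(f)$ within an additive constant. To get the clean bound $\zeta\le |V(G_\mu)|$ I would instead observe that, since $G_\mu$ is biconnected (or after the expansion/normalisation used throughout the paper), the boundary of each face is a simple cycle, so each vertex appears at most once on it and contributes at most one angle of each type; thus both $n_1(f)$ and $n_{-1}(f)$ are bounded by the number of vertices on the boundary cycle of $f$, which is at most $|V(G_\mu)|$.

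For the second, sharper bound $\zeta \le 2^{\td(G_\mu)}$, the key point is that the boundary of a face is a cycle (equivalently a closed walk), and the number of switch vertices on it is controlled by the length of the longest path in $G_\mu$. Concretely, the boundary walk $W$ of a face $f$ alternates between maximal ``directed runs'' (monotone directed subpaths) and switch vertices; the number of switch angles of $f$ (which is exactly $n_1(f)+n_{-1}(f)$, and hence bounds both $n_1(f)$ and $n_{-1}(f)$) equals the number of such runs, which in turn is at most the number of edges of the longest run times a constant — more precisely, each directed run is a directed path in $G_\mu$ and distinct runs on the same face boundary are edge-disjoint, so one can bound the number of runs by relating them to paths. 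I would instead use the simplest route: the boundary cycle of $f$ is itself a cycle in the underlying graph of $G_\mu$, and a cycle on $m$ vertices contains a path on $m-1$ vertices, hence $m-1 \le 2^{\td(G_\mu)}$ by Proposition~\ref{pro:tdfacts}. Since $n_1(f)$ and $n_{-1}(f)$ are each at most the number of vertices on the boundary of $f$, we get $\zeta \le 2^{\td(G_\mu)}$ (absorbing the additive $1$ into the inequality or noting that $2^{\td}\ge \td+1$ for the relevant range).

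The step I expect to require the most care is making precise the claim ``$n_1(f)$ and $n_{-1}(f)$ are bounded by the number of vertices (or the longest path) on the boundary of $f$.'' This is immediate when face boundaries are simple cycles, but in full generality — e.g.\ if one does not assume biconnectivity at this point, or if the pertinent graph can have a face whose boundary walk revisits vertices — a single switch vertex could in principle be counted multiple times, so one has to argue either that the relevant pertinent graphs are biconnected (which holds in the R-node setting where $G_\mu$ is a union of split components glued along a split pair, and after expansion) or that multiple appearances of a vertex on a face boundary still contribute a bounded total, which again reduces to a path-length argument. I would handle this by explicitly invoking the standing assumption that $G_\mu$ (for an R-node $\mu$) is biconnected and expanded, so every face boundary is a simple cycle, and then the two bounds follow from the elementary facts above together with Proposition~\ref{pro:tdfacts} and \textbf{UP1}. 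The remainder is routine.
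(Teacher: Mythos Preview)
Your proposal is correct and takes essentially the same approach as the paper, which gives no detailed proof beyond the one-line hint ``Recalling Proposition~\ref{pro:tdfacts}, we obtain.'' You correctly fill in what the paper leaves implicit: since $G_\mu$ for an R-node is biconnected, each face boundary is a simple cycle, so both $n_1(f)$ and $n_{-1}(f)$ are bounded by the cycle length, which is at most $|V(G_\mu)|$ and, via the longest-path bound of Proposition~\ref{pro:tdfacts}, at most $2^{\td(G_\mu)}$; the minor slip ``a cycle on $m$ vertices contains a path on $m-1$ vertices'' (it contains one on $m$ vertices) is immaterial, as you yourself note.
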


\newcommand{\anglet}{\texttt{angle}}
\newcommand{\shape}{\texttt{shape}}
\newcommand{\score}{\texttt{score}}
\newcommand{\assigned}{\texttt{assigned}}
\newcommand{\todot}{\texttt{todo}}
\newcommand{\leftt}{\texttt{left}}
\newcommand{\rightt}{\texttt{right}}
\newcommand{\old}{\texttt{old}}

We can now design a dynamic program that solves the task at hand. The program computes sets of records for each node of a tree-decomposition in a leaf-to-root fashion, where each record is a tuple of the form $(\anglet,\shape,\score,\leftt,\rightt)$ where $\anglet$ and $\shape$ contain snapshots of $\alpha$ and $\beta$ in the given bag, respectively; $\score$ keeps track of the sum of large and small angles for each face in the given bag; and $\leftt, \rightt$ store information about the left-and right-turn-numbers of the outer face.

\iflong \begin{lemma} \fi \ifshort \begin{lemma}\fi
\label{lem:twdynprog}
    There is an algorithm that runs in time $\zeta^{\bigoh(\tw(H))} \cdot (|V(H)|+|\mathcal{S}_\mu|)$ and either computes a valid pair, or correctly determines that no such pair exists.
\end{lemma}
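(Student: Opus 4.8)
The plan is to design a leaf-to-root dynamic program over a nice tree-decomposition $(T,\chi)$ of the embedding graph $\signH$ of width $\bigoh(\tw(H))$, obtained by combining Theorem~\ref{thm:embeddingtw} with Proposition~\ref{fact:findtw}; computing this decomposition costs only $2^{\bigoh(\tw(H))}\cdot |V(\signH)|$ time, and $|V(\signH)|\in\bigoh(|V(H)|)$. At a node $t$ we maintain a set of records that summarize the part of $(\alpha,\beta)$ induced on $\chi(T_t)$ relative to the boundary $\chi(t)$. A record is a tuple $(\anglet,\shape,\score,\leftt,\rightt)$: here $\anglet$ restricts the partial angle map to the switch-vertices of $\chi(t)$ (recording, for each such $v$, whether its large angle has already been ``placed'' on a neighbour inside $\chi(T_t)$, or is still to be placed, or lies on a specific neighbour in $\chi(t)$); $\shape$ records, for each edge-vertex $v_e\in\chi(t)$, which shape description from the feasible set $\mathcal{S}_\mu(e)$ has been selected; $\score$ records, for each face-vertex $v_f\in\chi(t)$, the partial sum of $\turnt_\beta$ and $\lambda_\alpha$ contributions already accumulated from edge-vertices and true vertices forgotten below $t$; and $\leftt,\rightt$ likewise record the partial accumulated sums along the left and right outer paths, together with enough bookkeeping about the pole angles and boundary-edge directions to eventually check Validity Condition 3. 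The key point for the running time is that each $\score$, $\leftt$, $\rightt$ entry is a bounded integer: by definition $\turnt_\beta$ values are turn numbers of pertinent graphs, so their partial sums over a face are confined to the range dictated by $\zeta$, and hence each such coordinate takes $\bigoh(\zeta)$ values; $\anglet$ and $\shape$ have $\bigoh(1)$ many choices per vertex in the bag (at most $|\mathcal{S}_\mu(e)|$ shapes, but an arbitrary shape can be reached in a forget node and then checked, so it suffices to branch over the $\bigoh(1)$ shapes relevant to each face). Thus the number of records per bag is $\zeta^{\bigoh(\tw(H))}$.

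Next I would specify the transitions for the four node types of a nice tree-decomposition. \textbf{Leaf nodes} get the single trivial record on a one-vertex bag. At an \textbf{introduce node} introducing $x$, for each record of the child and each admissible local choice for $x$ (a shape if $x$ is an edge-vertex; a ``still-to-place'' marker or an explicit neighbour in the bag if $x$ is a switch true-vertex; nothing if $x$ is a non-switch true-vertex or a face-vertex) we extend the record, initializing the $\score$/$\leftt$/$\rightt$ contributions due to $x$ itself as $0$ (they will be counted only when $x$ is forgotten) and discarding the branch if a local inconsistency already arises (e.g. Validity Condition 2 is violated by the shapes now all present at a true vertex whose incident edge-vertices and face-vertices all lie in the bag). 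At a \textbf{forget node} forgetting $x$, we must (i) if $x$ is a true vertex or an edge-vertex, fold the contribution of $x$ into the $\score$ of every incident face-vertex still in the bag (and into $\leftt$/$\rightt$ if $x$ lies on the outer boundary), using $\lambda_\alpha(x,\cdot)$ resp.\ $\turnt_\beta(\cdot,x)$ — which are determined by $\anglet$ and $\shape$ — and (ii) if $x$ is a face-vertex, verify that its now-complete $\score$ equals $-2$ (or $2$ if $x$ is the outer face), and if $x$ is a switch true-vertex in the ``still-to-place'' state that has just lost its last neighbour, reject (its large angle could never be placed); symmetrically, if the large angle was placed on a now-forgotten face-vertex or edge-vertex, that is fine. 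Crucially, by the tree-decomposition property every incidence $x$–$f$ and every incidence $x$–$e$ is realized in some bag, so each contribution is folded in exactly once and each face's balance is checked exactly once. At a \textbf{join node} with children $t_1,t_2$ we combine two records with identical $\anglet$ and $\shape$ by adding the $\score$/$\leftt$/$\rightt$ coordinates — subtracting the ``double-counted'' base so that contributions of bag-vertices themselves are not added twice (they are $0$ anyway by our convention), and merging the ``placed/to-place'' status of each switch vertex, rejecting if both children have already placed its large angle. After processing the root (whose bag we may assume empty), we accept iff there remains a record passing all the deferred checks, i.e. Validity Conditions 1, 2, 3; and to output an actual valid pair we store backpointers in the usual way and trace them back.

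The correctness argument is then the standard inductive invariant: the set of records at $t$ is exactly the set of ``profiles'' realizable by some pair $(\alpha_t,\beta_t)$ defined on $\chi(T_t)$ satisfying all Validity-type constraints that are fully contained in $\chi(T_t)$, with $\score,\leftt,\rightt$ recording the accumulated contributions of the already-forgotten part. The forward direction (a valid $(\alpha,\beta)$ yields a surviving record) follows by restricting $(\alpha,\beta)$ to each $\chi(T_t)$; the backward direction follows by assembling the partial pairs along the tree using the fact that the separator property of tree-decompositions makes the partial pairs on disjoint subtrees compatible whenever they agree on the common bag. Since Validity Conditions 1 and 3a are precisely ``every face/outer-path sum hits its target'' and every face-vertex is forgotten exactly once, these are checked completely and without overcounting; Condition 2 is local at each true vertex and is checked at the latest when that vertex is forgotten; Condition 3b,3c are read off $\anglet$ and $\shape$ at the relevant bag. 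Combined with Lemma~\ref{lemma:valid_pair}, a surviving record certifies an upward planar embedding with shape $\psi$. For the running time: each bag has $\zeta^{\bigoh(\tw(H))}$ records, each transition processes a record (or, at joins, a pair of records with matching $\anglet,\shape$) in $\zeta^{\bigoh(\tw(H))}$ time, and there are $\bigoh(|V(\signH)|)\subseteq\bigoh(|V(H)|)$ nodes; reading the feasible sets to enumerate shapes costs an additional $\bigoh(|\mathcal{S}_\mu|)$ overall. This gives the claimed $\zeta^{\bigoh(\tw(H))}\cdot(|V(H)|+|\mathcal{S}_\mu|)$ bound.

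I expect the main obstacle to be the careful bookkeeping at join and forget nodes to ensure that (a) each turn-number/angle contribution is counted exactly once — which hinges on the convention that a bag-vertex contributes $0$ until it is forgotten, plus the observation that every relevant incidence appears in some bag of $\signH$ — and (b) the ``still-to-place large angle'' status of switch vertices is tracked soundly across the tree, so that a large angle placed on a face-vertex or edge-vertex in a distant subtree is correctly reconciled with the vertex being introduced/forgotten elsewhere; getting the interaction between $\anglet$ and the forget-time checks right (rejecting exactly when a switch vertex can no longer receive, or has received more than once, a large angle) is where the proof needs the most care.
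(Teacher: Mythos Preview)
Your approach is essentially the paper's: the same record structure $(\anglet,\shape,\score,\leftt,\rightt)$ and the same leaf-to-root dynamic program over a tree-decomposition of $\signH$. There is, however, a concrete gap in your ``contribute-at-forget'' convention. Suppose a face-vertex $v_f$ is forgotten while some adjacent true- or edge-vertex $x$ is still in the bag (nothing in a nice tree-decomposition prevents this). Under your semantics, $\score(v_f)$ sums only over \emph{already-forgotten} neighbours, so $x$'s contribution is missing; the check ``$\score(v_f)=\pm 2$'' at that moment is therefore wrong. You flag this region as the main obstacle but do not resolve it. The paper avoids the issue by contributing at \emph{introduce} time: when a true- or edge-vertex $x$ is introduced it immediately updates $\score(v_f)$ for every face-vertex in the bag, and when a face-vertex is introduced its score is initialized from all current bag-neighbours; thus $\score(v_f)$ at node $t$ sums over all neighbours in $\chi(T_t)$, and this is complete at the forget of $v_f$ by the edge-coverage property. (Your convention is repairable---at the forget of $v_f$, add the contributions of bag-neighbours from $\anglet$ and $\shape$ before checking---but as written the step fails.) The paper also adds the outer-face vertex $v_H$ to every bag, which keeps the handling of $\score(f_H)$ and of $\leftt,\rightt$ uniform.

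A smaller point: your claim that $\shape$ has ``$\bigoh(1)$ many choices per vertex'' is incorrect---a shape description carries two turn numbers, each ranging over $\bigoh(\zeta)$ values, so there are $\bigoh(\zeta^2)$ shapes per edge-vertex. This does not harm the $\zeta^{\bigoh(\tw(H))}$ bound, but your attempt to avoid branching on the shape (``an arbitrary shape can be reached in a forget node and then checked'') cannot work, since the face scores depend on the chosen shape and must be updated while the edge-vertex is still in the bag. The paper simply branches over the shape when the edge-vertex is introduced and checks feasibility via the constant-time lookup of Lemma~\ref{lem:matrix_feasible}.
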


\iflong \begin{proof}
As our first step, we compute the embedding graph $\signH$ of $H$ and then use the well-known $5$-approximation algorithm~\cite{BodlaenderDDFLP16} to obtain a nice tree-decomposition $(T,\chi)$ of the embedding graph of width at most $5\tw(\signH)$, which is in $\bigoh(\tw(H))$ by Lemma~\ref{thm:embeddingtw}. 
We assume that $(T,\chi)$ is rooted at the node $r$ and that $\chi(r)=\emptyset$.
Without loss of generality and for ease of presentation, we will alter this decomposition by adding the face-vertex $v_H$ of the outer face to every bag (other than $\chi(r)$) and apply the standard conversion to nice tree-decompositions in order to obtain a nice tree-decomposition with the property that $v_H$ is the first vertex introduced on each leaf and the last vertex forgotten just below the root. Let the width of the resulting tree-decomposition $(T,\chi)$ be $k$. To complete the proof, we provide a dynamic programming algorithm $\mathcal{A}$ that processes $T$ in a leaf-to-root fashion.

We begin by formalizing the records used by $\mathcal{A}$. For each node $t\in V(T)$, let $\Vv_t, \Ve_t, \Vf_t$ denote the intersection of $\chi(t)$ with $\Vv, \Ve, \Vf$, respectively. At each node $t\in V(T)$, we will store a record set $R_t$ which will be a set of tuples of the form $(\anglet,\shape,\score,\leftt,\rightt)$ where:
\begin{itemize}
\item $\anglet$ maps each switch vertex in $\Vv_t\cap A_\mu$ to an element of $(\Ve_t\cup \Vf_t\cup \{\assigned,\todot\})$,
\item $\shape$ maps each edge-vertex $w\in \Ve_t$ to a shape description in a feasible set in $\mathcal{S}_\mu(w)$, 
\item $\score$ maps each face-vertex in $\Vf_t$ to an integer $z$ such that $-\zeta\leq z\leq \zeta$, and
\item $\leftt,\rightt$ are integers whose absolute value is at most $\zeta$.
\end{itemize}
    
Intuitively, for each node $t\in V(T)$ we will use $R_t$ to store all the relevant combinatorial information about the behavior of a valid pair $(\alpha,\beta)$ of $G$ w.r.t.\ the target shape description $\psi$ in the subgraph induced on $\chi(T_t)$; let us call this subgraph $H_t$. To formalize the semantics of our records, we will first introduce a suitable projection of pairs to $H_t$: a $t$-\emph{pair} is the restriction of pairs to $\chi(T_t)$, i.e., a pair of the form $(\alpha_t,\beta_t)$ where the domain (but not the range) of both mappings $\alpha_t$ and $\beta_t$ is restricted to $\chi(T_t)=\past(t)\cup \chi(t)$. A $t$-pair is \emph{valid} if Validity Conditions 1, 2, 3b and 3c are satisfied for all vertices in $\past(t)$---in particular, (1) each face-vertex in $\past(t)$ achieves the correct sum, (2) each true vertex in $\past(t)$ satisfies the condition on its pole angles, and (3b,c) the behavior of the poles and the edges on the outer face incident to the poles in $\past(t)$ match $\psi$. Note that Validity Condition 3a is not included in this definition, since it is a global condition that we will keep track of separately and check at the end.

Our records for each node $t$ will capture information about valid $t$-pairs as follows: for each tuple $(\anglet,\shape,\score,\leftt,\rightt)$ of the form described above, $(\anglet,\shape,\score,$ $\leftt,\rightt)\in R_t$ if and only if there exists a valid $t$-pair $(\alpha_t, \beta_t)$ with the following properties:

\begin{itemize}
\item For each switch vertex $v\in \Vv_t\cap A_\mu$ such that $\alpha_t(v)=w$, 
\begin{itemize}
\item $\anglet(v)=w$ if and only if $w\in \chi(t)$,
\item $\anglet(v)=\assigned$ if and only if $w\in \past(t)$, and
\item $\anglet(v)=\todot$ otherwise, i.e., if and only if $w\in \fut(t)$.
\end{itemize}
\item For each $v_e\in \Ve_t$, $\shape(v_e)=\beta_t(v_e)$.
\item For each $v_f\in \Vf_t$ representing a face $f$ of $H$, $\score(v_f)$ is the intermediate sum for the term in Validity Condition 1; formally, $\score(v)=\sum_{v_e\in \Ve\cap N_{\signH}(v_f)\cap \chi(T_t)}\turnt_\beta(v_f, v_e) + \sum_{v\in \Vv\cap N_{\signH}(v_f)\cap \chi(T_t)}\lambda_\alpha(v, v_f)$.
\item $\rightt$ and $\leftt$ contain the intermediate sums of Validity Condition 3a, formally:
            \begin{gather*}
            \sum_{v_e\in L_e\cap \chi(T_t)}\turnt_\beta(f_H, v_e) + \sum_{v_v\in L_v\cap \chi(T_t)}\lambda_\alpha(v_v, f_H) = \leftt,\\
            \sum_{v_e\in R_e\cap \chi(T_t)}\turnt_\beta(f_H, v_e) + \sum_{v_v\in R_v\cap \chi(T_t)}\lambda_\alpha(v_v, f_H) = \rightt.
            \end{gather*}
\end{itemize}

Observe that since $H_r=H$ and $\chi(r)=\emptyset$, the set of valid $r$-pairs is precisely the set of pairs for the instance satisfying all Validity Conditions except for 3a. Hence, in view of Lemma~\ref{lemma:valid_pair}, in order to determine whether there exists a valid pair it suffices to verify whether there is a valid $r$-pair satisfying Validity Condition 3a---i.e., whether there $R_r$ contains a record such that $\leftt=\tau_l(G_\mu, \tilde{u}, \tilde{v})$ and $\rightt=\tau_r(G_\mu, \tilde{u}, \tilde{v})$.

Having defined the syntax and semantics of the records, we proceed to a description of the dynamic programming steps themselves---in particular, we describe how one can compute the record set at each of the four kinds of nodes in a nice tree-decomposition.

\smallskip
\noindent \textbf{Leaf node.}\quad
If $t\neq r$ is a leaf in $T$ with $\chi(t)=\{v_H\}$, we recall that each leaf contains a single vertex $v_H$ and we simply set $R_t$ to $\{(\emptyset, \emptyset, v_H\mapsto 0,0,0\}$.

\smallskip
\noindent \textbf{Introduce node.}\quad
If $t$ is an introduce node with child $t_\old$ and $\chi(t)\setminus \chi(t_\old)=\{v\}$, we compute $R_t$ as follows. 

If $v\in \Vv$, then for each tuple $(\anglet,\shape,\score,\leftt,\rightt)\in R_{t_\old}$ and each $w\in (N(v)\cap \chi(t))\cup \{\todot\}$, we construct the tuple $(\anglet\cup \{v\mapsto w\},\shape,\score',\leftt',\rightt')$ and add it to $R_t$ where 
\begin{itemize}
\item $\score'$ is obtained from $\score$ by updating the values for each face $v_f\in \chi(t)$, i.e., by adding the value of $\lambda_\alpha(v,v_f)$ to each $\score(v_f)$. Notably, if $v\in A_\mu$ then every face $f\in \chi(t)$ such that $\anglet(v)\neq f$ has its score decreased by $1$ while the face $\anglet(v)$ has its score increased by $1$ (the latter only applies if $\anglet(v)\neq \todot$). On the other hand, if $v\not \in A_\mu$ then every face in $\chi(t)$ has its score decreased by $1$ for every small angle it has at $v$ (which can be verified directly from $\signH$, regardless of the records). 
\item \leftt' and \rightt' are obtained from \leftt and \rightt, respectively, by updating the values of the intermediate sums on the left side in Validicty Condition 3a based on the impact on the newly introduced vertex $v$. This update is only carried out if $v\in L_v$ or $R_v$, respectively, and is done by simply adding $\lambda_\alpha(v,f_H)$ to \score.
\end{itemize}

If $v\in \Ve$, then for each tuple $(\anglet,\shape,\score,\leftt,\rightt)\in R_{t_\old}$ and each $\iota\in \mathcal{S}_\mu$, we construct the tuple $(\anglet',\shape\cup\{v\mapsto\iota\},\score',\leftt',\rightt')$ where \score', \leftt' and \rightt' are updated based on the value of $\turnt_\beta$ of the newly introduced shape description $\iota$ in the same way as in the previous case. Moreover, for each $v_v\in \Vv_t\cap N(v)$ such that $\anglet(v_v)=\todot$, we perform an additional branching to determine whether $\anglet'(v_v)=\todot$ or $\anglet'(v_v)=v$; we set $\anglet'=\anglet$ for all vertices not satisfying the condition of this sentence. In each case, we verify whether Validity Condition 2 holds for the pair $v_v,v_e$; if not, we discard the non-compliant choice of \anglet', and otherwise we add the tuple $(\anglet',\shape\cup\{v\mapsto\iota\},\score',\leftt',\rightt')$ to $R_{t_\old}$.

If $v\in \Vf$, then for each tuple $(\anglet,\shape,\score,\leftt,\rightt)\in R_{t_\old}$ and each $v_v\in \Vv_t\cap N(v)$ such that $\anglet(v_v)=\todot$, we branch to determine whether $\anglet'(v_v)=\todot$ or $\anglet'(v_v)=v$. In each branch, we construct the tuple $(\anglet',\shape,\score',\leftt,\rightt)$ where $\score'$ matches score on all vertices other than $v$, and maps $v$ to the intermediate sum of $v$ obtained in an analogous way as in the previous two cases (in particular, we compute the sum $\turnt_\beta(v,v_e)$ over all $v_e\in \Ve_t$ plus the sum $\lambda_\alpha(v_v,v)$ over all $v_V\in \Vv_t$ based on the information in \anglet' and \shape).

\smallskip
\noindent \textbf{Join node.}\quad
If $t$ is a join node with children $t_1$ and $t_2$, we compute $R_t$ as follows. For each $\rho_1=(\anglet_1,\shape_1,\score_1,\leftt_1,\rightt_1)\in R_{t_1}$ and each $\rho_2=(\anglet_2,\shape_2,$ $\score_2,\leftt_2,\rightt_2)\in R_{t_2}$, we perform two consistency checks:
\begin{itemize}
\item Check the consistency of $\anglet_1$ and $\anglet_2$ for each $v_v\in \Vv_t\cap A_\mu$: if $\anglet_1(v_v)\in \chi(t)$ then $\anglet_1(v_v)= \anglet_2(v_v)$, if $\anglet_1(v_v)=\assigned$ then $\anglet_2(v_v)=\todot$, and if $\anglet_1(v_v)=\todot$ then $\anglet_2(v_v)\in \{\todot, \assigned\}$. We perform the analogous checks also for $\anglet_2(v_v)$. If any of these checks fail, we discard the current pair of $(\rho_1, \rho_2)$.
\item Check the consistency of $\shape_1$ and $\shape_2$ for each $v_e\in \Ve_t$: it must hold that $\shape_1(v_e)=\shape_2(v_e)$. If any of these checks fail, we discard the current pair of $(\rho_1, \rho_2)$.
\end{itemize}

If the above checks succeeded, we construct a new entry $(\anglet,\shape,\score,\leftt,\rightt)$ and add it to $R_t$. This is carried out as follows:
\begin{enumerate}
\item For each $v_v\in \Vv_t\cap A_\mu$ such that $\anglet_1(v_v)=\anglet_2(v_v)$, set $\anglet(v_v)=\anglet_1(v_v)$. For each $v_v\in \Vv_t\cap A_\mu$ such that $\anglet_1(v_v)\neq \anglet_2(v_v)$, based on the previous checks we know that one was set to \assigned\ while the other to \todot; we hence set $\anglet(v_v)=\assigned$.
\item For each $v_e\in \Ve_t$, set $\shape(v_e)=\shape_1(v_e)$.
\item For each $v_f\in \Vf_t$, set $\score(v_f)=\score_1(v_f)+\score_2(v_f)-\emph{corr}(v_f)$ where $\emph{corr}(v_f)$ nullifies the double-counting of vertices in $\Vv_t\cup \Ve_t $ and is computed as $\sum_{v_e\in \Ve_t\cap N_{\signH}(v_f)}\turnt_\beta(v_f, v_e) + \sum_{v\in \Vv_t\cap N_{\signH}(v_f)}\lambda_\alpha(v, v_f)$ (where $\turnt_\beta$ and $\lambda_\alpha$ are obtained from $\shape$ and $\anglet$ in the same way as for the introduce nodes).
\item Similarly as for $\score$, set $\leftt=\leftt_1+\leftt_2-\emph{leftcorr}$ and $\rightt=\rightt_1+\rightt_2-\emph{rightcorr}$. Here, $\emph{leftcorr}$ nullifies the double-counting of vertices in $\chi(t)\cap (L_v\cup L_e)$ and is equal to $\sum_{v_e\in L_e\cap \chi(t)}\turnt_\beta(f_H, v_e) + \sum_{v_v\in L_v\cap \chi(t)}\lambda_\alpha(v_v, f_H)$. The value of $\emph{rightcorr}$ is defined and computed analogously. 
\end{enumerate}

\smallskip
\noindent \textbf{Forget node.}\quad
If $t$ is a forget node with child $t_\old$ and $\chi(t_\old)\setminus \chi(t)=\{v\}$, we compute $R_t$ as follows. 

If $v\in \Vv$, for each tuple $(\anglet,\shape,\score,\leftt,\rightt)\in R_{t_\old}$ we construct a new tuple $(\anglet',\shape,\score,\leftt,\rightt)$ where $\anglet'$ is obtained from $\anglet$ by omitting $v$ from the domain. If $v$ is a pole of $H$, we also check that Validity Conditions 3b and 3c are satisfied for each neighbor of $v$, and if not we discard the computed tuple. Otherwise, we add the tuple to $R_t$.

If $v\in \Ve$, for each tuple $(\anglet',\shape,\score,\leftt,\rightt)\in R_{t_\old}$ we construct a new tuple $(\anglet,\shape',\score,\leftt,\rightt)$ where $\shape'$ is obtained from $\shape$ by omitting $v$ from the domain and $\anglet'$ is obtained from $\anglet$ by mapping every vertex that \anglet mapped to $v$ to $\assigned$ instead. If $v$ is incident to a pole of $H$, we also check that Validity Condition 3c is satisfied. If it is, we add the tuple to $R_t$.

If $v\in \Vf$, for each tuple $(\anglet,\shape,\score,\leftt,\rightt)\in R_{t_\old}$ we construct a new tuple $(\anglet',\shape,\score',\leftt,\rightt)$ where $\score'$ is obtained from $\score$ by omitting $v$ from the domain and $\anglet'$ is obtained from $\anglet$ by mapping every vertex that \anglet mapped to $v$ to $\assigned$ instead. If Validity Condition 1 is satisfied (i.e., if $\score(v)=-2$ or, in the case of $v=f_h$, $+2$), we add the resulting tuple to $R_t$, and discard it otherwise.

\smallskip
Using the above procedures, we compute the record set of all nodes up to $R_r$, for which we check if there is a record such that $\leftt=\tau_l(G_\mu, \tilde{u}, \tilde{v})$ and $\rightt=\tau_r(G_\mu, \tilde{u}, \tilde{v})$. If yes, the algorithm outputs ``\texttt{Yes}'', and otherwise it outputs ``\texttt{No}''. This concludes the description of the algorithm. 

Regarding the running time, the size of each record set is upper-bounded by $\zeta^{\bigoh(tw)}$, and the procedures described in the computation of each type of node can be completed in time at most $\zeta^{\bigoh(tw)}$ per node. The resulting runtime can hence be upper-bounded by $\zeta^{\bigoh(tw)}\cdot V(H)$. Correctness follows from the correctness of the computation of the record set for each node $t\in V(T)$, which can be verified directly from the definitions and the semantics of $R_t$.
\end{proof}\fi

\iflong
As an consequence of Lemma~\ref{lem:twdynprog} together with Observation~\ref{obs:zeta}, Lemma~\ref{lemma:valid_pair} and Theorem~\ref{thm:embeddingtw}, we obtain an R-node subprocedure that runs in \XP-time parameterized by treewidth and fixed-parameter time parameterized by treedepth. Combining this with the Interface Lemma~\ref{lem:R_node_general}, we conclude:
\fi

\ifshort
We now have an R-node subprocedure that runs in \XP-time parameterized by treewidth and fixed-parameter time parameterized by treedepth. By invoking Lemma~\ref{lem:R_node_general}, we conclude:
\fi
\begin{theorem}
\label{thm:twtd}
It is possible to solve \UP\ in time $n^{\bigoh(\tw(G))}$ and time $2^{\bigoh(\td(G)^2)} \cdot n^2$, where $n$ is the number of vertices of the input digraph $G$.
\end{theorem}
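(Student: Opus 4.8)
\textbf{Proof plan for Theorem~\ref{thm:twtd}.}
The plan is to assemble the theorem from the pieces already developed: the Interface Lemma (Lemma~\ref{lem:R_node_general}), the treewidth bound on the embedding graph (Theorem~\ref{thm:embeddingtw}), the reformulation in terms of valid pairs (Lemma~\ref{lemma:valid_pair}), and the dynamic program (Lemma~\ref{lem:twdynprog}). First I would handle the reduction from \UP\ to the biconnected case and set up the R-node subprocedure. Given an input digraph $G$, apply the expansion preprocessing; by the observation in Section~\ref{sec:prelim} this preserves upward planarity and at most doubles $\tw(G)$ and $\td(G)$. So it suffices to give an R-node subprocedure and feed it into Lemma~\ref{lem:R_node_general}. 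For each R-node $\mu$ with skeleton $H$, we must compute the feasible set $\mathcal{F}_\mu$; by Lemma~\ref{lem:matrix_feasible} the feasible set has size $\bigoh(\tau)$ and is determined by checking, for each candidate shape description $\psi$, whether $\psi$ is realizable, which costs an extra $\bigoh(\tau)$ factor. By Lemma~\ref{lemma:valid_pair}, realizability of $\psi$ is equivalent to the existence of a valid pair $(\alpha,\beta)$ on the embedding graph $\signH$, and Lemma~\ref{lem:twdynprog} decides this in time $\zeta^{\bigoh(\tw(H))}\cdot(|V(H)|+|\mathcal{S}_\mu|)$.

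Next I would bound the parameters. Since $H$ is a minor (indeed a subgraph contraction) of the underlying graph of a biconnected block of $G$, we have $\tw(H)\leq \tw(G)$, and hence by Theorem~\ref{thm:embeddingtw} the embedding graph $\signH$ has treewidth $\bigoh(\tw(G))$. For $\zeta$: by Observation~\ref{obs:zeta}, $\zeta\leq |V(G_\mu)|\leq n$ in general, and $\zeta\leq 2^{\td(G_\mu)}\leq 2^{\td(G)}$ in the treedepth case. This also bounds the turn-number range: every left/right turn-number of a pertinent graph lies in $[-\zeta,\zeta]$ by \textbf{UP3} and the definition of $\zeta$, so $G$ is $[-\zeta,\zeta]$-turn-bounded in the sense needed by Lemma~\ref{lem:R_node_general}, giving $\tau=2\zeta+1$.

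For the treewidth statement, plugging $\zeta\leq n$ and $\tw(H)=\bigoh(\tw(G))$ into Lemma~\ref{lem:twdynprog} gives a per-shape cost of $n^{\bigoh(\tw(G))}$, and summing over the $\bigoh(\tau)=\bigoh(n)$ shape descriptions and over all R-nodes (whose total skeleton size and total $|\mathcal{S}_\mu|$ are $\bigoh(n)$) yields a total R-node complexity $\alpha(G)=n^{\bigoh(\tw(G))}$. Lemma~\ref{lem:R_node_general} then gives overall time $\bigoh(n(\alpha(G)+\tau^2 n))=n^{\bigoh(\tw(G))}$. For the treedepth statement, I would use $\zeta\leq 2^{\td(G)}$ and $\tw(H)\leq\td(H)\leq\td(G)$, so the per-shape cost is $(2^{\td(G)})^{\bigoh(\td(G))}=2^{\bigoh(\td(G)^2)}$; there are $\tau=\bigoh(2^{\td(G)})$ shapes, and again the skeleton sizes sum to $\bigoh(n)$, so $\alpha(G)=2^{\bigoh(\td(G)^2)}\cdot n$ and Lemma~\ref{lem:R_node_general} gives total time $2^{\bigoh(\td(G)^2)}\cdot n^2$.

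The main thing to be careful about---rather than a deep obstacle, since the heavy lifting is in the earlier lemmata---is the bookkeeping for the single-connected reduction: I must confirm that applying the R-node subprocedure blockwise is consistent with the definition $\alpha(G)=\sum_B\alpha(B)$, that $\tw$ and $\td$ do not grow when passing to biconnected components (they are monotone under subgraphs), and that the $\zeta$ bound is taken over the pertinent graphs within each block rather than all of $G$ (which only helps). I would also double-check that the approximate tree-decomposition computation inside Lemma~\ref{lem:twdynprog} (via Proposition~\ref{fact:findtw}) contributes only a $2^{\bigoh(\tw)}\cdot n$ additive term, which is absorbed. Once these routine checks are in place, the theorem follows by combining the displayed runtime bounds.
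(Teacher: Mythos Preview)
Your proposal is correct and follows exactly the approach the paper takes: the paper's proof of Theorem~\ref{thm:twtd} is simply the one-line observation that Lemma~\ref{lem:twdynprog}, Observation~\ref{obs:zeta}, Lemma~\ref{lemma:valid_pair} and Theorem~\ref{thm:embeddingtw} together yield an R-node subprocedure with the stated running times, which one then plugs into the Interface Lemma~\ref{lem:R_node_general}. Your write-up spells out the bookkeeping (bounding $\tau$ via $\zeta$, using minor-monotonicity for $\tw(H)$ and $\td(H)$, summing skeleton sizes over the SPQR-tree) that the paper leaves implicit, and all of those checks are sound.
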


\section{Concluding Remarks}\label{sec:conclusions}

The presented results show that the combination of SPQR-trees with parameterized techniques is a promising algorithmic tool for geometric graph problems. Indeed, for the case of upward planarity, our framework allows us to reduce the general problem to a similar one on 3-connected graphs, at which point it is possible to use parameter-specific approaches such as dynamic programming or flow networks to obtain a solution. We believe not only that the framework developed here can help obtain other algorithms for \UP, but that the idea behind the framework can be adapted to solve other problems of interest as well---a candidate problem in this regard would be constrained level planarity testing~\cite{BrucknerR17}.

All algorithms and arguments given within this paper are constructive and can be extended to output an upward planar drawing for each yes-instance of \UP. An open problem is whether \UP\ is \W[1]-hard when parameterized by treewidth, or fixed-parameter tractable. 
\iflong
It seems that improving the current dynamic programming algorithm to a fixed-parameter one would require non-trivial and non-obvious insights, but the problem has so far also proven resilient to our attempts at obtaining a \W[1]-hardness reduction.
\fi
Another question is whether the fixed-parameter tractability of \UP\ parameterized by the number of sources can be lifted to parameterizing by the maximum turn number of a face in the final drawing.

	\bibliography{biblio}

\end{document}